\documentclass[pra,onecolumn,11pt,tightenlines,superscriptaddress,nofootinbib]{revtex4}
\usepackage[a4paper, left=0.8in, right=0.8in, top=0.9in, bottom=0.9in]{geometry}

% Makes ligatured fonts searchable and copyable in pdf readers
\usepackage{cmap} % Load before fontenc
\usepackage[utf8]{inputenc}
\usepackage[english]{babel}
\usepackage[T1]{fontenc}
\usepackage{amsmath}
\usepackage{amsfonts}
\usepackage{bm}
\usepackage{xcolor}
\definecolor{blueviolet}{rgb}{0.2, 0.2, 0.6}
\definecolor{webgreen}{rgb}{0,.5,0}
\definecolor{webbrown}{rgb}{.6,0,0}
\usepackage[pdftex,
	bookmarks=false,
	colorlinks=true, %allcolors=blueviolet,
	urlcolor=webbrown,
	linkcolor=blueviolet,
	citecolor=webgreen,
	pdfstartpage=1,
	pdfstartview={FitH},  % FitBH
	bookmarksopen=false
	]{hyperref}
\allowdisplaybreaks
\usepackage{tikz}
\usepackage{braket}
\usepackage{natbib}
\usepackage{amsthm}
\usepackage{dsfont}
\usepackage{listings}

% Default fixed font does not support bold face
\DeclareFixedFont{\ttb}{T1}{txtt}{bx}{n}{9} % for bold
\DeclareFixedFont{\ttm}{T1}{txtt}{m}{n}{9}  % for normal

\usepackage{amssymb}

\let\emptyset\varnothing

% Custom colors
\usepackage{color}
\definecolor{deepblue}{rgb}{0,0,0.5}
\definecolor{deepred}{rgb}{0.6,0,0}
\definecolor{deepgreen}{rgb}{0,0.5,0}

% Python style for highlighting
\newcommand\pythonstyle{\lstset{
language=Python,
basicstyle=\ttm,
morekeywords={self},              % Add keywords here
keywordstyle=\ttb\color{deepblue},
emph={MyClass,__init__},          % Custom highlighting
emphstyle=\ttb\color{deepred},    % Custom highlighting style
stringstyle=\color{deepgreen},
frame=tb,                         % Any extra options here
showstringspaces=false
}}

% Python environment
\lstnewenvironment{python}[1][]
{
\pythonstyle
\lstset{#1}
}
{}

\newif\ifdraft
\drafttrue
% \draftfalse

% Python for external files

% Python for inline
\newcommand\pythoninline[1]{{\pythonstyle\lstinline!#1!}}

\theoremstyle{plain}
\newtheorem{theorem}{Theorem}
\newtheorem{corollary}[theorem]{Corollary}
\newtheorem{lemma}[theorem]{Lemma}
\newtheorem{proposition}[theorem]{Proposition}

\theoremstyle{definition}
\newtheorem{definition}[theorem]{Definition}

\theoremstyle{remark}
\newtheorem{remark}{Remark}

\usepackage{titlesec}
% \titleformat
% {\chapter} % command
% [display] % shape
% {\bfseries\Large\itshape} % format
% {Story No. \ \thechapter} % label
% {0.5ex} % sep
% {
%     \rule{\textwidth}{1pt}
%     \vspace{1ex}
%     \centering
% } % before-code
% [
% \vspace{-0.5ex}%
% \rule{\textwidth}{0.3pt}
% ] % after-code

\newcommand{\euler}{\mathrm{e}}
\newcommand{\rmi}{\mathrm{i}}

%\newcommand{\vct}[1]{\mathbf{#1}}
%\newcommand{\mtx}[1]{\mathbf{#1}}

% Sets

\DeclareMathOperator*{\sign}{sign}

\DeclareMathOperator{\Tr}{tr}
\DeclareMathOperator*{\E}{{\mathbb{E}}}
\DeclareMathOperator*{\Var}{\mathrm{Var}}

\newcommand{\bI}{{\mathbb{I}}}
\newcommand{\cI}{{\mathcal{I}}}

\newcommand{\cB}{{\mathcal{B}}}
\newcommand{\cC}{{\mathcal{C}}}
\newcommand{\cP}{{\mathcal{P}}}
\newcommand{\cW}{{\mathcal{W}}}
\newcommand{\cX}{{\mathcal{X}}}
\newcommand{\cY}{{\mathcal{Y}}}
\newcommand{\cZ}{{\mathcal{Z}}}

\newcommand{\cE}{{{\mathcal{E}}}}
\newcommand{\cF}{{\mathcal{F}}}
\newcommand{\cM}{{\mathcal{M}}}
\newcommand{\cN}{{{\mathcal{N}}}}

\newcommand{\cQ}{{\mathcal{Q}}}

\newcommand{\ketbra}[2]{\lvert #1 \rangle \! \langle #2 \rvert}
\newcommand{\norm}[1]{\left\lVert#1\right\rVert}

\usepackage[noend]{algpseudocode}
\usepackage{algorithm,algorithmicx}

\algrenewcommand\alglinenumber[1]{\sf\scriptsize\color{blue}{#1}}
\algrenewcommand\algorithmicrequire{\textbf{Input:}}
\algrenewcommand\algorithmicensure{\textbf{Output:}}

\begin{document}

\title{Foundations for learning from noisy quantum experiments}
\author{Hsin-Yuan Huang}
\affiliation{Institute for Quantum Information and Matter and\\ Department of Computing and Mathematical Sciences, Caltech, Pasadena, CA, USA}
\affiliation{AWS Center for Quantum Computing, Pasadena, CA, USA}
\author{Steven T. Flammia}
\affiliation{AWS Center for Quantum Computing, Pasadena, CA, USA}
\affiliation{California Institute of Technology, Pasadena, CA, USA}
\author{John Preskill}
\affiliation{Institute for Quantum Information and Matter and\\ Department of Computing and Mathematical Sciences, Caltech, Pasadena, CA, USA}
\affiliation{AWS Center for Quantum Computing, Pasadena, CA, USA}

\date{\today}

\begin{abstract}
Understanding what can be learned from experiments is central to scientific progress.
In this work, we use a learning-theoretic perspective to study the task of learning physical operations in a quantum machine when all operations (state preparation, dynamics, and measurement) are \emph{a priori} unknown.
We prove that, without any prior knowledge, if one can explore the full quantum state space by composing the operations, then every operation can be learned.
When one cannot explore the full state space but all operations are approximately known and noise in Clifford gates is gate-independent, we find an efficient algorithm for learning all operations up to a single unlearnable parameter characterizing the fidelity of the initial state.
For learning a noise channel on Clifford gates to a fixed accuracy, our algorithm uses quadratically fewer experiments than previously known protocols.
Under more general conditions, the true description of the noise can be unlearnable; for example, we prove that no benchmarking protocol can learn gate-dependent Pauli noise on Clifford+T gates even under perfect state preparation and measurement.
Despite not being able to learn the noise, we show that a noisy quantum computer that performs entangled measurements on multiple copies of an unknown state can yield a large advantage in learning properties of the state compared to a noiseless device that measures individual copies and then processes the measurement data using a classical computer.
Concretely, we prove that noisy quantum computers with two-qubit gate error rate $\epsilon$ can achieve a learning task using $N$ copies of the state, while $N^{\Omega(1/\epsilon)}$ copies are required classically.
\end{abstract}

\maketitle

Understanding what we can learn from experiments is central to many scientific fields.
By conducting experiments we obtain information about the physical world.
This information can be organized into knowledge allowing us to predict how the world would behave under different circumstances and to design complex systems with desired functionalities. In particular, the development of sophisticated quantum technologies is guided by knowledge acquired in experiments on quantum systems.

In this work, we build on computational learning theory \cite{kearns1994introduction, goodfellow2016deep, mohri2018foundations} to develop a rigorous theory for reasoning about what can and cannot be learned in a world governed by quantum mechanics.
We represent scientists and their classical algorithms abstractly as classical agents that conduct experiments by specifying actions that control a quantum system.
Actions include specifying which initial states are prepared, which CPTP maps are performed, and which POVM measurements are executed at the end of an experiment.
Classical agents do not have perfect knowledge about how these actions affect the quantum world, e.g., what state is prepared, or what quantum processes and measurements are actually implemented.
But classical agents can improve their understanding of the physical world through experiments.
We refer to the mapping from each action to the corresponding physical operation as a world model $\cW$.
The central question we would like to answer is: What can a classical agent learn about the true world model $\mathcal{W}_{\mathrm{true}}$ describing the underlying physical reality?

To answer this question, we make a distinction between two different viewpoints for assessing how well the classical agent learns.
The first viewpoint judges whether the classical agent accurately \emph{learns the intrinsic descriptions} of the physical operations.
This is the viewpoint commonly considered in quantum state or process tomography \cite{hradil1997quantum,gross2010quantum,blume2010optimal, d2003quantum, mohseni2008quantum, o2016efficient,o2017efficient,haah2017sample, guta2020fast, huang2020predicting, huang2021provably} or in Hamiltonian learning~\cite{anshu2021sample,haah2021optimal}.
For example, given a quantum computer, we might want the classical agent to characterize the noise afflicting the state preparations, maps, or measurements acting on various qubits.
Knowledge about the intrinsic descriptions of the physical operations is crucial for calibrating, controlling, and improving a complex quantum many-body system \cite{wiseman2009quantum, brif2010control, yang2014robust}.
If, for example, the classical agent finds the measurements to be particularly noisy, we should focus on improving our measurement procedure.

The second viewpoint examines whether the classical agent can \emph{predict the extrinsic behavior} of the quantum system.
In particular, given an experiment, we ask the classical agent to predict the experimental outcomes.
The classical agent does not need to learn the actual descriptions of the physical operations and is free to use any model as long as the predictions are accurate.
The second viewpoint is relevant when we want to control the quantum system to achieve specific tasks, such as mitigating errors in a particular computation \cite{temme2017error, endo2018practical, kandala2019error, strikis2021learning}.
Because the true physical descriptions are not learned, a model that can predict extrinsic behavior might not provide useful guidance regarding how to improve the quantum device.
While intrinsic descriptions are more informative than a model for predicting extrinsic behavior, intrinsic descriptions are also more challenging to learn.

% arXiv
In this work, we obtain fundamental results regarding what one can learn from noisy quantum experiments.
We also present case studies illustrating some practical implications of these foundational results.
In related work to appear elsewhere, we will present a versatile mathematical framework for developing rigorous neural network algorithms capable of learning a wide range of noise models in quantum many-body systems.

% NeurIPS
% In this work, we obtain fundamental results regarding what one can learn from noisy quantum experiments.
% We also present case studies illustrating some practical implications of these foundational results.
% We believe that this work opens a new direction at the intersection of machine learning and quantum physics, where most existing works consider how quantum computers can speed up machine learning tasks \cite{kapoor2016quantum, biamonte2017quantum, chakrabarti2019quantum, bausch2020recurrent} or how to learn quantum state given the ability to perform perfect quantum computation and measurements \cite{aaronson2018online, buadescu2020improved, quek2021private, huang2020predicting}.
% To our knowledge, this is the first work that uses a learning-theoretic perspective to formulate what can and cannot be learned from quantum experiments when every operation is \emph{a priori} unknown.

\begin{figure}
    \centering
    \includegraphics[width=0.95\textwidth]{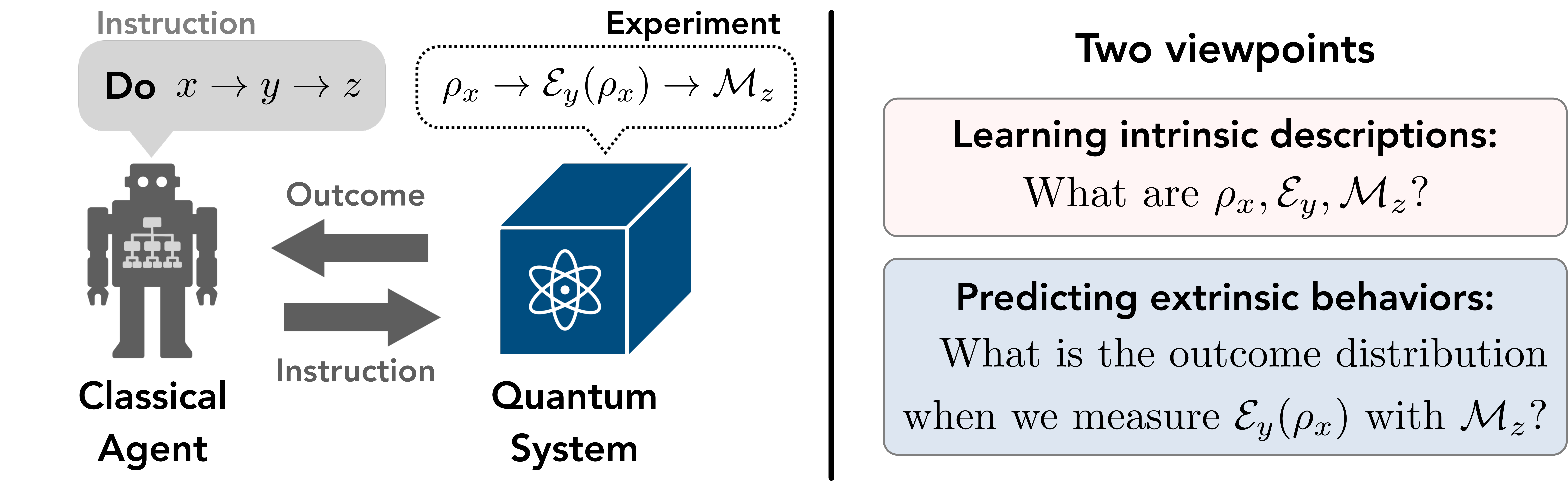}
    \caption{Illustration of a classical agent that can learn from quantum experiments. The classical agent specifies an instruction consisting of the actions $x, y, z$ for the experiment. The quantum system runs the experiment using a state $\rho_x$, a quantum process $\cE_y$, and a POVM measurement $\cM_z$ to produce an experimental outcome.
    The classical agent does not have prior knowledge on what $\rho_x, \cE_y, \cM_z$ are. The goal of the classical agent is to either learn $\rho_x, \cE_y, \cM_z$ (learning intrinsic descriptions) or predict the distribution of experimental outcomes (predicting extrinsic behaviors).}
    \label{fig:theme}
\end{figure}

\section{Foundations}

To elucidate the basic features of learning intrinsic descriptions and predicting extrinsic behaviors, we prove a series of fundamental results regarding what the classical agent can learn and how efficiently the classical agent can learn it.
We begin by establishing the formalism.
Then, we present results that elucidate the two viewpoints.
%learning intrinsic descriptions and predicting extrinsic behaviors.

\subsection{A formalism for learning from quantum experiments}

We consider three sets of actions that the classical agent could perform: $\mathcal{X}$ is the set of actions for preparing different kinds of states, $\mathcal{Y}$ consists of actions for implementing certain physical evolutions, and $\mathcal{Z}$ is a set of possible POVM measurements.
A world model $\mathcal{W}$ is a mapping from the actions to the actual physical operations.
For example, each action $y \in \mathcal{Y}$ corresponds to a quantum channel $\mathcal{E}_y$ that is applied to the quantum system when the classical agent performs the action $y$.
We represent the world model by the collection of physical operations associated to all actions,
\begin{equation}
    \mathcal{W} = \left(\{\rho_x\}_{x \in \mathcal{X}}, \{\mathcal{E}_y\}_{y \in \mathcal{Y}}, \{\mathcal{M}_z\}_{z \in \mathcal{Z}} \right),
\end{equation}
where $\rho_x$ is a quantum state, $\mathcal{E}_y$ is a quantum channel, and $\mathcal{M}_z$ is a POVM measurement.
In Appendix~\ref{sec:qworld}, we provide formal definitions of world models and formulate some relations between world models.
Due to the intrinsic degeneracy in the formulation of quantum mechanics, two world models related by a global unitary or anti-unitary transformation describe the same physical reality~\cite{Wigner1931GruppentheorieUI}.
% In Appendix~\ref{sec:proofEquivalence}, we give an operational proof showing that two world models are equivalent if and only if they are related by a global unitary or anti-unitary transformation.

The classical agent conducts an experiment by instructing the quantum system to execute a given sequence of actions.
Each experiment $E$ begins with an action $x \in \mathcal{X}$ that prepares the state $\rho_x$, followed by a sequence of actions $y_1, \ldots, y_L \in \mathcal{Y}$ that evolves the state $\rho_x$ to $\mathcal{E}_{y_L}(\ldots \mathcal{E}_{y_2}(\mathcal{E}_{y_1}(\rho_x))\ldots)$, and the experiment ends with an action $z \in \mathcal{Z}$ that produces a measurement outcome after performing the POVM $\mathcal{M}_z$ on $\mathcal{E}_{y_L}(\ldots \mathcal{E}_{y_2}(\mathcal{E}_{y_1}(\rho_x))\ldots)$.
We suppose that the classical agent has prior knowledge that the true world model $\mathcal{W}_{\mathrm{true}}$ belongs to a class of candidate models $\mathcal{Q} = \{\mathcal{W}\}$.
The goal of the classical agent is to learn about the true world model $\mathcal{W}_{\mathrm{true}}$ by conducting experiments.
The set of models $\mathcal{Q}$ will be referred to as the \emph{model class}, following the nomenclature in classical learning theory~\cite{mohri2018foundations}.

Note that this framework is very general.
For example, the set $\mathcal{Q}$ of candidate models could be uncountably large, which might describe an unknown noise processes occurring in an experimental setup described by continuous parameters.
There could even be non-Markovian or time-dependent effects in this framework.
For example, in a real experiment a sequence of actions $x, y_1, y_2, y_3$ might prepare a state $\mathcal{E}_{y_3|y_2 y_1}( \mathcal{E}_{y_2|y_1}( \mathcal{E}_{y_1}(\rho_x)))$ where each physical operation depends on the history of previous actions.
Superficially, this appears to lie outside the framework.
However, these time-dependent and non-Markovian effects can be put into our framework by embedding into a new world model where state preparations contain an additional memory register, which is continually updated as subsequent actions are performed.
The details of this mapping are in Appendix~\ref{sec:time-depend}.
Thus, our framework can model many situations where a classical agent learns about a quantum system.

\subsection{Learning intrinsic descriptions of the world model}

We will say that a model class $\mathcal{Q}$ is \emph{learnable} if for any world model $\mathcal{W}\in \mathcal{Q}$,
the classical agent can determine all physical operations in $\mathcal{W}$ with an arbitrarily small error using a finite number of experiments, up to one global unitary or anti-unitary transformation. A formal definition of learnability is given in Appendix~\ref{sec:learning_theory_foundations}.
A similar criterion is often invoked in discussions of quantum tomography; however, in quantum state/process/measurement tomography, it is typically assumed that some actions are perfectly known, and the goal is to characterize other actions which are not perfectly known. For example, we might envision using perfect state preparations and perfect measurements to characterize noisy gates. We will consider instead a setting in which all actions are unknown \emph{a priori}, and our goal is to learn the true physical description of each action to arbitrary accuracy.

Learning the intrinsic physical description might be challenging or even impossible because the classical agent can perform only a limited set of actions.
For example, a classical agent cannot distinguish between the following two distinct physical realities in a single-qubit system with universal control using Hadamard ($H$) and $T$ ( $= Z^{1/4}$) gates,
\begin{align}
    \mathcal{W}^A: \quad \rho^A_0 &= I/2, & \mathcal{E}^A_H(\rho) &= H\rho H^\dagger, & \mathcal{E}^A_T(\rho) &= T\rho T^\dagger, & \mathcal{M}^A_0 &= \{\ketbra{0}{0}, \ketbra{1}{1}\}, \label{eq:worldA-maintext} \\
    \mathcal{W}^B: \quad \rho^B_0 &= I/2, & \mathcal{E}^B_H(\rho) &= I/2, & \mathcal{E}^B_T(\rho) &= I/2, & \mathcal{M}^B_0 &= \{\ketbra{0}{0}, \ketbra{1}{1}\}, \label{eq:worldB-maintext}
\end{align}
with actions $\mathcal{X} = \{0\}, \mathcal{Y} = \{H, T\}, \mathcal{Z} = \{0\}$.
In both world A and world B, no matter what sequence of actions drawn from $\{H,T\}$ is performed between the initial state preparation and the final measurement, the probability distribution governing the measurement outcomes is always the same, namely the uniform distribution over the two outcomes $\{0, 1\}$. Hence, although the maps labeled by $H$ and $T$ are certainly intrinsically different in world A than in world B, this difference cannot be perceived by the classical agent. It follows that any model class $\mathcal{Q}$ that contains both $\mathcal{W}^A$ and $\mathcal{W}^B$ is unlearnable, even if $\mathcal{Q}$ is uncountably large.

If the classical agent has no prior knowledge of the state preparations, maps, and measurements, is it possible to learn the intrinsic descriptions of all actions to arbitrary accuracy? Our first result shows that a model class $\mathcal{Q}$ is learnable if a suitable condition is met. Suppose that each world model $\mathcal{W}\in\mathcal{Q}$ contains a nontrivial measurement (one whose outcome depends on the state being measured), and that each $\mathcal{W}\in \mathcal{Q}$ contains actions that fully explore the quantum state space. Then $\mathcal{Q}$ is learnable. This observation motivates the following definition.

\begin{definition} \label{def:universal}
A world model $\mathcal{W} = \left(\{\rho_x\}_{x \in \mathcal{X}}, \{\mathcal{E}_y\}_{y \in \mathcal{Y}}, \{\mathcal{M}_z\}_{z \in \mathcal{Z}} \right)$ is \emph{universal} if (i) there is $x \in \mathcal{X}$, such that the state $\rho_x$ is pure; (ii) there are $y_1, \ldots, y_k \in \mathcal{Y}$, such that the maps $\mathcal{E}_{y_1}, \ldots \mathcal{E}_{y_k}$ constitute a universal set of unitary transformations; (iii) there is $z \in \mathcal{Z}$, such that the POVM $\mathcal{M}_z$ has at least one POVM element not proportional to the identity.
\end{definition}

When a world model is universal, we do not assume that the classical agent knows for which value of $x$ the state $\rho_x$ is pure, for which values of $y$ the maps form a universal set, or for which value of $z$ the measurement is nontrivial. Nor does the classical agent have any other prior knowledge of the state preparations, maps, and measurements. Nevertheless, the agent can learn all actions to arbitrary accuracy according to the following theorem.

\begin{theorem}[Learning intrinsic descriptions]
\label{thm:int-description}
Consider a (possibly uncountably large) model class $\cQ$ such that each candidate world model in $\cQ$ is universal as in Definition~\ref{def:universal}.
If the true world model is $\cW_{\mathrm{true}} \in \cQ$, then the classical agent can learn the description of every action in $\cW_{\mathrm{true}}$ to arbitrarily small error (up to one global unitary or anti-unitary transformation).
\end{theorem}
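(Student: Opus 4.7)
I would split the argument into two stages: an \emph{identification} stage in which the agent uses experiments to locate, among the available actions, a fiducial pure preparation $\rho_{x^*}$, a universal subset of the gates, and a nontrivial POVM $\mathcal{M}_{z^*}$; and a \emph{reconstruction} stage in which the agent uses these fiducials, together with the universal gate set, to perform informationally complete tomography on every $\rho_x$, $\mathcal{E}_y$, and $\mathcal{M}_z$. The advertised gauge freedom (one global unitary or anti-unitary) then falls out of the standard gauge structure of tomographic reconstruction via Wigner's theorem.

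For identification I would exploit three observable signatures. A trivial measurement produces the same outcome distribution regardless of the input state, so the agent flags a $z \in \mathcal{Z}$ as a candidate $z^*$ once its outcome statistics are seen to vary across choices of $x$ and of gate sequences. Given such a $z^*$ (which must then carry a POVM element of rank strictly less than $d$), a pure $\rho_x$ can be steered arbitrarily close to an eigenvector of that element by some finite action sequence, pushing the corresponding outcome probability to an extremal value $0$ or $1$; a mixed $\rho_x$, in contrast, can never produce a deterministic outcome under unitary evolution followed by a POVM whose effects have no extremal rank. The agent therefore identifies $x^*$ as an $x$ admitting a gate sequence that drives outcomes to extremal values, and simultaneously identifies a universal subset of $\mathcal{Y}$: these are the actions which preserve purity (again detectable via extremal statistics on pure inputs) and generate an orbit rich enough to witness the extremal outcome.

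For reconstruction, density of the generated unitary group in $SU(d)$ (by a Solovay--Kitaev-type argument applied to the identified universal subset) guarantees that finite sequences of the identified gates approximate any target unitary arbitrarily well. The agent then synthesizes an informationally complete set of preparations by applying such sequences to $\rho_{x^*}$, and an informationally complete set of measurement settings by applying them before $\mathcal{M}_{z^*}$. Standard state, process, and measurement tomography with these fiducial sets recovers each $\rho_x$, $\mathcal{E}_y$, and $\mathcal{M}_z$ up to a simultaneous change of basis on the Hilbert space, which by Wigner's theorem is precisely a global unitary or anti-unitary transformation.

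The main obstacle I expect is making the identification stage quantitatively robust: the agent has only sample access to outcome probabilities and no prior knowledge of which actions produce which operations, so the three signatures above must be shown to be $\Omega(\delta)$-robust, meaning that any candidate triple passing the tests to accuracy $\delta$ lies within $O(\mathrm{poly}(\delta))$ of a genuine fiducial triple. Combined with a continuity argument for the map from outcome statistics to gauge-equivalence classes of universal world models --- provided by the informational completeness of the synthesized tomographic fiducials --- this converts finite sampling precision into a finite reconstruction error that vanishes as the experimental budget grows, yielding the theorem.
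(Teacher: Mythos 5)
Your skeleton (identify a pure fiducial, the unitary actions, and a nontrivial POVM; then synthesize an informationally complete frame and do tomography; invoke Wigner for the residual gauge) matches the paper's outline, but there is a genuine gap at the decisive step. Reconstructing $\{\rho_x,\cE_y,\cM_z\}$ against fiducials whose own descriptions are \emph{unknown} (unknown gate sequences applied to an unknown pure state, measured by an unknown POVM) is exactly the gate-set-tomography situation: informational completeness plus outcome statistics only determines the world model up to an arbitrary invertible similarity (gauge) transformation, not up to a unitary or anti-unitary one. Wigner's theorem does not rescue this by itself, because it presupposes access to genuine transition probabilities $|\langle\psi|\phi\rangle|^2$, and your protocol never produces any absolutely calibrated quantity of that kind. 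The paper's proof supplies precisely this missing ingredient: it shows that random compositions of the identified unitary actions approximate a Haar (in fact $2$-design) ensemble via a random-walk contraction theorem, and uses this to build an estimator for the overlap $\Tr(\rho_1\rho_2)$ of any two preparable states (after self-calibrating the unknown effect through $f_{M_{zb}}(\rho,\rho)=\alpha_{M_{zb}}\Tr(\rho^2)+\beta_{M_{zb}}$). With overlaps in hand it constructs a special family of states whose pairwise overlaps realize a rigid geometry (an orthonormal basis, real and imaginary two-state superpositions, and three-state superpositions), and it is this rigidity that pins the description down to exactly one global unitary or anti-unitary freedom before ordinary tomography is run. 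Without an overlap-estimation mechanism, your final claim that the ambiguity is ``precisely a global unitary or anti-unitary transformation'' is unsupported, and the conclusion of the theorem (intrinsic descriptions, not merely extrinsic/gauge-equivalent ones) does not follow.

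A second, more local problem is your identification stage. The extremal-statistics test for purity fails in general: a POVM element not proportional to the identity need not have any eigenvalue equal to $0$ or $1$ (so even pure states never yield deterministic outcomes), and if its top eigenvalue is degenerate, a mixed state supported on that eigenspace attains the same maximal probability as a pure state, so maximizing outcome probabilities over gate sequences can select a mixed fiducial. Likewise, certifying a gate as unitary by ``purity preservation'' presupposes the very purity test that is in question, and also admits false positives such as constant maps onto a pure state unless ruled out separately. The paper instead tests unitarity operationally through invertibility (a composed map is accepted as unitary iff some further composition is statistically indistinguishable from doing nothing, in all preparation/measurement contexts), and detects purity by maximizing the two-design-averaged self-correlation $f_{M_{zb}}(\rho_x,\rho_x)$, both of which are robust to the spectral pathologies above. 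You would need analogous repairs for your identification signatures before the quantitative ``$\Omega(\delta)$-robustness'' you hope for could even be formulated.
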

\begin{proof}[Proof idea]
Here, we present the general idea of the proof; further details are in Appendix~\ref{sec:int-description}.
First, we design a procedure for testing whether a composed CPTP map $\mathcal{E}_{y_k} \circ \ldots \circ \mathcal{E}_{y_1}$ is the identity map. Then, we use the fact that unitaries are the only reversible CPTP maps and the identity testing procedure to create a protocol for testing whether a CPTP map $\mathcal{E}_{y}$ is a unitary.
The unitary test allows the learning agent to identify every action that implements a unitary transformation to within a specified error tolerance.

Next, we show that a random composition of the identified unitaries forms an approximate Haar random unitary.
This is proven using a contraction theorem for random walks on compact semi-simple Lie groups \cite{Varj2012RandomWI}.
This result holds for any universal set of unitaries without the need to include inverses or have algebraic entries, but the theorem is weaker than the spectral gap theorem in \cite{bourgain2012spectral}.
Building on Corollary~7 in \cite{Varj2012RandomWI}, we can show that the expectation value of any Lipschitz continuous function over randomly composed unitaries (chosen from a universal set) is  approximately equal to the expectation value over Haar random unitaries.
Hence, although the learning agent doesn't know what each unitary is, the learning agent can still sample from approximately Haar-random unitaries.
In fact, it will suffice to sample approximately from a unitary two-design.
Up to this point, the learning agent has not learned the description for any of the actions.

We then prove the following: given a procedure that samples approximately from a unitary two-design and the availability of an unknown POVM with at least one POVM element not proportional to identity, we can create a procedure that estimates the  overlap $\Tr(\rho_1 \rho_2)$ for any two states $\rho_1, \rho_2$.
This procedure makes use of the two-design property of the random unitary ensemble.

Applying this overlap estimation procedure, the learning agent can determine for which value of $x$ the state $\rho_x$ is pure, and then reach other pure states by applying unitary circuits to $\rho_x$. Through further applications of the overlap estimation procedure, the learning agent can find a special set of states $\{\ket{\psi_k}\}$ with a particular geometry. Specifically, the special set of states corresponds to an orthonormal basis of pure states $\ket{e_1}, \ldots, \ket{e_d}$, superpositions of pairs of these basis states $\frac{1}{\sqrt{2}}(\ket{e_i} + \ket{e_j})$, $\frac{1}{\sqrt{2}}(\ket{e_i} + \rmi \ket{e_j})$ with a real or imaginary relative phase, and also superpositions of three basis states $\frac{1}{\sqrt{3}}(\ket{e_1} + \ket{e_i} + \ket{e_j})$, $\frac{1}{\sqrt{3}}(\ket{e_1} + \rmi \ket{e_2} + \rmi \ket{e_j})$, $\frac{1}{\sqrt{3}}(\ket{e_1} + \ket{e_i} + \rmi \ket{e_j})$.
The only ambiguity in this procedure is that all experiments would yield the same results if each state $\ket{\psi_k}$ were replaced by $U\ket{\psi_k}$ where $U$ is a fixed unitary transformation, and/or if $(i)$ were replaced by $(-i)$ in the superpositions of basis states (i.e., if a fixed antiunitary transformation were applied to each state $\ket{\psi_k}$).
This ambiguity is also present in the formulation of quantum mechanics as characterized by Wigner's theorem \cite{Wigner1931GruppentheorieUI}.
Without the superpositions of three basis states, there would be further ambiguity beyond the freedom to perform a global transformation.

Given the special set of states $\{\ket{\psi_k}\}$ and the procedure for estimating state overlaps, the classical agent can perform a version of quantum state tomography to learn the physical representation of $\rho_x$ for all $x\in\mathcal{X}$.
Similarly, the classical agent can learn the CPTP map $\mathcal{E}_y$ for all $y\in\mathcal{Y}$ by performing quantum state tomography on the states $\mathcal{E}_y(\ketbra{\psi_k}{\psi_k})$ for all special states $\ket{\psi_k}$.
The classical agent can also learn the POVM  $\mathcal{M}_z$ for all $z\in\mathcal{Z}$ from the outcome probability distribution when $\mathcal{M}_z$ is performed on the special states $\{\ket{\psi_k}\}$.
Hence, the classical agent has learned all the actions, up to a global unitary or antiunitary transformation.
\end{proof}

When we cannot explore the quantum state space completely, it may be impossible to learn the intrinsic description to arbitrarily small error even with an infinite number of experiments.
In  Appendix~\ref{sec:basiclearna}, we present some basic results establishing how modifying a model class affects the learnability of the model class.
Building on these results, we prove the unlearnability of various classes of quantum systems in Appendices~\ref{sec:cliffordT}~and~\ref{sec:noisystate}.

As a further example illustrating the concept of unlearnability, consider an $n$-qubit system with an initial state $\rho_0$, a set of unital CPTP maps ($\cE_y(I) = I$ for all $y$), and a POVM measurement $\cM_0$.
It is not hard to see that no learning algorithm can distinguish between the following two physical realities.
\begin{enumerate}
    \item The initial state is slightly depolarized, $\rho_0 = 0.9 \ketbra{0^n}{0^n} + 0.1 (I / 2^n)$. But the computational basis measurement is perfect, $\cM_0 = \{\ketbra{b}{b}\}_{b \in \{0, 1\}^n}$.
    \item The initial state is perfect, $\rho_0 = \ketbra{0^n}{0^n}$. But the computational basis measurement is slightly depolarized, $\cM_0 = \{ 0.9 \ketbra{b}{b} + 0.1( I / 2^n) \}_{b \in \{0, 1\}^n}$.
\end{enumerate}
\noindent This follows immediately from the property that unital maps preserve the maximally mixed state. We will present a less trivial example in Section~\ref{sec:unknown-error} and discuss some practical implications.

\subsection{Predicting extrinsic behavior of the world model}

Rather than insisting that the classical agent learn the intrinsic description of a world model, we might settle instead for a description of the world model's extrinsic behavior.
We give a formal definition of extrinsic behavior in Appendix~\ref{sec:behav-predict}.
Suppose we would like to predict the probability distribution of measurement outcomes for any experiment with $L$ CPTP maps, and suppose there are only finitely many actions.
How many experiments will the classical agent need to perform in order to make accurate predictions?
It would certainly suffice to perform each possible experiment many times, but because there are $|\mathcal{X}| |\mathcal{Y}|^L |\mathcal{Z}|$ possible experiments, this procedure is very costly for large $L$.
Is there a more efficient method?
In fact, in the worst case we can improve on the cost of this exhaustive procedure by at best a logarithmic factor.

 \begin{theorem}[Worst case complexity for predicting extrinsic behavior] \label{thm:worst-case-behav}
 To predict the probability of each experimental outcome to error $\epsilon$ for every experiment with $L$ maps, the classical agent has to perform at least $\Omega(|\mathcal{X}| |\mathcal{Y}|^L |\mathcal{Z}| / \epsilon^2)$ experiments in the worst case, and the classical agent can always achieve the task by running $\widetilde{\mathcal{O}}(|\mathcal{X}| |\mathcal{Y}|^L |\mathcal{Z}| / \epsilon^2)$ experiments\footnote{$\widetilde{\mathcal{O}}(\cdot)$ neglects the logarithmic factors.}.
 \end{theorem}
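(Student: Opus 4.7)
The plan is to treat the theorem as two separate claims, an information-theoretic upper bound realized by a naive but careful enumeration strategy, and a matching lower bound obtained by reducing to the problem of simultaneously estimating many independent Bernoulli parameters.

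For the upper bound I would proceed by brute force and show that the logarithmic overhead suffices. Enumerate the set of possible experiments $\mathcal{S} = \mathcal{X}\times \mathcal{Y}^L\times \mathcal{Z}$, whose cardinality is $M := |\mathcal{X}|\,|\mathcal{Y}|^L\,|\mathcal{Z}|$. For each experiment $e\in \mathcal{S}$ with outcome alphabet of size at most $K$ (where $K$ is the maximum number of POVM elements over $z\in\mathcal{Z}$, and we may assume $\log K$ is polynomial in the natural parameters), run $T = C\log(MK/\delta)/\epsilon^{2}$ independent repetitions and record the empirical frequencies of the outcomes. A Hoeffding/Bernstein bound followed by a union bound over all $MK$ (experiment, outcome) pairs guarantees that with probability $1-\delta$ every empirical outcome probability is within $\epsilon$ of its true value, which is precisely the extrinsic-behavior prediction we want. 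The total experiment count is $MT = \widetilde{\mathcal{O}}(M/\epsilon^{2})$.

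For the lower bound I would exhibit a single model class $\mathcal{Q}$ for which distinguishing members already forces $\Omega(M/\epsilon^{2})$ experiments. Index the experiments by $e = (x,y_1,\dots,y_L,z)$ and parameterize $\mathcal{Q}$ by sign vectors $\vec{\theta}\in\{-1,+1\}^{M}$, arranging the world model so that experiment $e$ under $\mathcal{W}_{\vec{\theta}}$ yields a two-outcome Bernoulli random variable with bias $\tfrac{1}{2}+\tfrac{\epsilon}{4}\theta_{e}$, and so that changing $\theta_e$ affects only experiment $e$. This independence across experiments is the crucial structural feature: it turns the task into $M$ separate parameter-estimation subproblems. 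Given this, a standard Assouad-style argument, or equivalently a reduction to $M$ independent two-hypothesis testing problems with KL-divergence $\mathcal{O}(\epsilon^{2})$, implies that any (possibly adaptive) strategy achieving $\epsilon$-accuracy on every experiment with constant probability must in total allocate $\Omega(1/\epsilon^{2})$ samples to each index $e$, giving $\Omega(M/\epsilon^{2})$ experiments overall.

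The main obstacle is realizing the parameterized family $\{\mathcal{W}_{\vec{\theta}}\}$ as a legitimate world model in the formalism, i.e.\ with genuine states, CPTP maps, and POVMs that obey the "change one $\theta_e$, affect only experiment $e$" property. My plan is to use a high-dimensional ancilla register and design each $\mathcal{E}_{y}$ to act as a classical reversible permutation that appends $y$ to a history register, so that the sequence of maps $\mathcal{E}_{y_{L}}\circ\cdots\circ\mathcal{E}_{y_{1}}$ produces the orthogonal "history state" $\ket{y_1 y_2\cdots y_L}$ in the ancilla, and each preparation $\rho_{x}$ and measurement $\mathcal{M}_{z}$ acts only on a small control subsystem whose behavior is conditioned on the full history. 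Because the history states are orthogonal, independent Bernoulli biases $\tfrac{1}{2}+\tfrac{\epsilon}{4}\theta_{x,\vec{y},z}$ can be baked into the POVM elements without any crosstalk between experiments. Once this explicit construction is in place, the lower bound reduces to the classical minimax estimation statement and the theorem follows.
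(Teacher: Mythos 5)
Your upper bound is exactly the paper's: exhaustively repeat each of the $|\mathcal{X}||\mathcal{Y}|^L|\mathcal{Z}|$ experiments $\mathcal{O}(\log(\cdot)/\epsilon^2)$ times and union-bound with Hoeffding. Your lower bound, however, takes a genuinely different route. The paper hides a \emph{single} biased experiment: it builds a ``maze'' world model of dimension $L+2$ in which only one specific action sequence $(x,y_1,\ldots,y_L,z)$ produces a biased two-outcome distribution ($\tfrac12\pm\tfrac32 s\epsilon$) and every other sequence gives uniform outcomes, and then applies LeCam's two-point method (null model vs.\ a uniform mixture over the $2|\mathcal{X}||\mathcal{Y}|^L|\mathcal{Z}|$ spiked models) with an explicit decision-tree/likelihood-ratio analysis to handle adaptivity, obtaining $\mathrm{TV}\le 9T\epsilon^2/(|\mathcal{X}||\mathcal{Y}|^L|\mathcal{Z}|)$. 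You instead propose a product family in which \emph{every} experiment carries an independent sign $\theta_e$, realized via a history register, and sum per-coordinate testing costs \`a la Assouad. Both routes can reach $\Omega(|\mathcal{X}||\mathcal{Y}|^L|\mathcal{Z}|/\epsilon^2)$; the paper's needle-in-a-haystack version uses a constant-size Hilbert space and a single mixture-vs-null test, while yours needs a Hilbert space of dimension exponential in $L$ (harmless for the theorem, since $d$ is unconstrained, and the paper itself uses the same history-register trick elsewhere) and needs a short argument that adaptivity is handled, which it is here because runs of experiment $e'\ne e$ carry no information about $\theta_e$ and the ``all experiments simultaneously'' requirement forces constant per-coordinate success, hence $\E[T_e]=\Omega(1/\epsilon^2)$ for each $e$.

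There is one concrete error you must fix: with bias $\tfrac12+\tfrac{\epsilon}{4}\theta_e$, the trivial strategy that outputs $\tfrac12$ for every experiment already meets the $\epsilon$-accuracy requirement with zero experiments, so your reduction from $\epsilon$-accurate prediction to recovering the signs $\theta_e$ collapses. The perturbation must exceed twice the allowed prediction error, e.g.\ take $\tfrac12+3\epsilon\,\theta_e$ (with $\epsilon$ below a small constant so these are valid probabilities); then an $\epsilon$-accurate prediction for experiment $e$ determines $\theta_e$, the per-run KL divergence between the two hypotheses for coordinate $e$ is still $\mathcal{O}(\epsilon^2)$, and the Assouad count gives $\Omega(|\mathcal{X}||\mathcal{Y}|^L|\mathcal{Z}|/\epsilon^2)$ as intended. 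With that constant repaired (and the adaptivity remark above made explicit), your argument is a valid alternative to the paper's proof.
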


To derive the lower bound, we construct world models that behave like a special kind of maze. The classical agent has to navigate through the maze by performing a specific sequence of actions. Whenever the classical agent makes the wrong action, he/she fails.
We then combine this picture with a proof technique used in \cite{huang2021information, chen2021exponential} to establish the stated lower bound.
The detailed proof is given in Appendix~\ref{sec:hardness-behav}.

One way to avoid this daunting worst case complexity is to assume that we have found a set of composed states and POVM elements such that both sets span all possible states one can generate by performing actions in the world model.
The following theorem shows that in such cases a speed-up exponential in $L$ over the worst case can be achieved.
The full algorithm and the proof are given in Appendix~\ref{sec:thm-extrinsic-behav}.

\begin{theorem}[Predicting extrinsic behavior] \label{thm:ext-behavior}
Suppose we have found a set of unknown linearly independent states composed from $\{\rho_x,\mathcal{E}_y\}$ and a set of unknown POVM elements composed from $\{\cE_y, \cM_z\}$, such that both sets span all the states that can be prepared in world model $\mathcal{W}$.
Then, we can predict the probability of each experimental outcome to error $\epsilon$ for any experiment with $L$ maps after running $\widetilde{\mathcal{O}}( (|\mathcal{X}| + L^2 |\mathcal{Y}| + |\mathcal{Z}| ) / \epsilon^2)$ experiments.
\end{theorem}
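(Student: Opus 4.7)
The plan is to reduce predicting any extrinsic behavior of $\mathcal{W}$ to evaluating a matrix-product formula in a gate-set-tomography-style representation built from the hypothesized fiducial sets. Write $\{\sigma_i\}_{i=1}^d$ for the linearly independent preparable states and $\{F_j\}_{j=1}^d$ for the linearly independent preparable POVM effects; by the spanning assumption both families form bases for the same $d$-dimensional subspace $V$ of Hermitian operators, and $V$ contains every state that can arise from any sequence of actions in $\mathcal{W}$. Define $P$ as the linear map sending an operator $\rho$ to the vector $(\Tr(F_j\rho))_{j=1}^d \in \R^d$ and $Q$ as the map sending $c \in \R^d$ to $\sum_i c_i \sigma_i$. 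Then the Gram-type matrix $A = PQ \in \R^{d \times d}$, with $A_{ji} = \Tr(F_j \sigma_i)$, is invertible, and the superoperator $QA^{-1}P$ restricts to the identity on $V$.

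First I isolate the scalars to be estimated experimentally. For each $x \in \mathcal{X}$ define $p^{(x)} \in \R^d$ by $p^{(x)}_j = \Tr(F_j \rho_x)$; for each $y \in \mathcal{Y}$ the matrix $T^{(y)} \in \R^{d \times d}$ by $T^{(y)}_{ji} = \Tr(F_j\, \mathcal{E}_y(\sigma_i))$; and for each $z \in \mathcal{Z}$ and outcome $k$ the vector $q^{(z,k)} \in \R^d$ by $q^{(z,k)}_i = \Tr(M_{z,k}\, \sigma_i)$. Each such scalar, along with each entry of $A$, is the outcome probability of one concrete short experiment the agent can execute (prepare a fiducial state or $\rho_x$, optionally apply a single channel $\mathcal{E}_y$, measure with a fiducial effect or $\mathcal{M}_z$), so it can be estimated by repeated sampling. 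Because every intermediate state $\mathcal{E}_{y_k} \circ \cdots \circ \mathcal{E}_{y_1}(\rho_x)$ is preparable and hence lies in $V$, inserting $QA^{-1}P$ immediately after $\rho_x$ and after each channel yields the exact identity
\begin{equation}
\Tr\!\bigl(M_{z,k}\, \mathcal{E}_{y_L} \circ \cdots \circ \mathcal{E}_{y_1}(\rho_x)\bigr)
\;=\; (q^{(z,k)})^{\top}\, A^{-1}\, T^{(y_L)}\, A^{-1}\, \cdots\, A^{-1}\, T^{(y_1)}\, A^{-1}\, p^{(x)},
\end{equation}
which reduces prediction to multiplying the estimated matrices and vectors.

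Second I account for the sample complexity. By Hoeffding, each scalar probability can be estimated to additive error $\eta$ with $\widetilde{\mathcal{O}}(1/\eta^2)$ experiments; the number of distinct entries is $\mathcal{O}(d)$ per $p^{(x)}$ and per $q^{(z,k)}$, and $\mathcal{O}(d^2)$ per $T^{(y)}$ and for $A$. In the prediction formula $p^{(x)}$ and $q^{(z,k)}$ appear once, whereas $T^{(y)}$ appears up to $L$ times and $A^{-1}$ up to $L+1$ times. A standard telescoping bound for products of matrices whose operator norms on $V$ are fixed constants of $\mathcal{W}$ then gives an additive prediction error of $\mathcal{O}(L\,\eta_T + L\,\eta_A + \eta_p + \eta_q)$. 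Choosing $\eta_p, \eta_q = \Theta(\epsilon)$ and $\eta_T, \eta_A = \Theta(\epsilon/L)$ and summing across actions gives the advertised total $\widetilde{\mathcal{O}}((|\mathcal{X}| + L^2|\mathcal{Y}| + |\mathcal{Z}|)/\epsilon^2)$, with the single matrix $A$ and the dimension factor $d$ absorbed into the tilde since $d$ is a constant of $\mathcal{W}$.

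The main obstacle I anticipate is controlling the stability of the matrix product under repeated application of $A^{-1}$: invertibility is automatic from the hypothesis, but the condition number of $A$ is what multiplies per-factor accuracy into the final error. The resolution is to observe that $A$ is the change-of-basis matrix between two linearly independent bases of the fixed subspace $V$, so $\|A^{-1}\|$ and the operator norms of $T^{(y)}$ on $V$ are all world-model constants independent of $L$, $\epsilon$, $|\mathcal{X}|$, $|\mathcal{Y}|$, $|\mathcal{Z}|$. Combined with the standard perturbation bound $\|\hat{A}^{-1} - A^{-1}\| = \mathcal{O}(\|A^{-1}\|^2 \|\hat{A} - A\|)$, valid once $\|\hat{A} - A\| < 1/\|A^{-1}\|$, this closes the telescoping estimate and yields the claimed sample-complexity bound.
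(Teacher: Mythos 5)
Your reduction to a matrix-product formula in the fiducial frame is the same gate-set-tomography-style route the paper takes, and your accounting of which scalars to estimate (and at which accuracies) matches the intended complexity; the genuine gap is in the error-propagation step. You invoke ``a standard telescoping bound for products of matrices whose operator norms on $V$ are fixed constants of $\mathcal{W}$'' to conclude an additive error $\mathcal{O}(L\eta_T + L\eta_A + \eta_p + \eta_q)$. Telescoping only gives an accumulation linear in $L$ if the relevant norms are at most $1$; if the constant is some $C>1$ the naive bound carries a factor $C^{L}$, exponential in $L$, and your choice $\eta_T,\eta_A=\Theta(\epsilon/L)$ no longer suffices. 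And in the natural $\ell_1$-norm on coefficient vectors the true transfer matrices are \emph{not} contractions in general: each column of $A^{-1}T^{(y)}$ is the coefficient vector of a state in the span, so $\|A^{-1}T^{(y)}\|_{1\to 1}$ is only bounded by the constant the paper calls $R_1$, which can exceed $1$. What actually closes the argument is that the \emph{true} maps are contractive in the norm $\|\alpha\|_{*}:=\norm{\sum_i\alpha_i\sigma_i}_1$ (trace-norm contractivity of physical channels on the represented operators), combined with a mechanism that keeps the \emph{estimated} intermediate vectors inside a bounded set of admissible coefficient vectors. The paper supplies this mechanism explicitly: after every estimated transfer matrix it inserts a projection $P$ onto the convex set $\{\alpha:\sum_{k}\alpha_{k}=1,\ \norm{\alpha}_1\le R_1\}$, and the induction then bounds each new error term by world-model constants times the per-entry accuracy, yielding the linear-in-$L$ accumulation. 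Without such a projection, or without an explicit argument that the estimated product compounds only as $(1+\mathcal{O}(\epsilon/L))^{L}=\mathcal{O}(1)$ under your parameter choice, your step from per-factor accuracy $\epsilon/L$ to total prediction error $\epsilon$ does not follow as written.

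A smaller mismatch: the hypothesis only says the span of the composed POVM elements \emph{contains} the span of the preparable states, so the two fiducial families need not have equal cardinality and your Gram matrix $A$ need not be square or invertible as stated; recovery of coefficient vectors should go through a least-squares or constrained regression (this is exactly the role of the paper's convex program and its stability constant $R_2$) rather than a literal $A^{-1}$. This does not affect the experiment count, but the conditioning constant entering your perturbation bound is then an $R_2$-type constant rather than $\norm{A^{-1}}$, and it should be flagged as a property of the chosen spanning sets, just as in the paper.
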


Comparing Theorems \ref{thm:worst-case-behav} and \ref{thm:ext-behavior}, we infer that the spanning set of composed states and POVM elements assumed in Theorem \ref{thm:ext-behavior} cannot be found efficiently in the worst case.
But for many world models they can be found efficiently.

In Appendix~\ref{sec:related-work} we discuss gate set tomography from the perspective of our mathematical theory.
There we argue that existing gate set tomography protocols \cite{greenbaum2015introduction, blume2017demonstration, nielsen2020gate, brieger2021compressive} may be viewed as methods for predicting the extrinsic behavior of a world model.
Based on this understanding, the algorithm used in the proof of Theorem~\ref{thm:ext-behavior} establishes rigorous gate set tomography learning algorithms with provable prediction guarantees under suitable conditions.

\section{Case studies}

Let us now consider three case studies that illustrate the power and wide applicability of the formalism that we have developed.

\subsection{Learning under gate-independent noise on Clifford gates}

A common assumption used in the randomized benchmarking literature \cite{emerson2005scalable,knill2008randomized, magesan2011scalable, kimmel2014robust, roth2018recovering, helsen2021estimating} is that the noise processes afflicting Clifford gates are independent of the specific gate.
This assumption applies to all elements in the Clifford group and is used to ensure that randomized benchmarking can extract an accurate estimate for the average gate fidelity \cite{proctor2017randomized}, although weak gate-dependent noise can also be accommodated~\cite{Wallman2018,Merkel2021}.
More precisely, when the classical agent performs a Clifford gate $C$, the CPTP map implemented is
\begin{equation}
    \cE_C(\rho) =  \cN(C \rho C^\dagger),
\end{equation}
where $\cN$ is a CPTP map close to the identity that does not depend on the gate $C$.
%We leave open the possibility of only requiring a gate-independent noise on a few Clifford gates.
In the benchmarking literature, it is also commonly assumed that there is a noisy zero state $\rho_0$ that is close to the all zero state $\ket{0^n}$, and there is a noisy computational basis measurement $\cM_0 = \{M_b\}_{b \in \{0, 1\}^n}$, where $M_b$ is close to $\ketbra{b}{b}$.
We will refer to these assumptions as bounded gate-independent noise on Clifford gates, and bounded noise on initial zero state preparation and computational basis measurement.
Typically, these assumptions are expected to hold (approximately) only for a subsystem consisting of a constant number of qubits in a many-qubit quantum computer.

Under these assumptions, we give a simple and practical algorithm for learning every physical operation up to one unknown parameter $f = \bra{0^n} \rho_0 \ket{0^n}$.
The parameter $f$ is the fidelity of the noisy zero state $\rho_0$, which is assumed to be close to one.
Using techniques presented in the appendices, it is straightforward to show that $f$ is unlearnable.
To see this, assume that the noise channel $\cN$ is the identity channel, and use the result in Appendix~\ref{sec:noisystate} showing that one cannot distinguish whether the initial state or the measurement is subjected to a depolarizing channel.
When the Clifford gate noise $\cN$ is not assumed to be the identity channel, the learning does not become any easier; hence $f$ is still unlearnable.

When the Clifford gate noise is unital, $\cN(I) = I$, the unlearnability of $f$ does not prevent
$\cN$ from being determined to arbitrarily small error.
Under this assumption of unital noise, several existing algorithms, robust to state preparation and measurement error, have been proposed to learn $\cN$ using information obtained from randomized benchmarking \cite{kimmel2014robust, roth2018recovering, helsen2021estimating}.
The best existing algorithm \cite{helsen2021estimating} learns $\cN$ using $\mathcal{O}(d^8 \log d)$ experiments, where $d = 2^n$.
Our proposed algorithm only requires $\mathcal{O}(d^4 \log d)$ experiments.
This number of experiments scales linearly, up to a logarithmic factor, with the number of parameters characterizing the noise channel.
The rigorous guarantees are expressed in the following theorem; the proof is in Appendix~\ref{sec:gate-indep-Clif}.

\begin{theorem}[Learning gate-independent noise on Clifford gates] \label{thm:learn-gate-indp}
Assume bounded gate-independent noise on Clifford gates, and bounded noise on initial zero state preparation and computational basis measurement.
All states, processes, and measurements $\{\rho_x, \cE_y, \cM_z\}$ can be learned up to a single unlearnable parameter $f = \bra{0^n} \rho_0 \ket{0^n}$.

Furthermore, if the Clifford gate noise channel $\cN$ obeys $\cN(I) = I$, then the Choi matrix for $\cN$ can be learned up to error $\epsilon$ in the Hilbert-Schmidt norm using $\mathcal{O}(d^4 \log d / \epsilon^2)$ experiments.
\end{theorem}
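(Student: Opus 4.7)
The plan has three stages: establish the unlearnability of $f$, exhibit an efficient Clifford-based protocol for learning the unital $\cN$, and bootstrap from $\cN$ to reconstruct all other operations up to the residual $f$-ambiguity.

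\emph{Step 1: unlearnability of $f$.} I would exhibit a one-parameter symmetry of the outcome statistics that moves $f$ without changing any observable probability. When $\cN$ is unital, the depolarizing channel $\cD_\lambda(\sigma) = \lambda \sigma + (1-\lambda)(I/d)\Tr(\sigma)$ commutes with Clifford conjugation and with $\cN$ on the traceless component, so pushing $\cD_\lambda$ off $\rho_0$ and onto the effective POVM (using the self-duality of $\cD_\lambda$) yields an indistinguishable world model in which $f$ is shifted to $\lambda f + (1-\lambda)/d$. Hence $f$ is not identifiable from data alone. When $\cN$ is not unital the same family still produces indistinguishable models on the accessible actions, so the non-identifiability survives, matching the remark preceding the theorem.

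\emph{Step 2: learning $\cN$ when unital.} The idea is to design randomized-benchmarking-type experiments whose outcome distributions, after linear post-processing, depend only on individual Pauli transfer matrix entries $\lambda_{P,Q} = d^{-1}\Tr[P\,\cN(Q)]$ and on an $f$-dependent SPAM prefactor. For each pair of non-identity Paulis $(P,Q)$ and several sequence lengths $m$, one samples a sequence of $m$ uniformly random Cliffords, forms a character-weighted linear combination of measured bitstring frequencies, and fits an exponential decay in $m$; the decay rate is $\lambda_{P,Q}$ and the prefactor absorbs all SPAM. The decoding exploits the fact that the Clifford group is a unitary $2$-design, so second-moment averages give Weingarten-type orthogonality relations that separate different $(P,Q)$ contributions, with the off-diagonal entries accessed by breaking the pure-twirl symmetry through fixed Pauli interleavers in the spirit of cycle benchmarking. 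Since the Choi matrix is related to the PTM by a fixed linear bijection of condition number $\mathcal{O}(1)$, HS-norm accuracy $\epsilon$ on the Choi matrix translates to additive accuracy $\mathcal{O}(\epsilon/d^2)$ on each of the $(d^2-1)^2$ unconstrained PTM entries. Hoeffding's inequality and a union bound over the $\mathcal{O}(d^4)$ entries yield the stated sample complexity $\mathcal{O}(d^4 \log d / \epsilon^2)$.

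\emph{Step 3: bootstrapping.} Once $\cN$ has been characterized, each remaining $\rho_x$, $\cE_y$, $\cM_z$ can be reconstructed by running randomized tomography with Cliffords and algebraically peeling off the now-known $\cN$ from the data: for $\rho_x$, applying Clifford circuits followed by $\cM_0$ gives statistics linear in $\rho_x$ with coefficients determined by $\cN$ and the already-learned $\cM_0$; for $\cM_z$ the dual construction with $\rho_0$ as probe works; for $\cE_y$, a two-sided process tomography using $\cN$-corrected stabilizer frames on input and output suffices. The residual ambiguity in each reconstruction is exactly the one parametrized by $f$.

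The main obstacle is Step~2: cleanly isolating each off-diagonal $\lambda_{P,Q}$ from randomized-benchmarking data without variance blow-up. Clifford twirling alone collapses off-diagonal information into a single scalar, so additional structure, namely fixed Pauli interleavers combined with character decoding, must be introduced, and the inversion matrix mapping fitted decay rates to PTM entries must be shown to have condition number $\mathcal{O}(1)$. Doing this with explicit constants that preserve the $\mathcal{O}(d^4 \log d / \epsilon^2)$ scaling is the technical heart of the proof.
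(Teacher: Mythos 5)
There is a genuine gap in Step 2, and it sits exactly where the claimed improvement over prior work has to come from. Your scheme extracts each Pauli-transfer entry $\lambda_{P,Q}$ from its own interleaved, sequence-length-dependent decay experiment. But a fixed Pauli interleaver tied to a specific $(P,Q)$ means the data for different entries cannot be shared, so the experiment count is (number of entries) $\times$ (shots per entry). Even granting perfect isolation of each entry, HS-norm error $\epsilon$ on the Choi matrix requires per-entry accuracy $\Theta(\epsilon/d)$ (not $\epsilon/d^2$, but the sign of the error is against you either way), and Hoeffding on $[0,1]$-bounded frequencies then gives $\Omega(d^2\log d/\epsilon^2)$ shots per entry, hence $\Omega(d^6\log d/\epsilon^2)$ total -- and realistically worse once you account for fitting decays at several sequence lengths. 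Your final sentence, ``Hoeffding's inequality and a union bound over the $\mathcal{O}(d^4)$ entries yield $\mathcal{O}(d^4\log d/\epsilon^2)$,'' implicitly assumes a single shared dataset informs all entries, which contradicts the per-$(P,Q)$ interleaving you introduced precisely because a pure Clifford twirl collapses the off-diagonal information. This interleaved-benchmarking route is essentially the existing approach (Kimmel et al., Helsen et al.) that the theorem is improving upon, and it is why those protocols cost $\mathcal{O}(d^8\log d)$.

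The paper's proof takes a different route that avoids this bottleneck: no sequence-length decays at all. It runs two fixed, shallow randomized experiments (prepare $\rho_0$, one random Clifford, measure $\cM_0$; and prepare $\rho_0$, two random Cliffords, measure $\cM_0$), and builds classical-shadow/least-squares estimators directly for $\cN^\dagger(\cM_0)$, $\rho_0$, and the Choi matrix $\Phi_\cN$, with the SPAM scalars $f$ and $g$ handled by a separately estimated normalization so that the unital-case estimator of $\Tr((P\otimes Q)\Phi_\cN)$ is a ratio that is manifestly $f$-independent. The crucial point is that the same $N_B$ two-Clifford experiments are reused for all $(d^2-1)^2$ Pauli coefficients; the per-shot estimator for a given $(P,Q)$ has magnitude up to $d^2-1$ but is nonzero only with probability $\le 1/(d+1)^2$, so its variance is $\mathcal{O}(d^2)$ (using the Clifford $3$-design property for the denominator term), and Bernstein plus a union bound over the $d^4$ coefficients gives $N_B = \mathcal{O}(d^4\log d/\epsilon^2)$ with per-entry accuracy $\epsilon/d$. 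Your Step 1 is close in spirit to the paper's unlearnability argument for $f$, though for non-unital $\cN$ the gauge-pushing of $\cD_\lambda$ through the noise does not go through literally; the paper instead argues by reduction (if $f$ were learnable for general $\cN$ it would be learnable in the special case $\cN=\cI$, contradicting the state-versus-measurement depolarization ambiguity). Step 3 matches the paper's bootstrapping. To repair your proof you would need to replace the interleaved-decay design with an estimator whose single dataset is simultaneously unbiased for every Choi-matrix coefficient and whose per-shot variance is $\mathcal{O}(d^2)$, which is precisely the paper's construction.
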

\begin{proof}[Proof idea]
We conduct two sets of randomized experiments to learn about the initial state $\rho_0$ (the noisy zero state), the Clifford gate noise $\cN$, and the measurement $\cM_0$ (the noisy computational basis measurement).
The first set of experiments prepares $\rho_0$, evolves by $\cE_C$ for a random Clifford $C$, and measures $\cM_0$.
The second set of experiments prepares $\rho_0$, evolves by $\cE_{C_1}$ for a random Clifford $C_1$, evolves by $\cE_{C_2}$ for a second random Clifford $C_2$, and measures $\cM_0$. (The division of the Clifford into $C_1$ and $C_2$, each with noise channel $\cN$, is used in the postprocessing of the measurement results.) The learning procedure exploits the unitary 2-design property of random Clifford gates and follows closely the classical shadow formalism~\cite{huang2020predicting, huang2022learning}, which is based on least squares estimation of the quantum objects~\cite{guta2020fast}.

%The postprocessing of measurement outcomes is essentially the same as in the classical shadow formalism \cite{huang2020predicting, huang2022learning}, which is based on least squares estimation of the quantum objects \cite{guta2020fast}. Despite the noise in states, maps, and measurements, the unitary $2$-design property of random Clifford gates allows us to essentially neglect the noise processes and proceed with the same learning algorithm as for the noiseless case.

From the first set of experiments we can learn, up to the unlearnable parameter $f$, the state $\rho_0$ and the POVM $\cN^\dagger(\cM_0)$, which corresponds to applying $\cN$ followed by the POVM $\cM_0$,
Then, the second set of experiments allows us to learn $\cN$. Once $\cN$ and $\cN^\dagger(\cM_0)$ are known, we can invert $\cN^\dagger$ to determine $\cM_0$.

With data obtained from the first and the second sets of randomized experiments, we can learn all physical operations.
To learn the state $\rho_x$, we repeatedly perform the randomized experiment: prepare $\rho_x$, evolve by $\cE_C$ for a random Clifford $C$, and measure $\cM_0$.
To learn the CPTP map $\cE_y$, we repeatedly perform the randomized experiment: prepare $\rho_0$, evolve by $\cE_{C_1}$ for a random Clifford $C_1$, evolve by $\cE_y$, evolve by $\cE_{C_2}$ for a second random Clifford $C_2$, and measure $\cM_0$.
To learn POVM $\cM_z$, we repeat the randomized experiment: prepare $\rho_0$, evolve by $\cE_C$ for a random Clifford $C$, and measure $\cM_z$.

The unitary 2-design property of random Clifford gates and standard concentration inequalities can be used to characterize the number of experiments needed to estimate all objects with a small error.
In particular, we show that if $\cN(I) = I$ then the proposed algorithm learns the Clifford gate noise $\cN$ to error $\epsilon$ from a total of $\mathcal{O}(d^4 \log d / \epsilon^2)$ experiments.
\end{proof}

\subsection{Bit-flip or phase-flip error?}
\label{sec:unknown-error}

In this case study, we relax the assumption of gate-independent noise and consider the task of learning the noise on Clifford and $T$ gates when the noise channel depends on the gate.
We assume that the noisy version of the Clifford or $T$ gate $G$ is
\begin{equation}
    \cE_G(\rho) = \cP_G(G \rho G^\dagger),
\end{equation}
where $\cP_G$ is a Pauli channel that depends on the gate $G$ and is close to the identity; tailoring a more general noise process using randomized compiling might result in such a channel \cite{Viola2005,Knill2005,wallman2016noise}.
In the following theorem, we show that it is impossible to learn the true gate-dependent Pauli noise even in a single-qubit system. To be concrete, we suppose that the classical agent can apply noisy versions of the $H$, $S$, and $T$ gates, and hopes to characterize the noise channel for each of these gates.

\begin{theorem}[Gate-dependent Pauli noise is unlearnable with Clifford+T gates] \label{prop:Clif+T}
Consider a qubit system. Suppose one can prepare $\ket{0}$ perfectly, measure in the computational basis perfectly, and apply $H$, $S$, and $T$ gates, where each gate is followed by an unknown gate-dependent Pauli noise channel close to the identity channel.
It is impossible for any algorithm to learn all the gate-dependent Pauli noise channels to arbitrarily small error.
\end{theorem}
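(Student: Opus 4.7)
The plan is to exhibit a one-parameter family of distinct Pauli noise assignments $(\cP_H,\cP_S,\cP_T)$ that produce identical outcome distributions on every experiment. The key observation is that the only resources available to the learner are compatible with a $Z$-dephasing gauge: the initial state $\ket{0}\!\bra{0}$ and every computational-basis projector are fixed by a $Z$-dephasing channel, and both $S$ and $T$ commute with $Z$-diagonal channels while $H$ does not. This asymmetry lets us push a $Z$-dephasing channel through an arbitrary circuit in a way that modifies only the $H$-noise while leaving every observable probability untouched.

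Concretely, let $\cD_Z(\mu)$ denote the channel $\rho\mapsto(1-\mu)\rho+\mu Z\rho Z$ and consider the conjugated implementations $\cE'_U=\cD_Z(\mu)\circ\cE_U\circ\cD_Z(\mu)^{-1}$. Writing $\cE_U=\cP_U\circ\cU$ yields $\cE'_U=\cP'_U\circ\cU$ with $\cP'_U=\cD_Z(\mu)\circ\cP_U\circ(\cU\circ\cD_Z(\mu)^{-1}\circ\cU^{-1})$. The relations $SZS^\dagger=Z$ and $TZT^\dagger=Z$ imply $\cP'_S=\cP_S$ and $\cP'_T=\cP_T$, whereas $HZH^\dagger=X$ gives $\cU_H\circ\cD_Z(\mu)^{-1}\circ\cU_H^{-1}=\cD_X(\mu)^{-1}$, so $\cP'_H=\cD_Z(\mu)\circ\cP_H\circ\cD_X(\mu)^{-1}$ is again diagonal in the Pauli basis, with Pauli eigenvalues $\bigl(1,(1-2\mu)\alpha_X,\alpha_Y,\alpha_Z/(1-2\mu)\bigr)$ when $\cP_H$ has eigenvalues $(1,\alpha_X,\alpha_Y,\alpha_Z)$. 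For any gate sequence $U_1,\ldots,U_L$, the $\cD_Z(\mu)^{-1}\circ\cD_Z(\mu)$ pairs telescope to $\cD_Z(\mu)\circ\cE_{U_L}\circ\cdots\circ\cE_{U_1}\circ\cD_Z(\mu)^{-1}$ applied to $\ket{0}\!\bra{0}$; since $\cD_Z(\mu)^{-1}$ fixes $\ket{0}\!\bra{0}$ and $\cD_Z(\mu)$ preserves computational-basis diagonal entries, every measurement statistic produced by the transformed model coincides with the original.

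The only nontrivial step is to check that $\cP'_H$ remains a valid CPTP Pauli channel. Taking $\cP_H$ to be a weakly depolarizing channel with $\alpha_X=\alpha_Y=\alpha_Z=1-\epsilon$, a short calculation of the Pauli probabilities of $\cP'_H$ shows that, to leading order in $\mu$, the gauge simply transfers an amount $|\mu|$ between the $X$ and $Z$ Pauli components of $\cP_H$ while leaving the $I$ and $Y$ components fixed. Hence $\cP'_H$ is a valid Pauli channel for all $|\mu|\le\epsilon/4$, and for any nonzero $\mu$ in this range the two models $(\cP_H,\cP_S,\cP_T)$ and $(\cP'_H,\cP_S,\cP_T)$ are both ``close to identity'', experimentally indistinguishable, and separated by a positive distance; no algorithm can therefore learn all three channels to arbitrarily small error. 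The main obstacle is exactly this CPTP verification: the gauge is nontrivial only when the base noise has room in both the $X$ and $Z$ directions, which forces the counterexample to be built from a genuinely noisy depolarizing channel rather than from the identity.
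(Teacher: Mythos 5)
Your construction is correct, and it reaches the same degeneracy the paper identifies --- only the combination $\lambda^h_X\lambda^h_Z$ (together with $\lambda^h_Y$ and the $S$, $T$ eigenvalues) is observable, so the split of Hadamard noise between $X$ and $Z$ errors cannot be recovered --- but by a genuinely different route. The paper works in the Heisenberg picture: it tracks Pauli operators as ``particles'' hopping on a three-node graph under noisy $H$, $S$, $T$, shows by induction that every outcome probability is a polynomial in $\lambda^h_X\lambda^h_Z$ and the remaining eigenvalues individually (because a particle can only leave $Z$ via $\lambda^h_X$ and return via $\lambda^h_Z$), and then exhibits the two extreme models $p^h_Z=\epsilon$ versus $p^h_X=\epsilon$. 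You instead conjugate every noisy gate by the dephasing map $\cD_Z(\mu)$ and let the $\cD_Z^{-1}\cD_Z$ pairs telescope; since $\cD_Z^{-1}$ fixes $\ketbra{0}{0}$ and $\cD_Z$ preserves computational-basis probabilities, this gives indistinguishability for all circuits in one stroke, and it produces a whole one-parameter gauge orbit $(1,(1-2\mu)\alpha_X,\alpha_Y,\alpha_Z/(1-2\mu))$ rather than a single pair --- a more structural argument, at the price of needing a strictly noisy base channel so that the transformed eigenvalues remain CPTP (the paper's pair sits at the boundary and needs no such check, but requires the monomial bookkeeping). Your CPTP estimate ($|\mu|\lesssim\epsilon/4$ for a depolarizing $\cP_H$) is right up to constants. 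One step you should still add to match the paper's formal notion of learnability, which allows recovery only up to a global unitary or anti-unitary transformation: verify that your two models are not equivalent in that sense. Here this is quick --- any such transformation must preserve $\ketbra{0}{0}$ and $\ketbra{1}{1}$ and intertwine the noisy $H$ and $S$ implementations, which forces it to be a global phase, so distinct Pauli eigenvalue triples for $H$ are genuinely distinct physical realities --- and the paper then invokes its proposition that weakly indistinguishable, inequivalent models make a class unlearnable, which also handles the probabilistic ``with high probability, to error $\epsilon$'' subtleties that your final sentence glosses over.
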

\begin{proof}[Proof idea]
The theorem is established by proving that one is unable to determine whether a bit-flip ($X$) error is more likely to happen than a phase-flip ($Z$) error after the Hadamard gate $H$; these two possibilities define distinct physical realities that are not related by a global unitary or anti-unitary transformation.
Importantly, perfect preparation of $\ket{0}$ and perfect measurement in the computational basis are not sufficient by themselves for learning a general quantum channel via complete quantum process tomography.
Using a graphical method to track the Heisenberg-picture evolution of Pauli operators under circuits of noisy Clifford + $T$ gates, we show that the noise in the Hadamard gate is unlearnable if only the state $|0\rangle$ can be prepared.
Then using reduction methods developed in Appendix~\ref{sec:basiclearna}, we show that the unlearnability persists if noisy versions of arbitrary states can also be prepared.
The full proof is given in Appendix~\ref{sec:cliffordT}.
\end{proof}

This result may seem to contradict claims about known protocols, such as gate set tomography \cite{greenbaum2015introduction, blume2017demonstration, nielsen2020gate, brieger2021compressive} and ACES \cite{flammia2021averaged}.
To resolve the conflict, recall that gate set tomography learns an effective model that can be used to predict the extrinsic behavior.
While gate set tomography might be able to learn some intrinsic features of the true world model, it is still unable to learn all the gate-dependent Pauli noise channels due to the gauge freedom in the protocol.
In ACES \cite{flammia2021averaged}, it is assumed that one can prepare any tensor product of single-qubit stabilizer states perfectly.
By making this additional assumption, the no-go result in Theorem \ref{prop:Clif+T} can be evaded.

Theorem \ref{prop:Clif+T} is relevant in the setting where we would like to identify a noise process to facilitate the task of improving our hardware.
For example, one may want to know whether the Hadamard gate is experiencing a higher rate of bit-flip error or phase-flip error, so that the dominant error can be suppressed by modifying the device. Theorem~\ref{prop:Clif+T} shows that characterizing the noise process precisely is not always possible.

\subsection{Quantum advantage in learning from noisy experiments}
\label{sec:Qadvnoisy}

Recent theoretical results have established that \emph{quantum-enhanced experiments} can exhibit a substantial advantage compared to \emph{conventional experiments} in learning properties of an unknown quantum state or process \cite{aharonov2021quantum, huang2021information, chen2021quantum, chen2021exponential, chen2021hierarchy,huang2021quantum}. The prior works concerned the idealized setting in which the quantum device used for learning is perfectly known and perfectly controllable. But recent experiments using up to 40 qubits in the Sycamore processor have demonstrated that a notable quantum advantage can be achieved even with the relatively noisy quantum devices that are currently available \cite{huang2021quantum}. This experimental demonstration highlights the need for a theoretical analysis of quantum advantage when the device used in learning is noisy and initially uncharacterized. Here we establish a provable quantum advantage in this more realistic setting.

We focus on a particular task that was analyzed in \cite{huang2021information, chen2021exponential, huang2021quantum} and studied experimentally in \cite{huang2021quantum}.
In this task, the learning agent is provided with multiple copies of an unknown quantum state $\rho$.
In conventional experiments, the agent collects classical data by measuring each copy of $\rho$ separately. These measurements may be adaptive; that is, the measurement performed on each copy can depend on the results obtained in measurements on previous copies. Using the collected classical data, the agent learns a model of $\rho$. After the learning is complete, the agent is asked to predict a property of $\rho$ which is selected from a long list of incompatible properties.
In quantum-enhanced experiments, the agent can load copies of $\rho$ into a quantum memory, and perform collective measurements acting on multiple copies. It was shown in \cite{huang2021information, chen2021exponential, huang2021quantum} that in some cases the number of copies of $\rho$ needed to predict accurately is exponentially smaller in the (ideal) quantum-enhanced scenario than in the conventional scenario. Furthermore, this exponential quantum advantage can be achieved using quite simple two-copy measurements in which qubit pairs are measured in the Bell basis. These simple measurements can be executed in practice with reasonable fidelity.

We wish to show that the quantum advantage persists when we compare ideal conventional experiments with noisy quantum-enhanced experiments. We assume that the noisy quantum device can prepare a noisy and unknown initial product state, load a copy of the unknown physical state $\rho$ into memory, apply noisy and unknown single-qubit and two-qubit gates, and perform a noisy and unknown product measurement.
Multiple layers of gates can be performed, where each layer contains multiple non-overlapping gates applied in parallel. The noise in quantum gates may be highly correlated ---
the CPTP map implemented by each gate can depend arbitrarily on how other gates implemented in the same layer are chosen.
We \emph{do not assume} that each gate is affected by only a constant number of other gates in the same layer.
Furthermore the noisy CPTP maps are \emph{not} guaranteed to be close to the ideal unitary gates up to a constant error.
Under these assumptions, it is not possible to learn the intrinsic description of the device to arbitrary accuracy. In addition, exponentially many experiments are required to learn the extrinsic behaviors to high but constant accuracy because the noise processes are highly non-local.
Because the noise cannot be learned efficiently, it is not clear how to make use of standard error mitigation techniques \cite{temme2017error, endo2018practical, kandala2019error}.

Even though the device cannot be learned to arbitrarily small error, one can learn the intrinsic description of the device up to a certain accuracy floor.
To establish this, we adapt the strategy from the proof of Theorem~\ref{thm:int-description} to this modified setting.
Recall that in that proof we found a set of states with a specified geometry, and then built quantum tomography protocols making use of those states.
Under the assumptions of Theorem~\ref{thm:int-description}, we could find states with the desired geometry to arbitrary accuracy.
Under our current assumptions, this is no longer possible, but the geometry can be realized with a small error $\eta$, which suffices for characterizing two-qubit gates up to an error $\epsilon = \mathcal{O}(\eta)$.
Details of the procedure and proof are in Appendix~\ref{sec:algorithm-partial-learn}.

With these noisy two-qubit gates we can perform a noisy version of Bell measurement on qubit pairs, and hence execute a noisy version of the quantum-enhanced protocol that achieves an exponential quantum advantage in the ideal case \cite{huang2021information, chen2021exponential, huang2021quantum}.
For an $n$-qubit system this protocol calls for the evaluation of the parity of $\mathcal{O}(n)$ measurement outcomes, so that $(1-4 \epsilon)^{\mathcal{O}(n)}$ measurement repetitions suffice to extract a statistically useful result.
In contrast, $\Omega(2^n)$ copies of $\rho$ are needed to make accurate predictions in the conventional scenario, even in the case where the state $\rho$ is separable.
Therefore, the noisy quantum-enhanced protocol has a significant polynomial advantage over the optimal protocol in the ideal conventional scenario if $\epsilon$ is small.
This conclusion is expressed by the following theorem.

\begin{theorem}[Quantum advantage with noisy devices] \label{thm:qadv-noisy}
There exists a set of unentangled physical states $\rho$ and properties we would like to predict, such that if we need $N_{\mathrm{Q}}$ copies of $\rho$ in the noisy quantum-enhanced scenario to predict the property, then the required number of copies in noiseless conventional experiments must be
$N_{\mathrm{C}} = \Omega(N_{\mathrm{Q}}^{a})$, where $a = -\log(2) / (2 \log(1 - 4\epsilon)) = \mathcal{O}(1 / \epsilon)$ and $\epsilon$ is the error on each of the two-qubit operations.
\end{theorem}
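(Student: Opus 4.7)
The plan is to combine a noisy calibration step with the quantum-enhanced two-copy Bell-basis protocol whose exponential advantage was established in the ideal setting of \cite{huang2021information, chen2021exponential, huang2021quantum}. First, I would invoke the accuracy-floor adaptation of Theorem~\ref{thm:int-description} indicated in the paragraph preceding the statement: even with highly correlated and possibly large gate noise, the agent can use composed actions and the overlap estimator to construct a reference configuration of states on each qubit pair whose pairwise geometry approximates the target pure-state configuration to some floor $\eta$. I would then show that an $\eta$-accurate geometry on a qubit pair is enough to compile a CPTP map plus product measurement that implements an effective two-qubit POVM agreeing with the ideal Bell POVM up to total-variation distance $\epsilon = \mathcal{O}(\eta)$: one writes the Bell measurement as a fixed short Clifford circuit followed by the product measurement and propagates the geometric $\eta$-error through this short circuit, following Appendix~\ref{sec:algorithm-partial-learn}.

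With the noisy Bell measurement in hand, I would then replay the quantum-advantage task of \cite{huang2021information, chen2021exponential, huang2021quantum}, where the property to predict is a random Pauli expectation value and the two-copy estimator forms the XOR of Bell-basis outcomes on $n$ qubit pairs prepared in $\rho \otimes \rho$. Each pairwise outcome being $\epsilon$-biased makes the variance of the estimator inflate by the usual noisy-parity factor $(1-4\epsilon)^{-cn}$, where $c$ is the explicit constant counting noisy two-qubit operations per shot; hence
\begin{equation}
 N_{\mathrm{Q}} = \mathcal{O}\!\left( (1-4\epsilon)^{-c n} \right)
\end{equation}
noisy copies of $\rho$ suffice to estimate the target property to constant accuracy. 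Combining this with the known conventional lower bound $N_{\mathrm{C}} = \Omega(2^n)$ from \cite{huang2021information, chen2021exponential}, which already applies when $\rho$ is unentangled, and eliminating $n$ between the two bounds, I obtain
\begin{equation}
 N_{\mathrm{C}} \;=\; \Omega\!\left( N_{\mathrm{Q}}^{\,-\log 2 \,/\, (c \log(1-4\epsilon))} \right),
\end{equation}
which matches the claimed exponent $a = -\log(2)/(2\log(1-4\epsilon))$ once $c = 2$ is identified from the per-pair Bell structure of the protocol, and gives $a = \mathcal{O}(1/\epsilon)$ for small $\epsilon$ by Taylor expansion of $\log(1-4\epsilon)$.

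The main obstacle is the emulated-Bell-measurement step under the very weak device assumptions imposed here: the noise may be layer-correlated with no locality bound, is not required to be close to identity beyond the calibrated floor, and is known only through an $\eta$-accurate geometry rather than a pointwise gate description. The technical heart of the argument is to prove that the $\eta$-accurate geometry on each two-qubit subsystem forces the emulated POVM to be TV-close to the Bell POVM uniformly in the correlated noise elsewhere in the layer, so that the per-pair errors combine essentially independently and the shot-noise factorization $(1-4\epsilon)^{-cn}$ survives. Once this per-pair guarantee is established, the remaining calibration complexity is dominated by the data needed to pin down the reference geometry to floor $\eta$, which is finite and independent of $N_{\mathrm{Q}}$, so it does not affect the asymptotic separation in the theorem.
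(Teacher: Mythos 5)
Your proposal follows essentially the same route as the paper's proof: calibrate the noisy device to an accuracy floor so that each qubit pair admits an emulated SWAP/Bell operation within error $\epsilon$ (Appendix~\ref{sec:algorithm-partial-learn}), run the two-copy Bell-sampling protocol on a random-Pauli distinguishing task to get $N_{\mathrm{Q}} = \mathcal{O}\bigl((1-4\epsilon)^{-2n}\bigr)$, invoke the $N_{\mathrm{C}} = \Omega(2^n)$ conventional lower bound (which the paper re-derives for its locally rotated hard instance but which is the same argument as in the cited prior work), and eliminate $n$. One small correction to your bookkeeping: the factor $2$ in the exponent $a = -\log(2)/(2\log(1-4\epsilon))$ does not come from counting two noisy two-qubit gates per pair per shot; in the paper the per-pair error from state loading, SWAP, BELL, and measurement accumulates to a single $4\epsilon$ bias, attenuating the signal to $\approx (1-4\epsilon)^{n}$, and the $2$ arises from the shot-noise requirement of $\sim (1-4\epsilon)^{-2n}$ repetitions to resolve that signal.
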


\noindent If each of the two-qubit operations has an error $\epsilon$ of at most $0.5\%$, then we can obtain a separation of $N_{\mathrm{C}} = \Omega(N_{\mathrm{Q}}^{17.15})$.
The detailed proof of Theorem~\ref{thm:qadv-noisy} is given in Appendix~\ref{sec:qadv-learn-incomplete}.

\section{Conclusion}

We have developed a rigorous theory for reasoning about what can be learned from noisy quantum experiments, emphasizing the distinction between learning the extrinsic behavior of a noisy quantum system and the more challenging task of learning its intrinsic description.
We have applied our framework to several examples of learning tasks and to give a large polynomial quantum advantage for a learning task using noisy quantum devices.
While we have largely focused on issues of principle, more practical considerations, including a rigorous neural network algorithm for learning noisy quantum systems, will be discussed in future work.

This theory opens up several future directions.
In particular, the essential features of learnable and unlearnable model classes remain to be clarified.
In Appendix~\ref{sec:basiclearna} we presented some deductive rules relating the learnability of distinct model classes.
Might this framework be extended, so that the learnability or unlearnability of a wide variety of model classes can be inferred by applying such deductive rules to a small number of paradigmatic cases?
We have also seen that some model classes are very hard to learn, while others are relatively easy.
Can we deliberately engineer our quantum systems to make learning easier?
This question evokes an oft-cited principle of software design --- programs are more broadly useful if they are easier to troubleshoot.

More philosophically, it is remarkable that humans, assisted by our classical machines, can gain precise knowledge of the elusive quantum world.
What is the explanation for this ability, and what knowledge will remain beyond our grasp due to fundamental limitations on learning with classical agents?
Deepening our understanding of what can be learned from quantum experiments will not only facilitate the advance of quantum technology, but may also illuminate what is knowable and unknowable about the physical world.

\subsection*{Acknowledgments:}
\vspace{-0.5em}
{ The authors thank Sitan Chen, Jordan Cotler, Jerry Li, Richard Kueng, and Thomas Vidick for valuable input and inspiring discussions.
HH is supported by a Google PhD Fellowship.
JP acknowledges funding from  the U.S. Department of Energy Office of Science, Office of Advanced Scientific Computing Research, (DE-NA0003525, DE-SC0020290), and the National Science Foundation (PHY-1733907). The Institute for Quantum Information and Matter is an NSF Physics Frontiers Center.}

\newpage
\vspace{2.5em}
\appendix

\renewcommand{\appendixname}{APPENDIX}
\renewcommand{\thesubsection}{\MakeUppercase{\alph{section}}.\arabic{subsection}}
\renewcommand{\thesubsubsection}{\MakeUppercase{\alph{section}}.\arabic{subsection}.\alph{subsubsection}}
\makeatletter
\renewcommand{\p@subsection}{}
\renewcommand{\p@subsubsection}{}
\makeatother

\numberwithin{theorem}{section} % important bit

% \noindent
\begin{center}\textbf{ \Large{}Appendices}{\Large\par}
\end{center}

\tableofcontents

% \newpage
% \begin{center}\textbf{ \Large{}Chapter 1: Foundations}{\Large\par}
% \end{center}

\section{World models}
\label{sec:qworld}

We consider a general framework involving classical agents interacting with a quantum system.
This is a mathematical framework for reasoning about what experimentalists could learn from a finite-dimensional quantum system that they can interact with by various means.

\begin{definition}[$d$-dimensional world model]
Given sets $\mathcal{X}, \mathcal{Y}, \mathcal{Z}$ denoting the action space and a finite set~$\mathcal{B}$ denoting the possible measurement outcomes. A $d$-dimensional world model $\mathcal{W}$ is a tuple with three sets
\begin{equation}
    \mathcal{W} = \left(\{\rho_x\}_{x \in \mathcal{X}}, \{\mathcal{E}_y\}_{y \in \mathcal{Y}}, \{\mathcal{M}_z\}_{z \in \mathcal{Z}} \right),
\end{equation}
where $\rho_x$ is a $d$-dimensional density matrix, $\mathcal{E}_y$ is a $d$-dimensional CPTP map, $\mathcal{M}_z = \{M_{z b}\}_{b \in \mathcal{B}}$ is a POVM with finitely many elements indexed by $b \in \mathcal{B}$.
\end{definition}

Consider $d = 2$, which is equivalent to a qubit system.
We give an example to illustrate the above definition.
Let $\mathcal{X} = \{(\theta, \phi) \,\, | \,\, \theta \in [0, \pi], \phi \in [0, 2\pi]\}$ be an uncountably large set; we define
\begin{equation}
\rho_{\theta, \phi} = \frac{I + \sin(\theta) \cos(\phi) X + \sin(\theta) \sin(\phi) Y + \cos(\theta) Z}{2}.
\end{equation}
In this world, we can prepare any pure state on the single-qubit bloch sphere.
Let $\mathcal{Y} = \{h, t\}$ be a finite set consisting of two elements.
We consider $\mathcal{E}_h(\rho) = H\rho H^\dagger$ to be the Hadamard gate and $\mathcal{E}_t(\rho) = T\rho T^\dagger$ to be the $T$ gate.
Finally, we consider $\mathcal{Z} = \{0\}$ to be a singleton and $\mathcal{B} = \{0, 1\}$ to be a two-outcome space, where $\mathcal{M}_0 = \{\ketbra{0}{0}, \ketbra{1}{1}\}$ is the computational basis measurement.
$\mathcal{W}$ defines a single-qubit world where one can perform universal single-qubit quantum computation.
Alternatively, one could also consider $\mathcal{X}$ to be a set of $\vec{n} \in \mathbb{R}^3$ with $\norm{\vec{n}}_2 \leq 1$.
Or we could consider $\mathcal{Y}$ to be a set of sequences where each sequence is a pulse sequence the experimentalist could use to control the qubit system.
Intuitively, $\mathcal{X}, \mathcal{Y}, \mathcal{Z}$ contain descriptions of the actions an experimentalist could perform on the finite-dimensional quantum system, and $\mathcal{B}$ contains descriptions of the measurement outcomes.

\begin{remark}
In the above definition, we can have $x_1, x_2 \in \mathcal{X}$, such that $x_1 \neq x_2$ but $\rho_{x_1} = \rho_{x_2}$.
This construction encodes the intuition that there could be two different actions an experimentalist could perform that will result in the same initial state. For example, we can generate the state $\ket{1}$ from $\ket{0}$ by applying $\pi$ rotation along the X axis or the Y axis.
\end{remark}

The experimentalists could interact with the $d$-dimensional world model by performing experiments. The experimentalist selects a state $\rho_x$ to be prepared, composes various different evolutions $\mathcal{E}_{y_1}, \ldots, \mathcal{E}_{y_L}$, then reads out the final state through a chosen measurement apparatus $\mathcal{M}_z$.

\begin{definition}[Experiment] \label{def:experiment}
Given a $d$-dimensional world model $\mathcal{W}$.
An experiment is a list of finite elements given as
\begin{equation}
    E = (x \in \mathcal{X}, y_1 \in \mathcal{Y}, \ldots, y_L \in \mathcal{Y}, z \in \mathcal{Z}).
\end{equation}
Each experiment results in an outcome $b \in \mathcal{B}$ with probability $\Tr\left( M_{z b} \left(\mathcal{E}_{y_L} \circ \ldots \circ \mathcal{E}_{y_1}\right)(\rho_x) \right).$
\end{definition}

\subsection{Time dependent and non-Markovian models}
\label{sec:time-depend}

From the definition of experiment given in Def.~\ref{def:experiment}, we can see that applying the same action $y \in \cY$ results in the same CPTP map $\cE_{y}$.
In many practical settings, the quantum systems could have explicit time dependence ($\cE_{y}$ depends on the time it is applied) and could be non-Markovian (each physical operation depends on what operations are performed before).
In this subsection, we briefly describe how models with explicit time dependence and non-Markovian behavior can be reduced to a model that is time-independent and Markovian.

Suppose we have a model $\cW^{(\mathrm{exception})}$ where the sequence of actions $x, y_1, y_2, \ldots, y_L, z$ results in an outcome $b \in \mathcal{B}$ with probability
\begin{equation}
    \Tr\left( M_{z b | x, y_1, \ldots, y_L} \left(\mathcal{E}_{y_L | x, y_1, \ldots, y_{L-1}} \circ \ldots \circ \mathcal{E}_{y_1 | x}\right)(\rho_x) \right),
\end{equation}
and all the states, CPTP maps, POVMs are $d$-dimensional.
% We assume that there exists a quantum algorithm that implements the CPTP map $\mathcal{E}_{y_\ell | x, y_1, \ldots, y_{\ell-1}}$ and the POVM $M_{z b | x, y_1, \ldots, y_L}$ (to avoid the pathological situation where these physical operations are uncomputable).
This model $\cW^{(\mathrm{exception})}$ is time-dependent and non-Markovian, hence is formally not covered by our definition of world model.
However, we can consider a world model $\cW$ that yields the same experimental outcome as $\cW^{(\mathrm{exception})}$ but is time-independent and Markovian.
We simply define $\cW' = = \left(\{\rho'_x\}_{x \in \mathcal{X}}, \{\mathcal{E}'_y\}_{y \in \mathcal{Y}}, \{\mathcal{M}'_z\}_{z \in \mathcal{Z}} \right)$ to be a world model, where the state space contains both the original $d$-dimensional state and memory state that stores the history of the actions.
The initial state $\rho'_x$ is given by
\begin{equation}
    \rho'_x = \rho_x \otimes \ketbra{\emptyset}{\emptyset},
\end{equation}
where the history is initialized to be empty.
The CPTP map $\cE'_{y}$ implements a CPTP based on the history stored in the memory state,
\begin{equation}
    \cE'_y( \rho \otimes \ketbra{\mathrm{history}}{\mathrm{history}} ) = \cE_{y | \mathrm{history}}(\rho) \otimes \ketbra{\mathrm{history}, y}{\mathrm{history}, y},
\end{equation}
where $\mathrm{history}$ is a sequence of the past actions $x, y_1, y_2, y_3, \ldots$.
Similarly, the POVM $\cM'_{z}$ in the newly defined world model $\cW'$ also depends on the history,
\begin{equation}
    \Tr( M'_{zb} ( \rho \otimes \ketbra{\mathrm{history}}{\mathrm{history}} ) ) = \Tr( M_{zb | \mathrm{history}} \rho).
\end{equation}
It is not hard to see that for the time-independent and Markovian world model $\cW'$, an experiment specified by the sequence of actions $x, y_1, y_2, \ldots, y_L, z$ results in an outcome $b \in \mathcal{B}$ with probability
\begin{align}
    &\Tr\left( M'_{z b} \left(\mathcal{E}'_{y_L} \circ \ldots \circ \mathcal{E}'_{y_1}\right)(\rho'_x) \right) \\
    = &\Tr\left( M_{z b | x, y_1, \ldots, y_L} \left(\mathcal{E}_{y_L | x, y_1, \ldots, y_{L-1}} \circ \ldots \circ \mathcal{E}_{y_1 | x}\right)(\rho_x) \right).
\end{align}
Together, we see that time-dependent and non-Markovian models can be covered by our definition at the expense of having a world model with much larger dimension.

We do mention a pathological case. Suppose that given a sequence of $x, y_1, \ldots, y_{\ell - 1}$, the CPTP map $\mathcal{E}_{y_\ell | x, y_1, \ldots, y_{\ell-1}}$ is uncomputable. This means that the size of the memory state for storing the history would have to grow unboundedly.
In this case, we need to reduce time-dependent and non-Markovian models to an infinite-dimensional time-independent and Markovian world model.
This pathological case is not practically relevant since there is no finite description of such a time-dependent and non-Markovian model.

\subsection{Basic properties and relations}

As the experimentalists improve their physical control (lasers, cavity, etc.), more initial states $\rho_x$ can be created, more evolutions $\mathcal{E}_y$ can be performed, and more types of qubit readout $\mathcal{M}_z$ can be achieved.
%\textcolor{red}{It's not exactly clear what this means. For example, even for a single qubit, one POVM might have an arbitrarily large number of outcomes.}
We could imagine an ideal case, where we are able to generate all states, perform all operations, and conduct all measurements. We consider such a world model to be complete.
A formal definition is given below. %\textcolor{red}{This is true only in an approximate sense. The experimentalist will never have the power to prepare any one of an uncountable number of states, execute any one an uncountable number of gates, etc.}

\begin{definition}[Completeness]
A $d$-dimensional $\mathcal{W} = \left(\{\rho_x\}_{x \in \mathcal{X}}, \{\mathcal{E}_y\}_{y \in \mathcal{Y}}, \{\mathcal{M}_z\}_{z \in \mathcal{Z}} \right)$
is complete if
\begin{itemize}
    \item for all states $\rho$, $\exists x \in \mathcal{X}$, $\rho_x = \rho$,
    \item for all CPTP maps $\mathcal{E}$, $\exists y \in \mathcal{Y}$, $\mathcal{E}_y = \mathcal{E}$,
    \item for all POVM $\mathcal{M}$ with outcomes indexed by $b \in \mathcal{B}$, $\exists z \in \mathcal{Z}$, $\mathcal{M}_z = \mathcal{M}$.
\end{itemize}
\end{definition}

We say the world model $\mathcal{W}$ has been extended to a richer world model $\mathcal{W}'$ if $\mathcal{W}'$ contains more actions corresponding to more initial states, quantum evolutions, and POVM measurements.
The formal definition is given below.

\begin{definition}[Extension] \label{def:extension}
A $d$-dimensional world model $\mathcal{W}' = \left(\{\rho_x'\}_{x \in \mathcal{X}'}, \{\mathcal{E}_y'\}_{y \in \mathcal{Y}'}, \{\mathcal{M}_z'\}_{z \in \mathcal{Z}'} \right)$ is an extension of a $d$-dimensional world model $\mathcal{W} = \left(\{\rho_x\}_{x \in \mathcal{X}}, \{\mathcal{E}_y\}_{y \in \mathcal{Y}}, \{\mathcal{M}_z\}_{z \in \mathcal{Z}} \right)$, denoted as $\mathcal{W}' \rhd \mathcal{W}$, if the following conditions hold
\begin{itemize}
    \item $\mathcal{X} \subseteq \mathcal{X}'$ and $\forall x \in \mathcal{X}, \rho_x = \rho_x'$ (State extension),
    \item $\mathcal{Y} \subseteq \mathcal{Y}'$ and $\forall y \in \mathcal{Y}, \mathcal{E}_y = \mathcal{E}_y'$ (CPTP map extension),
    \item $\mathcal{Z} \subseteq \mathcal{Z}'$ and $\forall z \in \mathcal{Z}, \mathcal{M}_z = \mathcal{M}_z'$ (POVM extension).
\end{itemize}
\end{definition}

We are now ready to define equivalence between two world models.
Before giving the formal definition, let us consider two $2$-dimensional worlds $\mathcal{W}_A, \mathcal{W}_B$ with the same spaces $\mathcal{X} = \{0\}, \mathcal{Y} = \{h, t\}, \mathcal{Z} = \{0\}, \mathcal{B} = \{0, 1\}$.
Furthermore, we consider the particular actions in the two world models $\mathcal{W}_A, \mathcal{W}_B$ to be given by
\begin{align}
    \rho^A_0 &= I/2, & \mathcal{E}^A_h(\rho) &= H\rho H^\dagger, & \mathcal{E}^A_t(\rho) &= T\rho T^\dagger, & \mathcal{M}^A_0 &= \{\ketbra{0}{0}, \ketbra{1}{1}\}, \label{eq:worldA} \\
    \rho^B_0 &= I/2, & \mathcal{E}^B_h(\rho) &= I/2, & \mathcal{E}^B_t(\rho) &= I/2, & \mathcal{M}^B_0 &= \{\ketbra{0}{0}, \ketbra{1}{1}\}.\label{eq:worldB}
\end{align}
It is not hard to show that we cannot distinguish between world A and B by performing experiments using the limited set of actions --- in both cases the measurement outcomes are sampled from the uniform distribution.
However, we can clearly see that the two world models are intrinsically different. In particular, in world A, the set of maps is a universal gate set that generates a dense subset of $\mathrm{SU}(2)$. But, in world B, all the maps are completely depolarizing channels.
Even though world A and B can not be distinguish using the limited set of actions $\mathcal{X} = \{0\}, \mathcal{Y} = \{h, t\}, \mathcal{Z} = \{0\}$, the two worlds are fundamentally different.
By adding new actions, such as the ability to prepare some non-trivial states, we can distinguish between the two world models by performing experiments.

To discuss these concepts in a rigorous manner, we formally define the following relations between two world models $\mathcal{W}_1, \mathcal{W}_2$.
We consider two world models to be equal $\mathcal{W}_1 = \mathcal{W}_2$ if all of the states, maps, and POVMs are equal.
And we say the two world models are different $\mathcal{W}_1 \neq \mathcal{W}_2$ if one of the states, maps, or POVMs is different.
In the above example, the two world models $\mathcal{W}_A, \mathcal{W}_B$ are different because the CPTP maps are different $\mathcal{E}^A_h \neq \mathcal{E}^B_h$ and $\mathcal{E}^A_t \neq \mathcal{E}^B_t$.

\begin{definition}[Equality] \label{def:equal}
Consider two $d$-dimensional world models $\mathcal{W}_A, \mathcal{W}_B$
with the same spaces $\mathcal{X}, \mathcal{Y}, \mathcal{Z}, \mathcal{B}$.
$\mathcal{W}_A$ is equal to $\mathcal{W}_B$, denoted as $\mathcal{W}_A = \mathcal{W}_B$, if all states are the same
$\rho^A_x = \rho^B_x, \forall x \in \mathcal{X},$ all CPTP maps are the same $ \mathcal{E}^A_y = \mathcal{E}^B_y, \forall y \in \mathcal{Y},$ and all POVMs are the same $M^A_{z b} = M^B_{z b}, \forall z \in \mathcal{Z}, b \in \mathcal{B}$.
\end{definition}

Then, we consider two world models to be weakly indistinguishable if they can not be distinguished using the set of actions in the world model.
In the example given in Equation~\eqref{eq:worldA}~and~\eqref{eq:worldB}, $\mathcal{W}_A, \mathcal{W}_B$ are weakly indistinguishable because the measurement outcomes are always uniformly distributed.

\begin{definition}[Weakly indistinguishable] \label{def:weakly-indist}
Consider two $d$-dimensional world models $\mathcal{W}_A, \mathcal{W}_B$
with the same spaces $\mathcal{X}, \mathcal{Y}, \mathcal{Z}, \mathcal{B}$.
$\mathcal{W}_A$ and $\mathcal{W}_B$ are weakly indistinguishable if for any experiment $E = (x \in \mathcal{X}, y_1 \in \mathcal{Y}, \ldots, y_L \in \mathcal{Y}, z \in \mathcal{Z})$ and outcome $b \in \mathcal{B}$, we have
\begin{equation} \label{eq:probeq}
    \Tr\left( M^{A}_{z b} \left(\mathcal{E}^{A}_{y_L} \circ \ldots \circ \mathcal{E}^{A}_{y_1}\right)(\rho^{A}_x) \right) = \Tr\left( M^{B}_{z b} \left(\mathcal{E}^{B}_{y_L} \circ \ldots \circ \mathcal{E}^{B}_{y_1}\right)(\rho^{B}_x) \right),
\end{equation}
i.e., the probabilities for obtaining the outcome $b$ in experiment $E$ are the same.
\end{definition}

We say two world models are strongly indistinguishable or equivalent to one another if they can not be distinguished by adding any set of actions.
World models $\mathcal{W}_A, \mathcal{W}_B$ are not equivalent because adding a non-completely-mixed state enables us to distinguish between $\mathcal{W}_A$ and $\mathcal{W}_B$.

\begin{definition}[Strongly indistinguishable / Equivalence] \label{def:equiv}
Consider two $d$-dimensional world models $\mathcal{W}_A, \mathcal{W}_B$
with the same spaces $\mathcal{X}, \mathcal{Y}, \mathcal{Z}, \mathcal{B}$.
$\mathcal{W}_A$ and $\mathcal{W}_B$ are equivalent or strongly indistinguishable, denoted as $\mathcal{W}_A \equiv \mathcal{W}_B$, if for all extensions of $\mathcal{W}_A$,
\begin{equation}
    \mathcal{W}_A' = \left(\{\rho^{A'}_x\}_{x \in \mathcal{X}'}, \{\mathcal{E}^{A'}_y\}_{y \in \mathcal{Y}'}, \{\mathcal{M}^{A'}_z\}_{z \in \mathcal{Z}'} \right) \rhd \mathcal{W}_A,
\end{equation}
there exists an extension of $\mathcal{W}_B$ with the same action space $\mathcal{X}', \mathcal{Y}', \mathcal{Z}'$,
\begin{equation}
    \mathcal{W}_B' = \left(\{\rho^{B'}_x\}_{x \in \mathcal{X}'}, \{\mathcal{E}^{B'}_y\}_{y \in \mathcal{Y}'}, \{\mathcal{M}^{B'}_z\}_{z \in \mathcal{Z}'} \right) \rhd \mathcal{W}_B,
\end{equation}
such that $\mathcal{W}_A'$ and $\mathcal{W}_B'$ are weakly indistinguishable.
\end{definition}

The above definition of equivalence has a natural characterization given by Theorem~\ref{prop:equiv}. Before stating the theorem, let us recall the definition of unitary and anti-unitary transformation $U$.
Given a $d \times d$ complex matrix $C$ with a chosen basis.
We define $\overline{C}$ to be the matrix where we take complex conjugation for all entries in $C$.
A unitary transformation $U$ is a $d \times d$ unitary matrix with $U^{-1} = U^{\dagger}$ that transforms $C$ to $U C U^{-1} = U C U^\dagger$.
An anti-unitary transformation $A$ is a product of a $d \times d$ unitary matrix $U$ and the complex conjugation operator $K$ that transforms $C$ to $A C A^{-1} = U \overline{C} U^\dagger$.
Theorem~\ref{prop:equiv} shows that equivalent world models are related by a unitary or anti-unitary transformation.

\begin{theorem}[A characterization of equivalence] \label{prop:equiv}
Consider two $d$-dimensional world models with the same spaces $\mathcal{X}, \mathcal{Y}, \mathcal{Z}, \mathcal{B}$,
\begin{align}
    \mathcal{W}_A &= \left(\{\rho^{A}_x\}_{x \in \mathcal{X}}, \{\mathcal{E}^{A}_y\}_{y \in \mathcal{Y}}, \{\mathcal{M}^{A}_z\}_{z \in \mathcal{Z}} \right),\\
    \mathcal{W}_B &= \left(\{\rho^{B}_x\}_{x \in \mathcal{X}}, \{\mathcal{E}^{B}_y\}_{y \in \mathcal{Y}}, \{\mathcal{M}^{B}_z\}_{z \in \mathcal{Z}} \right).
\end{align}
$\mathcal{W}_A \equiv \mathcal{W}_B$ if and only if there exists a unitary or anti-unitary transformation $U$, such that
\begin{align}
    &\rho^{B}_x = U \rho^{A}_x U^{-1}, &\,\, \forall x \in \mathcal{X},\\
    &\mathcal{E}^{B}_y(\cdot) = U \mathcal{E}^{A}_y( U^{-1} (\cdot) U) U^{-1}, &\,\, \forall y \in \mathcal{Y},\\
    &M^B_{z b} = U M^A_{z b} U^{-1}, &\,\, \forall z \in \mathcal{Z}, b \in \mathcal{B}.
\end{align}
\end{theorem}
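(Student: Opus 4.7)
The plan is to establish the two directions of the biconditional separately. For the sufficient direction, assume a unitary or anti-unitary $U$ intertwines $\mathcal{W}_A$ and $\mathcal{W}_B$ as stated, and pick an arbitrary experiment $E = (x, y_1, \ldots, y_L, z)$. I first observe that the intertwining relations iterate, so $\cE_{y_L}^B \circ \cdots \circ \cE_{y_1}^B(\rho_x^B) = U\bigl(\cE_{y_L}^A \circ \cdots \circ \cE_{y_1}^A(\rho_x^A)\bigr)U^{-1}$. In the unitary case the probability $\Tr(M_{zb}^B U X U^{-1})$ collapses to $\Tr(M_{zb}^A X)$ by cyclicity of the trace. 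In the anti-unitary case, writing $U = \tilde U K$ with $K$ complex conjugation in the chosen basis, the same computation yields $\overline{\Tr(M_{zb}^A X)}$, which equals the $\mathcal{W}_A$ probability because probabilities are real. The extension requirement in Definition~\ref{def:equiv} is handled by pushing any new state, map, or POVM element of $\mathcal{W}_A'$ through the same $U$-conjugation to define $\mathcal{W}_B'$.

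For the necessary direction, I would exploit the full strength of equivalence by extending $\mathcal{W}_A$ to a model $\mathcal{W}_A^+$ that adds, for \emph{every} pure state $\ket{\psi}\in\mathbb{C}^d$, an action $x_\psi$ preparing $\ketbra{\psi}{\psi}$ and an action $z_\psi$ performing the rank-one projective POVM $\{\ketbra{\psi}{\psi}, I - \ketbra{\psi}{\psi}\}$. By equivalence there is an extension $\mathcal{W}_B^+$ on the same action space that is weakly indistinguishable from $\mathcal{W}_A^+$; call the corresponding assignments $\sigma_\psi$ and $\{F_\psi, I - F_\psi\}$. Weak indistinguishability on the two-action experiment (prepare $x_\phi$, measure $z_\psi$) yields the key identity $\Tr(F_\psi \sigma_\phi) = |\langle\psi|\phi\rangle|^2$ for every pair of pure states.

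From this identity I extract a pure-state ray map $f$. Setting $\phi=\psi$ gives $\Tr(F_\psi \sigma_\psi) = 1$, forcing $\mathrm{supp}(\sigma_\psi)$ into the eigenvalue-one subspace of $F_\psi$. For any orthonormal basis $\{\ket{\psi_i}\}_{i=1}^d$, taking $\psi = \psi_i$ and $\phi = \psi_j$ with $i\neq j$ gives $\Tr(F_{\psi_i}\sigma_{\psi_j}) = 0$, so the $d$ density matrices $\{\sigma_{\psi_i}\}$ have pairwise orthogonal supports in a $d$-dimensional space and must each be rank one. Applying this to any orthonormal basis containing $\ket{\psi}$ gives $\sigma_\psi = \ketbra{f(\psi)}{f(\psi)}$, and a symmetric argument on the $F_\psi$'s yields $F_\psi = \ketbra{f(\psi)}{f(\psi)}$. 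The ray map $f$ then satisfies $|\langle f(\psi)|f(\phi)\rangle|^2 = |\langle\psi|\phi\rangle|^2$ on all of $\mathbb{C}\mathbb{P}^{d-1}$, so Wigner's theorem \cite{Wigner1931GruppentheorieUI} supplies a unitary or anti-unitary $U$ with $\ket{f(\psi)} = U\ket{\psi}$ up to a phase. To transfer $U$ to the original actions, I observe that $\langle\psi|\rho_x^A|\psi\rangle = \Tr(F_\psi \rho_x^B) = \langle\psi|U^{-1}\rho_x^B U|\psi\rangle$ for every $\ket{\psi}$ forces $\rho_x^B = U\rho_x^A U^{-1}$; analogous three-action experiments (prepare $x_\phi$, apply $\cE_y$, measure $z_\psi$) and two-action experiments combining the pure-state preparations with the original measurements yield the corresponding relations for maps and for the original POVM elements.

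The main obstacle is the pure-state extraction step: one must simultaneously conclude that $\sigma_\psi$ and $F_\psi$ are rank-one projectors onto the \emph{same} ray, which is exactly what allows a clean invocation of Wigner's theorem. This relies on running the orthogonal-basis argument symmetrically on both sides and using $\Tr(F_\psi\sigma_\psi)=1$ to pin down the common support. Some mild care is also needed regarding surjectivity of $f$ in the statement of Wigner's theorem, which follows because $f$ maps any orthonormal basis to an orthonormal basis, so its image spans $\mathbb{C}^d$ and by the transition-probability constraint must exhaust every ray.
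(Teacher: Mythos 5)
Your proof is correct, and its overall architecture matches the paper's: extend $\mathcal{W}_A$ by all pure-state preparations together with measurements having a pure-state element, use weak indistinguishability of the guaranteed $\mathcal{W}_B$-extension to force the corresponding $B$-side objects to be pure and matching, invoke Wigner's theorem to obtain $U$, and then transfer $U$ to the original actions. Where you differ is in the implementation of two steps, and in both cases your route is somewhat more elementary. First, to force rank-one-ness on the $B$ side, the paper adds all POVMs with one pure element and proves an inductive counting lemma (Lemma~\ref{lem:statePOVM-dim}), then argues by contradiction about ranks; you instead add only the two-outcome projective POVMs $\{\ketbra{\psi}{\psi}, I-\ketbra{\psi}{\psi}\}$ and observe directly that $\Tr(F_{\psi_i}\sigma_{\psi_i})=1$ puts $\mathrm{supp}(\sigma_{\psi_i})$ in the eigenvalue-one space of $F_{\psi_i}$ while $\Tr(F_{\psi_i}\sigma_{\psi_j})=0$ puts $\mathrm{supp}(\sigma_{\psi_j})$ in $\ker F_{\psi_i}$, so the $d$ supports are pairwise orthogonal and hence one-dimensional; this replaces the induction with a two-line support argument and simultaneously pins $F_\psi$ to the same ray. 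Second, for the transfer step the paper integrates against the Haar measure on pure states (first and second moment formulas) to perform POVM/state/process tomography, whereas you use the polarization-type fact that $\langle\psi|X|\psi\rangle$ for all pure $\ket{\psi}$ determines a Hermitian $X$, which avoids Haar moments entirely; both are valid, and your version requires only linearity of channels plus the fact that pure states span the Hermitian matrices (worth one explicit sentence when writing it up). The only soft spot is the appeal to Wigner's theorem without a bijective ray map; your spanning remark is a bit loose, but in finite dimensions the non-surjective form of Wigner's theorem covers this, and the paper's own proof invokes Wigner at exactly the same level of rigor, so this is not a gap relative to the paper.
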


We defer the proof to Appendix~\ref{sec:proofEquivalence}.
As an example, the following two world models with the action spaces $\mathcal{X} = \{0\}, \mathcal{Y} = \{h, s\}$ for Hadamard and Phase gates, $ \mathcal{Z} = \{0\},$ and outcome space $\mathcal{B} = \{0, 1\}$ are equivalent (related by an anti-unitary transformation $XK$):
\begin{align}
    \rho^A_0 &= \ketbra{0}{0}, & \mathcal{E}^A_h(\rho) &= H\rho H^\dagger, & \mathcal{E}^A_s(\rho) &= S\rho S^\dagger, & \mathcal{M}^A_0 &= \{\ketbra{0}{0}, \ketbra{1}{1}\}, \\
    \rho^B_0 &= \ketbra{1}{1}, & \mathcal{E}^B_h(\rho) &= H\rho H^\dagger, & \mathcal{E}^B_s(\rho) &= S\rho S^\dagger, & \mathcal{M}^B_0 &= \{\ketbra{1}{1}, \ketbra{0}{0}\}.
\end{align}
The possibility to describe the same physical world by two distinct descriptions arises from the intrinsic degeneracy in quantum mechanics: the freedom to choose an arbitrary basis in the Hilbert space (the unitary relation) or reverse the direction of time (the anti-unitary relation).

% \newpage
\section{A characterization of equivalence: Proof}
\label{sec:proofEquivalence}

We will focus on showing that $\mathcal{W}_A \equiv \mathcal{W}_B$ implies the existence of a unitary or anti-unitary transformation.
The other direction can be shown easily by noting that for all world model extensions of $\mathcal{W}_A$, we can extend $\mathcal{W}_B$ using the same unitary or anti-unitary transformation $U$.

First, we extend world model $\mathcal{W}_A$ to world model $\mathcal{W}'_A$ that comes with an expanded state preparation actions $\mathcal{X}' = \mathcal{X} \cup \Omega^{\mathrm{pure}}$ and an expanded measurement actions $\mathcal{Z}' = \mathcal{Z} \cup \Omega^{\mathrm{pure-POVM}}$.
In particular, $\rho^{A'}_{\xi}, \forall \xi \in \Omega^{\mathrm{pure}}$ consists of all the $d$-dimensional pure states, and $\mathcal{M}^{A'}_{\zeta}, \forall \zeta \in \Omega^{\mathrm{pure-POVM}}$ consists of all the POVMs such that a particular POVM element associated to $b^* \in \mathcal{B}$ is a pure state.
The definition of equivalence shows that there exists an extension $\mathcal{W}'_B$ with the same action space as $\mathcal{W}'_A$ such that $\mathcal{W}'_A$ and $\mathcal{W}'_B$ are weakly indistinguishable, i.e., all experiments yield the same distribution.
The above condition yields the following,
\begin{equation}\label{eq:weak-indistin-pure}
    \Tr(M^{A'}_{\zeta b^*} \rho^{A'}_{\xi}) = \Tr(M^{B'}_{\zeta b^*} \rho^{B'}_{\xi}), \forall \xi \in \Omega^{\mathrm{pure}}, \zeta \in \Omega^{\mathrm{pure-POVM}}.
\end{equation}
This concludes the basic construction of the extended world models.

Recall that $M^{A'}_{\zeta b^*}, \forall \zeta \in \Omega^{\mathrm{pure-POVM}}$ consists of all pure states.
Hence, for each $\xi \in \Omega^{\mathrm{pure}}$, there exists $\zeta(\xi) \in \Omega^{\mathrm{pure-POVM}}$ such that $M^{A'}_{\zeta b^*} = \rho^{A'}_{\xi}$ are the same pure state.
We extend $\rho^{A'}_{\xi}$ to an orthonormal set of basis consisting of $d$ pure states $\{ \rho^{A'}_{\xi_1}, \ldots, \rho^{A'}_{\xi_d} \}$, where $\xi_1 = \xi$.
We have the following from the above construction and Eq.~\eqref{eq:weak-indistin-pure},
\begin{equation}
    \Tr\left(M^{A'}_{\zeta(\xi_j) b^*} \rho^{A'}_{\xi_i}\right) = \delta_{i j} = \Tr\left( M^{B'}_{\zeta(\xi_j) b^*} \rho^{B'}_{\xi_i} \right), \forall i, j \in \{1, \ldots, d\},
\end{equation}
where $\delta_{i j} = 1$ if $i = j$ and $0$ otherwise.
We are now going to utilize the structure of the quantum states and POVM elements, in particular, a quantum state is a positive-semidefinite matrix with trace one, and a POVM element is a positive-semidefinite matrix with eigenvalues less than equal to one.
In particular, we use the following basic lemma that can be established by induction.

\begin{lemma} \label{lem:statePOVM-dim}
Consider $\ell \geq 1$. Given quantum states $\{\rho_i\}_{i = 1, \ldots, \ell}$ and POVM elements $\{F_j\}_{j=1, \ldots, \ell}$. If $\Tr(F_j \rho_i) = \delta_{ij}, \forall i, j \in \{1, \ldots, \ell\}$, then the collection of eigenvectors of $\rho_i$ with non-zero eigenvalues over all $i$ from 1 to $\ell$ span a subspace with dimension at least $\ell$.
\end{lemma}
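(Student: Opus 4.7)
The plan is to prove the lemma by induction on $\ell$, using a single key algebraic fact: if $A, B \succeq 0$ and $\Tr(AB) = 0$, then $AB = 0$, equivalently $\operatorname{range}(B) \subseteq \ker(A)$. This follows because $\Tr(AB) = \Tr(A^{1/2} B A^{1/2})$ is the trace of a PSD matrix, so it vanishes only when $A^{1/2} B A^{1/2} = 0$, hence $BA^{1/2} = 0$ and $BA = 0$. This is the lever that turns the off-diagonal condition $\Tr(F_j \rho_i) = 0$ into a hard inclusion of supports.

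For the base case $\ell = 1$, the constraint $\Tr(F_1 \rho_1) = 1$ forces $\rho_1 \neq 0$, so its range has dimension at least $1$. For the inductive step, suppose the statement holds for $\ell - 1$, and let
\begin{equation}
V_{\ell - 1} \;=\; \sum_{i = 1}^{\ell - 1} \operatorname{range}(\rho_i).
\end{equation}
Applying the inductive hypothesis to $\{\rho_i\}_{i=1}^{\ell-1}$ and $\{F_j\}_{j=1}^{\ell-1}$ (whose mutual inner products still form an identity matrix) gives $\dim V_{\ell - 1} \geq \ell - 1$. The off-diagonal constraints $\Tr(F_\ell \rho_i) = 0$ for $i < \ell$, together with the key algebraic fact above, yield $\operatorname{range}(\rho_i) \subseteq \ker(F_\ell)$ for every $i < \ell$, and therefore $V_{\ell - 1} \subseteq \ker(F_\ell)$.

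It then remains to show that $\operatorname{range}(\rho_\ell)$ contributes a genuinely new direction. I would argue by contradiction: if $\operatorname{range}(\rho_\ell) \subseteq V_{\ell - 1}$, then $\operatorname{range}(\rho_\ell) \subseteq \ker(F_\ell)$, which implies $F_\ell \rho_\ell = 0$ and hence $\Tr(F_\ell \rho_\ell) = 0$, contradicting the diagonal condition $\Tr(F_\ell \rho_\ell) = 1$. Consequently
\begin{equation}
\dim\Bigl( \sum_{i=1}^{\ell} \operatorname{range}(\rho_i) \Bigr) \;\geq\; \dim V_{\ell - 1} + 1 \;\geq\; \ell,
\end{equation}
completing the induction. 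The collection of eigenvectors of $\rho_i$ with nonzero eigenvalues is exactly a basis of $\operatorname{range}(\rho_i)$, so this sum of ranges is the subspace in the lemma's statement.

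The only real obstacle is the PSD-product fact used to convert $\Tr(F_\ell \rho_i) = 0$ into $\operatorname{range}(\rho_i) \subseteq \ker(F_\ell)$; everything else is routine linear-algebra bookkeeping. Note that the argument never uses $F_j \preceq I$ or $\Tr(\rho_i) = 1$ beyond nonzero-ness, so the same proof works for the slightly more general setting of PSD matrices with $\Tr(F_j \rho_i) = \delta_{ij}$, which will be convenient when this lemma is invoked in the characterization of equivalence (Theorem~\ref{prop:equiv}).
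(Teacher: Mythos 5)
Your proof is correct and follows essentially the same inductive route as the paper: the off-diagonal conditions force $F_\ell$ to annihilate the supports of $\rho_1,\ldots,\rho_{\ell-1}$ (the paper phrases this per eigenvector as $\sqrt{F_\ell}\,v^i_a=0$, you phrase it as $\mathrm{range}(\rho_i)\subseteq\ker(F_\ell)$ via the PSD trace-zero fact), and the diagonal condition $\Tr(F_\ell\rho_\ell)=1$ then forbids $\mathrm{range}(\rho_\ell)$ from lying in that span. The only minor difference is that the paper additionally derives $(v^{\ell}_a)^\dagger F_\ell v^{\ell}_a=1$ using $F_\ell\preceq I$ and $\Tr(\rho_\ell)=1$, whereas your operator-level argument needs only $\Tr(F_\ell\rho_\ell)\neq 0$, so your version is marginally more general but not a different method.
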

\begin{proof}
Consider the eigenvectors of $\rho_i$ with non-zero eigenvalues to be $\{v^{i}_a \}_{a \in A_i}$ and the associated eigenvalues be $\{p^{i}_a\}_{a \in A_i}$, then $ \sum_{a \in A_i} p^{i}_a (v^{i}_a)^\dagger F_i v^{i}_a = 1$.
Since $0 \leq (v^{i}_a)^\dagger F_i v_a^{i} \leq 1$ (from the definition of POVM element), $p_a^{i} > 0$ (we only consider non-zero eigenvalues), $\sum_{a \in A_i} p_a^{i} = 1$ (from the definition of quantum state), and $\sum_{a \in A_i} p_a (v_a^{i})^\dagger F_i v_a^{i} = 1$, we have
\begin{equation} \label{eq:vFvone}
    (v_a^{i})^\dagger F_i v_a^{i} = 1, \forall a \in A_i.
\end{equation}
Similarly, for all $j \neq i$, $\sum_{a \in A_i} p_a (v_a^{i})^\dagger F_j v_a^{i} = 0$ implies that $(v_a^{i})^\dagger F_j v_a^{i} = 0, \forall a \in A_i$. Equivalently,
\begin{equation} \label{eq:vFvzero}
    \sqrt{F_j} v_a^{i} = 0, \forall a \in A_i,
\end{equation}
when $i \neq j$.
With the basic results given above, we are ready to prove the statement through induction. The base case $\ell=1$ is trivially true. Suppose the statement holds for $\ell-1$.
We assume that all the eigenvectors of $\rho_{\ell}$ with non-zero eigenvalues lie in the span of the eigenvectors with non-zero eigenvalues for $\rho_{i}$ with $i < \ell$.
Under this assumption, for all $a \in A_\ell$, there exists a set of coefficients $\{c_{i, a'}\}$ such that $v^{\ell}_a = \sum_{i < \ell} \sum_{a' \in A_i} c_{i, a'} v^{i}_{a'}$. This implies that
\begin{equation}
    1 = (v^{\ell}_a)^\dagger F_\ell v^{\ell}_a = \left(\sum_{i < \ell} \sum_{a' \in A_i} c_{i, a'} \sqrt{F_\ell} v^{i}_{a'}\right)^\dagger \left(\sum_{i < \ell} \sum_{a' \in A_i} c_{i, a'} \sqrt{F_\ell} v^{i}_{a'}\right) = 0.
\end{equation}
The first equality follows from Eq.~\eqref{eq:vFvone}. The last equality follows from Eq.~\eqref{eq:vFvzero}.
The contradiction shows that one of the eigenvectors of $\rho_{\ell}$ with non-zero eigenvalues is not in the span of the eigenvectors with non-zero eigenvalues for $\rho_i$ with $i < \ell$.
Therefore the statement holds for $\ell$.
\end{proof}

If there exists $k \in \{1, \ldots, d\}$, such that the rank of $M^{B'}_{\zeta(\xi_k) b^*}$ is greater than one, then the $d-1$ states $\{ \rho^{B'}_{\xi_i} \}_{i \neq k}$ must have their eigenvectors with non-zero eigenvalues span a $(d-2)$-dimensional subspace to ensure that $\Tr\left( M^{B'}_{\zeta(\xi_k) b^*} \rho^{B'}_{\xi_i} \right) = 0$.
However, from Lemma~\ref{lem:statePOVM-dim}, the eigenvectors of the $d-1$ states $\{ \rho^{B'}_{\xi_i} \}_{i \neq k}$ with non-zero eigenvalues must span at least a $(d-1)$-dimensional state space.
The contradiction implies that $M^{B'}_{\zeta(\xi_j) b^*}, \forall j \in \{1, \ldots, d\}$ must all be rank-one matrices.
The condition $\Tr\left( M^{B'}_{\zeta(\xi_1) b^*} \rho^{B'}_{\xi_1} \right) = 1$ then implies that $\rho^{B'}_{\xi_1}$ must be a pure state and $M^{B'}_{\zeta(\xi_1) b^*} = \rho^{B'}_{\xi_1}$.
We have now shown the following statement:
\begin{equation} \label{eq:statementMrho}
\mbox{
$\forall \xi \in \Omega^{\mathrm{pure}}$, $\rho^{B'}_{\xi}$ is a pure state and the POVM element $M^{B'}_{\zeta(\xi) b^*} = \rho^{B'}_{\xi}$.
}
\end{equation}
An implication of this result is that $\forall \xi_1, \xi_2 \in \Omega^{\mathrm{pure}}$, we have
\begin{equation} \label{eq:weak-indistin-symmetry}
    \Tr\left(\rho^{A'}_{\xi_1} \rho^{A'}_{\xi_2}\right) = \Tr\left(M^{A'}_{\zeta(\xi_1) b^*} \rho^{A'}_{\xi_2}\right) = \Tr\left(M^{B'}_{\zeta(\xi_1) b^*} \rho^{B'}_{\xi_2}\right) = \Tr\left(\rho^{B'}_{\xi_1} \rho^{B'}_{\xi_2}\right),
\end{equation}
where the second equation follows from Eq.~\eqref{eq:weak-indistin-pure}.

We can now construct a transformation $T$ over pure state space by considering $T(\rho^{A'}_{\xi}) = \rho^{B'}_{\xi}, \forall \xi \in \Omega^{\mathrm{pure}}.$
$T$ is a transformation that takes pure states to pure states that satisfies
\begin{equation}
    \Tr\left(\rho^{A'}_{\xi_1} \rho^{A'}_{\xi_2}\right) = \Tr\left(T(\rho^{A'}_{\xi_1}) T(\rho^{A'}_{\xi_2}) \right), \forall \xi_1, \xi_2 \in \Omega^{\mathrm{pure}},
\end{equation}
as a result of Eq.~\eqref{eq:weak-indistin-symmetry}.
Such a transformation $T$ is also known as a symmetry transformation.
By Wigner's theorem, $T$ must take the following form:
\begin{equation}
    T(\rho) = U \rho U^{-1},
\end{equation}
where $U$ is a unitary or an anti-unitary transformation.
For a proof of Wigner's theorem, see Appendix~A of Chapter~2 in \textit{The Quantum Theory of Fields, Vol. 1}, by Weinberg.
The above representation of the symmetry transformation $T$ yields
\begin{equation} \label{eq:state-rotation}
    \rho^{B'}_{\xi} = U \rho^{A'}_{\xi} U^{-1},\,\, \forall \xi \in \Omega^{\mathrm{pure}}.
\end{equation}
Using Eq.~\eqref{eq:statementMrho}, we also have the following relation for a subset of POVM elements,
\begin{equation} \label{eq:povm-rotation}
    M^{B'}_{\zeta(\xi) b^*} = U M^{A'}_{\zeta(\xi) b^*} U^{-1}, \,\, \forall \xi \in \Omega^{\mathrm{pure}}.
\end{equation}
Intuitively, we will now use the pure states $\rho_\xi, \forall \xi \in \Omega^{\mathrm{pure}}$ to perform quantum POVM tomography. Then use the rank-one POVM elements $M_{\zeta(\xi) b^*}, \forall \xi \in \Omega^{\mathrm{pure}}$ to perform quantum state tomography.
Finally, we use both $\rho_\xi$ and $M_{\zeta(\xi) b^*}$ to perform quantum process tomography. Together, we have established the statement of Theorem~\ref{prop:equiv}.

\paragraph{POVMs:}
    $\forall z \in \mathcal{Z}, b \in \mathcal{B}$, we have $\Tr(M^{A'}_{z b} \rho^{A'}_\xi) = \Tr(M^{B'}_{z b} \rho^{B'}_\xi) = \Tr(M^{B'}_{z b} U \rho^{A'}_{\xi} U^{-1}), \forall \xi \in \Omega^{\mathrm{pure}}$. The first equality follows from the weak indistinguishability between $\mathcal{W}_A'$ and $\mathcal{W}_B'$.
    The second equality follows from Eq.~\eqref{eq:state-rotation}.
    We consider a measure $\mu_\xi$ over $\xi$ such that $\rho^{A'}_\xi$ forms the Haar measure over the pure state space.
    The Haar integration formulas $\int d\mu \ketbra{\psi}{\psi} = I / d$ and $\int d\mu \ketbra{\psi}{\psi}^{\otimes 2} = (I + \mathrm{SWAP}) / (d (d+1))$ give us
    \begin{align}
        \frac{\Tr(M^{A'}_{z b})}{d} = \int d\mu_\xi \Tr(M^{A'}_{z b} \rho^{A'}_{\xi}) &= \int d\mu_\xi \Tr(M^{B'}_{z b} U \rho^{A'}_{\xi} U^{-1}) = \frac{\Tr(M^{B'}_{z b})}{d},\\
        \frac{\Tr(M^{A'}_{z b}) I + M^{A'}_{z b}}{d(d+1)} = \int d\mu_\xi \Tr(M^{A'}_{z b} \rho^{A'}_{\xi}) \rho^{A'}_{\xi}
        &= \int d\mu_\xi \Tr(M^{B'}_{z b} U \rho^{A'}_{\xi} U^{-1}) \rho^{A'}_{\xi} \\
        &= \frac{\Tr(M^{B'}_{z b}) I + U^{-1} M^{B'}_{z b} U}{d(d+1)}.
    \end{align}
    Therefore, $\forall z \in \mathcal{Z}, b \in \mathcal{B}, M^B_{z b} = M^{B'}_{z b} = U M^{A'}_{z b} U^{-1} = U M^{A}_{z b} U^{-1}$ as stated in Theorem~\ref{prop:equiv}.

\paragraph{States:}
    $\forall x \in \mathcal{X}$, we have $\Tr(M^{A'}_{\zeta(\xi) b^*} \rho^{A'}_x) = \Tr(M^{B'}_{\zeta(\xi) b^*} \rho^{B'}_x) = \Tr(U M^{A'}_{\zeta(\xi) b^*} U^{-1} \rho^{B'}_{x}), \forall \xi \in \Omega^{\mathrm{pure}}$. The first equality follows from the weak indistinguishability between $\mathcal{W}_A'$ and $\mathcal{W}_B'$.
    The second equality follows from Eq.~\eqref{eq:povm-rotation}.
    We consider a measure $\mu_\xi$ over $\xi$ such that $M^{A'}_{\zeta(\xi) b^*}$ forms the Haar measure over the pure state space.
    The Haar integration formulas give us
    \begin{align}
        \frac{ I + \rho^{A'}_x}{d(d+1)} &= \int d\mu_\xi \Tr(M^{A'}_{\zeta(\xi) b^*} \rho^{A'}_x) M^{A'}_{\zeta(\xi) b^*}\\
        &= \int d\mu_\xi \Tr(U M^{A'}_{\zeta(\xi) b^*} U^{-1} \rho^{B'}_{x}) M^{A'}_{\zeta(\xi) b^*}
        = \frac{ I + U^{-1} \rho^{B'}_x U}{d(d+1)}.
    \end{align}
    Therefore, $\forall x \in \mathcal{X}, \rho^{B}_{x} = \rho^{B'}_{x} = U \rho^{A'}_{x} U^{-1} = U \rho^{A}_{x} U^{-1}$ as stated in Theorem~\ref{prop:equiv}.

\paragraph{CPTP maps:}
    $\forall y \in \mathcal{Y}$, we have $\forall \xi_1, \xi_2 \in \Omega^{\mathrm{pure}},$
    \begin{align}
        \Tr(M^{A'}_{\zeta(\xi_2) b^*} \mathcal{E}^{A'}_y (\rho^{A'}_{\xi_1}) ) &= \Tr(M^{B'}_{\zeta(\xi_2) b^*} \mathcal{E}^{B'}_y (\rho^{B'}_{\xi_1}))
        = \Tr(U M^{A'}_{\zeta(\xi) b^*} U^{-1} \mathcal{E}^{B'}_y ( U \rho^{A'}_{\xi_1} U^{-1} ) ).
    \end{align}
    From the same analysis for states, we have
    \begin{equation}
        \mathcal{E}^{A'}_y (\rho^{A'}_{\xi_1}) = U^{-1} \mathcal{E}^{B'}_y ( U \rho^{A'}_{\xi_1} U^{-1} ) U.
    \end{equation}
    Because $\rho^{A'}_{\xi_1}$ can be any pure state, we have
    \begin{equation}
        U \mathcal{E}^{A}_y( U^{-1}(\cdot) U ) U^{-1} = U \mathcal{E}^{A'}_y(U^{-1}(\cdot) U) U^{-1} = \mathcal{E}^{B'}_y(\cdot) = \mathcal{E}^{B}_y(\cdot)
    \end{equation}
    as stated in Theorem~\ref{prop:equiv}.

% Wigner's theorem... \textcolor{red}{[This is correct, but you might not want to say it this way, since complex conjugation is defined relative to a basis with fixed phase conventions. Instead you could say $U$ is either unitary or antiunitary. To see these are the only possibilities, it is enough to consider just arbitrary preparations and POVMs --- in fact preparations of pure states, and rank-1 POVMs. This is what Wigner showed. Then we can see that equivalence of the world models implies that the maps of the two worlds are related as stated. For a proof of Wigner's theorem, see Appendix A of Chapter 2 in \textit{The Quantum Theory of Fields, Vol. 1}, by Weinberg. This is essentially the same as Wigner's proof, but with more details.] } I think we need to then use the state preparations and informationally complete rank-1 POVMs to do state, process, and measurement tomography.

% \newpage
\section{Foundations for learning intrinsic descriptions}
\label{sec:learning_theory_foundations}

The goal of learning is to conduct experiments to gain knowledge about the actual world model among a collection of potential world models.
Learning theory provides a formal language to study such an information gathering process.
For learning about a quantum-mechanical world, we would like to know what quantum-mechanical operations each action $x \in \mathcal{X}, y \in \mathcal{Y}, z \in \mathcal{Z}$ corresponds to assuming the true world model belongs to a specified set of world models.

One should think of the learning process as follows. A classical agent is given the premise that the true world model belongs to some set of world models. Then, the classical agent conducts experiments to learn what the true model is.
%This prior knowledge of what the potential models are referred to as
The set of possible models is called the concept class (often when each model is a classical Boolean function), hypothesis class (often when each model is a function from space $X$ to space $Y$), or model class in machine learning.
We will use the less loaded word, model class, to refer to a set of potential world models.
In many branches of mathematics, including learning theory \cite{mohri2018foundations}, a class is a set of mathematical objects. We will follow the same convention here.

\begin{definition}[Model class for $d$-dimensional world model]
Given sets $\mathcal{X}, \mathcal{Y}, \mathcal{Z}$ denoting the action spaces and set $\mathcal{B}$ denoting the outcome space.
A $d$-dimensional model class $\mathcal{Q}$ over $\mathcal{X}, \mathcal{Y}, \mathcal{Z}, \mathcal{B}$ is a set $\{\mathcal{W}\}$ of %different
$d$-dimensional world model models with the same spaces $\mathcal{X}, \mathcal{Y}, \mathcal{Z}, \mathcal{B}$.
\end{definition}

Recall that some world models are equivalent to one another (describe the same physical reality) while not being equal (the mathematical description looks nominally different), i.e., $\mathcal{W}_1 \neq \mathcal{W}_2$ but $\mathcal{W}_1 \equiv \mathcal{W}_2$.
% \textcolor{red}{[Now I'm confused about what you mean by ``different.'' I understood that models which are weakly equivalent are ``different'' if they are not equivalent, but in what sense are two models ``different'' if they are equivalent? Usually in physics if two things are gauge equivalent we don't consider them to be ``different.'' I suppose you are trying to set up the statement that we can't learn a gauge transformation, but your terminology is imprecise and will cause confusion.]}
This is intrinsic to the description of quantum mechanics as we have shown in Theorem~\ref{prop:equiv}.
When a model class contains two world models that are nominally different but physically the same, we say the model class is redundant.
A formal definition is given below.

\begin{definition}[Redundant model class] \label{def:redundant}
Given sets $\mathcal{X}, \mathcal{Y}, \mathcal{Z}, \mathcal{B}$ and a model class $\mathcal{Q} = \{\mathcal{W}\}$ over $\mathcal{X}, \mathcal{Y}, \mathcal{Z}, \mathcal{B}$.
$\mathcal{Q}$ is redundant if $\exists \, \mathcal{W}_1 \neq \mathcal{W}_2 \in \mathcal{Q}$, $\mathcal{W}_1 \equiv \mathcal{W}_2$.
\end{definition}

Another basic concept about model classes is that two model classes could be equivalent to one another as a result of the equivalence of world models.
For example, if we have three equivalent world models $\mathcal{W} \equiv \mathcal{W}_1 \equiv \mathcal{W}_2$, then the model class $\mathcal{Q} = \{ \mathcal{W} \}$ contains the same set of equivalent world models as $\tilde{\mathcal{Q}} = \{ \mathcal{W}_1, \mathcal{W}_2 \}$.
Hence, we say the two model classes are equivalent.
We give the formal definition of equivalent model classes below.

\begin{definition}[Equivalent model classes] \label{def:equiv-modelclass}
Given sets $\mathcal{X}, \mathcal{Y}, \mathcal{Z}, \mathcal{B}$.
Model classes $\mathcal{Q}, \tilde{\mathcal{Q}}$ over $\mathcal{X}, \mathcal{Y}, \mathcal{Z}, \mathcal{B}$ are equivalent if and only if $\forall \mathcal{W} \in \mathcal{Q}, \exists \tilde{\mathcal{W}} \in \tilde{\mathcal{Q}}, \mathcal{W} \equiv \tilde{\mathcal{W}}$, and $\forall \tilde{\mathcal{W}} \in \tilde{\mathcal{Q}}, \exists \mathcal{W} \in \mathcal{Q}, \mathcal{W} \equiv \tilde{\mathcal{W}}$.
\end{definition}

A redundant model class is not preferred as the same physical reality is described by two different representations.
However, the following basic proposition shows that a redundant model class is equivalent to a non-redundant model class.
For example, if $\mathcal{Q} = \{\mathcal{W}_1, \mathcal{W}_2, \mathcal{W}_3\}$, where $\mathcal{W}_1 \equiv \mathcal{W}_2$ and $\mathcal{W}_1 \not\equiv \mathcal{W}_3$, then we have $\mathcal{Q}$ is equivalent to $\mathcal{Q}' = \{\mathcal{W}_1, \mathcal{W}_3\}$.

\begin{proposition} \label{prop:can-find-non-redundant}
Given sets $\mathcal{X}, \mathcal{Y}, \mathcal{Z}, \mathcal{B}$ and a model class $\mathcal{Q} = \{\mathcal{W}\}$ over $\mathcal{X}, \mathcal{Y}, \mathcal{Z}, \mathcal{B}$.
There exists a non-redundant model class $\tilde{\mathcal{Q}}$ over $\mathcal{X}, \mathcal{Y}, \mathcal{Z}, \mathcal{B}$, such that $\tilde{\mathcal{Q}}$ is equivalent to $\mathcal{Q}$.
\end{proposition}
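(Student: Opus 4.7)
The plan is to exhibit $\tilde{\cQ}$ as a system of representatives for the equivalence classes of $\cQ$ under the relation $\equiv$, and then to check that this system has the two properties required by Definitions~\ref{def:redundant} and~\ref{def:equiv-modelclass}.

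The first step is to verify that $\equiv$ restricted to $\cQ$ is an equivalence relation. Reflexivity is immediate (take the identity unitary in Theorem~\ref{prop:equiv}). For symmetry, if $\cW_1 \equiv \cW_2$ via a (anti-)unitary $U$, then $U^{-1}$ witnesses $\cW_2 \equiv \cW_1$, since the inverse of a unitary is unitary and the inverse of an anti-unitary is anti-unitary. For transitivity, if $\cW_1 \equiv \cW_2$ via $U$ and $\cW_2 \equiv \cW_3$ via $V$, then the composition $VU$ is again unitary or anti-unitary (unitary if both $U,V$ are of the same type, anti-unitary otherwise), and substituting the relations from Theorem~\ref{prop:equiv} for states, CPTP maps, and POVMs shows that $VU$ witnesses $\cW_1 \equiv \cW_3$. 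Thus $\equiv$ partitions $\cQ$ into equivalence classes.

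The second step is to form $\tilde{\cQ}$ by selecting exactly one world model from each equivalence class, invoking the axiom of choice when $\cQ$ (and hence the set of classes) is uncountably large. By construction, any two distinct elements $\cW_1, \cW_2 \in \tilde{\cQ}$ come from different equivalence classes, so $\cW_1 \not\equiv \cW_2$; in particular there is no pair with $\cW_1 \neq \cW_2$ and $\cW_1 \equiv \cW_2$, so $\tilde{\cQ}$ is non-redundant in the sense of Definition~\ref{def:redundant}.

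The third step is to verify the two inclusions in Definition~\ref{def:equiv-modelclass}. Since $\tilde{\cQ} \subseteq \cQ$, for every $\tilde{\cW} \in \tilde{\cQ}$ the element $\cW = \tilde{\cW} \in \cQ$ satisfies $\cW \equiv \tilde{\cW}$ by reflexivity. Conversely, every $\cW \in \cQ$ lies in exactly one equivalence class, and the chosen representative $\tilde{\cW} \in \tilde{\cQ}$ of that class satisfies $\cW \equiv \tilde{\cW}$ by construction. Hence $\tilde{\cQ}$ is equivalent to $\cQ$. The only real subtlety is the use of choice for the representative selection; the algebraic verifications themselves are routine given Theorem~\ref{prop:equiv}.
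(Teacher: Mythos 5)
Your proposal is correct and follows essentially the same route as the paper's proof: partition $\cQ$ into equivalence classes under $\equiv$ and take one representative per class. The only difference is that you spell out details the paper leaves implicit (that $\equiv$ is an equivalence relation, via Theorem~\ref{prop:equiv}, and the use of choice), which is a fine elaboration rather than a new argument.
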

\begin{proof}
We partition all world models in $\mathcal{Q}$ into equivalence classes, where all models in the same equivalence class are equivalent to one another, and those that are in different equivalence classes are not equivalent.
We choose one representative from each equivalence class of world models in $\mathcal{Q}$.
We define $\tilde{\mathcal{Q}}$ to be the set of the representatives.
$\tilde{\mathcal{Q}}$ is equivalent to ${\mathcal{Q}}$ and $\tilde{\mathcal{Q}}$ is non-redundant.
\end{proof}

After defining and illustrating some basic properties of model classes, we consider the learnability of a model class.
We say a model class is learnable if for any world model $\mathcal{W}$ in the model class, the classical agent can identify the physical operations to an arbitrarily small error up to a global unitary or anti-unitary transformation using a finite number of experiments.
The unitary or anti-unitary transformation $U$ is necessary as Theorem~\ref{prop:equiv} states that two world models related by $U$ are equivalent and describe the same physical reality.
The transformation $U$ corresponds to a change of basis in the quantum Hilbert space and potentially followed by a complex conjugation operation.

\begin{definition}[Learnability of a model class] \label{def:learnability}
Given sets $\mathcal{X}, \mathcal{Y}, \mathcal{Z}, \mathcal{B}$ and a model class $\mathcal{Q} = \{\mathcal{W}\}$ for $d$-dimensional world models over $\mathcal{X}, \mathcal{Y}, \mathcal{Z}, \mathcal{B}$.
The model class $\mathcal{Q}$ is learnable if $\forall \, \epsilon, \delta > 0, \forall \mathcal{W} \in \mathcal{Q}$, there exists a unitary or anti-unitary transformation $U$,
\begin{itemize}
    \item $\forall \, x \in \mathcal{X}$, with probability $\geq 1 - \delta$, a classical agent can conduct finitely many experiments as defined in Def.~\ref{def:experiment} to output $\tilde{\rho}_x$ satisfying $\norm{\rho_x - U \tilde{\rho}_x U^{-1} }_1 \leq \epsilon$,
    \item $\forall \, y \in \mathcal{Y}$, with probability $\geq 1 - \delta$, a classical agent can conduct finitely many experiments as defined in Def.~\ref{def:experiment} to output $\tilde{\mathcal{E}}_y$ satisfying $\norm{\mathcal{E}_y - U \tilde{\mathcal{E}}_y( U^{-1} (\cdot) U ) U^{-1} }_{\diamond} \leq \epsilon$.
    \item $\forall \, z \in \mathcal{Z}$, $b \in \mathcal{B}$, with probability $\geq 1 - \delta$, a classical agent can conduct finitely many experiments as defined in Def.~\ref{def:experiment} to output $\tilde{M}_{z b}$ satisfying $\norm{M_{z b} - U \tilde{M}_{z b} U^{-1} }_1 \leq \epsilon$.
\end{itemize}
\end{definition}

In many scenarios, it is too much to ask for the ability to learn everything about a world model, i.e., all initial states, CPTP maps, and measurements.
For example, we might only want to predict a property of one of the possible initial states, such as its purity.
Predicting properties can often be significantly more efficient than learning the full description \cite{huang2020predicting}.
Furthermore, even if a model class $\mathcal{Q}$ is unlearnable according to the above definition, we may still be able to predict some properties.

\begin{definition}[Predictability of a property] \label{def:predictability}
Given sets $\mathcal{X}, \mathcal{Y}, \mathcal{Z}, \mathcal{B}$, a model class $\mathcal{Q} = \{\mathcal{W}\}$ for $d$-dimensional world models over $\mathcal{X}, \mathcal{Y}, \mathcal{Z}, \mathcal{B}$, and a function $f$ that maps a world model $\mathcal{W}$ to a property represented by a value in $\mathbb{R}$.
The property $f$ is predictable in the model class $\mathcal{Q}$ if $\forall \, \epsilon, \delta > 0, \forall \mathcal{W} \in \mathcal{Q}$, with probability $\geq 1-\delta$, a classical agent can conduct finitely many experiments as defined in Def.~\ref{def:experiment} to output $\tilde{o} \in \mathbb{R}$ satisfying $|f(\mathcal{W}) - \tilde{o}| \leq \epsilon$.
\end{definition}

\section{A general theorem for learning intrinsic physical descriptions}
\label{sec:int-description}

The goal of this appendix is to prove the following theorem.
Here, we consider a model class $\mathcal{Q}$ such that for any candidate world model $\mathcal{W}$ in $\mathcal{Q}$, there exists an action to prepare a pure state, a set of actions for implementing a universal set of unitaries, and an action for implementing a nontrivial POVM.
A trivial POVM produces a measurement outcome independent of the input state.
The actions that satisfy the above conditions could be different for different candidate world model $\mathcal{W}$.
The classical agent has no knowledge of what these actions are and what the corresponding physical operations are.
Furthermore the model class $\mathcal{Q}$ could contain uncountably many candidate world models.
The theorem states that even without knowing what any action is,
the classical agent can learn the intrinsic descriptions of all actions when the actions enable the exploration of the entire quantum state space.

% The purpose of this theorem is to show learnability without focusing on the computational and sample complexity.
% We will consider practical applications with detailed sample and computational complexity in Appendix~\ref{sec:NNGST-appendix}.
% We begin by stating the theorem using the formal language we developed in the previous appendices.
% The main purpose of the following theorem is to show learnability. Hence, we will leave open the sample-optimal or computationally optimal algorithm for learning intrinsic physical description as future work.

\begin{theorem}[Restatement of Theorem~\ref{thm:int-description}] \label{thm:int-description-ditto}
Given finite sets $\mathcal{X}, \mathcal{Y}, \mathcal{Z}, \mathcal{B}$.
Consider a $d$-dimensional model class $\mathcal{Q}$ over action spaces $\mathcal{X}, \mathcal{Y}, \mathcal{Z}$ and outcome space $\mathcal{B}$. Suppose that for all $\mathcal{W} = ( \{\rho_x\}_{x \in \mathcal{X}}, \{\mathcal{E}_y\}_{y \in \mathcal{Y}}, \{\mathcal{M}_x\}_{x \in \mathcal{Z}} ) \in \mathcal{Q}$,
\begin{itemize}
    \item $\exists x \in \mathcal{X}$, $\rho_x$ is pure.
    \item $\exists y_1, \ldots, y_k \in \mathcal{Y}$, $\mathcal{E}_{y_1}, \ldots \mathcal{E}_{y_k}$ constitute a universal set of unitary transformations.
    \item $\exists z \in \mathcal{Z}$, $\mathcal{M}_z$ has at least one element not proportional to identity.
\end{itemize}
Then, $\mathcal{Q}$ is learnable.
\end{theorem}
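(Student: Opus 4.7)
The plan is to bootstrap from three primitive subroutines---an identity test, a unitary test, and an overlap estimator---and then reduce the problem to a form of state/process/measurement tomography performed in an unknown but fixed reference frame. First I would build an \emph{identity test} for a composed channel $\mathcal{E}_{y_L}\circ\cdots\circ\mathcal{E}_{y_1}$: interleave the composition with many random sequences of actions drawn from $\mathcal{Y}$ and compare the resulting outcome distributions under every $\mathcal{M}_z$ with those obtained without the insertion; if any statistical discrepancy is detected the map is not the identity. Layering this with the fact that a CPTP map is unitary iff it is reversible gives a \emph{unitary test}: $\mathcal{E}_y$ is unitary precisely when some product $\mathcal{E}_{y_k}\circ\cdots\circ\mathcal{E}_{y_1}\circ\mathcal{E}_y$ can be made arbitrarily close to the identity. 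Running this test over all $y\in\mathcal{Y}$ isolates the subset of actions implementing unitaries, and together with the assumption that such a subset is universal, gives a way to implement arbitrarily long random words in an unknown universal set.

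Next I would invoke the contraction result of Varj\'u (Corollary~7 of \cite{Varj2012RandomWI}) for random walks on compact semisimple Lie groups to argue that random words of sufficient length in the identified universal gate set approximate a unitary $2$-design in expectation against Lipschitz continuous functions on $\mathrm{U}(d)$. This is the cleanest way to avoid having to assume algebraic entries or the presence of inverses in $\mathcal{Y}$. Combining the approximate $2$-design with the unknown nontrivial POVM $\mathcal{M}_z$ (guaranteed to have a POVM element not proportional to $I$) yields an \emph{overlap estimator}: applying a Haar-like random unitary $V$ before the POVM, reading outcome $b$, and averaging the correlation between outcomes on two input states $\rho_1,\rho_2$ produces, via the standard $2$-design integration $\int V^{\otimes 2}(\cdot)V^{\dagger\otimes 2}\,dV$, a linear combination of $\mathrm{tr}(\rho_1\rho_2)$ and constant terms. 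Calibrating against pairs of states whose overlap we already know (e.g.\ a state with itself) lets us invert this relation and recover $\mathrm{tr}(\rho_1\rho_2)$ to any desired accuracy.

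Armed with overlap estimation, the plan is to identify the special geometric configuration of states $\{|\psi_k\rangle\}$ described in the proof sketch. I would first use the overlap estimator to locate an $x\in\mathcal{X}$ with $\mathrm{tr}(\rho_x^2)=1$, giving a pure starting state. Then I would search over circuits built from the identified unitary actions to find a circuit $C_2$ with $\mathrm{tr}(\rho_xC_2\rho_xC_2^\dagger)=0$, producing a second basis vector orthogonal to the first; iterating yields an orthonormal pure-state basis $|e_1\rangle,\dots,|e_d\rangle$. Next, I would search for circuits producing the prescribed superpositions $\tfrac{1}{\sqrt2}(|e_i\rangle\pm|e_j\rangle)$, $\tfrac{1}{\sqrt2}(|e_i\rangle+\mathrm{i}|e_j\rangle)$, and the three-term analogues, using the overlap estimator to test each candidate state against a known reference set with the prescribed inner products. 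Once this configuration is in hand, the three-term superpositions are what pin down all relative phases modulo a single global antiunitary, exactly as in Wigner's theorem; without them, each two-dimensional subspace could be complex-conjugated independently.

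The final step is routine tomography inside the unknown reference frame: using the special states as an informationally complete probe set together with the overlap estimator, I would reconstruct every $\rho_x$, every $\mathcal{E}_y$ (by probing $\mathcal{E}_y(|\psi_k\rangle\langle\psi_k|)$ and reading off its overlaps with the special set), and every $\mathcal{M}_z$ (from outcome statistics on the special states, inverting a well-conditioned linear system). Standard Chernoff/Hoeffding bounds turn finite sample sizes into the $(\epsilon,\delta)$-guarantees required by Definition~\ref{def:learnability}. The main obstacle, and the step I would expect to occupy most of the proof, is establishing the effective $2$-design property from random compositions of an unknown universal set without algebraicity assumptions: Varj\'u's theorem is stated for specific classes of measures on compact Lie groups, and transferring it into a robust, sample-efficient overlap estimator---while simultaneously controlling the error from the unknown, possibly near-trivial POVM element not proportional to identity---requires care, since the signal-to-noise of the overlap estimator scales inversely with the nontriviality of that POVM element.
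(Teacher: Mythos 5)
Your plan follows essentially the same route as the paper's proof: an identity test and a reversibility-based unitary test, Varj\'u's contraction theorem to get approximate 2-design behavior from random words in the identified unitaries, an overlap estimator built from correlating outcomes of the nontrivial POVM element, the special configuration of basis states plus two- and three-term superpositions that fixes everything up to a global (anti)unitary via Wigner's theorem, and finally tomography with Hoeffding-type bounds. The only under-specified detail is the calibration of the affine relation $\alpha\,\mathrm{tr}(\rho_1\rho_2)+\beta$, which needs two independent pieces of data (the paper uses the self-overlap of the action maximizing $f_{M_{zb}}(\rho_x,\rho_x)$, which must be pure, together with a first-moment estimate of $\mathrm{tr}(M_{zb})/d$), but this is a minor gap rather than a different approach.
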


To prove Theorem~\ref{thm:int-description-ditto}, we give a learning algorithm such that for each world model $\mathcal{W}$ in $\mathcal{Q}$, the algorithm learns a world model $\tilde{\mathcal{W}}$ that satisfies $\tilde{\mathcal{W}} \equiv {\mathcal{W}}$ approximately, i.e., all physical descriptions of the actions in $\mathcal{X}, \mathcal{Y}, \mathcal{Z}$ are related by a global unitary or anti-unitary transformation (see Theorem~\ref{prop:equiv}).
The approximation error can be made arbitrarily close to zero as the algorithm conducts more experiments.
From Definition~\ref{def:learnability} on learnability of model class, we have $\mathcal{Q}$ is learnable.

In the following, we present an important lemma used in the proof. Then, we separate each step of the learning algorithm to learn the actions in the world model into subsections.
The proof will consider the dimension $d$ to be a constant.

\subsection{A lemma on generating Haar-random unitaries}

The proof of Theorem~\ref{thm:int-description-ditto} relies on the following lemma about the generation of approximate Haar measure using a universal set of unitaries.
Here, we say $U_1, \ldots, U_k$ forms a universal set of unitaries if the set $\mathcal{U} = \{U_1, \ldots, U_k, U_1^{-1}, \ldots, U_k^{-1}\}$ generates a dense subgroup of the special unitary group. We follow the standard terminology, where the subgroup generated by the set $\mathcal{U}$ is the group consisting of elements that can be written as a product of elements in $\mathcal{U}$.

\begin{definition}[Real-valued Lipschitz function]
A real-valued Lipschitz function $\phi$ over the special unitary group satisfies $|\phi(U) - \phi(V)| \leq \norm{U - V}_F$ for all unitaries $U, V$.
\end{definition}

\begin{lemma}[Random unitaries approximately form Haar measure] \label{lem:random-unit-Haar}
Given $k$ unitaries $U_1, \ldots, U_k$ that form a universal set of unitaries.
For any $\epsilon > 0$ and any real-valued Lipschitz function $\phi$ over the special unitary group, there exists $L > 0$, such that $\forall \ell \geq L$,
\begin{equation}
    \left| \frac{1}{k^{\ell}} \sum_{i_1 = 1}^{k} \ldots \sum_{i_\ell = 1}^{k} \phi(U_{i_1} \ldots U_{i_\ell}) - \int d\mu_{\mathrm{Haar}}(U) \phi(U) \right| < \epsilon,
\end{equation}
where $\mu_{\mathrm{Haar}}$ is the Haar measure (uniform distribution) over the special unitary group.
\end{lemma}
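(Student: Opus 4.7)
The plan is to recast the sum as the expectation of $\phi$ under the $\ell$-fold convolution of a probability measure on $\mathrm{SU}(d)$, and then invoke Varj\'u's ergodic theorem for random walks on compact semi-simple Lie groups. Let $\nu = \frac{1}{k}\sum_{i=1}^{k}\delta_{U_i}$ be the uniform measure supported on $\{U_1,\dots,U_k\}$, and let $\nu^{*\ell}$ denote its $\ell$-fold convolution on $\mathrm{SU}(d)$. Expanding the convolution gives
\[
\int_{\mathrm{SU}(d)} \phi(U) \, d\nu^{*\ell}(U) \;=\; \frac{1}{k^{\ell}}\sum_{i_1,\dots,i_\ell=1}^{k} \phi(U_{i_1}\cdots U_{i_\ell}),
\]
so the lemma reduces to showing that $\nu^{*\ell}$ converges, tested against real Lipschitz functions, to the Haar measure $\mu_{\mathrm{Haar}}$.

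Next I would apply Corollary~7 of \cite{Varj2012RandomWI}, which gives convergence of $\nu^{*\ell}$ to $\mu_{\mathrm{Haar}}$ on a compact semi-simple Lie group (and $\mathrm{SU}(d)$ qualifies for $d \geq 2$, while the $d=1$ case is trivial) under the single hypothesis that the closed subgroup generated by $\mathrm{supp}(\nu)$ is the whole group. Universality supplies this: by hypothesis $\{U_1,\dots,U_k,U_1^{-1},\dots,U_k^{-1}\}$ generates a dense subgroup of $\mathrm{SU}(d)$, and in a compact connected group every $U$ has $U^{-1}$ in the closure of $\{U^n : n \geq 1\}$, since $\overline{\langle U\rangle}$ is a torus containing the identity. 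Hence the \emph{forward} semigroup generated by $\{U_1,\dots,U_k\}$ is already dense in $\mathrm{SU}(d)$, verifying Varj\'u's hypothesis.

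Once measure convergence is in hand, the Lipschitz bound follows from the Kantorovich-Rubinstein dual representation of the Wasserstein-$1$ distance,
\[
\left|\int \phi\, d\nu^{*\ell} - \int \phi\, d\mu_{\mathrm{Haar}}\right| \leq W_1(\nu^{*\ell},\mu_{\mathrm{Haar}}),
\]
where $W_1$ is taken with respect to the Frobenius metric on $\mathrm{SU}(d)$. Since $\mathrm{SU}(d)$ is compact, weak convergence is equivalent to convergence in $W_1$, so one can choose $L$ large enough that $W_1(\nu^{*\ell},\mu_{\mathrm{Haar}}) < \epsilon$ for all $\ell \geq L$, which completes the argument.

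The main technical hurdle is the non-symmetry of the step distribution: the set $\{U_1,\dots,U_k\}$ need not be closed under inversion. For symmetric $\nu$, weak convergence would follow from the spectral gap of a self-adjoint convolution operator on $L^2(\mathrm{SU}(d))$ via the Peter-Weyl decomposition; for non-symmetric $\nu$ the convolution operator is not self-adjoint, and one must separately rule out the walk being trapped in a coset or periodic in some irreducible representation. Varj\'u's theorem handles exactly this obstruction, which is why the direct Fourier/Peter-Weyl computation is not sufficient on its own. As the paragraph preceding the lemma notes, the resulting convergence is only qualitative, without the explicit spectral gap of \cite{bourgain2012spectral}, but qualitative convergence is all that the statement of the lemma requires.
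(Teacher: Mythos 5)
Your reduction to convolution powers---writing the average as $\int \phi\, d\nu^{*\ell}$ for $\nu$ uniform on $\{U_1,\dots,U_k\}$---matches the paper, and the closing step via Kantorovich--Rubinstein duality would be fine once weak convergence of $\nu^{*\ell}$ to Haar is in hand. The gap is in the middle: Corollary~7 of \cite{Varj2012RandomWI} (quoted as Theorem~\ref{thm:varju2012}) is not a weak-convergence statement under the hypothesis that $\mathrm{supp}(\nu)$ generates a dense subgroup. It is a single-step $L^2$-contraction estimate: if $\mathrm{supp}(\tilde{\mu}\ast\mu)$ generates a dense subgroup, then convolving a centered Lipschitz function $\psi$ with $\mu$ shrinks $\norm{\psi}_2$ by a factor $1-c/\log^{A}(\norm{\psi}_{\mathrm{Lip}}+2)$. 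Invoking it as a black box that ``gives convergence of $\nu^{*\ell}$ to $\mu_{\mathrm{Haar}}$'' under a condition on $\mathrm{supp}(\nu)$ alone attributes to the reference exactly the content that still has to be proved, and with the wrong hypothesis: your forward-semigroup observation (which is correct, and a nice point) establishes density of the semigroup generated by the $U_i$, not density of the group generated by $\mathrm{supp}(\tilde{\mu}\ast\mu)$, which is what the corollary actually requires.

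Concretely, the missing work---which is the bulk of the paper's proof---is: (i) choosing $\mu$ to be uniform over $U_1^{-1},\dots,U_k^{-1}$, so that $\tilde{\mu}\ast\mu$ is supported on $\{U_iU_j^{-1}\}$ and the corollary's hypothesis can be invoked; (ii) iterating the contraction, which requires verifying that the convolved functions $\psi_\ell$ keep Lipschitz norm at most $\norm{\psi_1}_{\mathrm{Lip}}$, since the contraction factor degrades with the Lipschitz norm; and (iii) upgrading the resulting $L^2$ decay to the pointwise statement of the lemma via the uniform Lipschitz bound, then evaluating at $g=I$. If you want a theorem that genuinely gives weak convergence of non-symmetric convolution powers directly from a support condition, that is the Kawada--It\^{o}/Stromberg equidistribution theorem (support not contained in a coset of a proper closed normal subgroup; for $\mathrm{SU}(d)$ such subgroups are central, and your semigroup argument rules that case out), and with that citation your route would be a legitimate alternative. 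But that is a different result from the one you cite, and as written your proposal skips the step where the cited contraction estimate is actually converted into equidistribution.
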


We prove Lemma~\ref{lem:random-unit-Haar} using a theorem given in \cite{Varj2012RandomWI}, which is a corollary of a spectral gap theorem regarding semi-simple compact connected Lie group.
Before stating the theorem, we give a few definitions.
Consider a semi-simple compact connected Lie group $G$ endowed with the bi-invariant Riemannian metric.
The bi-invariant Riemannian metric gives a distance $d(g, h), \forall g, h \in G$.
We define $\mathrm{Lip}(G)$ to be the set of functions $\{\phi\}: G \rightarrow \mathbb{R}$ such that $\forall \phi \in \mathrm{Lip}(G)$, $\sup_{g \neq h \in G} \frac{|\phi(g) - \phi(h)|}{d(g, h)} < \infty$.
For $\phi \in \mathrm{Lip}(G)$, we consider $\norm{\phi}_{\mathrm{Lip}} = \sup_{g \neq h \in G} \frac{|\phi(g) - \phi(h)|}{d(g, h)}$.
We also define $\mu_\mathrm{Haar}$ to be the Haar measure over $G$.
The root second moment %\textcolor{blue}{Maybe we should call this the root second moment or call it the standard deviation and restrict to functions with zero mean? Some quantum info people might be bothered that this definition conflicts with the definition we usually use in finite settings that uses the unnormalized counting measure for the 2-norm.}
of a function $\phi: G \rightarrow \mathbb{R}$ is given by $\norm{\phi}_2 = \sqrt{\int |\phi(x)|^2 d\mu_{\mathrm{Haar}}(x)}$.
For a probability measure $\mu$, consider $\tilde{\mu}$ to be the probability measure such that
\begin{equation}
\int f(x) d \tilde{\mu}(x) = \int f(x^{-1}) d\mu(x)
\end{equation}
for all continuous function $f$.
One can think of $\tilde{\mu}$ as the probability distribution over the inverse of the probability measure $\mu$.
For example, a uniform distribution over $U_1, \ldots, U_k$ yields a uniform distribution over $U_1^{-1}, \ldots, U_k^{-1}$.
Consider the convolution between two probability measure $\mu$ and $\nu$ to be a probability measure,
\begin{equation}
    (\mu \ast \nu) (g) = \sum_{h \in G} \mu(g h^{-1}) \nu(h).
\end{equation}
One can interpret $\mu \ast \nu$ as the probability distribution of $h \ell$ when we sample $h$ according to $\mu$ and $\ell$ according to $\nu$.
For example, a convolution of the uniform distribution over $U_1, \ldots, U_k$ with itself would be a uniform distribution over $U_{i} U_{j}, \forall i, j = 1, \ldots, k$.
Also, for a probability measure $\mu$, we consider the support of $\mu$, denoted as $\mathrm{supp}(\mu)$, to be the intersection of every set $A$ such that $\mu(A^c) = 0$.

\begin{theorem}[Corollary 7~in~\cite{Varj2012RandomWI}] \label{thm:varju2012}
Let $G$ be a semi-simple compact connected Lie group endowed with the bi-invariant Riemannian metric, and $\mu$ be a probability measure on $G$. If $\mathrm{supp}(\tilde{\mu} \ast \mu)$ generates a dense subgroup in $G$, then $\forall \psi_A \in \mathrm{Lip}(G)$ with $\norm{\psi_A}_2 = 1$ and $\int \psi_A(x) d\mu_\mathrm{Haar}(x) = 0$, we have
\begin{equation}
    \norm{\psi_B}_2 < 1 - \frac{c}{\log^A(\norm{\psi_A}_{\mathrm{Lip}} + 2)},
\end{equation}
where $\psi_B(g) = \int \psi_A(h^{-1} g)d \mu(h)$, $A \leq 2$ depends on $G$, and $c > 0$ depends only on $\mu$.
\end{theorem}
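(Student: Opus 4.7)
The plan is to recast the lemma as a statement about convergence of convolution powers of a probability measure to Haar measure on $G=SU(d)$, and to invoke Varj\'u's spectral-gap theorem (Theorem~\ref{thm:varju2012}) to obtain exponential decay of an associated averaging operator on mean-zero Lipschitz functions. Fix the bi-invariant Riemannian metric on $G$, which is bi-Lipschitz equivalent to the Frobenius metric used in the statement of Lemma~\ref{lem:random-unit-Haar}. Let $\mu$ be the uniform probability measure on $\{U_1,\dots,U_k\}\subset G$, and set $\psi=\phi-\int\phi\,d\mu_{\mathrm{Haar}}$, which is Lipschitz, mean zero, and satisfies $\|\psi\|_{\mathrm{Lip}}\le L$ for some $L$ depending on $\phi$. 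The quantity to bound is exactly $\int\psi\,d\mu^{*\ell}$. Writing $T_\nu f(g)=\int f(h^{-1}g)\,d\nu(h)$ for Varj\'u's averaging operator, the identities $T_{\nu_1}T_{\nu_2}=T_{\nu_1*\nu_2}$ and $\widetilde{\mu^{*\ell}}=\tilde\mu^{*\ell}$ give $\int\psi\,d\mu^{*\ell}=(T_{\tilde\mu}^{\ell}\psi)(e)$, reducing the lemma to showing $(T_{\tilde\mu}^{\ell}\psi)(e)\to 0$.

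To apply Theorem~\ref{thm:varju2012} to $T_{\tilde\mu}$ I would verify that $\mathrm{supp}(\mu*\tilde\mu)=\{U_iU_j^{-1}\}$ generates a dense subgroup of $G$. When this is not immediate (notably for small $k$), I would pass to the symmetric measure $\mu'=\tfrac12(\mu+\tilde\mu)$, for which $\tilde{\mu'}=\mu'$ and $\mathrm{supp}(\mu'*\mu')=\{U_i^{\pm 1}U_j^{\pm 1}\}$ is the set of length-two words in the universal symmetric generating set; this set generates a subgroup of index at most two in the dense group $\langle U_1,\dots,U_k\rangle$, and a proper closed subgroup of the connected compact simple Lie group $G$ has positive codimension, so a union of two cosets cannot cover $G$, forcing density. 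A short coupling argument between the random walks driven by $\mu$ and $\mu'$ then transfers the resulting Varj\'u-type contraction back to iterates of $T_{\tilde\mu}$: for any mean-zero Lipschitz $\eta$, $\|T\eta\|_2<(1-c/\log^A(\|\eta\|_{\mathrm{Lip}}/\|\eta\|_2+2))\|\eta\|_2$.

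Iterating this contraction rests on two uniform facts about $T=T_{\tilde\mu}$: it preserves mean zero (by left-invariance of Haar measure), and it does not increase the Lipschitz norm (by left-invariance of the Riemannian metric, since $|T\eta(g)-T\eta(g')|\le\int\|\eta\|_{\mathrm{Lip}}\,d(h^{-1}g,h^{-1}g')\,d\tilde\mu(h)=\|\eta\|_{\mathrm{Lip}}\,d(g,g')$). Consequently, setting $\psi_j=T^j\psi$, we have $\|\psi_j\|_{\mathrm{Lip}}\le L$ for all $j$, and applying the Varj\'u bound to $\psi_j/\|\psi_j\|_2$ gives $\|\psi_{j+1}\|_2\le(1-c/\log^A(L/\|\psi_j\|_2+2))\|\psi_j\|_2$. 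As long as $\|\psi_j\|_2\ge\epsilon_0$, the contraction factor is bounded away from $1$ by a constant depending only on $L$ and $\epsilon_0$, and $\|\psi_\ell\|_2$ decays geometrically below any prescribed threshold once $\ell$ is large enough.

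Finally, to convert $L^2$-smallness of $\psi_\ell$ into pointwise smallness of $\psi_\ell(e)$, I would use a Lipschitz-to-$L^2$ comparison: if $|\psi_\ell(e)|=\delta$ then by $\|\psi_\ell\|_{\mathrm{Lip}}\le L$ we have $|\psi_\ell|\ge\delta/2$ on the metric ball of radius $\delta/(2L)$ about $e$, whose Haar volume is $\Theta((\delta/L)^n)$ with $n=\dim G$, giving $|\psi_\ell(e)|\lesssim L^{n/(n+2)}\|\psi_\ell\|_2^{2/(n+2)}$. Combined with the geometric $L^2$ decay this yields $(T_{\tilde\mu}^{\ell}\psi)(e)\to 0$, which is exactly the lemma. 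The main obstacle I anticipate is the clean verification of Varj\'u's density hypothesis; handled by the symmetrization/coupling argument sketched above, the remaining steps reduce to tracking Lipschitz and $L^2$ norms through iteration together with a classical Sobolev-type pointwise bound.
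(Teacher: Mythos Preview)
Your approach is essentially the same as the paper's: take $\mu$ uniform on the generators (the paper uses the inverses, which is equivalent), iterate Varj\'u's contraction from Theorem~\ref{thm:varju2012} on the centered Lipschitz function, use that the averaging operator preserves mean zero and does not increase the Lipschitz seminorm to keep the contraction factor under control, and finally pass from $L^2$ decay to a pointwise bound at the identity via the uniform Lipschitz control. The paper states this last step tersely (``from Lipschitz continuity''), while you give an explicit ball-volume estimate; otherwise the two arguments line up step for step.

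One point worth flagging: your concern about the density hypothesis is well placed. For $k=2$ the set $\{U_iU_j^{-1}\}$ reduces to $\{e,\,U_1U_2^{-1},\,(U_1U_2^{-1})^{-1}\}$, which generates an abelian and hence non-dense subgroup of $SU(d)$, so the paper's bare assertion that $\mathrm{supp}(\tilde\mu*\mu)$ generates a dense subgroup is not justified in that case. Your symmetrization to $\mu'=\tfrac12(\mu+\tilde\mu)$ and the index-two closure argument correctly establish density for $\mathrm{supp}(\mu'*\mu')$, but the ``short coupling argument'' transferring the Varj\'u contraction from $T_{\mu'}$ back to $T_{\tilde\mu}$ is not spelled out and is not obvious: a strict $L^2$ contraction for the average $\tfrac12(T_\mu+T_{\tilde\mu})$ of two $L^2$-contractions does not by itself imply a strict contraction for either summand. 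This is a shared gap with the paper rather than a divergence from it, but it deserves a genuine argument if you want the proof to be airtight.
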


We are now ready to prove Lemma~\ref{lem:random-unit-Haar}.
We apply Theorem~\ref{thm:varju2012} by considering $G$ to be the special unitary group with a representation in the vector space of matrices, and consider the bi-invariant Riemannian metric to be the Euclidean metric in the matrix space.
The distance $d(U, V)$ on $G$ is the the length of the shortest path from $U$ to $V$ on the special unitary group.
Because we consider a constant dimensional special unitary group, we have
\begin{equation}
    \norm{U-V}_F \leq d(U, V) \leq C \norm{U-V}_F,
\end{equation}
for a constant $C \geq 1$, where $\norm{X}_F = \sqrt{\Tr(X^\dagger X)}$.
Hence, any real-valued function $\phi$ satisfying $|\phi(U) - \phi(V)| \leq \norm{U - V}_F$ for all unitaries $U, V$ is in $\mathrm{Lip}(G)$ with the Lipschitz norm $\norm{\phi}_{\mathrm{Lip}} \leq 1$.
Because $\phi$ is Lipschitz continuous, $\norm{\phi}_2 < \infty$.

\begin{proof}[Proof for Lemma~\ref{lem:random-unit-Haar}]
For the edge case where $\phi$ is a constant function, the lemma trivially holds.
Let $\mu$ be the uniform distribution over $U_1^{-1}, \ldots, U_k^{-1}$.
The probability distribution $\tilde{\mu}$ over the inverse of $\mu$ is the uniform distribution over $U_1, \ldots, U_k$.
$\tilde{\mu} \ast \mu$ is the uniform distribution over $U_i U_j^{-1}, \forall i, j = 1, \ldots, k$, and $\mathrm{supp}(\tilde{\mu} \ast \mu)$ is $\{ U_i U_j^{-1}, \forall i, j = 1, \ldots, k \}$, which generates a dense subgroup of the special unitary group.
Because $\phi$ is not a constant function,
we can let $\psi_1$ be
\begin{equation}
    \psi_1(g) = \frac{\phi(g) - \int \phi(U) d\mu_{\mathrm{Haar}}(U) }{\norm{ \phi - \int \phi(U) d\mu_{\mathrm{Haar}}(U) }_2}, \,\,\forall g \in G,
\end{equation}
which satisfies $\norm{\psi_1}_2 = 1$, $\int \psi_1(x) d\mu_\mathrm{Haar}(x) = 0$, and
\begin{equation}
    \norm{\psi_1}_{\mathrm{Lip}} \leq \norm{ \phi - \int \phi(U) d\mu_{\mathrm{Haar}(U)} }_2 \norm{\phi}_{\mathrm{Lip}} \leq \norm{ \phi - \int \phi(U) d\mu_{\mathrm{Haar}}(U) }_2.
\end{equation}
For any $\ell > 1$, we define
\begin{equation}
    \psi_\ell(g) = \frac{\frac{1}{k^\ell} \sum_{i_1=1}^k \ldots \sum_{i_\ell=1}^k  \phi(U_{i_1} \ldots U_{i_\ell} g) - \int \phi d\mu_{\mathrm{Haar}}}{\norm{ \phi - \int \phi(U) d\mu_{\mathrm{Haar}} }_2}, \,\,\forall g \in G,
\end{equation}
which satisfies $\int \psi_\ell(x) d\mu_\mathrm{Haar}(x) = 0$ and
\begin{align}
    |\psi_\ell(g_1) - \psi_\ell(g_2)| &\leq \int \left|\psi_{\ell -1}(h^{-1} g_1) - \psi_{\ell -1}(h^{-1} g_2) \right| d\mu(h)\\
    &\leq \norm{\psi_{\ell -1}}_{\mathrm{Lip}} \int d( h^{-1} g_1, h^{-1} g_2 )  d\mu(h) \leq \norm{\psi_{\ell -1}}_{\mathrm{Lip}} d(g_1, g_2).
\end{align}
Hence, $\norm{\psi_{\ell}}_{\mathrm{Lip}} \leq \norm{\psi_{\ell -1}}_{\mathrm{Lip}} \leq \norm{\psi_{1}}_{\mathrm{Lip}}$ for any $\ell > 1$.

We can apply Theorem~\ref{thm:varju2012} with the probability measure $\mu$  defined above, $\psi_A = \psi_{\ell - 1} / \norm{\psi_{\ell - 1}}_2$, and $\psi_B = \psi_{\ell} / \norm{\psi_{\ell - 1}}_2$ to obtain
\begin{align}
    \norm{\psi_{\ell}}_2 &\leq \left(1 - \frac{c}{\log^A(\norm{\psi_{\ell - 1}}_\mathrm{Lip} / \norm{\psi_{\ell - 1}}_2 + 2)} \right) \norm{\psi_{\ell - 1}}_2 \\
    &\leq \left(1 - \frac{c}{\log^A(\norm{\psi_{1}}_\mathrm{Lip} / \norm{\psi_{\ell - 1}}_2 + 2)} \right) \norm{\psi_{\ell - 1}}_2.
\end{align}
The second inequality follows from $\norm{\psi_{\ell}}_{\mathrm{Lip}} \leq \norm{\psi_{1}}_{\mathrm{Lip}}$.
By solving the above recursive relation, given any $\tilde{\epsilon} > 0$, there exists a large enough $L$ such that
\begin{equation}
    \norm{\psi_{\ell}}_2 \leq \tilde{\epsilon}, \,\, \forall \ell \geq L.
\end{equation}
From the definition of $\psi_{\ell}$ and the fact that $\phi$ is Lipschitz-continuous over the special unitary group, $\forall \epsilon > 0$, there exists $L > 0$, such that $\forall \ell \geq L$, we have
\begin{equation}
    \left| \frac{1}{k^\ell} \sum_{i_1=1}^k \ldots \sum_{i_\ell=1}^k  \phi(U_{i_1} \ldots U_{i_\ell} g) - \int \phi d\mu_{\mathrm{Haar}} \right| \leq \epsilon, \,\, \forall g \in G.
\end{equation}
By setting $g = I$, we conclude the proof of Lemma~\ref{lem:random-unit-Haar}.
\end{proof}

\subsection{The precision parameter}

Consider the precision parameter $\eta$ to be a fixed number at the start of the learning algorithm.
Each subsection below corresponds to a set of subroutines in the learning algorithm that depends on $\eta$.
After completion of all subroutines, the algorithm restarts with $\eta \leftarrow \eta / 2$.

\subsection{Testing identity and unitarity}

We first present the subroutine for finding all actions $y_1, \ldots, y_{k'}$ that correspond to unitary transformations, which is presented in the following lemma.

\begin{lemma}[Unitary identification]
For a sufficiently small $\eta$, there is a subroutine that returns $y_1, \ldots, y_{k'}$ such that $\cE_{y_1}, \ldots, \cE_{y_{k'}}$ are all the unitary transformations in $\{\cE_y\}_{y \in \mathcal{Y}}$.
\end{lemma}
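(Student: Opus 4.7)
The plan is to build the subroutine in two layers: an identity test for composed channels, and then a unitarity test that invokes the reversibility characterization of unitary CPTP maps. For the identity test, given a composed channel $\mathcal{C} = \mathcal{E}_{y_k} \circ \ldots \circ \mathcal{E}_{y_1}$ specified by an action sequence, I would compare, for each $x \in \mathcal{X}$, $z \in \mathcal{Z}$, and every action sequence $s$ from $\mathcal{Y}$ of length at most some $L(\eta)$, the empirical outcome distributions of the two experiments $(x, s, z)$ and $(x, s, y_1, \ldots, y_k, z)$, using enough repetitions to estimate each probability to tolerance polynomial in $\eta$; declare $\mathcal{C}$ close to identity if every comparison agrees within $\eta$. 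Completeness is automatic. For soundness, the hypotheses provide a pure initial state $\rho_{x_\star}$, a universal set among $\{\mathcal{E}_y\}$, and a non-scalar POVM element in some $\mathcal{M}_{z_\star}$; Lemma~\ref{lem:random-unit-Haar} together with the density of products of the universal generators implies that for $L(\eta)$ large enough the prepared states $(\mathcal{E}_{y_{i_L}} \circ \ldots \circ \mathcal{E}_{y_{i_1}})(\rho_{x_\star})$ form an $\eta$-net of the pure states, while the effects obtained by pre-composing $\mathcal{M}_{z_\star}$ with long unitary sequences become tomographically complete, so any channel that is $\Omega(\eta)$-far from identity in diamond norm is exposed by at least one sandwich experiment.

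For the unitarity test, I would use the fact that a finite-dimensional CPTP map is a unitary conjugation if and only if it admits a CPTP left-inverse. For each $y \in \mathcal{Y}$, I enumerate every composition $\mathcal{F} = \mathcal{E}_{y_m} \circ \ldots \circ \mathcal{E}_{y_1}$ with $m \le L'(\eta)$ and run the identity test on $\mathcal{F} \circ \mathcal{E}_y$; the returned list $\{y_1, \ldots, y_{k'}\}$ consists of those $y$ for which some such $\mathcal{F}$ makes the test accept. For completeness, if $\mathcal{E}_y$ is a unitary, then $\mathcal{E}_y^{-1}$ is again a unitary conjugation, and density of products of the universal subset of $\{\mathcal{E}_y\}$ yields a composition $\mathcal{F}$ approximating $\mathcal{E}_y^{-1}$ to within $\eta$ provided $L'(\eta)$ is chosen large enough, so $\mathcal{F} \circ \mathcal{E}_y$ is within $O(\eta)$ of identity and passes. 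For soundness, the set of non-unitary CPTP maps on a fixed finite dimension is closed, and for any non-unitary $\mathcal{E}_y$ no CPTP post-composition can recover identity; by compactness there is a dimension-dependent gap separating $\mathcal{F} \circ \mathcal{E}_y$ from $\mathcal{I}$ uniformly in $\mathcal{F}$, so choosing $\eta$ below this gap forces the test to reject.

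The main obstacle is making the qualitative soundness of the identity test quantitative: I must convert the asymptotic density statement of Lemma~\ref{lem:random-unit-Haar} into an effective $\eta$-net of pure states and of POVM effects, and then argue that finitely many sandwich experiments, each estimated from finitely many repetitions, suffice to detect any deviation of $\mathcal{C}$ from $\mathcal{I}$ of size $\Omega(\eta)$ in diamond norm with high probability. A subsidiary difficulty is pinning down $L'(\eta)$ so that an approximate inverse of every unitary $\mathcal{E}_y$ is guaranteed to appear among the enumerated compositions, which again reduces to the same density argument (an inverse-free analogue of Solovay-Kitaev) applied to the universal subset of $\{\mathcal{E}_y\}$, and then to choosing the precision parameter $\eta$ small enough that the compactness gap from the second paragraph strictly exceeds the approximation error built up by the identity test.
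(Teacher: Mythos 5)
Your overall architecture matches the paper's: an identity-testing primitive, then a unitarity test based on the fact that the only CPTP maps admitting a CPTP inverse are unitary conjugations, with completeness supplied by approximating $\mathcal{E}_y^{-1}$ by products of the (unknown) universal generators. However, your identity test has a genuine soundness gap. As described, you compare the experiments $(x, s, z)$ and $(x, s, y_1, \ldots, y_k, z)$, i.e.\ the candidate channel $\mathcal{C}$ is inserted immediately before the fixed measurement $\mathcal{M}_z$, with the exploratory sequence $s$ only \emph{before} $\mathcal{C}$. This only certifies that $\mathcal{C}$ preserves the statistics of the available POVMs on reachable states, which does not imply $\mathcal{C} \approx \mathcal{I}$. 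Concretely, take $\mathcal{X} = \{x_\star\}$ with a pure $\rho_{x_\star}$, $\mathcal{Y}$ a universal set of unitaries plus one extra action whose channel is complete dephasing in the computational basis, and $\mathcal{Z} = \{z_\star\}$ the computational-basis measurement (this model satisfies all hypotheses of the lemma). Then $\Tr\bigl(\ketbra{b}{b}\,\mathcal{D}(\rho)\bigr) = \Tr\bigl(\ketbra{b}{b}\,\rho\bigr)$ for every $\rho$, so the dephasing channel passes your identity test for every $s$, and taking $\mathcal{F} \approx \mathcal{I}$ in your unitarity test the dephasing action is wrongly returned as unitary. Your own soundness sketch invokes ``effects obtained by pre-composing $\mathcal{M}_{z_\star}$ with long unitary sequences,'' but your experiment design contains no sequence between $\mathcal{C}$ and the measurement, so those effects never arise. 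The paper avoids this by sandwiching the candidate on \emph{both} sides, comparing $(\mathcal{E}_{y_{2k}} \circ \ldots \circ \mathcal{E}_{y_{k+1}}) \circ \mathcal{C} \circ (\mathcal{E}_{y_k} \circ \ldots \circ \mathcal{E}_{y_1})$ against the same expression without $\mathcal{C}$ (Lemma~\ref{lem:identity-test}); the post-composition rotates the single non-scalar POVM element into an effectively tomographically complete family, and Lemma~\ref{lem:characterA} shows the resulting quantity is strictly positive whenever $\mathcal{C} \neq \mathcal{I}$. Adding a second enumerated sequence after the candidate (and before $z$) would repair your test and bring it in line with the paper's.

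A secondary, fixable inaccuracy: your soundness argument for the unitarity test claims a ``dimension-dependent gap'' separating $\mathcal{F} \circ \mathcal{E}_y$ from $\mathcal{I}$ uniformly over non-unitary $\mathcal{E}_y$. No such uniform gap exists, since non-unitary channels can be arbitrarily close to unitary ones; compactness of the CPTP set only gives, for each \emph{fixed} non-unitary $\mathcal{E}_y$, a positive infimum $\inf_{\mathcal{F}} \norm{\mathcal{F} \circ \mathcal{E}_y - \mathcal{I}} > 0$. Since $\mathcal{Y}$ is finite, a per-model threshold for $\eta$ suffices, which is also what the paper's ``for a sufficiently small $\eta$'' means, so this part of your argument survives once rephrased.
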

\begin{proof}
% Assume that we have a subroutine for ranking how close a set of compositions of CPTP maps $\mathcal{E}_{y_1'} \circ \ldots \circ \mathcal{E}_{y_\ell'}$ is to an identity.
% The subroutine will be presented in Lemma~\ref{}.
Consider an arbitrary norm $\norm{\cdot}$ over the space of maps over quantum states.
Assume that we have a subroutine for estimating how close a composition of CPTP maps $\mathcal{E}_{y_1'} \circ \ldots \circ \mathcal{E}_{y_\ell'}$ is to an identity under the norm $\norm{\cdot}$.
The subroutine will be presented in Lemma~\ref{lem:identity-test}.
Recall that in the finite set $\mathcal{Y}$, there are some CPTP maps $\mathcal{E}_{y_1}, \ldots, \mathcal{E}_{y_k}$ that correspond to unitary transformations.
We define
\begin{equation}
    \mathcal{Y}_{\mathrm{unitary}} = \left\{ \cE_y^\dagger \cE_y = I | \forall y \in \mathcal{Y} \right\}.
\end{equation}
We now present a subroutine that returns $\mathcal{Y}_{\mathrm{unitary}}$.
The proof of this lemma relies on a basic geometric fact about unitary: the only CPTP maps with some CPTP maps as their inverse are unitaries.

For each $y \in \mathcal{Y}$, we consider composing $\mathcal{E}_y$ with $\mathcal{E}_{y_1'} \circ \ldots \circ \mathcal{E}_{y_{\ell-1}'}$ for $y_1', \ldots, y_{\ell-1}' \in \mathcal{Y}$ and $\ell \leq 1 / \eta$.
The subroutine returns all actions $y \in \mathcal{Y}$ such that
\begin{equation}
    \min_{y_1', \ldots, y_{\ell-1}'} \norm{\mathcal{E}_{y_1'} \circ \ldots \circ \mathcal{E}_{y_{\ell-1}'} \circ \mathcal{E}_y - I} \leq \eta.
\end{equation}
If $\mathcal{E}_y$ is a unitary transformation, we can find some $y_1', \ldots, y_{\ell-1}'$ such that $\mathcal{E}_{y_1'} \circ \ldots \circ \mathcal{E}_{y_{\ell-1}'} \circ \mathcal{E}_y$ is arbitrarily close to the identity under $\norm{\cdot}$ as $\eta$ goes to zero.
If $\mathcal{E}_y$ is not a unitary transformation, then for all $y_1', \ldots, y_{\ell-1}'$ and $\eta > 0$, there is a lower bound to how close $\mathcal{E}_{y_1'} \circ \ldots \circ \mathcal{E}_{y_{\ell-1}'} \circ \mathcal{E}_y$ could be to the identity under $\norm{\cdot}$.
As a result, when $\eta$ becomes small enough, the set of actions returned by the learning algorithm will be equal to $\mathcal{Y}_{\mathrm{unitary}}$.
\end{proof}

\begin{lemma}[Identity testing] \label{lem:identity-test}
For a sufficiently small $\eta$ and any $\epsilon > 0$, there exists a norm $\norm{\cdot}$ over the space of maps over quantum states and a subroutine that takes in $y_1', \ldots, y_{\ell}'$ and returns an estimate for
\begin{equation}
    \norm{\mathcal{E}_{y_1'} \circ \ldots \circ \mathcal{E}_{y_{\ell}'} - I}
\end{equation}
up to an additive error $\epsilon$ with high probability.
\end{lemma}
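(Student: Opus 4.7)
The plan is to construct a norm from observable "sandwich" experiments and estimate it by empirical averaging. Fix a truncation parameter $K = K(\eta)$ that grows as $\eta \to 0$, and let $\mathcal{T}_K$ denote the finite family of tuples $\tau = (x, v_1,\dots,v_q, u_1,\dots,u_p, z, b)$ with $p,q \leq K$ and labels drawn from $\mathcal{X}, \mathcal{Y}, \mathcal{Z}, \mathcal{B}$. For any linear map $\mathcal{F}$ on density operators, define
\begin{equation*}
p_\tau(\mathcal{F}) := \Tr\bigl(M_{zb}\,\cE_{u_1}\!\circ\!\cdots\!\circ\!\cE_{u_p}\!\circ\!\mathcal{F}\!\circ\!\cE_{v_1}\!\circ\!\cdots\!\circ\!\cE_{v_q}(\rho_x)\bigr), \qquad \|\mathcal{F}\|_K := \max_{\tau \in \mathcal{T}_K}\,|p_\tau(\mathcal{F})|.
\end{equation*}
Linearity makes $\|\cdot\|_K$ a seminorm; once non-degeneracy is verified in the next step, it becomes a genuine norm on the finite-dimensional vector space of Hermiticity-preserving linear maps on $d\times d$ matrices.

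For non-degeneracy I will invoke the universality hypothesis on $\mathcal{W}$. Since some $\rho_x$ is pure and the available maps include a universal unitary set, the preparable states $\{\cE_{v_1}\!\circ\!\cdots\!\circ\!\cE_{v_q}(\rho_x)\}$ approximate arbitrary pure states to arbitrary accuracy as $q$ grows, and their linear combinations exhaust the space of Hermitian matrices. Dually, using a nontrivial element $M_{zb}$ of some $\mathcal{M}_z$, the readable effects $\{(\cE_{u_1}\!\circ\!\cdots\!\circ\!\cE_{u_p})^\dagger(M_{zb})\}$ span the dual space. Because the ambient superoperator space has dimension at most $d^4$, asymptotic spanning is an open condition, so it is already realized at some finite depth $K_0 = K_0(\mathcal{W})$. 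Choosing $K(\eta)$ to grow sufficiently fast as $\eta \to 0$ ensures $K \geq K_0$ for every $\mathcal{W}\in \mathcal{Q}$ and every sufficiently small $\eta$. Once both spanning properties hold, $\|\mathcal{F}\|_K = 0$ forces $\Tr(M\mathcal{F}(\rho)) = 0$ on spanning families, hence $\mathcal{F} = 0$.

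For the estimation subroutine, given the input $\mathcal{F} = \cE_{y_1'}\circ\cdots\circ\cE_{y_\ell'}$, I iterate over $\tau \in \mathcal{T}_K$ and, for each, run the "with-$\mathcal{F}$" experiment $(x, v_1,\dots,v_q, y_1',\dots,y_\ell', u_1,\dots,u_p, z)$ and the "without-$\mathcal{F}$" experiment $(x, v_1,\dots,v_q, u_1,\dots,u_p, z)$ many times, recording empirical outcome-$b$ frequencies $\hat p_\tau(\mathcal{F})$ and $\hat p_\tau(I)$. The reported estimate is $\max_{\tau \in \mathcal{T}_K}|\hat p_\tau(\mathcal{F}) - \hat p_\tau(I)|$. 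By Hoeffding's inequality and a union bound, $\mathcal{O}(\log(|\mathcal{T}_K|/\delta)/\epsilon^2)$ repetitions per experiment suffice to make every empirical probability $\epsilon/2$-accurate with probability at least $1-\delta$, so the reported max is within $\epsilon$ of $\|\mathcal{F}-I\|_K$.

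The main obstacle is the finite-$K$ non-degeneracy step: universality only guarantees density of the generated group in $\mathrm{SU}(d)$, not exact coverage at any fixed depth, so one must promote asymptotic spanning to spanning at a concrete finite depth. I plan to handle this by combining finite-dimensionality of the superoperator space with a compactness argument on the unit sphere — or equivalently by invoking Lemma~\ref{lem:random-unit-Haar} to approximate Haar averages of sandwich experiments by finite-depth sums — and conclude that some $K_0(\mathcal{W})$ already suffices. Since $K_0$ depends on the unknown $\mathcal{W}$, letting $K = K(\eta) \to \infty$ as $\eta \to 0$ (precisely the regime "for sufficiently small $\eta$" in the statement) makes the construction succeed uniformly across $\mathcal{Q}$.
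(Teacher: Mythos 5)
Your proposal is correct and follows essentially the same route as the paper: the norm is the maximum, over sandwich experiments (prefix/suffix compositions of the unknown maps with length growing as $\eta \to 0$), of the deviation between the with-map and without-map outcome probabilities, estimated by Hoeffding's inequality and a union bound over the finitely many experiments. The only notable difference is the positive-definiteness sub-step, where you use a spanning/stabilization argument for the preparable states and readable effects in a finite-dimensional space, while the paper's Lemma~\ref{lem:characterA} constructs an explicit witness by sorting the eigenvalues of the difference operator and of the nontrivial POVM element and approximating the diagonalizing unitary; both arguments rest on the same universality, pure-state, and nontrivial-POVM hypotheses and both suffice.
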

\begin{proof}
The central property is that an identity map $\mathcal{E}_{y_1'} \circ \ldots \circ \mathcal{E}_{y_\ell'}$ satisfies
\begin{align}
    &\Tr( M_{zb} \left((\mathcal{E}_{y_{2k}} \circ \ldots \circ \mathcal{E}_{y_{k+1}}) \circ (\mathcal{E}_{y_1'} \circ \ldots \circ \mathcal{E}_{y_\ell'}) \circ (\mathcal{E}_{y_{k}} \circ \ldots \circ \mathcal{E}_{y_{1}})\right)(\rho_x) )\\
    &= \Tr( M_{zb} \left((\mathcal{E}_{y_{2k}} \circ \ldots \circ \mathcal{E}_{y_{k+1}}) \circ (\mathcal{E}_{y_{k}} \circ \ldots \circ \mathcal{E}_{y_{1}})\right)(\rho_x) ),
\end{align}
for all $x \in \mathcal{X}, y_1, \ldots y_{2k} \in \mathcal{Y} \cup \{ \mathrm{NULL} \}, z \in \mathcal{Z}, b \in \mathcal{B}$.
We denote the first expression as $h(x, y_1, \ldots, y_{2k}, z, b)$ and the second expression as $h_0(x, y_1, \ldots, y_{2k}, z, b)$.
Here, the action $y = \mathrm{NULL}$ corresponds to not implementing the action in the experiment defined in Def.~\ref{def:experiment}.
The learning algorithm considers all possible compositions of maps with $2k = \lceil 1 / \eta \rceil$, where $\eta$ is the precision parameter.

The learning algorithm can obtain estimates for $h(x, y_1, \ldots, y_{2k}, z, b)$ and $h_0(x, y_1, \ldots, y_{2k}, z, b)$ by running the corresponding experiments with $K$ repetitions.
By Hoeffding's inequality, with $K = \mathcal{O}(\log(1/\delta) / \epsilon^2)$, the algorithm can estimate $h(x, y_1, \ldots, y_{2k}, z, b)$ and $h_0(x, y_1, \ldots, y_{2k}, z, b)$ to $\epsilon$-error with probability at least $1 - \delta$.
We consider $K$ to be large enough such that the algorithm outputs an estimate for the quantity $A$ defined as
\begin{align}
    A &= \max_{x \in \mathcal{X}, y_1, \ldots y_{2k} \in \mathcal{Y} \cup \{ \mathrm{NULL} \}, z \in \mathcal{Z}, b \in \mathcal{B}} \left|h(x, y_1, \ldots, y_{2k}, z, b) - h_0(x, y_1, \ldots, y_{2k}, z, b) \right|\\
    &= \max_{x \in \mathcal{X}, y_1, \ldots y_{2k} \in \mathcal{Y} \cup \{ \mathrm{NULL} \}, z \in \mathcal{Z}, b \in \mathcal{B}}\\
    &\quad\quad\quad\quad\quad \left| \Tr( M_{zb} \left((\mathcal{E}_{y_{2k}} \circ \ldots \circ \mathcal{E}_{y_{k+1}}) \circ (\mathcal{E}_{y_1'} \circ \ldots \circ \mathcal{E}_{y_\ell'} - I) \circ (\mathcal{E}_{y_{k}} \circ \ldots \circ \mathcal{E}_{y_{1}})\right)(\rho_x) ) \right|. \label{def:A-norm}
\end{align}
up to $\epsilon$-error with high probability.
We can interpret $A$ as a norm $\norm{\cdot}$ over the space of maps over quantum states when $\eta$ is sufficiently small,
\begin{equation}
    A = \norm{\mathcal{E}_{y_1'} \circ \ldots \circ \mathcal{E}_{y_\ell'} - I}.
\end{equation}
Positive definiteness follows from the fact that $A$ is zero when $\mathcal{E}_{y_1'} \circ \ldots \circ \mathcal{E}_{y_\ell'}$ is an identity; and $A$ must be greater than zero for a sufficiently small $\eta$ if $\mathcal{E}_{y_1'} \circ \ldots \circ \mathcal{E}_{y_\ell'}$ is not equal to an identity from Lemma~\ref{lem:characterA}.
The two other conditions, absolute homogeneity and subadditivity, both follow from the definition of $A$ in Eq.~\eqref{def:A-norm}.
\end{proof}

\begin{lemma}[Characterization of $A$] \label{lem:characterA}
For a sufficiently small $\eta$, we have
\begin{align}
    A &= \max_{x \in \mathcal{X}, y_1, \ldots y_{2k} \in \mathcal{Y} \cup \{ \mathrm{NULL} \}, z \in \mathcal{Z}, b \in \mathcal{B}}\\
    &\quad\quad\quad\quad\quad \left| \Tr( M_{zb} \left((\mathcal{E}_{y_{2k}} \circ \ldots \circ \mathcal{E}_{y_{k+1}}) \circ (\mathcal{E}_{y_1'} \circ \ldots \circ \mathcal{E}_{y_\ell'} - I) \circ (\mathcal{E}_{y_{k}} \circ \ldots \circ \mathcal{E}_{y_{1}})\right)(\rho_x) ) \right| > 0
\end{align}
if $\mathcal{E}_{y_1'} \circ \ldots \circ \mathcal{E}_{y_\ell'}$ is not equal to an identity.
\end{lemma}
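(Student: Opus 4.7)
The plan is to exhibit explicit choices of $x, y_1, \ldots, y_{2k}, z, b$ making the matrix element in the definition of $A$ nonzero for all sufficiently small $\eta$. Writing $F := \mathcal{E}_{y_1'} \circ \ldots \circ \mathcal{E}_{y_\ell'} - I$, which is nonzero by hypothesis, this reduces to finding a pure state $\ketbra{\psi_0}{\psi_0}$, a unitary $V$ approximable by the post-composition, and a POVM element $M$ available in $\cW$ such that $\Tr(M\, V F(\ketbra{\psi_0}{\psi_0}) V^\dagger) \neq 0$. A standard continuity argument then lets us replace the ideal state and unitary by operations actually implemented by prefix/suffix sequences of length at most $k = \lceil 1/(2\eta) \rceil$ drawn from $\mathcal{Y} \cup \{\mathrm{NULL}\}$.

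First I would produce a pure state in the nontrivial image of $F$: since $F$ is linear and nonzero, and pure states span all trace-one Hermitian operators under real affine combinations, some pure state $\ketbra{\psi_0}{\psi_0}$ satisfies $T_0 := F(\ketbra{\psi_0}{\psi_0}) \neq 0$. The operator $T_0$ is Hermitian and traceless because both the composition and the identity are trace-preserving. Next, let $M$ be a POVM element from $\{\cM_z\}$ that is not proportional to the identity (guaranteed by the universality hypothesis), and decompose $M = (\Tr(M)/d) I + M'$ with $M'$ a nonzero Hermitian operator. Since $T_0$ is traceless, $\Tr(M\, V T_0 V^\dagger) = \Tr(M' V T_0 V^\dagger)$ for every unitary $V$. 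The key observation is that the adjoint action of $U(d)$ on the real vector space of traceless Hermitian matrices is irreducible; consequently the orbit $\{V T_0 V^\dagger : V \in U(d)\}$ linearly spans this whole space. Since $M'$ is a nonzero element of it, there must exist a unitary $V^\star$ with $\Tr(M' V^\star T_0 V^{\star\dagger}) \neq 0$.

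It remains to realize $\ketbra{\psi_0}{\psi_0}$ and $V^\star$ approximately from operations available in $\cW$. By the universality assumption, some subset of $\{\mathcal{E}_y\}_{y \in \mathcal{Y}}$ generates a dense subgroup of the special unitary group and some action prepares a pure state $\rho_{x^\star}$. Hence for any $\delta > 0$ there exist finite words of some lengths $L(\delta), L'(\delta)$ in these unitaries whose compositions approximate, in operator norm, the unitary taking $\rho_{x^\star}$ to $\ketbra{\psi_0}{\psi_0}$ and the unitary $V^\star$ respectively. Padding both words on either side with $\mathrm{NULL}$ actions up to total length $k$, for every $\eta$ small enough that $k = \lceil 1/(2\eta) \rceil \geq \max(L(\delta), L'(\delta))$, the resulting experimental probability differs from the ideal $\Tr(M V^\star T_0 V^{\star\dagger})$ by a Lipschitz-controlled quantity that can be made less than $\tfrac{1}{2}|\Tr(M V^\star T_0 V^{\star\dagger})|$ by taking $\delta$ small enough. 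This forces $A > 0$, proving the lemma.

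The main obstacle is the irreducibility claim used in the second paragraph; this is standard for $\mathfrak{su}(d)$ and can be justified in one line via Schur's lemma applied to the adjoint representation, but it must be stated carefully (particularly to handle the traceless condition so that $\Tr(M V T_0 V^\dagger) = \Tr(M' V T_0 V^\dagger)$). The density approximation step is purely qualitative --- we need only the existence of \emph{some} finite-length word achieving trace-distance $\delta$, which is automatic from density and requires no Solovay--Kitaev rate. The $\mathrm{NULL}$ padding mechanism absorbs the mismatch between the required word lengths and the prescribed prefix/suffix length $k$, and continuity of $\Tr(M V \rho V^\dagger)$ in $\rho, V$ completes the argument.
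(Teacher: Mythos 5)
Your proposal is correct, and its skeleton matches the paper's: produce a pure state whose image under $F=\mathcal{E}_{y_1'}\circ\ldots\circ\mathcal{E}_{y_\ell'}-\mathcal{I}$ is a nonzero traceless Hermitian operator, show some unitary rotation makes its overlap with a POVM element not proportional to the identity nonzero, and then use the universal set (plus $\mathrm{NULL}$ padding and continuity) to realize the state preparation and the rotation within the length-$k$ prefix/suffix once $\eta$ is small enough. The one genuinely different ingredient is how you certify the existence of the rotating unitary $V^\star$: you argue non-constructively that the adjoint orbit of the nonzero traceless operator $T_0$ spans the traceless Hermitian space (irreducibility of the adjoint representation, via Schur), so the nonzero traceless part $M'$ of $M_{zb}$ cannot be Hilbert--Schmidt orthogonal to the whole orbit; the paper instead gives an explicit witness by diagonalizing both $D$ and $M_{zb}$ with eigenvalues sorted in the same order and showing $\sum_i \lambda^{M_{zb}}_i\lambda^D_i=\sum_i(\lambda^{M_{zb}}_i-\lambda^{M_{zb}}_k)\lambda^D_i>0$, a rearrangement-style argument that also exhibits strict positivity for a concrete aligned configuration. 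Your route is shorter and more abstract; the paper's is elementary and constructive, which is slightly more convenient if one wants an explicit (or quantitative) lower bound on the gap that the identity test must detect. Your tracelessness reduction ($\Tr(M\,VT_0V^\dagger)=\Tr(M'\,VT_0V^\dagger)$ since $T_0$ is traceless, both maps being trace preserving) is handled correctly, and the $\mathrm{NULL}$-padding plus fixed-word-length argument correctly matches the ``sufficiently small $\eta$'' quantifier, so I see no gap.
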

\begin{proof}
This claim follows from the assumption that there exists a universal set of unitaries and a pure state in the action space.
Hence, we can generate a pure state $\left(\mathcal{E}_{y_{k}} \circ \ldots \circ \mathcal{E}_{y_{1}})\right)(\rho_x)$ such that
\begin{equation}
    (\mathcal{E}_{y_1'} \circ \ldots \circ \mathcal{E}_{y_\ell'})( (\mathcal{E}_{y_{k}} \circ \ldots \circ \mathcal{E}_{y_{1}})(\rho_x)) \neq (\mathcal{E}_{y_{k}} \circ \ldots \circ \mathcal{E}_{y_{1}})(\rho_x).
\end{equation}
From the assumption, we can also find a POVM $\mathcal{M}_z$ such that one of the POVM element $M_{zb}$ is not proportional to the identity.
There always exists a unitary $U_1$ that diagonalizes the Hermitian matrix
\begin{equation}
    D \equiv (\mathcal{E}_{y_1'} \circ \ldots \circ \mathcal{E}_{y_\ell'})( (\mathcal{E}_{y_{k}} \circ \ldots \circ \mathcal{E}_{y_{1}})(\rho_x)) - (\mathcal{E}_{y_{k}} \circ \ldots \circ \mathcal{E}_{y_{1}})(\rho_x),
\end{equation}
such that the eigenvalues $\{\lambda^D_i\}$ are sorted from a greater value to a smaller value.
Also, there exists a unitary $U_2$ that diagonalizes $M_{zb}$ with eigenvalues $\{\lambda^{M_{zb}}_i\}$ sorted from large to small.
We have
\begin{equation}
    \Tr( U_2 M_{zb} U_2^\dagger U_1  D U_1^\dagger) = \sum_{i=1}^d \lambda^{M_{zb}}_i \lambda^{D}_i.
\end{equation}
Because $D \neq 0$ and $\Tr(D) = 0$, the largest eigenvalue $\lambda^D_1 > 0$, the smallest eigenvalue $\lambda^D_d < 0$, and $\sum_{i=1}^d \lambda^D_i = 0$.
We can thus find an index $k \geq 2$ such that $\lambda^D_i > 0, \forall i < k$ and $\lambda^D_i \leq 0, \forall i \geq k$.
Since $M_{zb} \succeq 0$ is not proportional to identity, we have the largest eigenvalue $\lambda^{M_{zb}}_1 > \lambda^{M_{zb}}_d \geq 0$.
\begin{equation}
    \sum_{i=1}^d \lambda^{M_{zb}}_i \lambda^{D}_i = \sum_{i=1}^d (\lambda^{M_{zb}}_i - \lambda^{M_{zb}}_k) \lambda^{D}_i > 0.
\end{equation}
When the precision parameter $\eta$ is smaller enough, $(\mathcal{E}_{y_{2k}} \circ \ldots \circ \mathcal{E}_{y_{k+1}})$ can approximate any unitary with a properly chosen $y_{k+1}, \ldots, y_{2k}$ because there exists a universal set of unitaries by choosing the proper actions.
Hence, there exists $x \in \mathcal{X}, y_1, \ldots y_{2k} \in \mathcal{Y} \cup \{ \mathrm{NULL} \}, z \in \mathcal{Z}, b \in \mathcal{B}$ such that $h(x, y_1, \ldots, y_{2k}, z, b) > h_0(x, y_1, \ldots, y_{2k}, z, b)$.
Together, we see that $A$ must be greater than zero if $\mathcal{E}_{y_1'} \circ \ldots \circ \mathcal{E}_{y_\ell'}$ is not equal to an identity.
\end{proof}

\subsection{Estimating state overlaps}
\label{sec:state-overlap}

The learning algorithm has now identified a set $\{y_1, \ldots, y_{k'}\}$ of unitary transformation.
The algorithm randomly composes the identified unitary transformations $\cE_{y_1}, \ldots, \cE_{y_{k'}}$.
In particular, the algorithm randomly selects $1 / \eta$ unitaries with replacement and composes them to form an approximate Haar random unitary.
Using Lemma~\ref{lem:random-unit-Haar}, as $\eta$ becomes smaller, we can obtain a better approximation to the Haar random unitary.
The learning algorithm has no additional information other than the randomly composed operation is approximately Haar random.

The ability to generate Haar random unitary enables the learning algorithm to estimate state overlaps.
Given two states $\rho_1, \rho_2$ which can be obtained by composing some initial states and CPTP maps, we have a randomized measurement procedure that guarantees the following.

\begin{lemma}[State overlap estimation with a fixed POVM] \label{lem:state-overlap-fixedPOVM}
Given two states $\rho_1, \rho_2$, $\epsilon > 0$, $z \in \cZ, b \in \cB$, and a sufficiently small $\eta > 0$, there is a subroutine that estimates
\begin{equation}
    f_{M_{zb}}(\rho_1, \rho_2) = \alpha_{M_{zb}} \Tr(\rho_1 \rho_2) + \beta_{M_{zb}}
\end{equation}
up to $\epsilon$ additive error, where $\alpha_{M_{zb}}, \beta_{M_{zb}}$ depends on POVM element $M_{zb}$.
\end{lemma}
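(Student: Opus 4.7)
The plan is to exploit the unitary 2-design property of the Haar measure: if the same Haar-random $U$ is applied to both $\rho_1$ and $\rho_2$ and the POVM $\cM_z$ is measured on each, the probability that both measurements return outcome $b$ is an affine function of $\Tr(\rho_1\rho_2)$. Since the identified actions $\{\cE_{y_1},\ldots,\cE_{y_{k'}}\}$ form a universal set of unitaries, Lemma~\ref{lem:random-unit-Haar} lets the agent replace the genuine Haar average by an average over random products of length $\ell=\ell(\eta)$ and approach the same integral as $\eta\to 0$.

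Concretely, I would have the agent run $T$ independent trials. In each trial, sample indices $i_1,\ldots,i_\ell$ i.i.d.\ uniformly from $\{1,\ldots,k'\}$, and then perform two sub-experiments using \emph{the same} sampled sequence: one that prepares $\rho_1$ via its defining composition of actions, applies $\cE_{y_{i_\ell}}\circ\cdots\circ\cE_{y_{i_1}}$, and measures $\cM_z$, producing outcome $b_1$; and another doing the identical thing starting from $\rho_2$ and producing outcome $b_2$. Writing $U$ for the unknown unitary implemented by the sampled sequence, the contribution of the trial to the estimator is $\mathbf{1}[b_1=b]\,\mathbf{1}[b_2=b]$, whose conditional expectation given the sequence is
\begin{equation}
\phi(U)\;=\;\Tr(M_{zb}\,U\rho_1 U^\dagger)\,\Tr(M_{zb}\,U\rho_2 U^\dagger).
\end{equation}
The function $\phi$ is a polynomial of degree four in the entries of $U$, hence Lipschitz on the special unitary group with a constant bounded by a function of $d$ and the operator norm of $M_{zb}$. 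Applying Lemma~\ref{lem:random-unit-Haar} (after rescaling by the Lipschitz constant) shows that the average of $\phi$ over random products of length $\ell$ approaches $\int \phi(U)\,d\mu_{\mathrm{Haar}}(U)$ to arbitrary precision as $\eta\to 0$.

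The standard Haar two-fold moment identity
\begin{equation}
\int U^{\otimes 2} A (U^\dagger)^{\otimes 2}\,d\mu_{\mathrm{Haar}}(U)\;=\;\frac{d\,\Tr(A) - \Tr(A\cdot\SWAP)}{d(d^2-1)}\,I\;+\;\frac{d\,\Tr(A\cdot\SWAP) - \Tr(A)}{d(d^2-1)}\,\SWAP
\end{equation}
applied with $A=\rho_1\otimes\rho_2$ and contracted against $M_{zb}\otimes M_{zb}$ (using $\Tr((M_{zb}\otimes M_{zb})\SWAP)=\Tr(M_{zb}^2)$) produces exactly the claimed affine form with
\begin{equation}
\alpha_{M_{zb}}=\frac{d\,\Tr(M_{zb}^2)-(\Tr M_{zb})^2}{d(d^2-1)},\qquad \beta_{M_{zb}}=\frac{d\,(\Tr M_{zb})^2-\Tr(M_{zb}^2)}{d(d^2-1)}.
\end{equation}
Since each trial contributes a $\{0,1\}$-valued variable, Hoeffding's inequality shows that $T=\mathcal{O}(\log(1/\delta)/\epsilon^2)$ trials suffice to bring the sampling error below $\epsilon/2$ with probability at least $1-\delta$.

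The main obstacle I anticipate is the order-of-limits issue between the two sources of error: Lemma~\ref{lem:random-unit-Haar} provides no quantitative rate in $\ell$, so I must first fix $\eta$ small enough (hence $\ell$ large enough) that the design bias on $\phi$ is at most $\epsilon/2$, and only then fix the sample size $T$. A secondary subtlety is that $\rho_1$ and $\rho_2$ must be prepared by composing available actions, so each ``preparation'' step is itself a fixed sequence of actions from $\cX$ and $\cY$; this contributes to the experimental cost of the subroutine but not to its statistical error, and is compatible with Definition~\ref{def:experiment}.
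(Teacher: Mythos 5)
Your proposal is correct and follows essentially the same route as the paper: the paper's subroutine likewise applies one random composition of the identified unitaries to both $\rho_1$ and $\rho_2$, records the product of outcome indicators, invokes Lemma~\ref{lem:random-unit-Haar} via the Lipschitz continuity of $\phi(U)=\Tr(M_{zb}U\rho_1U^\dagger)\Tr(M_{zb}U\rho_2U^\dagger)$ together with the second-moment Haar integration formula, and finishes with Hoeffding's inequality, yielding exactly your affine coefficients $\alpha_{M_{zb}},\beta_{M_{zb}}$. Your explicit handling of the order of limits (fix $\eta$ for the design bias, then the number of trials for the statistical error) is the same resolution the paper uses implicitly.
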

\begin{proof}
Consider $R$ repetitions. For repetition $r \in \{1, \ldots, R\}$, the subroutine performs:
\begin{enumerate}
    \item Randomly compose $1/\eta$ actions in $\mathcal{Y}_{\mathrm{unitary}}$ to generate a random CPTP map $\mathcal{E}$.
    \item Measure the POVM $\mathcal{M}_z$ on $\mathcal{E}(\rho_1)$ and check if the measurement outcome is $b$.
    \item Record a binary variable $C_r \in \{0, 1\}$ indicating if the outcome is $b$.
    \item Measure the POVM $\mathcal{M}_z$ on $\mathcal{E}(\rho_2)$ and check if the measurement outcome is $b$.
    \item Record a binary variable $D_r \in \{0, 1\}$ indicating if the outcome is $b$.
\end{enumerate}
From Lemma~\ref{lem:char-hatX}, we can show that $\hat{X} = \frac{1}{R} \sum_{r=1}^R C_r D_r$ is an accurate estimate for
\begin{align}
    f_{M_{zb}}(\rho_1, \rho_2) &= \frac{1}{d^2 - 1} \left( (\Tr(M_{zb})^2 - \Tr(M_{zb}^2) / d) + \Tr(\rho_1 \rho_2) (\Tr(M_{zb}^2) - \Tr(M_{zb})^2 / d) \right),\label{eq:two-design_M} \\
    &= \alpha_{M_{zb}} \Tr(\rho_1 \rho_2) + \beta_{M_{zb}},
\end{align}
up to $\epsilon$ error when $\eta$ is sufficiently small.
Hence this lemma can be established.
\end{proof}

\begin{lemma}[Characterization of $\hat{X}$] \label{lem:char-hatX}
Given two states $\rho_1, \rho_2$, $\epsilon > 0$, $z \in \cZ, b \in \cB$, a sufficiently large $R$ and a sufficiently small $\eta > 0$, we have
\begin{equation}
\left|\hat{X} - f_{M_{zb}}(\rho_1, \rho_2) \right| < \epsilon
\end{equation}
with high probability.
\end{lemma}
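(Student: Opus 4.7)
The plan is to express $\hat{X}$ as an empirical average of i.i.d.\ bounded random variables whose common mean is approximately $f_{M_{zb}}(\rho_1,\rho_2)$, and then apply Hoeffding's inequality. Fix a repetition $r$. Since the composition of $1/\eta$ actions from $\mathcal{Y}_{\mathrm{unitary}}$ is itself a unitary channel, we may write $\mathcal{E}(\rho)=U\rho U^\dagger$ for some random unitary $U$ obtained by multiplying $1/\eta$ independent draws from $\{\mathcal{E}_{y_1},\ldots,\mathcal{E}_{y_{k'}}\}$. Because steps~2 and~4 of the subroutine act on freshly prepared copies, the outcomes $C_r$ and $D_r$ are conditionally independent given $U$, so
\begin{equation}
\E[C_r D_r \mid U] \;=\; g(U) \;:=\; \Tr(M_{zb}\, U\rho_1 U^\dagger)\,\Tr(M_{zb}\, U\rho_2 U^\dagger),
\end{equation}
and hence $\E[C_r D_r] = \E_U[g(U)]$.

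The next step is to invoke Lemma~\ref{lem:random-unit-Haar} to replace $\E_U[g(U)]$ by its Haar average. The function $g:\mathrm{SU}(d)\to\mathbb{R}$ is Lipschitz continuous with a constant depending only on $d$: each factor is of the form $U\mapsto\Tr(M_{zb}\,U\rho_i U^\dagger)$, which is bounded by $\|M_{zb}\|_\infty\le 1$ and has gradient bounded by a dimension-dependent constant, so the product rule gives $|g(U)-g(V)|\le C_d\,\|U-V\|_F$. Applying Lemma~\ref{lem:random-unit-Haar} to $\phi=g/C_d$ then guarantees that, for any $\epsilon'>0$, all sufficiently small $\eta>0$ yield
\begin{equation}
\left|\E_U[g(U)] \;-\; \int g(U)\,d\mu_{\mathrm{Haar}}(U)\right| \;<\; \epsilon'.
\end{equation}
The remaining Haar integral is evaluated using the standard identity
\begin{equation}
\int U^{\otimes 2}(\rho_1\otimes\rho_2)(U^\dagger)^{\otimes 2}\,d\mu_{\mathrm{Haar}}(U) \;=\; \frac{1-\Tr(\rho_1\rho_2)/d}{d^2-1}\,I \;+\; \frac{\Tr(\rho_1\rho_2)-1/d}{d^2-1}\,\SWAP,
\end{equation}
combined with $\Tr(M_{zb}\otimes M_{zb})=\Tr(M_{zb})^2$ and $\Tr\!\left((M_{zb}\otimes M_{zb})\,\SWAP\right)=\Tr(M_{zb}^2)$, which after simplification reproduces the formula for $f_{M_{zb}}(\rho_1,\rho_2)$ given in Eq.~\eqref{eq:two-design_M}.

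Finally, to pass from $\E[\hat{X}]$ to $\hat{X}$: the trials across $r=1,\ldots,R$ are i.i.d.\ and $C_r D_r \in\{0,1\}$, so Hoeffding's inequality gives $|\hat{X} - \E[\hat{X}]|<\epsilon/2$ with probability at least $1-\delta$ once $R=\Omega(\log(1/\delta)/\epsilon^2)$. Choosing $\epsilon'=\epsilon/2$ in the previous step and combining by the triangle inequality produces the claim. The one genuinely nontrivial part is the approximate-Haar step: we cannot assume that the randomly composed ensemble forms an exact unitary $2$-design, so we cannot read off $\E_U[g(U)]$ from the closed-form Weingarten expression directly. The resolution is to exploit Lemma~\ref{lem:random-unit-Haar}, which in turn only requires the mild verification that $g$ is Lipschitz on $\mathrm{SU}(d)$ with a constant depending on $d$ but not on $\rho_1,\rho_2,M_{zb}$ beyond universal bounds.
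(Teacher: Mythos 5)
Your proposal is correct and follows essentially the same route as the paper's proof: express $\E[\hat{X}]$ as the average of $\Tr(M_{zb}\,U\rho_1 U^\dagger)\Tr(M_{zb}\,U\rho_2 U^\dagger)$ over the randomly composed unitary, use the Lipschitz continuity of this function together with Lemma~\ref{lem:random-unit-Haar} to replace that average by the Haar integral, evaluate the Haar integral with the second-moment (two-design) formula to recover $f_{M_{zb}}(\rho_1,\rho_2)$, and finish with Hoeffding's inequality over the $R$ bounded i.i.d.\ trials. Your explicit treatment of the conditional independence of $C_r$ and $D_r$ given the sampled unitary is a small clarification of a step the paper leaves implicit, but it is not a different argument.
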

\begin{proof}
The expectation value of $\hat{X} = \frac{1}{R} \sum_{r=1}^R C_r D_r$ is equal to
\begin{align}
    \E \left[\frac{1}{R} \sum_{r=1}^R C_r D_r\right] &= \E_{\mathcal{E}} \Tr(M_{zb} \mathcal{E}(\rho_1)) \Tr(M_{zb} \mathcal{E}(\rho_2))\\
    &\approx \int_U d\mu_{\mathrm{Haar}}(U) \Tr(M_{zb} U \rho_1 U^\dagger) \Tr(M_{zb} U \rho_2 U^\dagger) \label{eq:approxHaar}\\
    &= \int_U d\mu_{\mathrm{Haar}}(U) \Tr( (M_{zb} \otimes M_{zb}) (U \otimes U) (\rho_1 \otimes \rho_2) (U^\dagger \otimes U^\dagger) ) \\
    &= \frac{1}{d^2 - 1} \left( (\Tr(M_{zb})^2 - \Tr(M_{zb}^2) / d) + \Tr(\rho_1 \rho_2) (\Tr(M_{zb}^2) - \Tr(M_{zb})^2 / d) \right). \label{eq:Haarintegration}
\end{align}
Eq.~\eqref{eq:approxHaar} is the consequence of the fact that a random composition of universal set of unitaries approximately forms a Haar random unitary.
In particular, using the Lipschitz continuity of the function $\phi(U) \equiv \Tr(M_{zb} U \rho_1 U^\dagger) \Tr(M_{zb} U \rho_2 U^\dagger)$, Lemma~\ref{lem:random-unit-Haar} shows that the approximation error can be made arbitrarily small as the number of composed unitary becomes sufficiently large (i.e., $\eta$ sufficiently small).
Because $C_r D_r$ is a random variable bounded by one, using Hoeffding's inequality, we can choose $R = \mathcal{O}(\log(1/\delta))/\epsilon^2$, such that $\frac{1}{R} \sum_{r=1}^R C_r D_r$ equals to $\E_{\mathcal{E}} \Tr(M_{zb} \mathcal{E}(\rho_1)) \Tr(M_{zb} \mathcal{E}(\rho_2))$ up to error $\epsilon / 2$ with probability at least $1-\delta$.
Eq.~\eqref{eq:Haarintegration}, on the other hand, uses the second moment Haar integration formula over special unitary group. In particular, for $\mathrm{SU}(d)$ and $X \in \mathbb{C}^{(d\times d) \times (d \times d)}$, we have
\begin{equation}
    \int_U d\mu_{\mathrm{Haar}}(U) (U \otimes U) X (U^\dagger \otimes U^\dagger) = \frac{1}{d^2-1} \left( I \Tr(X) + S \Tr( S X) - \frac{1}{d} S \Tr(X) - \frac{1}{d} I \Tr(S X) \right),
\end{equation}
where $S$ is the swap operator over the tensor product space.
Hence, when the precision parameter $\eta$ is small enough and the number $R$ of randomized experiments is large enough, $\hat{X} = \frac{1}{R} \sum_{r=1}^R C_r D_r$ is an accurate estimate for
\begin{equation}
    f_{M_{zb}}(\rho_1, \rho_2) = \frac{1}{d^2 - 1} \left( (\Tr(M_{zb})^2 - \Tr(M_{zb}^2) / d) + \Tr(\rho_1 \rho_2) (\Tr(M_{zb}^2) - \Tr(M_{zb})^2 / d) \right),
\end{equation}
with an additive error at most $\epsilon$. This establishes the claim.
\end{proof}

We are now ready to combine the two lemmas above to establish the main result of this subsection.

\begin{lemma}[State overlap estimation]
Given two states $\rho_1, \rho_2$, $\epsilon > 0$, a sufficiently small $\eta > 0$, and the existence of a non-identity $M_{zb}$ for some $z \in \cZ, b \in \cB$.
There is a subroutine that estimates $\Tr(\rho_1 \rho_2)$ up to $\epsilon$ additive error.
\end{lemma}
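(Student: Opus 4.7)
The plan is to leverage Lemma~\ref{lem:state-overlap-fixedPOVM}, which provides an estimator of the affine function
\[
f_{M_{zb}}(\rho_1,\rho_2) \;=\; \alpha_{M_{zb}}\,\Tr(\rho_1\rho_2) \;+\; \beta_{M_{zb}},
\]
and to learn the two scalars $\alpha_{M_{zb}}$ and $\beta_{M_{zb}}$ from auxiliary experiments; then $\Tr(\rho_1\rho_2)$ is recovered by inverting this affine relation. First I note that the expression from Eq.~\eqref{eq:two-design_M}, namely $\alpha_{M_{zb}}=(\Tr(M_{zb}^2)-\Tr(M_{zb})^2/d)/(d^2-1)$, is strictly positive: the Cauchy--Schwarz inequality $\Tr(M_{zb})^2\leq d\,\Tr(M_{zb}^2)$ is saturated only when $M_{zb}\propto I$, which is excluded by hypothesis. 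Thus inversion is legitimate, and the task reduces to determining $\alpha_{M_{zb}}$ and $\beta_{M_{zb}}$ to sufficient precision.

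The idea is to pin down both scalars via two ``reference'' values of $f_{M_{zb}}$: one coming from the maximally mixed state and one from a pure state. For the mixed reference, I pick any preparable state $\rho$, then draw a random composition $U$ of $1/\eta$ unitaries from the already-identified unitary actions $\cE_{y_1},\ldots,\cE_{y_{k'}}$. By Lemma~\ref{lem:random-unit-Haar}, as $\eta\to 0$ the distribution of $U$ converges to Haar on $\mathrm{SU}(d)$, so $\E_{U}\Tr(\rho\cdot U\rho U^{\dagger}) \to 1/d$. Invoking the subroutine of Lemma~\ref{lem:state-overlap-fixedPOVM} on the pair $(\rho,U\rho U^{\dagger})$ and averaging over many independent draws of $U$ therefore converges to $c_0:=\alpha_{M_{zb}}/d+\beta_{M_{zb}}$. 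For the pure reference, I iterate over the finite set $\mathcal{X}$ and estimate $f_{M_{zb}}(\rho_x,\rho_x)-c_0=\alpha_{M_{zb}}(\Tr(\rho_x^2)-1/d)$; since $\alpha_{M_{zb}}>0$, this quantity is maximized over $\mathcal{X}$ precisely when $\rho_x$ is pure, yielding the value $\alpha_{M_{zb}}(1-1/d)$. By hypothesis some $\rho_x$ is pure, so taking the maximum over $x$ identifies $\alpha_{M_{zb}}$, after which $\beta_{M_{zb}}=c_0-\alpha_{M_{zb}}/d$. Finally, applying the subroutine to the target pair $(\rho_1,\rho_2)$ and returning $(f_{M_{zb}}(\rho_1,\rho_2)-\beta_{M_{zb}})/\alpha_{M_{zb}}$ gives the desired estimate of $\Tr(\rho_1\rho_2)$.

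The main obstacle is error propagation: the last step divides by $\alpha_{M_{zb}}$, which, while positive, depends on the unknown $M_{zb}$ and could in principle be small. One must thus choose the internal precisions (the repetition count $R$ in Lemma~\ref{lem:state-overlap-fixedPOVM} and the randomness parameter $\eta$ of Lemma~\ref{lem:random-unit-Haar}) so that the estimates $\widetilde{\alpha}_{M_{zb}},\widetilde{\beta}_{M_{zb}}$ satisfy $|\widetilde{\alpha}_{M_{zb}}-\alpha_{M_{zb}}|,\;|\widetilde{\beta}_{M_{zb}}-\beta_{M_{zb}}|=\mathcal{O}(\epsilon\,\alpha_{M_{zb}}^{2})$ with high probability. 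Because $\alpha_{M_{zb}}$ is a fixed constant of the world model (independent of the target accuracy $\epsilon$), this only requires a polynomial blow-up in the number of experiments as a function of $1/\epsilon$, and a Chernoff/union-bound argument across the finitely many $x\in\mathcal{X}$ searched in the pure-state step ensures the overall success probability.
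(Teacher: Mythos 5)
Your calibration-and-inversion strategy is sound and close in spirit to the paper's: both use the estimator of Lemma~\ref{lem:state-overlap-fixedPOVM} and then pin down the two unknowns $\alpha_{M_{zb}},\beta_{M_{zb}}$ from reference experiments before inverting the affine relation. Your way of fixing the two coefficients differs in detail: you use a ``maximally mixed'' reference, $\E_U f_{M_{zb}}(\rho,U\rho U^\dagger)\to \alpha_{M_{zb}}/d+\beta_{M_{zb}}$ via Lemma~\ref{lem:random-unit-Haar}, together with the pure-state maximum $\max_x f_{M_{zb}}(\rho_x,\rho_x)=\alpha_{M_{zb}}+\beta_{M_{zb}}$, whereas the paper estimates $\Tr(M_{zb})/d$ directly from the first-moment statistics $\hat Y=\frac{1}{R}\sum_r C_r$ of the same randomized data and combines it with the pure-state value $\frac{1}{d(d+1)}(\Tr(M_{zb}^2)+\Tr(M_{zb})^2)$. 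Both routes are valid (your outer average over $U$ works because $U\mapsto\Tr(\rho U\rho U^\dagger)$ is Lipschitz, so Lemma~\ref{lem:random-unit-Haar} applies, and $f_{M_{zb}}$ is affine in the overlap), and your observation that only the \emph{value} of the maximum over the finite set $\cX$ is needed, not the identity of the pure-state action, is fine. The error-propagation remark about dividing by $\alpha_{M_{zb}}$, a fixed positive constant of the world model, also matches the paper's (equally non-quantitative) treatment.

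There is, however, one genuine gap: you treat the pair $(z,b)$ with $M_{zb}$ not proportional to the identity as \emph{known}, writing that proportionality to $I$ ``is excluded by hypothesis.'' The hypothesis only asserts that such an element \emph{exists} somewhere in $\{\cM_z\}_{z\in\cZ}$; in the setting of Theorem~\ref{thm:int-description} the agent has no prior knowledge of any POVM, so it does not know for which $(z,b)$ the affine map is actually invertible, and running your procedure on an identity-proportional element would yield $\alpha_{M_{zb}}=0$ and no information about $\Tr(\rho_1\rho_2)$. The paper spends the first part of its proof on exactly this identification step: it compares estimates of $f_{M_{zb}}$ on the pairs $\rho_1=\rho_2=\rho_x$ versus $\rho_1=\rho_x,\ \rho_2=\cE_y(\rho_x)$ over all $x\in\cX,y\in\cY$, and selects the $(z,b)$ exhibiting the largest variation, which for sufficiently small $\eta$ is guaranteed to be a non-trivial element. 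Your own machinery could be adapted to close this hole (e.g., estimate $\alpha_{M_{zb}}(1-1/d)$ for every one of the finitely many pairs $(z,b)$ and keep the pair with the largest estimate, which by the existence hypothesis is bounded away from zero for at least one pair), but as written this selection step is missing and the proof is incomplete without it.
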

\begin{proof}
The learning algorithm utilizes the procedure in Lemma~\ref{lem:state-overlap-fixedPOVM} to build the subroutine achieving the claim of this lemma.
Using the fact that for a $d$-dimensional vector $x$, $d \norm{x}_2^2 \geq \norm{x}_1^2$, we have
\begin{equation}\label{eq:Mzb-bound}
\Tr(M_{zb}^2) - \Tr(M_{zb})^2 / d \geq 0.
\end{equation}
Furthermore, equality holds in Eq.~\eqref{eq:Mzb-bound} if and only if all eigenvalues of $M_{zb}$ are equal, which implies that $M_{zb}$ is proportional to identity.
If $M_{zb}$ is proportional to identity, $f_{M_{zb}}(\rho_1, \rho_2)$ will be a constant function independent of $\rho_1, \rho_2$.
In contrast, if $M_{zb}$ is not proportional to identity, then for some $x \in \mathcal{X}, y \in \mathcal{Y}$, $f_{M_{zb}}(\rho_1, \rho_2)$ will be distinct between the following two pairs of states,
\begin{equation} \label{eq:rho1rho2}
\rho_1 = \rho_x, \rho_2 = \mathcal{E}_y(\rho_x) \,\, \mbox{ and } \,\, \rho_1 = \rho_x, \rho_2 = \rho_x.
\end{equation}
In particular, this is true if we choose the $x$ such that $\rho_x$ is pure and $y$ such that $\mathcal{E}_y$ is one of the universal set of unitaries such that $\mathcal{E}_y(\rho_x) \neq \rho_x$.

From the assumption on the true world model (exists actions corresponding to preparation of a pure state, a universal set of unitaries, and a POVM element not proportional to identity), there always exists $z, b, x, y$ such that $f_{M_{zb}}(\rho_1, \rho_2)$ is distinct under the two pairs of states in Eq.~\eqref{eq:rho1rho2}.
Hence, as $\eta$ goes to zero, if $M_{zb}$ is proportional to the identity, then the largest difference for the estimate of $f_{M_{zb}}(\rho_1, \rho_2)$ maximized over $x \in \mathcal{X}, y \in \mathcal{Y}$ will approach zero.
In contrast, if $M_{zb}$ is not proportional to the identity, the largest difference for the estimate of $f_{M_{zb}}(\rho_1, \rho_2)$ maximized over $x \in \mathcal{X}, y \in \mathcal{Y}$ will be greater than a positive value.
Hence, we can consider an algorithm that finds the pair of $z \in \mathcal{Z}, b \in \mathcal{B}$ that yields the largest difference for the estimate of $f_{M_{zb}}(\rho_1, \rho_2)$ maximized over $x \in \mathcal{X}, y \in \mathcal{Y}$.
The deduction above guarantees that the algorithm would find a POVM element $M_{zb}$ that is not proportional to the identity under a sufficiently small $\eta$.

After finding a pair of $z, b$ such that $M_{zb}$ is not proportional to identity, we can now describe the procedure that estimates $\Tr(\rho_1 \rho_2)$ for any state $\rho_1, \rho_2$.
Recall from Lemma~\ref{lem:state-overlap-fixedPOVM} and Eq.~\ref{eq:Mzb-bound} that
\begin{equation} \label{eq:f-M-zb}
f_{M_{zb}}(\rho_1, \rho_2) = \alpha_{M_{zb}} \Tr(\rho_1\rho_2) + \beta_{M_{zb}},
\end{equation}
where $\alpha_{M_{zb}} > 0$.
Because $\alpha_{M_{zb}} > 0$, when $\rho_1 = \rho_2$, we can see that $f_{M_{zb}}(\rho_1, \rho_1)$ is maximized when $\rho_1$ is a pure state.
The maximum value of $f_{M_{zb}}(\rho_1, \rho_2)$ is $\alpha_{M_{zb}} + \beta_{M_{zb}}$, and the minimum value is $\beta_{M_{zb}}$.
If $\rho_1$ is not a pure state, we can see that $f_{M_{zb}}(\rho_1, \rho_2) < \alpha_{M_{zb}} + \beta_{M_{zb}}$.
The subroutine would hence go through all $x \in \mathcal{X}$ and find an $x^*$ such that the estimate for $f_{M_{zb}}(\rho_x, \rho_x)$ is maximized.
Recall from the assumption of the true world model, there exists an action $x$ that prepares a pure state.
The gap between the finite set of actions that prepare pure states and those that prepare mixed states allows us to guarantee that the action $x^*$ we found prepares a pure state $\rho_{x^*}$ when $\eta$ is sufficiently small.
Because $\rho_{x^*}$ is a pure state, we have $\Tr(\rho_{x^*}^2) = 1$ and hence from Eq.~\eqref{eq:two-design_M},
\begin{equation} \label{eq:f_Mzb-same}
    f_{M_{zb}}(\rho_{x^*}, \rho_{x^*}) = \frac{1}{(d+1)d}(\Tr(M_{zb}^2) + \Tr(M_{zb})^2).
\end{equation}
The learning algorithm can obtain an estimate for  $f_{M_{zb}}(\rho_{x^*}, \rho_{x^*})$.
After that, it is only necessary to estimate  $\Tr(M_{zb})$ in order to determine $\Tr(M_{zb}^2)$, which suffices to specify the two values $\alpha_{M_{zb}}$ and $\beta_{M_{zb}}$.

An estimate for $\Tr(M_{zb})$ can be obtained by reusing the randomized measurement data from the procedure described in Lemma~\ref{lem:state-overlap-fixedPOVM}.
We can show that $\hat{Y} = \frac{1}{R} \sum_{r=1}^R C_r$ is an accurate estimate for $\Tr(M_{zb}) / d$.
Using the following first-moment Haar integration formula over the special unitary group,
\begin{equation}
\int_U d\mu_{\mathrm{Haar}} \, U X U^\dagger = \Tr(X) I/d,
\end{equation}
and the standard concentration inequality, we have that $\frac{1}{R} \sum_{r=1}^R C_r$ gives an estimate for
$\frac{\Tr(M_{zb})}{d}$
up to a small error $\epsilon$ for sufficiently large $R$ and sufficiently small $\eta$.
Along with an estimate for Eq.~\eqref{eq:f_Mzb-same}, the learning algorithm can determine both $\Tr(M_{zb}^2)$ and $\Tr(M_{zb})^2$, and hence $\alpha_{M_{zb}}$ and $\beta_{M_{zb}}$.
Together, the learning algorithm can produce an accurate estimate for quantum state overlap $\Tr(\rho_1 \rho_2)$ from an estimate for $f_{M_{zb}}(\rho_1, \rho_2)$ given in Lemma~\ref{lem:state-overlap-fixedPOVM} and the estimates for $\alpha_{M_{zb}}$ and $\beta_{M_{zb}}$.
This concludes the proof of this lemma.
\end{proof}

\subsection{Learning descriptions of a special set of states}
\label{sec:learning-special-set}

At this point, the algorithm still has not learned a description for any of the actions. However, the algorithm has identified several important features of the actions.
The algorithm has found $x^* \in \mathcal{X}$ where $\rho_{x^*}$ is a pure state.
The algorithm has also discovered $\mathcal{Y}_{\mathrm{unitary}} = \{y_1, \ldots, y_k\} \in \mathcal{Y}$ that forms a universal set of unitaries, which we will now denote as $U_{y_1}, \ldots, U_{y_k} \in \mathrm{SU}(d)$.
Furthermore, the algorithm has now obtained a subroutine that provides accurate estimate for state overlap $\Tr(\rho_1 \rho_2)$.
The learning algorithm can now utilize these tools to construct the entire structure of quantum state space.

More precisely, the algorithm will find a special set of pure quantum states $\{ \rho_i \}$ that satisfies a certain geometry.
The algorithm generates the special set of pure states by applying compositions of the unitaries $U_y, \forall y \in \mathcal{Y}_{\mathrm{unitary}}$ to the pure state $\rho_{x^*}$.
We will limit the algorithm to consider a composition of at most $1 / \eta$ unitaries.
This means that the algorithm will only find a collection of states that satisfies the geometry \emph{approximately}.
However, an approximate geometry with small error implies that the learned descriptions will only be subject to a small error.
As $\eta$ goes to zero, the geometry and the learned description will become accurate to an arbitrarily small error.
The geometry enables us to provide the intrinsic physical descriptions for states in the special set.
Using properties of the geometry, we can guarantee that the description for the special set of pure states is accurate up to the equivalence relation -- a global unitary or anti-unitary transformation --  characterized by Theorem~\ref{prop:equiv}.
The construction of the special set of pure states is related to the proof of Wigner's theorem.

We denote the special collection of pure states as $\rho^{\mathrm{(basis)}}_i, \forall i \in \{1, \ldots, d\}$, $\rho^{\mathrm{(real)}}_{ij}, \forall i \neq j \in \{1, \ldots, d\}$, $\rho^{\mathrm{(imag)}}_{ij}, \forall i \neq j \in \{1, \ldots, d\}$, $\rho^{\mathrm{(triplet)}}_{ij}, \forall i \neq j \in \{2, \ldots, d\}$, $\rho^{\mathrm{(triplet, 12)}}_{j}, \forall j \in \{3, \ldots, d\}$ and $\rho^{\mathrm{(triplet, i)}}_{ij}, \forall i \neq j \in \{2, \ldots, d\}$.
The geometry of the states is given by the following equations.
\begin{align}
    \Tr(\rho^{\mathrm{(basis)}}_i \rho^{\mathrm{(basis)}}_j) &= \delta_{ij}, &\forall i, j \in \{1, \ldots, d\}, & \quad \mbox{\emph{(Fix the basis)}} \label{eq:first-geom} \\
    \Tr(\rho^{\mathrm{(basis)}}_i \rho^{\mathrm{(real)}}_{ij}) &= \frac{1}{2}, &\forall i \neq j \in \{1, \ldots, d\}, & \quad \mbox{\emph{(Fix absolute amplitude for $\rho^{\mathrm{(real)}}_{ij}$)}}\\
    \Tr(\rho^{\mathrm{(basis)}}_j \rho^{\mathrm{(real)}}_{ij}) &= \frac{1}{2}, &\forall i \neq j \in \{1, \ldots, d\}, & \quad \mbox{\emph{(Fix absolute amplitude for $\rho^{\mathrm{(real)}}_{ij}$)}}\\
    \Tr(\rho^{\mathrm{(basis)}}_1 \rho^{\mathrm{(triplet)}}_{ij}) &= \frac{1}{3}, &\forall i \neq j \in \{2, \ldots, d\}, & \quad \mbox{\emph{(Fix absolute amplitude for $\rho^{\mathrm{(triplet)}}_{ij}$)}}\\
    \Tr(\rho^{\mathrm{(basis)}}_i \rho^{\mathrm{(triplet)}}_{ij}) &= \frac{1}{3}, &\forall i \neq j \in \{2, \ldots, d\}, & \quad \mbox{\emph{(Fix absolute amplitude for $\rho^{\mathrm{(triplet)}}_{ij}$)}}\\
    \Tr(\rho^{\mathrm{(basis)}}_j \rho^{\mathrm{(triplet)}}_{ij}) &= \frac{1}{3}, &\forall i \neq j \in \{2, \ldots, d\}, & \quad \mbox{\emph{(Fix absolute amplitude for $\rho^{\mathrm{(triplet)}}_{ij}$)}}\\
    \Tr(\rho^{\mathrm{(real)}}_{1i} \rho^{\mathrm{(triplet)}}_{ij}) &= \frac{2}{3}, &\forall i \neq j \in \{2, \ldots, d\}, & \quad \mbox{\emph{(Transfer relative phase $+1$, a)}} \label{eq:tranfers+1A}\\
    \Tr(\rho^{\mathrm{real}}_{1j} \rho^{\mathrm{(triplet)}}_{ij}) &= \frac{2}{3}, &\forall i \neq j \in \{2, \ldots, d\}, & \quad \mbox{\emph{(Transfer relative phase $+1$, b)}} \label{eq:tranfers+1B}\\
    \Tr(\rho^{\mathrm{(real)}}_{ij} \rho^{\mathrm{(triplet)}}_{ij}) &= \frac{2}{3}, &\forall i \neq j \in \{2, \ldots, d\}, & \quad \mbox{\emph{(Transfer relative phase $+1$, c)}} \label{eq:tranfers+1C}\\
    \Tr(\rho^{\mathrm{(basis)}}_i \rho^{\mathrm{(imag)}}_{ij}) &= \frac{1}{2}, &\forall i \neq j \in \{1, \ldots, d\}, & \quad \mbox{\emph{(Fix absolute amplitude for $\rho^{\mathrm{(imag)}}_{ij}$)}}\\
    \Tr(\rho^{\mathrm{basis}}_j \rho^{\mathrm{(imag)}}_{ij}) &= \frac{1}{2}, &\forall i \neq j \in \{1, \ldots, d\}, & \quad \mbox{\emph{(Fix absolute amplitude for $\rho^{\mathrm{(imag)}}_{ij}$)}}\\
    \Tr(\rho^{\mathrm{real}}_{ij} \rho^{\mathrm{(imag)}}_{ij}) &= \frac{1}{2}, &\forall i \neq j \in \{1, \ldots, d\}, & \quad \mbox{\emph{(Partially fix the phase for $\rho^{\mathrm{(imag)}}_{ij}$)}}\\
    \Tr(\rho^{\mathrm{(basis)}}_1 \rho^{\mathrm{(triplet, 12)}}_{j}) &= \frac{1}{3}, &\forall j \in \{3, \ldots, d\}, & \quad \mbox{\emph{(Fix absolute amplitude for $\rho^{\mathrm{(triplet, 12)}}_{j}$)}}\\
    \Tr(\rho^{\mathrm{(basis)}}_2 \rho^{\mathrm{(triplet, 12)}}_{j}) &= \frac{1}{3}, &\forall j \in \{3, \ldots, d\}, & \quad \mbox{\emph{(Fix absolute amplitude for $\rho^{\mathrm{(triplet, 12)}}_{j}$)}}\\
    \Tr(\rho^{\mathrm{(basis)}}_j \rho^{\mathrm{(triplet, 12)}}_{j}) &= \frac{1}{3}, &\forall j \in \{3, \ldots, d\}, & \quad \mbox{\emph{(Fix absolute amplitude for $\rho^{\mathrm{(triplet, 12)}}_{j}$)}}\\
    \Tr(\rho^{\mathrm{(real)}}_{2j} \rho^{\mathrm{(triplet, 12)}}_{j}) &= \frac{2}{3}, &\forall j \in \{3, \ldots, d\}, & \quad \mbox{\emph{(Transfer relative phase $+1$, a')}}\\
    \Tr(\rho^{\mathrm{(imag)}}_{12} \rho^{\mathrm{(triplet, 12)}}_{j}) &= \frac{2}{3}, &\forall j \in \{3, \ldots, d\}, & \quad \mbox{\emph{(Transfer relative phase $+\rmi$, b')}}\\
    \Tr(\rho^{\mathrm{(imag)}}_{1j} \rho^{\mathrm{(triplet, 12)}}_{j}) &= \frac{2}{3}, &\forall j \in \{3, \ldots, d\}, & \quad \mbox{\emph{(Transfer relative phase $+\rmi$, c')}}\\
    \Tr(\rho^{\mathrm{(basis)}}_1 \rho^{\mathrm{(triplet, i)}}_{ij}) &= \frac{1}{3}, &\forall i \neq j \in \{2, \ldots, d\}, & \quad \mbox{\emph{(Fix absolute amplitude for $\rho^{\mathrm{(triplet, i)}}_{ij}$)}}\\
    \Tr(\rho^{\mathrm{(basis)}}_i \rho^{\mathrm{(triplet, i)}}_{ij}) &= \frac{1}{3}, &\forall i \neq j \in \{2, \ldots, d\}, & \quad \mbox{\emph{(Fix absolute amplitude for $\rho^{\mathrm{(triplet, i)}}_{ij}$)}}\\
    \Tr(\rho^{\mathrm{(basis)}}_j \rho^{\mathrm{(triplet, i)}}_{ij}) &= \frac{1}{3}, &\forall i \neq j \in \{2, \ldots, d\}, & \quad \mbox{\emph{(Fix absolute amplitude for $\rho^{\mathrm{(triplet, i)}}_{ij}$)}}\\
    \Tr(\rho^{\mathrm{(real)}}_{1i} \rho^{\mathrm{(triplet, i)}}_{ij}) &= \frac{2}{3}, &\forall i \neq j \in \{2, \ldots, d\}, & \quad \mbox{\emph{(Transfer relative phase $+1$, a'')}} \label{eq:tranfersa''}\\
    \Tr(\rho^{\mathrm{(imag)}}_{1j} \rho^{\mathrm{(triplet, i)}}_{ij}) &= \frac{2}{3}, &\forall i \neq j \in \{2, \ldots, d\}, & \quad \mbox{\emph{(Transfer relative phase $+\rmi$, b'')}} \label{eq:tranfersb''}\\
    \Tr(\rho^{\mathrm{(imag)}}_{ij} \rho^{\mathrm{(triplet, i)}}_{ij}) &= \frac{2}{3}, &\forall i \neq j \in \{2, \ldots, d\}, & \quad \mbox{\emph{(Transfer relative phase $+\rmi$, c'')}} \label{eq:tranfersc''}.
\end{align}
We comment on each geometric constraint; this makes it easier to refer to each constraint in the following analysis.
This geometry determines the description of the set of states.

\begin{lemma}[Geometry and states] \label{lem:geom-state}
The geometry of the states given in Eq.~\eqref{eq:first-geom} to Eq.~\eqref{eq:tranfersc''} is satisfied if and only if
\begin{align}
    \rho^{\mathrm{(basis)}}_i &= U \ketbra{i}{i} U^{-1}, &\forall i \in \{1, \ldots, d\}, \label{eq:first-state} \\
     \rho_{ij}^{\mathrm{(real)}} &= \frac{1}{2} U \left(\ket{i} +  \ket{j}\right)\left(\bra{i} + \bra{j}\right) U^{-1}, &\forall i \neq j \in \{1, \ldots, d\},\\
    \rho_{ij}^{\mathrm{(imag)}} &= \frac{1}{2} U \left(\ket{i} +  \rmi \ket{j}\right)\left(\bra{i} - \rmi \bra{j}\right) U^{-1}, &\forall i \neq j \in \{2, \ldots, d\}, \\
    \rho_{ij}^{\mathrm{(triplet)}} &= \frac{1}{3} U \left(\ket{1} +  \ket{i} + \ket{j}\right)\left(\bra{1} +  \bra{i} + \bra{j}\right) U^{-1} , &\forall i \neq j \in \{2, \ldots, d\}, \\
    \rho_{j}^{\mathrm{(triplet, 12)}} &= \frac{1}{3} U \left(\ket{1} + \rmi \ket{2} + \rmi \ket{j}\right)\left(\bra{1} - \rmi \bra{2} - \rmi \bra{j}\right) U^{-1} , &\forall j \in \{3, \ldots, d\},\\
    \rho_{ij}^{\mathrm{(triplet, i)}} &= \frac{1}{3} U \left(\ket{1} + \ket{i} + \rmi \ket{j}\right)\left(\bra{1} + \bra{i} - \rmi \bra{j}\right) U^{-1}, & \forall i \neq j \in \{2, \ldots, d\} \label{eq:last-state}
\end{align}
for a unitary or anti-unitary transformation $U$.
\end{lemma}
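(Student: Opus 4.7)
The plan is to establish the two directions separately. The \emph{if} direction is a direct computation: substituting the stated closed-form kets into each of Eqs.~\eqref{eq:first-geom}--\eqref{eq:tranfersc''} reduces each overlap $\Tr(\rho_1 \rho_2)$ to $|\langle \phi_1 | \phi_2\rangle|^2$, and these inner products evaluate to the prescribed $\delta_{ij}$, $1/2$, $1/3$, or $2/3$ by inspection, regardless of whether $U$ is unitary or anti-unitary (since anti-unitary conjugation preserves moduli of inner products). For the \emph{only if} direction I proceed in four stages, using throughout that every state in the special set is pure, which holds by construction (the states are obtained by applying composed unitaries to $\rho_{x^*}$).

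\textbf{Stage 1 (fix the basis).} The relation $\Tr(\rho^{\mathrm{(basis)}}_i \rho^{\mathrm{(basis)}}_i) = 1$ combined with $\Tr(\rho^2)\le 1$ forces each $\rho^{\mathrm{(basis)}}_i$ to be pure, and orthogonality gives $d$ mutually orthogonal rank-one projectors, hence a unitary $U_0$ with $\rho^{\mathrm{(basis)}}_i = U_0\ketbra{i}{i}U_0^{-1}$. Absorbing $U_0$ into the eventual $U$, I assume $\rho^{\mathrm{(basis)}}_i = \ketbra{i}{i}$.

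\textbf{Stage 2 (reduce to phases).} The overlaps $\Tr(\rho^{\mathrm{(basis)}}_i \rho^{\mathrm{(real)}}_{ij}) + \Tr(\rho^{\mathrm{(basis)}}_j \rho^{\mathrm{(real)}}_{ij}) = 1$ force the pure state $\rho^{\mathrm{(real)}}_{ij}$ to be supported on $\mathrm{span}\{\ket{i},\ket{j}\}$, so $\rho^{\mathrm{(real)}}_{ij} = \frac{1}{2}(\ket{i} + e^{\rmi\theta_{ij}}\ket{j})(\bra{i} + e^{-\rmi\theta_{ij}}\bra{j})$ for an unknown phase $\theta_{ij}$. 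The same argument gives $\rho^{\mathrm{(imag)}}_{ij} = \frac{1}{2}(\ket{i} + e^{\rmi\phi_{ij}}\ket{j})(\bra{i} + e^{-\rmi\phi_{ij}}\bra{j})$, and the cross-constraint $\Tr(\rho^{\mathrm{(real)}}_{ij}\rho^{\mathrm{(imag)}}_{ij}) = 1/2$ reduces to $\cos(\phi_{ij}-\theta_{ij})=0$, i.e.\ $\phi_{ij} = \theta_{ij} + s_{ij}\pi/2$ with $s_{ij}\in\{+1,-1\}$. An analogous argument confines each triplet state to a three-dimensional subspace with equal amplitudes, so it has the form $\frac{1}{\sqrt{3}}(\ket{a} + e^{\rmi\alpha}\ket{b} + e^{\rmi\beta}\ket{c})$ with two unknown phases.

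\textbf{Stage 3 (phase coherence and diagonal gauge).} The three constraints \eqref{eq:tranfers+1A}--\eqref{eq:tranfers+1C} on $\rho^{\mathrm{(triplet)}}_{ij}$ saturate $|1+e^{\rmi\cdot}|^2\le 4$, forcing the two unknown phases of $\rho^{\mathrm{(triplet)}}_{ij}$ to equal $\theta_{1i}$ and $\theta_{1j}$ respectively, and also forcing the cocycle relation $\theta_{ij} = \theta_{1j} - \theta_{1i}$ for all $i\ne j$. I then absorb these phases by conjugating by the diagonal unitary $V = \mathrm{diag}(1,e^{\rmi\theta_{12}},\ldots,e^{\rmi\theta_{1d}})$ (combining $V$ with $U_0$ into the eventual $U$). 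After this gauge change the $\rho^{\mathrm{(basis)}}_i$, $\rho^{\mathrm{(real)}}_{ij}$, and $\rho^{\mathrm{(triplet)}}_{ij}$ take their stated standard form, while each $\rho^{\mathrm{(imag)}}_{ij}$ reduces to $\frac{1}{2}(\ket{i} + s_{ij}\rmi\ket{j})(\bra{i} - s_{ij}\rmi\bra{j})$ with residual binary freedom $s_{ij}\in\{+1,-1\}$.

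\textbf{Stage 4 (resolve signs, anti-unitary freedom).} Applying the same saturation argument to the constraints on $\rho^{\mathrm{(triplet,\,12)}}_{j}$ first gives $\alpha=\beta$ from the real-superposition overlap, then the two imaginary-superposition overlaps force $e^{\rmi\alpha}=s_{12}\rmi$ and $e^{\rmi\beta}=s_{1j}\rmi$, so $s_{1j}=s_{12}$ for every $j\in\{3,\ldots,d\}$. The analogous computation for $\rho^{\mathrm{(triplet,\,i)}}_{ij}$ yields $s_{ij}=s_{1j}$. Hence all signs collapse to one global $s\in\{+1,-1\}$. If $s=+1$ we have the stated form; if $s=-1$ I apply complex conjugation $K$ in the $\ket{i}$-basis, which leaves every real object ($\rho^{\mathrm{(basis)}}$, $\rho^{\mathrm{(real)}}$, $\rho^{\mathrm{(triplet)}}$) invariant while flipping $\rmi\to -\rmi$, and combine $K$ with the unitary accumulated so far into a single anti-unitary $U$. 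The main obstacle is Stage 4: the mixed real/imaginary triplets must be wielded carefully because each contributes one real-phase constraint and one imaginary-phase constraint, and I need to verify that these precisely eliminate every gauge degree of freedom except for the global sign corresponding to Wigner's anti-unitary alternative -- no more, no less. All other stages are straightforward saturation arguments of the form $|a+b|^2=|a|^2+|b|^2$ iff $a\parallel b$.
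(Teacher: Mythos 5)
Your proposal is correct and follows essentially the same route as the paper's proof: fix the orthonormal basis, use purity plus the equal-amplitude constraints to reduce each state to unknown relative phases, exploit the triplet states as phase-transfer intermediaries via saturation of the overlap bounds, absorb the phases into a diagonal unitary gauge, and resolve the single residual sign $s\in\{\pm1\}$ by an optional complex conjugation giving the anti-unitary alternative. The only difference is bookkeeping — you derive the full phase cocycle before gauging and apply $K$ at the very end, whereas the paper gauges the $\phi_{1i}$ phases early and inserts $K$ at the $s_{12}$ step — which does not change the substance of the argument.
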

\begin{proof}
One can directly verify that the set of states given in Eq.~\eqref{eq:first-state} to Eq.~\eqref{eq:last-state} satisfies the geometry given in Eq.~\eqref{eq:first-geom} to Eq.~\eqref{eq:tranfersc''}.
For the other direction, we utilize the following steps.
The basic idea is to use the geometric constraints to gradually determine the descriptions for the states.
\begin{enumerate}
    \item The constraint \emph{(Fix the basis)} ensures that there exists a unitary transformation $U^{(0)}$ such that
    \begin{equation} \label{eq:basisUU}
        (U^{(0)})^{-1} \rho_i^{\mathrm{(basis)}} (U^{(0)}) = \ketbra{i}{i}, \,\, \forall i \in \{1, \ldots, d\}.
    \end{equation}
    \item The constraint \emph{(Fix absolute amplitude for $\rho^{\mathrm{(real)}}_{ij}$)}  ensures that
    \begin{equation}
        (U^{(0)})^{-1} \rho_{ij}^{\mathrm{(real)}} (U^{(0)}) = \frac{1}{2} \left(\ket{i} + \euler^{\rmi \phi_{ij}} \ket{j}\right)\left(\bra{i} + \euler^{-\rmi \phi_{ij}} \bra{j}\right), \forall i \neq j \in \{1, \ldots, d\}
    \end{equation}
    for some unknown phase $\phi_{ij} \in [0, 2\pi)$.
    Consider $U^{(1)} = U^{(0)} D$, where $D$ is a diagonal matrix with $D_{11} = 1$ and $D_{ii} = \euler^{\rmi \phi_{1i}}, \forall i \in \{2, \ldots, d\}$.
    We have
    \begin{align}
        (U^{(1)})^{-1} \rho_i^{\mathrm{(basis)}} (U^{(1)}) &= \ketbra{i}{i}, &\forall i \in \{1, \ldots, d\},\\
        (U^{(1)})^{-1} \rho_{1 i}^{\mathrm{(real)}} (U^{(1)}) &= \frac{1}{2} \left(\ket{1} + \ket{i}\right)\left(\bra{1} + \bra{i}\right), &\forall i \in \{2, \ldots, d\}, \label{eq:rho1ireal}\\
        (U^{(1)})^{-1} \rho_{ij}^{\mathrm{(real)}} (U^{(1)}) &= \frac{1}{2} \left(\ket{i} + \euler^{\rmi \phi'_{ij}} \ket{j}\right)\left(\bra{i} + \euler^{-\rmi \phi'_{ij}} \bra{j}\right), &\forall i \geq 2, i \neq j \in \{1, \ldots, d\},\label{eq:rhoijreal}
    \end{align}
    where $\phi'_{ij} \in [0, 2\pi)$ is some unknown phase.
    \item From the constraints \emph{(Fix absolute amplitude for $\rho^{\mathrm{(triplet)}}_{ij}$)}, \emph{(Transfer relative phase $+1$, a)} and \emph{(Transfer relative phase $+1$, b)} and Eq.~\eqref{eq:rho1ireal}, we have
    \begin{align}
    (U^{(1)})^{-1} \rho_{ij}^{\mathrm{(triplet)}} (U^{(1)}) = \frac{1}{3} \left(\ket{1} +  \ket{i} + \ket{j}\right)\left(\bra{1} +  \bra{i} + \bra{j}\right), \quad \forall i \neq j \in \{2, \ldots, d\}. \label{eq:rhotriplet}
    \end{align}
    \item The constraint \emph{(Transfer relative phase $+1$, c)}, Eq.~\eqref{eq:rhoijreal} and Eq.~\eqref{eq:rhotriplet} ensure that
    \begin{equation} \label{eq:rhoijreal-full}
        (U^{(1)})^{-1} \rho_{ij}^{\mathrm{(real)}} (U^{(1)}) = \frac{1}{2} \left(\ket{i} +  \ket{j}\right)\left(\bra{i} + \bra{j}\right), \forall i \neq j \in \{1, \ldots, d\}.
    \end{equation}
    The state $\rho_{ij}^{\mathrm{(triplet)}}$ serves as an intermediate point to transfer relative phases.
    \item The constraints \emph{(Fix absolute amplitude for $\rho_{ij}^{\mathrm{(imag)}}$)}, \emph{(Partially fix the phase for $\rho_{ij}^{\mathrm{(imag)}}$)}, and Eq.~\eqref{eq:rhoijreal-full} ensure that
    \begin{equation} \label{eq:rhoij-imag-partial}
        (U^{(1)})^{-1} \rho_{ij}^{\mathrm{(imag)}} (U^{(1)}) = \frac{1}{2} \left(\ket{i} + s_{ij} \rmi \ket{j}\right)\left(\bra{i} - s_{ij} \rmi \bra{j}\right), \forall i \neq j \in \{1, \ldots, d\},
    \end{equation}
    where $s_{ij} = \pm 1$ is an unknown phase.
    If $s_{12} = 1$, we define $U^{(2)} = U^{(1)}$.
    If $s_{12} = -1$, we define $U^{(2)} = U^{(1)} K$, where $K$ is the complex conjugation operation.
    We have $U^{(2)}$ is either a unitary or anti-unitary transformation.
    Using the newly defined $U^{(2)}$, we have
    \begin{align}
        (U^{(2)})^{-1} \rho_i^{\mathrm{(basis)}} (U^{(2)}) &= \ketbra{i}{i}, &\forall i \in \{1, \ldots, d\},\\
        (U^{(2)})^{-1} \rho_{ij}^{\mathrm{(real)}} (U^{(2)}) &= \frac{1}{2} \left(\ket{i} +  \ket{j}\right)\left(\bra{i} + \bra{j}\right), &\forall i \neq j \in \{1, \ldots, d\},\\
        (U^{(1)})^{-1} \rho_{ij}^{\mathrm{(triplet)}} (U^{(1)}) &= \frac{1}{3} \left(\ket{1} +  \ket{i} + \ket{j}\right)\left(\bra{1} +  \bra{i} + \bra{j}\right), &\forall i \neq j \in \{2, \ldots, d\},\\
        (U^{(2)})^{-1} \rho_{12}^{\mathrm{(imag)}} (U^{(2)}) &= \frac{1}{2} \left(\ket{1} + \rmi \ket{2}\right)\left(\bra{1} - \rmi \bra{2}\right), & \\
        (U^{(2)})^{-1} \rho_{ij}^{\mathrm{(imag)}} (U^{(2)}) &= \frac{1}{2} \left(\ket{i} + s'_{ij} \rmi \ket{j}\right)\left(\bra{i} - s'_{ij} \rmi \bra{j}\right), &\forall i \neq j \in \{1, \ldots, d\},
    \end{align}
    for some $s'_{ij} \in \{\pm 1\}$.
    \item From \emph{(Fix absolute amplitude for $\rho^{\mathrm{(triplet, 12)}}_{j}$)}, \emph{(Transfer relative phase $+1$, a')}, and \emph{(Transfer relative phase $+\rmi$, b')}, we have
    \begin{equation}
        (U^{(2)})^{-1} \rho_{j}^{\mathrm{(triplet, 12)}} (U^{(2)}) = \frac{1}{3} \left(\ket{1} + \rmi \ket{2} + \rmi \ket{j}\right)\left(\bra{1} - \rmi \bra{2} - \rmi \bra{j}\right), \quad \forall j \in \{3, \ldots, d\}.
    \end{equation}
    \item From \emph{(Transfer relative phase $+\rmi$, c')}, we have
    \begin{equation} \label{eq:rho1jimag}
        (U^{(2)})^{-1} \rho_{1j}^{\mathrm{(imag)}} (U^{(2)}) = \frac{1}{2} \left(\ket{1} +  \rmi \ket{j}\right)\left(\bra{1} - \rmi \bra{j}\right), \quad \forall j \in \{2, \ldots, d\},
    \end{equation}
    Similar to $\rho_{ij}^{\mathrm{(triplet)}}$, the state $\rho_{j}^{\mathrm{(triplet, 12)}}$ serves as an intermediate point to transfer relative phases.
    \item From \emph{(Fix absolute amplitude for $\rho^{\mathrm{(triplet, i)}}_{ij}$)}, \emph{(Transfer relative phase $+1$, a'')}, and \emph{(Transfer relative phase $+\rmi$, b'')}, we have
    \begin{equation}
        (U^{(2)})^{-1} \rho_{ij}^{\mathrm{(triplet, i)}} (U^{(2)}) = \frac{1}{3} \left(\ket{1} + \ket{i} + \rmi \ket{j}\right)\left(\bra{1} + \bra{i} - \rmi \bra{j}\right), \quad \forall i \neq j \in \{2, \ldots, d\}.
    \end{equation}
    \item From \emph{(Transfer relative phase $+\rmi$, c'')}, we have
    \begin{equation} \label{eq:rhoij-imag-full}
        (U^{(2)})^{-1} \rho_{ij}^{\mathrm{(imag)}} (U^{(2)}) = \frac{1}{2} \left(\ket{i} +  \rmi \ket{j}\right)\left(\bra{i} - \rmi \bra{j}\right), \quad \forall i \neq j \in \{2, \ldots, d\}.
    \end{equation}
    Here, the state $\rho_{ij}^{\mathrm{(triplet, i)}}$ is used as an intermediate point to transfer relative phases.
\end{enumerate}
By considering $U = U^{(2)}$, we have established the claim.
\end{proof}

As $\eta$ goes to zero, the learning algorithm can find a set of states that satisfy the geometry up to an arbitrarily small error. This implies that the description for the states would also be close to the true one up to an arbitrarily small error as $\eta$ goes to zero.
Using this basic, idea, we can see that Lemma~\ref{lem:geom-state} yields the corollary stated below.

\begin{corollary}
Given $\epsilon > 0$. For all sufficiently small $\eta$, there exists a unitary or anti-unitary transformation $U$, such that
\begin{align}
    \norm{\rho_i^{\mathrm{(basis)}} - U \ketbra{i}{i} U^{-1}} &< \epsilon, &\forall i \in \{1, \ldots, d\}, \label{eq:rhoibasis} \\
    \norm{\rho_{ij}^{\mathrm{(real)}} - \frac{1}{2} U \left(\ket{i} +  \ket{j}\right)\left(\bra{i} + \bra{j}\right) U^{-1}} &< \epsilon, &\forall i \neq j \in \{1, \ldots, d\}, \label{eq:rhoireal} \\
    \norm{\rho_{ij}^{\mathrm{(imag)}} - \frac{1}{2} U \left(\ket{i} +  \rmi \ket{j}\right)\left(\bra{i} - \rmi \bra{j}\right) U^{-1}} &< \epsilon, &\forall i \neq j \in \{2, \ldots, d\}. \label{eq:rhoimag}
\end{align}
We only need to focus on these three sets of states in the following discussion, but the claim also holds for the other sets of states.
\end{corollary}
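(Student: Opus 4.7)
The plan is to combine Lemma~\ref{lem:geom-state}, which characterizes the geometric constraints exactly, with a continuity argument showing that states approximately satisfying the geometry must be close in description to the ideal form up to a global unitary or anti-unitary transformation. This is essentially a perturbation/compactness argument layered on top of the exact characterization already established.

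First I would quantify the ability of the algorithm to find candidate states that satisfy the geometry approximately. The algorithm composes at most $1/\eta$ unitaries from $\mathcal{Y}_{\mathrm{unitary}}$ applied to $\rho_{x^*}$; by Lemma~\ref{lem:random-unit-Haar}, random compositions of a universal set approximate the Haar measure on the special unitary group, so the orbit of $\rho_{x^*}$ under such compositions becomes dense in the pure state space as $\eta \to 0$. Combined with the state overlap estimation subroutine from Appendix~\ref{sec:state-overlap}, whose additive error can be made smaller than any prescribed threshold for sufficiently small $\eta$, the algorithm can find states satisfying each of the geometric constraints in Eqs.~\eqref{eq:first-geom}--\eqref{eq:tranfersc''} up to an arbitrary error $\delta(\eta)$ with $\delta(\eta) \to 0$ as $\eta \to 0$.

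Next I would argue that approximate satisfaction of the geometry implies approximate agreement with the target descriptions. A clean way to do this is by compactness and contradiction: suppose the claim fails, so there exist $\epsilon_0 > 0$ and a sequence $\eta_k \to 0$ for which the learned states remain at least $\epsilon_0$ away, in trace distance, from every global unitary or anti-unitary transformation of the target states in Eqs.~\eqref{eq:first-state}--\eqref{eq:last-state}. Since the set of $d$-dimensional density matrices is compact, I may pass to a subsequence along which the learned states converge. Because state overlaps are continuous in the underlying states, the limit satisfies the geometry exactly, and by Lemma~\ref{lem:geom-state} the limit equals the target form under some unitary or anti-unitary $U$, contradicting the $\epsilon_0$ separation. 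Hence for all sufficiently small $\eta$ the bounds in Eqs.~\eqref{eq:rhoibasis}--\eqref{eq:rhoimag} hold.

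The main obstacle I anticipate is ensuring that the reconstruction in Lemma~\ref{lem:geom-state} is stable under perturbations, particularly the discrete choices such as the sign $s_{ij} = \pm 1$ used to fix the relative phase (and, ultimately, to decide between a unitary and an anti-unitary realization). These discrete choices are robust because different signs correspond to strictly different overlap values, so if the geometric error is smaller than the corresponding gap the sign is uniquely determined; the compactness argument automatically exploits this because the limit lies on a single side of the gap. An alternative, more quantitative route would track Lipschitz constants step by step through the construction of $U^{(0)}, U^{(1)}, U^{(2)}$ in the proof of Lemma~\ref{lem:geom-state}, yielding an explicit dependence $\epsilon = \mathcal{O}(\delta)$; the compactness approach sidesteps the bookkeeping but suffices for the purely existential claim made in the corollary.
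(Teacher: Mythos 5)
Your proposal is correct and follows essentially the same route as the paper: the paper likewise obtains states satisfying the geometry of Eqs.~\eqref{eq:first-geom}--\eqref{eq:tranfersc''} to vanishing error as $\eta \to 0$ (via Lemma~\ref{lem:random-unit-Haar} and the overlap-estimation subroutine) and then deduces the corollary from the exact characterization in Lemma~\ref{lem:geom-state} by a stability/continuity argument, which the paper states only informally and your compactness-and-contradiction step makes precise. No gap.
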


\subsection{Quantum state/process/measurement tomography}
\label{sec:tomography--intr}

In the final step, the learning algorithm utilizes the learned descriptions of the states in the previous subsection to perform quantum state/process/measurement tomography.
For readers familiar with quantum tomography, the claim could be established easily.
For completeness, we present the detailed derivations in the following.

\paragraph{States:} $\forall x \in \mathcal{X}$, we can always write $\rho_x$ as
\begin{equation}
     \rho_x = U \sum_{ij} a_{ij} \ketbra{i}{j} U^{-1},
\end{equation}
where $(a_{ij})_{ij}$ is a Hermitian matrix. If we write
\begin{equation}
    \sum_{ij} a_{ij} \ketbra{i}{j} = \sum_{i} a_{ii} \ketbra{i}{i} + \sum_{i \neq j} \frac{a_{ij} + a_{ji}}{2} (\ketbra{i}{j} + \ketbra{j}{i}) + \sum_{i \neq j} \frac{a_{ij} - a_{ji}}{2} (\ketbra{i}{j} - \ketbra{j}{i}),
\end{equation}
and we assume that Eq.~\eqref{eq:rhoibasis},~\eqref{eq:rhoireal},~and~\eqref{eq:rhoimag} holds exactly, then we can learn the matrix $(a_{ij})_{ij}$ by noting the following identities
\begin{align}
    a_{ii} &= \Tr(\rho_x \rho_{i}^{\mathrm{(basis)}}), \,\, \forall i, \label{eq:aii}\\
    \frac{a_{ij} + a_{ji}}{2} &= \Tr(\rho_x \rho_{ij}^{\mathrm{(real)}}) - \frac{\Tr(\rho_x \rho_{i}^{\mathrm{(basis)}}) + \Tr(\rho_x \rho_{j}^{\mathrm{(basis)}})}{2}, \,\, \forall i \neq j, \label{eq:aij+aji}\\
    \frac{a_{ij} - a_{ji}}{2} &= \frac{1}{\rmi} \left[\Tr(\rho_x \rho_{ij}^{\mathrm{(imag)}}) - \frac{\Tr(\rho_x \rho_{i}^{\mathrm{(basis)}}) + \Tr(\rho_x \rho_{j}^{\mathrm{(basis)}})}{2}\right], \,\, \forall i \neq j. \label{eq:aij-aji}
\end{align}
The state overlap $\Tr(\rho_1 \rho_2)$ can be estimated using the procedure provided in Appendix~\ref{sec:state-overlap} based on approximate Haar random unitaries.
For $\eta > 0$, define $\tilde{\rho}_x$ to be the empirical estimate of $\sum_{ij} a_{ij} \ketbra{i}{j}$ based on the above equations.
Due to error in the estimation of the state overlap $\Tr(\rho_1 \rho_2)$ and the error in the states $\rho_i^{\mathrm{(basis)}}, \rho_{ij}^{\mathrm{(real)}}, \rho_{ij}^{\mathrm{(imag)}}$, we have $\tilde{\rho}_x \neq \sum_{ij} a_{ij} \ketbra{i}{j}$. Nevertheless, one can use basic inequalities to show that $\norm{\tilde{\rho}_x - \sum_{ij} a_{ij} \ketbra{i}{j}}_1 < e_a(\eta)$ with high probability. The error $e_a(\eta)$ can be made arbitrarily small when $\eta$ goes to zero.

\paragraph{POVMs:} $\forall z \in \mathcal{Z}, \forall b \in \mathcal{B}$, we can learn $M_{zb}$ similar to learning states.
We write $M_{zb}$ as
\begin{equation}
     M_{zb} = U \sum_{ij} b_{ij} \ketbra{i}{j} U^{-1},
\end{equation}
where $(b_{ij})_{ij}$ is a Hermitian matrix.
For any quantum state $\rho$, we can estimate $\Tr(M_{zb} \rho)$ by simply computing the proportion of counts that we see the outcome $b$ when we measure $\mathcal{M}_z$ on $\rho$.
Using this simple procedure, we can estimate $\Tr(M_{zb} \rho)$ to an error $\eta$ with high probability.
Then we learn the matrix $(b_{ij})_{ij}$ using the same formulas given in Eq.~\eqref{eq:aii}~\eqref{eq:aij+aji}~and~\eqref{eq:aij-aji}, but we replace $\rho_x$ with $M_{zb}$.
For $\eta > 0$, we define $\tilde{M}_{zb}$ to be the empirical estimate of $\sum_{ij} b_{ij} \ketbra{i}{j}$.
Due to error in the estimation of $\Tr(M_{zb} \rho)$ and the error in $\rho_i^{\mathrm{(basis)}}, \rho_{ij}^{\mathrm{(real)}}, \rho_{ij}^{\mathrm{(imag)}}$, $\tilde{M}_{zb}$ is not exactly equal to $\sum_{ij} b_{ij} \ketbra{i}{j}$. But $\tilde{M}_{zb}$ will be close to $\sum_{ij} b_{ij} \ketbra{i}{j}$. In particular, there exists an error function $e_b(\eta)$, such that $\lim_{\eta \rightarrow 0} e_b(\eta) = 0$ and $\norm{\tilde{M}_{zb} - \sum_{ij} b_{ij} \ketbra{i}{j}}_1 < e_b(\eta)$ with high probability for any $\eta$.

\paragraph{CPTP maps:} $\forall y \in \mathcal{Y}$, we can write $\mathcal{E}_y$ as
\begin{equation}
    \mathcal{E}_y(\cdot) = U \sum_{ijkl} c_{ijkl} \ketbra{k}{l} \Tr( \ketbra{i}{j} U^{-1} (\cdot) U ) U^{-1}.
\end{equation}
The coefficients $c_{ijkl}$ could be learned using the state overlap procedure given in Appendix~\ref{sec:state-overlap} and the states $\rho_i^{\mathrm{(basis)}}, \rho_{ij}^{\mathrm{(real)}}, \rho_{ij}^{\mathrm{(imag)}}$ in Eq.~\eqref{eq:rhoibasis},~\eqref{eq:rhoireal},~and~\eqref{eq:rhoimag}.
To achieve this, we gather a collection of data by preparing each state in $\rho_i^{\mathrm{(basis)}}, \rho_{ij}^{\mathrm{(real)}}, \rho_{ij}^{\mathrm{(imag)}}$, evolving under $\mathcal{E}_y$, and estimating the state overlap of the output state with every state in $\rho_i^{\mathrm{(basis)}}, \rho_{ij}^{\mathrm{(real)}}, \rho_{ij}^{\mathrm{(imag)}}$.
We can then use the collection of data to estimate $c_{ijkl}, \forall ijkl$.
If Eq.~\eqref{eq:rhoibasis},~\eqref{eq:rhoireal},~and~\eqref{eq:rhoimag} holds exactly, then for $i \neq j$ and $k = l$,
\begin{align}
    c_{ijkl} &= \Tr( \rho_k^{\mathrm{(basis)}} \mathcal{E}_y( \rho_{ij}^{\mathrm{(real)}} ) ) + \frac{1}{\rmi} \Tr( \rho_k^{\mathrm{(basis)}} \mathcal{E}_y( \rho_{ij}^{\mathrm{(imag)}} ) )\\
    &- \left(\frac{1}{2} + \frac{1}{2 \rmi}\right) \left(\Tr( \rho_k^{\mathrm{(basis)}} \mathcal{E}_y( \rho_{i}^{\mathrm{(basis)}} ) ) + \Tr( \rho_k^{\mathrm{(basis)}} \mathcal{E}_y( \rho_{j}^{\mathrm{(basis)}} ) ) \right)
\end{align}
For $i = j$ and $k = l$, we can obtain
\begin{align}
    c_{ijkl} &= \Tr( \rho_k^{\mathrm{(basis)}} \mathcal{E}_y( \rho_{i}^{\mathrm{(basis)}} ) ).
\end{align}
For $i = j$ and $k \neq l$, we see that
\begin{align}
    c_{ijkl} &= \Tr( \rho_{kl}^{\mathrm{(real)}} \mathcal{E}_y( \rho_{i}^{\mathrm{(basis)}} ) ) + \frac{1}{\rmi} \Tr( \rho_{kl}^{\mathrm{(imag)}} \mathcal{E}_y( \rho_{i}^{\mathrm{(basis)}} ) )\\
    &- \left(\frac{1}{2} + \frac{1}{2 \rmi}\right) \left(\Tr( \rho_k^{\mathrm{(basis)}} \mathcal{E}_y( \rho_{i}^{\mathrm{(basis)}} ) ) + \Tr( \rho_l^{\mathrm{(basis)}} \mathcal{E}_y( \rho_{i}^{\mathrm{(basis)}} ) ) \right)
\end{align}
For $i \neq j$ and $k \neq l$, we have
\begin{align}
    c_{ijkl} &= \Tr\Bigg( \left(\rho_{kl}^{\mathrm{(real)}} + \frac{1}{\rmi} \rho_{kl}^{\mathrm{(imag)}} - \left(\frac{1}{2} + \frac{1}{2 \rmi}\right) \left( \rho_k^{\mathrm{(basis)}} + \rho_l^{\mathrm{(basis)}} \right) \right) \\
    &\quad\quad\quad \mathcal{E}_y \left( \left(\rho_{ij}^{\mathrm{(real)}} + \frac{1}{\rmi} \rho_{ij}^{\mathrm{(imag)}} - \left(\frac{1}{2} + \frac{1}{2 \rmi}\right) \left( \rho_i^{\mathrm{(basis)}} + \rho_j^{\mathrm{(basis)}} \right) \right) \right) \Bigg).
\end{align}
Expanding the right hand side of the above equation gives a weighted sum of $\Tr(\rho_2 \mathcal{E}_y (\rho_1))$ for some states $\rho_1, \rho_2$.
We consider $\tilde{\mathcal{E}}_y(\cdot)$ to be the empirical estimate for $\sum_{ijkl} c_{ijkl} \ketbra{k}{l} \Tr( \ketbra{i}{j} (\cdot) )$.
There exists an error function $e_c(\eta)$ such that $\norm{\tilde{\mathcal{E}}_y(\cdot) - \sum_{ijkl} c_{ijkl} \ketbra{k}{l} \Tr( \ketbra{i}{j} (\cdot) )}_\diamond < e_c(\eta)$.
Furthermore, as $\eta$ approaches zero, $e_c(\eta)$ goes to zero.

\subsection{Putting everything together}

For each world model $\mathcal{W} \in \mathcal{Q}$, after finishing the tomography step in Appendix~\ref{sec:tomography--intr}, we can guarantee the following.
There is an error function $\epsilon(\eta)$. For $\eta > 0$, there exists a global unitary or anti-unitary transformation $U$, such that the learned descriptions
$\tilde{\rho}_x, \tilde{\mathcal{E}}_y, \tilde{M}_{z b}$ satisfies
\begin{align}
 \norm{\rho_x - U \tilde{\rho}_x U^{-1}}_1 &< \epsilon(\eta), \\
 \norm{M_{z b} - U \tilde{M}_{z b} U^{-1}}_1 &< \epsilon(\eta), \\
 \norm{ \mathcal{E}_y(\cdot) - U \tilde{\mathcal{E}}_y(U^{-1} (\cdot) U) U^{-1} }_\diamond &< \epsilon(\eta),
\end{align}
for all $x \in \mathcal{X}, y \in \mathcal{Y}, z \in \mathcal{Z}, b \in \mathcal{B}$.
As $\eta$ goes to zero, $\epsilon(\eta)$ goes to zero.
After running the above procedures with precision parameter $\eta$, the learning algorithm considers $\eta \leftarrow \eta / 2$ and repeatedly runs the previous steps to obtain more accurate descriptions.
Because $\epsilon(\eta)$ goes to zero as $\eta$ goes to zero, the learning algorithm can learn all the physical descriptions to arbitrarily small error up to a global unitary or anti-unitary transformation.
Hence, $\mathcal{Q}$ is learnable.
This concludes the proof of Theorem~\ref{thm:int-description-ditto}.

% \newpage
\section{Basic properties of learnability and unlearnability}
\label{sec:basiclearna}

In this section, we will present various basic results regarding the relationship between different classes of world models.
These results will be useful for proving what kinds of world models are learnable, and what kinds are not, in the following sections.

We begin with a basic property: If a model class $\mathcal{Q}$ is learnable, then any subset $\mathcal{Q}'$ of $\mathcal{Q}$ is also learnable. %Intuitively, this property is a rigorous statement for the intuition that the more possible world models we have, the harder it would be to learn which world model is the true model.
This property is expected because removing possible models from the class will not make it harder to learn which model is the correct one.
We note that $\mathcal{Q}'$ and $\mathcal{Q}$ are model classes, i.e., sets of potential world models. The relation of $\mathcal{Q}' \subseteq \mathcal{Q}$ is very different from the concept of extension given in Definition~\ref{def:extension}, which considers the relation between two world models.

\begin{proposition}[Monotonicity of (un)learnability] \label{prop:mono-learnability}
Given sets $\mathcal{X}, \mathcal{Y}, \mathcal{Z}, \mathcal{B}$ and two model classes $\mathcal{Q}, \mathcal{Q}'$ over $\mathcal{X}, \mathcal{Y}, \mathcal{Z}, \mathcal{B}$ such that $\mathcal{Q}' \subseteq \mathcal{Q}$.
If $\mathcal{Q}$ is learnable, then $\mathcal{Q}'$ is learnable. Equivalently, if $\mathcal{Q}'$ is unlearnable, then $\mathcal{Q}$ is unlearnable.
\end{proposition}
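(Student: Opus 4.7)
The plan is to prove the contrapositive form directly by inheritance: any learning algorithm that works for the larger class $\mathcal{Q}$ automatically works for any subclass $\mathcal{Q}' \subseteq \mathcal{Q}$. The key observation is that the definition of learnability (Definition~\ref{def:learnability}) quantifies universally over world models in the class, so shrinking the class only weakens the universal quantifier.

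More concretely, I would first fix arbitrary $\epsilon, \delta > 0$ and an arbitrary $\mathcal{W} \in \mathcal{Q}'$. By assumption $\mathcal{Q}' \subseteq \mathcal{Q}$, so $\mathcal{W} \in \mathcal{Q}$ as well. Since $\mathcal{Q}$ is learnable, there exists a unitary or anti-unitary transformation $U$ and a classical-agent procedure that, conducting finitely many experiments as in Def.~\ref{def:experiment}, outputs estimates $\tilde{\rho}_x$, $\tilde{\mathcal{E}}_y$, $\tilde{M}_{z b}$ satisfying all three error bounds in Def.~\ref{def:learnability} with probability at least $1 - \delta$. Because the action spaces $\mathcal{X}, \mathcal{Y}, \mathcal{Z}$ and outcome space $\mathcal{B}$ are shared between $\mathcal{Q}$ and $\mathcal{Q}'$ (both are model classes over the same tuple of sets), the very same experimental protocol and the very same classical post-processing produce valid estimates for the physical operations in $\mathcal{W}$ when $\mathcal{W}$ is treated as an element of $\mathcal{Q}'$. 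Since $\epsilon, \delta$ and $\mathcal{W} \in \mathcal{Q}'$ were arbitrary, $\mathcal{Q}'$ meets Def.~\ref{def:learnability}, hence $\mathcal{Q}'$ is learnable.

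The contrapositive (if $\mathcal{Q}'$ is unlearnable, then $\mathcal{Q}$ is unlearnable) is then logically immediate from the forward implication. There is no real obstacle here: the statement is essentially a syntactic consequence of the universal quantifier over world models appearing in the definition of learnability. The only point worth emphasizing in the write-up is that the learning algorithm does \emph{not} need to know which of the two classes the true world model is drawn from; it suffices that any algorithm certifying learnability for the larger class already supplies the required accuracy guarantee for every $\mathcal{W} \in \mathcal{Q}'$.
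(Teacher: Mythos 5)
Your proposal is correct and follows essentially the same argument as the paper: since every world model in $\mathcal{Q}'$ also lies in $\mathcal{Q}$, any learning algorithm certifying Definition~\ref{def:learnability} for $\mathcal{Q}$ automatically certifies it for $\mathcal{Q}'$, and the unlearnability statement is just the contrapositive. The paper states this in three lines; your write-up only adds explicit bookkeeping of $\epsilon$, $\delta$, and the shared action spaces.
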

\begin{proof}
Every world model in $\mathcal{Q}'$ is in $\mathcal{Q}$.
Hence, if $\mathcal{Q}$ is learnable, then $\mathcal{Q}' \subseteq \mathcal{Q}$ is learnable.
This is equivalent to the contrapositive statement:
if $\mathcal{Q}' \subseteq \mathcal{Q}$ is unlearnable, then $\mathcal{Q}$ is unlearnable.
\end{proof}

Another important result states that a model class $\mathcal{Q}$ is unlearnable if $\mathcal{Q}$ contains world models $\mathcal{W}_1$ and $\mathcal{W}_2$ that are weakly indistinguishable, but are not equivalent. This follows because, by the definition of weakly indistinguishable, no experiment within the model class can tell $\mathcal{W}_1$ and $\mathcal{W}_2$ apart.
It may seem that this follows immediately from the definition, but there are subtleties arising from the fact that learning is probabilistic and allows arbitrarily small error.
%\textcolor{red}{ [Is is necessary to say more than this. Why the long proof? ] Robert: Intuitively, that is indeed what is happening. But there are subtleties in how the limits of $\epsilon, \delta$ could come into play. The proof shows that the complication arises from $\epsilon, \delta$ is not a big problem.}

\begin{proposition}[Weakly indistinguishability implies unlearnability]
\label{prop:indist->unlearn}
Given sets $\mathcal{X}, \mathcal{Y}, \mathcal{Z}, \mathcal{B}$ and a model class $\mathcal{Q} = \{\mathcal{W}\}$ for $d$-dimensional world models over $\mathcal{X}, \mathcal{Y}, \mathcal{Z}, \mathcal{B}$. If there exists $\mathcal{W}_1 \not\equiv \mathcal{W}_2 \in \mathcal{Q}$ such that $\mathcal{W}_1$ and $\mathcal{W}_2$ are weakly indistinguishable, then $\mathcal{Q}$ is unlearnable.
\end{proposition}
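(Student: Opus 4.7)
The plan is a proof by contradiction: assume $\mathcal{Q}$ is learnable and derive $\mathcal{W}_1 \equiv \mathcal{W}_2$, contradicting the hypothesis. The central observation is that because $\mathcal{W}_1$ and $\mathcal{W}_2$ are weakly indistinguishable (Definition~\ref{def:weakly-indist}), every experiment produces outcomes with the same probability distribution regardless of which of the two world models generates the data. Consequently the classical agent's data stream, and therefore the distribution of any learned estimate $\tilde{\rho}_x$, $\tilde{\mathcal{E}}_y$, or $\tilde{M}_{zb}$, is the same whether the true world is $\mathcal{W}_1$ or $\mathcal{W}_2$.

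Fix $\epsilon > 0$ and $\delta = 1/3$. By Definition~\ref{def:learnability} applied separately to $\mathcal{W}_1$ and to $\mathcal{W}_2$, there will exist fixed unitary-or-anti-unitary transformations $U_1$ and $U_2$ (each a single transformation for its entire world model) such that for every $x \in \mathcal{X}$ the random estimate $\tilde{\rho}_x$ satisfies $\norm{\rho^1_x - U_1 \tilde{\rho}_x U_1^{-1}}_1 \le \epsilon$ with probability at least $1 - \delta$ under $\mathcal{W}_1$, and $\norm{\rho^2_x - U_2 \tilde{\rho}_x U_2^{-1}}_1 \le \epsilon$ with probability at least $1 - \delta$ under $\mathcal{W}_2$. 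Since the two distributions of $\tilde{\rho}_x$ coincide by weak indistinguishability, a union bound guarantees both inequalities hold simultaneously with probability at least $1 - 2\delta > 0$, so at least one realization of $\tilde{\rho}_x$ satisfies both. Setting $V := U_1 U_2^{-1}$, which is unitary or anti-unitary and is the same across all actions, the triangle inequality yields $\norm{\rho^1_x - V \rho^2_x V^{-1}}_1 \le 2\epsilon$. Repeating the argument for the other action types will give $\norm{\mathcal{E}^1_y(\cdot) - V \mathcal{E}^2_y(V^{-1}(\cdot) V) V^{-1}}_\diamond \le 2\epsilon$ for every $y \in \mathcal{Y}$ and $\norm{M^1_{zb} - V M^2_{zb} V^{-1}}_1 \le 2\epsilon$ for every $z \in \mathcal{Z}$, $b \in \mathcal{B}$.

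Taking $\epsilon = 1/n$ will produce a sequence $\{V_n\}$ of unitary-or-anti-unitary transformations satisfying the three families of bounds above with $2\epsilon$ replaced by $2/n$. The space of such transformations decomposes into two compact pieces: the unitary group on $\mathbb{C}^d$ and the set of anti-unitary operators, each homeomorphic to a compact manifold. By pigeonhole, infinitely many of the $V_n$ lie in a single piece, and compactness then extracts a convergent subsequence $V_{n_k} \to V_\infty$ of a definite type. Passing to the limit converts the three families of $2/n$-bounds into exactly the equalities appearing in Theorem~\ref{prop:equiv}, so that theorem gives $\mathcal{W}_1 \equiv \mathcal{W}_2$, the desired contradiction. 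The main subtlety is this final compactness-and-limit step: the relating transformation $V_n$ may jump between unitary and anti-unitary as $\epsilon$ shrinks, so a single pointwise limit need not exist along the full sequence, and the pigeonhole argument on the two closed components is what provides a common $V_\infty$ of definite type.
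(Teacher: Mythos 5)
Your proposal is correct, and its core engine is the same as the paper's: weak indistinguishability makes the learner's output distribution identical under $\mathcal{W}_1$ and $\mathcal{W}_2$, a union bound at $\delta=1/3$ produces a single witness estimate close to both models, and the triangle inequality (using invariance of the norms under the gauge transformations) relates the two models by $V=U_1U_2^{-1}$ up to $2\epsilon$. Where you diverge is the finishing step. The paper defines the gap $\tilde{\epsilon}=\min_U \sup_{\text{actions}}(\cdots)$ up front, argues $\tilde{\epsilon}>0$ because the minimum over the compact set of unitary/anti-unitary transformations is attained and $\mathcal{W}_1\not\equiv\mathcal{W}_2$, and then gets a contradiction from the single choice $\epsilon=\tilde{\epsilon}/3$; you instead run the argument along $\epsilon=1/n$, pigeonhole the resulting $V_n$ into the unitary or anti-unitary component, extract a convergent subsequence by compactness, pass to the limit to obtain exact intertwining relations for every action, and invoke the easy direction of Theorem~\ref{prop:equiv} to conclude $\mathcal{W}_1\equiv\mathcal{W}_2$. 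The two uses of compactness are interchangeable here; the paper's version is more quantitative (a fixed accuracy floor $\tilde{\epsilon}/3$ already yields the contradiction, which is what underlies the later ``noise floor'' remarks), while yours avoids introducing $\tilde{\epsilon}$ and any attainment/semicontinuity discussion at the cost of needing continuity of $V\mapsto V(\cdot)V^{-1}$ in the limit and, for the channel case, the (anti-)unitary invariance of the diamond norm—a small point you assert but do not check (the paper sidesteps it by reducing without loss of generality to a single state $\rho_x$). Both are complete arguments; yours also has the minor virtue of treating all three action types explicitly.
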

\begin{proof}
Assume that $\mathcal{W}_A \not\equiv \mathcal{W}_B \in \mathcal{Q}$ are weakly indistinguishable.
Because $\mathcal{W}_A \not\equiv \mathcal{W}_B$, for all unitary and anti-unitary $U$, there exists $x \in \mathcal{X}$ or $y \in \mathcal{Y}$ or $z \in \mathcal{Z}, b \in \mathcal{B}$ such that the corresponding physical operations are different, i.e., $\rho^B_x \neq U \rho^A_x U^{-1}$, $\mathcal{E}^{B}_y(\cdot) \neq U \mathcal{E}^{A}_y( U^{-1} (\cdot) U) U^{-1}$, or $M^B_{z b} \neq U M^A_{z b} U^{-1}$.
We define the minimum error $\tilde{\epsilon}$ over $U$ to be
\begin{equation}
    \tilde{\epsilon} = \min_{U} \sup_{\substack{x \in \mathcal{X}\\y \in \mathcal{Y}\\ z \in \mathcal{Z} \\ b \in \mathcal{B}}}\left( \norm{\rho^B_x - U \rho^A_x U^{-1}}_1, \norm{\mathcal{E}^{B}_y(\cdot) - U \mathcal{E}^{A}_y( U^{-1} (\cdot) U) U^{-1}}_\diamond, \norm{M^B_{z b} - U M^A_{z b} U^{-1}}_1 \right),
\end{equation}
where $U$ is a unitary or anti-unitary transformation. We can use minimum instead of infimum because unitary and anti-unitary transformations form a compact space.
If $\tilde{\epsilon} = 0$, then $\mathcal{W}_A \equiv \mathcal{W}_B$. Hence $\tilde{\epsilon} > 0$.
The quantity $\tilde{\epsilon}$ sets a lower bound on the error achieved by any learning algorithm.

Suppose that $\mathcal{Q}$ is learnable. Then there is an algorithm $\mathcal{A}$ and a unitary or anti-unitary $U_A$ such that, with probability at least $1-\delta$, for every action the output from the algorithm has an error of at most $\epsilon$ after transforming under $U_A$. We may choose $\delta=1/3$ and $\epsilon = \tilde{\epsilon} / 3$.
Similarly, for world model $\mathcal{W}_B$, there exists a unitary or anti-unitary $U_B$ such that, with probability at least $1-\delta$, for every action the output from the algorithm $\mathcal{A}$ has an error of at most $\epsilon$ after transforming under $U_B$.
From the definition of $\tilde{\epsilon}$, we have
\begin{equation}
    \sup_{\substack{x \in \mathcal{X}\\y \in \mathcal{Y}\\ z \in \mathcal{Z} \\ b \in \mathcal{B}}}\left( \norm{\rho^B_x - U_* \rho^A_x U_*^{-1}}_1, \norm{\mathcal{E}^{B}_y(\cdot) - U_* \mathcal{E}^{A}_y( U_*^{-1} (\cdot) U_*) U_*^{-1}}_\diamond, \norm{M^B_{z b} - U_* M^A_{z b} U_*^{-1}}_1 \right) \geq \tilde{\epsilon}
\end{equation}
where $U_* = U_B U_A^{-1}$.
Hence, there exists an action such that the error $> \tfrac{9}{10} \tilde{\epsilon}$. (Here we could have chosen any positive multiplicative factor less than one, and chose $\tfrac{9}{10}$ for convenience.)
Without loss of generality, assume that for some $x \in \mathcal{X}$, $\norm{\rho^B_x - U_* \rho^A_x U_*^{-1}}_1 > \tfrac{9}{10} \tilde{\epsilon}$.

Our randomized learning algorithm outputs an estimate for the state $\rho_x$, which we denote by $\tilde{\rho}_x$.
Because $\mathcal{W}_A, \mathcal{W}_B$ are weakly indistinguishable, this output is drawn from the same probability distribution whether the world model is $\mathcal{W}_A$ or $\mathcal{W}_B$, as follows immediately from Eq.~\eqref{eq:probeq} in Definition~\ref{def:weakly-indist}.
We will refer to this probability distribution as $p_{AB}(\tilde{\rho}_x)$.
From the learnability, we have $\norm{\rho^A_x - U_A \tilde{\rho}_x U_A^{-1}}_1 \leq \epsilon$ with probability $\geq 2/3$ and $\norm{\rho^B_x - U_B \tilde{\rho}_x U_B^{-1}}_1 \leq \epsilon$ with probability $\geq 2/3$.
The above probability statements are over the same probability distribution $p_{AB}(\tilde{\rho}_x)$. Equivalently,
\begin{align}
    P\left[ \norm{ \rho^A_x - U_A \tilde{\rho}_x U_A^{-1}}_1 > \epsilon \right] &< 1/3,\\
    P\left[ \norm{\rho^B_x - U_B \tilde{\rho}_x U_B^{-1}}_1 > \epsilon \right] &< 1/3,
\end{align}
Hence, by union bound, we have
\begin{equation}
    P\left[ \norm{\rho^A_x - U_A  \tilde{\rho}_x U_A^{-1} }_1 > \epsilon \,\, \mbox{or} \,\, \norm{\rho^B_x - U_B \tilde{\rho}_x U_B^{-1} }_1 > \epsilon \right] < 2/3.
\end{equation}
This is equivalent to the fact that $\norm{\rho^A_x - U_A  \tilde{\rho}_x U_A^{-1} }_1 \leq \epsilon$ and $\norm{\rho^B_x - U_B \tilde{\rho}_x U_B^{-1} }_1 \leq \epsilon$ with probability $\geq 1/3$.
Because the probability is greater than zero and the probability distribution is over the choice of $\tilde{\rho}_x$, there must exist some state $\tilde{\rho}_x \in \mathbb{C}^{d \times d}$ such that $\norm{\rho^A_x - U_A  \tilde{\rho}_x U_A^{-1} }_1 \leq \epsilon$ and $\norm{\rho^B_x - U_B \tilde{\rho}_x U_B^{-1} }_1 \leq \epsilon$.
Because trace norm $\norm{\cdot}_1$ is invariant under unitary or anti-unitary transformations, we have $\norm{ U_B U_A^{-1} \rho^A_x U_A U_B^{-1} - U_B  \tilde{\rho}_x U_B^{-1} }_1 \leq \epsilon$.
Recall that $U_* = U_B U_A^{-1}$.
By the triangle inequality, we have
\begin{equation}
     \frac{9}{10} \tilde{\epsilon} < \norm{U_* \rho^A_x U_*^{-1} - \rho^B_x}_1 \leq \norm{U_B U_A^{-1} \rho^A_x U_A U_B^{-1} - U_B \tilde{\rho}_x U_B^{-1}}_1 + \norm{\rho^B_x - U_B \tilde{\rho}_x U_B^{-1}}_1 \leq 2\epsilon = \frac{2}{3} \tilde{\epsilon}.
\end{equation}
This is a contradiction, hence $\mathcal{Q}$ is not learnable.
\end{proof}

% There are two ways for two distinct world models $\mathcal{W}_1 \neq \mathcal{W}_2 \in \mathcal{Q}$ to be weakly indistinguishable.
% In the first case, the two world models are actually equivalent, $\mathcal{W}_1 \equiv \mathcal{W}_2$.
% Then the model class $\mathcal{Q}$ is redundant because $\mathcal{W}_1$ and  $\mathcal{W}_2$ are distinct descriptions of the same physical reality.
% %The unlearnability of $\mathcal{Q}$ is a consequence of having redundant descriptions of the physical world.
% We can avoid this case by considering only model classes that are not redundant, i.e., model classes such that no two members of the class are equivalent.
% For example, as discussed in Appendix~\ref{sec:learning_theory_foundations}, one might retain only one representative from each equivalence class.

We present a simple example where $\mathcal{W}_1$ and $\mathcal{W}_2$ are weakly indistinguishable, but not equivalent.
% Then the action space is not sufficiently informative to tell $\mathcal{W}_1$ and $\mathcal{W}_2$ apart.
%The unlearnability comes from the lack of sufficiently informative actions to distinguish between the two distinct physical reality.
Consider a model class $\mathcal{Q}$ that contains two $d=2$-dimensional world models $\{\mathcal{W}_A, \mathcal{W}_B\}$ with the action space $\mathcal{X} = \{0\}, \mathcal{Y} = \{h, t\}, \mathcal{Z} = \{0\}$ and the outcome space $\mathcal{B} = \{0, 1\}$. %\textcolor{red}{[It's a small thing, but sometimes (as here) we consider $\mathcal{B}$ to be part of the action space, and sometimes we say is in the ``outcome space'' which is distinct from the action space. It would be better to be consistent. For example we could say ``with action space $\mathcal{X}, \mathcal{Y}, \mathcal{Z}$ and outcome space $\mathcal{B}$.'']}
We define the physical actions in $\mathcal{W}_A$ and $\mathcal{W}_B$ as
\begin{align}
    \rho^A_0 &= I/2, & \mathcal{E}^A_h(\rho) &= H\rho H^\dagger, & \mathcal{E}^A_t(\rho) &= T\rho T^\dagger, & \mathcal{M}^A_0 &= \{I/2, I/2\}, \\
    \rho^B_0 &= \ketbra{0}{0}, & \mathcal{E}^B_h(\rho) &= H\rho H^\dagger, & \mathcal{E}^B_t(\rho) &= T\rho T^\dagger, & \mathcal{M}^B_0 &= \{I/2, I/2\}.
\end{align}
$\mathcal{W}_A$ has an initial state that is maximally mixed, hence the state $ \rho^A_0$ has a purity $\Tr((\rho^A_0)^2)$ of $1/2$. But $\mathcal{W}_B$ has an initial state that is pure, so the state $ \rho^B_0$ has a purity of $1$. Theorem~\ref{prop:equiv} implies that the two world models are not equivalent. However, both of the POVMS $\mathcal{M}_0^A$ and $\mathcal{M}_0^B$ produce uniformly random outcomes in $\mathcal{B}$ when applied to any state. Therefore, $\mathcal{W}_A$ and $\mathcal{W}_B$ are weakly indistinguishable, and hence by Proposition~\ref{prop:indist->unlearn} the model class $\mathcal{Q}$ is unlearnable. In this example, both of the world models $\mathcal{W}_A$ and $\mathcal{W}_B$ have a useless measurement device that provides no information, so there is no way to learn which is which.

% \textcolor{blue}{Steve to add some things here: a lemma for the telescoping sum trick, a definition of ``composed'' model and ``mixed'' model, and a simplified and unified proof of props 5,6,7,8.}

Monotonicity of learnability focuses on two model classes that have the same action spaces.
Here, we provide a basic proposition that considers two model classes with different action spaces.
The proposition holds because of the compositional nature in the design of an experiment --- we can compose different states, evolutions, and POVMs to form new states, evolutions, and POVMs.

\begin{proposition}[Learnability after adding composed states] \label{prop:comp-learnability-state}
Given sets $\mathcal{X}, \mathcal{Y}, \mathcal{Z}, \mathcal{B}$ and a model class $\mathcal{Q} = \{\mathcal{W}\}$ over $\mathcal{X}, \mathcal{Y}, \mathcal{Z}, \mathcal{B}$.
Consider a set $\Xi$, a constant $L \geq 1$, and a function $f$ that takes in $\xi \in \Xi$ and outputs $(x, y_1, \ldots, y_\ell)$ where $\ell \leq L, x \in \mathcal{X}, y_1, \ldots, y_\ell \in \mathcal{Y}$.
The model class $\mathcal{Q}'$ is over $\mathcal{X}' = \mathcal{X} \cup \Xi, \mathcal{Y}, \mathcal{Z}, \mathcal{B}$, and contains the world model
\begin{equation}
\left(\{\rho_x\}_{x \in \mathcal{X}} \cup \{\rho_{\xi} = \left(\mathcal{E}_{y_\ell} \circ \ldots \circ \mathcal{E}_{y_1}\right)(\rho_x)\}_{\substack{\xi \in \Xi, \\ f(\xi) = (x, y_1, \ldots, y_\ell)}}, \{\mathcal{E}_y\}_{y \in \mathcal{Y}}, \{\mathcal{M}_z\}_{z \in \mathcal{Z}} \right),
\end{equation}
for each world model $\mathcal{W} = \left(\{\rho_x\}_{x \in \mathcal{X}}, \{\mathcal{E}_y\}_{y \in \mathcal{Y}}, \{\mathcal{M}_z\}_{z \in \mathcal{Z}} \right)$ in the original model class $\mathcal{Q}$.
We have $\mathcal{Q}$ is learnable if and only if $\mathcal{Q}'$ is learnable.
\end{proposition}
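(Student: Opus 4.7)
The plan is to prove both directions of the equivalence. For the forward direction ($\cQ$ learnable $\Rightarrow$ $\cQ'$ learnable), I would transport the learning algorithm for $\cQ$ into a learning algorithm for $\cQ'$ by constructing estimates of the new composed states from the learned building blocks. For the backward direction ($\cQ'$ learnable $\Rightarrow$ $\cQ$ learnable), I would show that every experiment needed by an algorithm for $\cQ'$ can be simulated by experiments lying within $\cQ$, so the $\cQ'$ algorithm gives a bona fide $\cQ$ algorithm.

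For the forward direction, fix a target error $\epsilon > 0$ and failure probability $\delta > 0$. Since $\cQ$ is learnable, by Definition~\ref{def:learnability} there is, for every $\mathcal{W} \in \cQ$, a single unitary or anti-unitary $U$ and a randomized algorithm producing estimates $\tilde{\rho}_x$, $\tilde{\cE}_y$, $\tilde{M}_{zb}$ each within error $\epsilon' := \epsilon/(L+1)$ (in trace or diamond norm, respectively) with probability at least $1 - \delta/(L+1)$ per output. For each $\xi \in \Xi$ with $f(\xi) = (x, y_1, \ldots, y_\ell)$, define the estimate $\tilde{\rho}_\xi := \tilde{\cE}_{y_\ell} \circ \ldots \circ \tilde{\cE}_{y_1}(\tilde{\rho}_x)$. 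A telescoping triangle-inequality argument, combined with monotonicity of the trace norm under CPTP maps and the definition of the diamond norm, yields $\norm{\rho_\xi - U\tilde{\rho}_\xi U^{-1}}_1 \leq (\ell + 1)\epsilon' \leq \epsilon$ with probability at least $1 - \delta$ by a union bound. Since the original estimates for actions in $\mathcal{X}$, $\mathcal{Y}$, $\mathcal{Z}$ are already within $\epsilon' \leq \epsilon$, the composite algorithm witnesses the learnability of $\cQ'$ with the same global unitary or anti-unitary $U$.

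For the backward direction, the key observation is that every experiment in $\cQ'$ can be realized as an experiment in $\cQ$. Indeed, an experiment in $\cQ'$ of the form $(\xi, y_1', \ldots, y_m', z)$ with $\xi \in \Xi$ and $f(\xi) = (x, y_1, \ldots, y_\ell)$ produces exactly the same outcome distribution as the experiment $(x, y_1, \ldots, y_\ell, y_1', \ldots, y_m', z)$ in $\cQ$, by definition of $\rho_\xi$ and the associativity of map composition. Experiments in $\cQ'$ whose initial action lies in $\mathcal{X}$ are already experiments in $\cQ$. Thus any learning algorithm for $\cQ'$ can be run using only experiments drawn from $\cQ$, and its outputs for the actions in $\mathcal{X}$, $\mathcal{Y}$, $\mathcal{Z}$ satisfy the learnability requirements of Definition~\ref{def:learnability} for $\cQ$ with the same unitary or anti-unitary $U$.

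The main potential obstacle is the forward-direction error analysis: one must ensure that the same global transformation $U$ works simultaneously for all estimates and that composing $\ell \leq L$ noisy channels with a noisy state preserves a bounded overall error. This is handled by the standard fact that $\norm{\cE(\sigma) - \cF(\tau)}_1 \leq \norm{\sigma - \tau}_1 + \norm{\cE - \cF}_\diamond$ applied iteratively, which keeps the total error linear in $L$; since $L$ is a fixed constant in the hypothesis, rescaling $\epsilon'$ by $1/(L+1)$ suffices. No subtlety arises in the backward direction beyond unfolding the composed state into its defining sequence of actions in $\mathcal{X}$ and $\mathcal{Y}$.
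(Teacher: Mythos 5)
Your proposal is correct and follows essentially the same route as the paper: the backward direction by simulating each composed action $\xi$ via the defining sequence $(x, y_1, \ldots, y_\ell)$, and the forward direction by composing the learned estimates $\tilde{\cE}_{y_\ell} \circ \ldots \circ \tilde{\cE}_{y_1}(\tilde{\rho}_x)$ under the same global transformation $U$. Your explicit telescoping error bound with the $1/(L+1)$ rescaling merely spells out the step the paper summarizes as "each of $\rho_x, \cE_{y_1}, \ldots, \cE_{y_\ell}$ can be learned to arbitrarily high accuracy."
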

\begin{proof}
We begin with two basic statements.
First, every experiment in $\mathcal{Q}$ can be simulated by an experiment in $\mathcal{Q}'$.
And every experiment in $\mathcal{Q}'$ can be simulated by an experiment in $\mathcal{Q}$.
The first statement immediately holds by noting that $\mathcal{Q}'$ contains all the actions in $\mathcal{Q}$.
The second statement is true because the new action added in $\mathcal{Q}'$ is composed of actions in $\mathcal{Q}$. Since each experiment $E$ is a composition of actions, we can compose the corresponding actions in $\mathcal{Q}$ to simulate an experiment in $\mathcal{Q}'$.
The proof of this proposition is simple given the knowledge of these two facts.
We separate the proof for the two directions of the statement into two paragraphs.

\paragraph{$\mathcal{Q}'$ is learnable implies $\mathcal{Q}$ is learnable:}
If $\mathcal{Q}'$ is learnable, then there is a learning algorithm, such that for every action $a$ in the action spaces $\mathcal{X}, \mathcal{Y}, \mathcal{Z}$, the algorithm uses actions in $\mathcal{X}', \mathcal{Y}, \mathcal{Z}$ to learn the intrinsic description of action~$a$.
We can simulate every added action $\xi$ in $\Xi$ with the actions $x \in \mathcal{X}, y_1, \ldots, y_\ell \in \mathcal{Y}$, where $f(\xi) = (x, y_1, \ldots, y_\ell)$. Hence, we have a learning algorithm using only actions in $\mathcal{X}, \mathcal{Y}, \mathcal{Z}$ to learn the description of actions in $\mathcal{X}, \mathcal{Y}, \mathcal{Z}$ under the model class $\mathcal{Q}$.
Together, $\mathcal{Q}$ is learnable if $\mathcal{Q}'$ is learnable.

\paragraph{$\mathcal{Q}$ is learnable implies $\mathcal{Q}'$ is learnable:}
If $\mathcal{Q}$ is learnable, then there is a learning algorithm that learns the physical operation associated with every actions in $\mathcal{X}, \mathcal{Y}, \mathcal{Z}$. To show that $\mathcal{Q}'$ is learnable, we need to show that all actions in $\mathcal{X}', \mathcal{Y}, \mathcal{Z}$ are learnable.
By simulating the experiments under $\mathcal{Q}$ using actions in $\mathcal{Q}'$, we can learn all the physical operations associated to actions in $\mathcal{X}, \mathcal{Y}, \mathcal{Z}$.
Now for all $\xi \in \Xi$, the initial state $\rho_{\xi}$ associated to the additional action $\xi$ with $f(\xi) = (x, y_1,\ldots, y_\ell)$ can be learned.
This follows from the facts that $\rho_{\xi}$ is equal to $\left(\mathcal{E}_{y_\ell} \circ \ldots \circ \mathcal{E}_{y_1}\right)(\rho_x)$, and each of $\rho_x, \cE_{y_1}, \ldots, \cE_{y_\ell}$ can be learned to arbitrarily high accuracy up to a global unitary or anti-unitary transformation.
Therefore, $\mathcal{Q}'$ is learnable.
\end{proof}

The same argument as for Proposition~\ref{prop:comp-learnability-state} can be used to establish the following other two propositions where we consider a model class with new composed CPTP maps or POVMs.

\begin{proposition}[Learnability after adding composed CPTP maps] \label{prop:comp-learnability-map}
Given sets $\mathcal{X}, \mathcal{Y}, \mathcal{Z}, \mathcal{B}$ and a model class $\mathcal{Q} = \{\mathcal{W}\}$ over $\mathcal{X}, \mathcal{Y}, \mathcal{Z}, \mathcal{B}$.
Consider a set $\Xi$, a constant $L \geq 2$, and a function $f$ that takes in an element $\xi$ in $\Xi$ and outputs $(y_1, \ldots, y_\ell)$ where $\ell \leq L, y_1, \ldots, y_\ell \in \mathcal{Y}$.
The model class $\mathcal{Q}'$ is over $\mathcal{X}, \mathcal{Y}' = \mathcal{Y} \cup \Xi, \mathcal{Z}, \mathcal{B}$, and contains the world model
\begin{equation}
\left(\{\rho_x\}_{x \in \mathcal{X}}, \{\mathcal{E}_y\}_{y \in \mathcal{Y}} \cup \{\mathcal{E}_{\xi} = \left(\mathcal{E}_{y_\ell} \circ \ldots \circ \mathcal{E}_{y_1}\right)\}_{\substack{\xi \in \Xi, \\ f(\xi) = (y_1, \ldots, y_\ell)}}, \{\mathcal{M}_z\}_{z \in \mathcal{Z}} \right),
\end{equation}
for each world model $\mathcal{W} = \left(\{\rho_x\}_{x \in \mathcal{X}}, \{\mathcal{E}_y\}_{y \in \mathcal{Y}}, \{\mathcal{M}_z\}_{z \in \mathcal{Z}} \right)$ in the original model class $\mathcal{Q}$.
We have $\mathcal{Q}$ is learnable if and only if $\mathcal{Q}'$ is learnable.
\end{proposition}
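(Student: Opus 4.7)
The plan is to mirror the argument used for Proposition~\ref{prop:comp-learnability-state}, replacing the role of composed states with composed CPTP maps, and then controlling the error that accumulates when we form such a composition from learned descriptions of the constituents. The proof splits naturally into two directions: $\cQ'$ learnable implies $\cQ$ learnable (the ``easy'' direction, obtained by restriction), and $\cQ$ learnable implies $\cQ'$ learnable (the direction that requires us to actually build good estimates for the new composed channels $\cE_\xi$).

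For the first direction, note that every action $x, y, z$ appearing in $\cQ$ also appears in $\cQ'$, since $\mathcal{Y}\subseteq\mathcal{Y}'$ and the state/measurement action spaces are unchanged. Hence any learning algorithm for $\cQ'$ already outputs estimates of $\rho_x, \cE_y, \cM_z$ for all $x\in\mathcal{X}, y\in\mathcal{Y}, z\in\mathcal{Z}$; discarding the output corresponding to the $\xi\in\Xi$ actions gives a learning algorithm for $\cQ$. The approximation bound (up to a global unitary or anti-unitary transformation, as in Def.~\ref{def:learnability}) transfers unchanged.

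For the second direction, suppose $\cQ$ is learnable. Given target error $\epsilon$ and failure probability $\delta$, run the learning algorithm for $\cQ$ with error parameter $\epsilon/(L+1)$ and failure probability $\delta/(L+1)$, simulating each experiment in $\cQ$ by the corresponding sequence of actions in $\cQ'$ (which is possible since $\mathcal{X},\mathcal{Y},\mathcal{Z}$ all sit inside the action spaces of $\cQ'$). This produces estimates $\tilde{\rho}_x, \tilde{\cE}_y, \tilde{M}_{zb}$ and a common unitary or anti-unitary $U$ such that, with probability at least $1-\delta$, all of these are within error $\epsilon/(L+1)$ of the true operations (conjugated by $U$). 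To handle an added action $\xi\in\Xi$ with $f(\xi)=(y_1,\ldots,y_\ell)$, the agent simply outputs $\tilde{\cE}_\xi := \tilde{\cE}_{y_\ell}\circ\cdots\circ\tilde{\cE}_{y_1}$.

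The main (and only) technical step is the error analysis for this composition. Using sub-multiplicativity of the diamond norm under composition with CPTP maps, a telescoping argument gives
\begin{equation}
    \bigl\lVert \cE_{y_\ell}\circ\cdots\circ\cE_{y_1} - U\,\tilde{\cE}_\xi(U^{-1}(\cdot)U)\,U^{-1} \bigr\rVert_\diamond \leq \sum_{j=1}^{\ell} \bigl\lVert \cE_{y_j} - U\,\tilde{\cE}_{y_j}(U^{-1}(\cdot)U)\,U^{-1} \bigr\rVert_\diamond \leq \ell\cdot\frac{\epsilon}{L+1} \leq \epsilon,
\end{equation}
since $\ell\leq L$. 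Hence every action in $\mathcal{X},\mathcal{Y}',\mathcal{Z}$ is learned to error $\epsilon$ with probability at least $1-\delta$, so $\cQ'$ is learnable. The only subtlety worth flagging is that we must use one common $U$ for the whole world model (otherwise different conjugations of the $\tilde{\cE}_{y_j}$ would not compose correctly), which is automatic because Def.~\ref{def:learnability} already demands a single $U$ per world model.
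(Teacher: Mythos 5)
Your proof is correct and follows essentially the same route as the paper: simulate experiments between the two model classes, learn the constituent maps under one common unitary or anti-unitary transformation, and output the composition of the learned maps for each added action $\xi$, with your telescoping diamond-norm bound simply making explicit the error propagation the paper leaves implicit. One small point to state explicitly in the ``$\mathcal{Q}'$ learnable implies $\mathcal{Q}$ learnable'' direction: the $\mathcal{Q}'$-algorithm may request experiments that invoke actions $\xi \in \Xi$, which the agent interacting with $\mathcal{W} \in \mathcal{Q}$ must simulate by performing the defining sequence $f(\xi) = (y_1, \ldots, y_\ell)$ (which yields identical outcome statistics) before discarding the $\xi$-outputs.
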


\begin{proposition}[Learnability after adding composed POVMs] \label{prop:comp-learnability-povm}
Given sets $\mathcal{X}, \mathcal{Y}, \mathcal{Z}, \mathcal{B}$ and a model class $\mathcal{Q} = \{\mathcal{W}\}$ over $\mathcal{X}, \mathcal{Y}, \mathcal{Z}, \mathcal{B}$.
Consider a set $\Xi$, a constant $L \geq 1$, and a function $f$ that takes in an element $\xi$ in $\Xi$ and outputs $(y_1, \ldots, y_\ell, z)$ where $\ell \leq L, y_1, \ldots, y_\ell \in \mathcal{Y}, z \in \mathcal{Z}$.
The model class $\mathcal{Q}'$ is over the spaces $\mathcal{X}, \mathcal{Y}, \mathcal{Z}' = \mathcal{Z} \cup \Xi, \mathcal{B}$, and contains the world model
\begin{equation}
\left(\{\rho_x\}_{x \in \mathcal{X}}, \{\mathcal{E}_y\}_{y \in \mathcal{Y}}, \{\mathcal{M}_z\}_{z \in \mathcal{Z}} \cup \{\mathcal{M}_{\xi} = \mathcal{M}_z \circ \left(\mathcal{E}_{y_\ell} \circ \ldots \circ \mathcal{E}_{y_1}\right)\}_{\substack{\xi \in \Xi, \\ f(\xi) = (y_1, \ldots, y_\ell, z)}} \right),
\end{equation}
for each world model $\mathcal{W} = \left(\{\rho_x\}_{x \in \mathcal{X}}, \{\mathcal{E}_y\}_{y \in \mathcal{Y}}, \{\mathcal{M}_z\}_{z \in \mathcal{Z}} \right)$ in the original model class $\mathcal{Q}$.
We have $\mathcal{Q}$ is learnable if and only if $\mathcal{Q}'$ is learnable.
\end{proposition}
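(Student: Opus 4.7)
The proof follows exactly the same two-step pattern used in Proposition~\ref{prop:comp-learnability-state}. The two key observations are: (i) every experiment performable in $\mathcal{Q}$ is trivially an experiment in $\mathcal{Q}'$, since $\mathcal{Z}\subseteq\mathcal{Z}'$; and (ii) every experiment in $\mathcal{Q}'$ involving an added action $\xi\in\Xi$ with $f(\xi)=(y_1,\ldots,y_\ell,z)$ produces exactly the same outcome distribution as the simulated experiment in $\mathcal{Q}$ that inserts $y_1,\ldots,y_\ell$ in sequence before the measurement $z$. This follows directly from the definition $\mathcal{M}_\xi = \mathcal{M}_z\circ(\mathcal{E}_{y_\ell}\circ\ldots\circ\mathcal{E}_{y_1})$, together with the cyclicity of the trace applied to each POVM element.

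For the direction ``$\mathcal{Q}'$ learnable $\Rightarrow$ $\mathcal{Q}$ learnable,'' the plan is to take any algorithm $\mathcal{A}'$ witnessing learnability of $\mathcal{Q}'$ and convert it into an algorithm $\mathcal{A}$ for $\mathcal{Q}$. Whenever $\mathcal{A}'$ queries a composed POVM action $\xi\in\Xi$, algorithm $\mathcal{A}$ instead performs the sequence $y_1,\ldots,y_\ell$ of CPTP maps followed by the measurement $z$ where $f(\xi)=(y_1,\ldots,y_\ell,z)$, which produces an identically distributed outcome. Since $\mathcal{A}'$ learns the descriptions of all actions in $\mathcal{X},\mathcal{Y},\mathcal{Z}\subseteq\mathcal{X}',\mathcal{Y},\mathcal{Z}'$ up to some global unitary or anti-unitary $U$, the simulated algorithm $\mathcal{A}$ does the same for the actions in $\mathcal{Q}$.

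For the direction ``$\mathcal{Q}$ learnable $\Rightarrow$ $\mathcal{Q}'$ learnable,'' the plan is to run the learning algorithm for $\mathcal{Q}$ on the restriction of actions $\mathcal{X},\mathcal{Y},\mathcal{Z}$, which produces descriptions $\tilde{\rho}_x, \tilde{\mathcal{E}}_y, \tilde{M}_{zb}$ satisfying, with high probability, the bounds in Definition~\ref{def:learnability} under a single unitary or anti-unitary transformation $U$. For each added action $\xi\in\Xi$ with $f(\xi)=(y_1,\ldots,y_\ell,z)$, we then construct the reconstructed POVM by composition:
\begin{equation}
    \tilde{M}_{\xi b} \;=\; \bigl(\tilde{\mathcal{E}}_{y_1}^\dagger\circ\ldots\circ \tilde{\mathcal{E}}_{y_\ell}^\dagger\bigr)\bigl(\tilde{M}_{zb}\bigr),\quad \forall b\in\mathcal{B}.
\end{equation}
The main technical step is to verify that the approximation errors propagate gracefully through this composition. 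Because $\ell\leq L$ is bounded by a constant and each factor is learned to accuracy controlled by the algorithm, a routine triangle-inequality argument (using submultiplicativity of the diamond norm and the unitary-invariance of the trace norm) shows that $\tilde{M}_{\xi b}$ approximates the true $M_{\xi b}$ up to a unitary or anti-unitary conjugation by the same $U$, with error bounded by a constant times the per-action error. Choosing the per-action accuracy sufficiently small in the base algorithm therefore yields the required accuracy for every $\xi\in\Xi$, establishing learnability of $\mathcal{Q}'$. The only mild obstacle is book-keeping the common transformation $U$ across all composed actions, which is automatic because the base algorithm provides a single $U$ that simultaneously controls the error of all learned states, maps, and POVMs.
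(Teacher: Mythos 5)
Your proposal is correct and follows essentially the same route as the paper, which proves this proposition by observing (as it does for composed states in Proposition~\ref{prop:comp-learnability-state}) that experiments in $\mathcal{Q}$ and $\mathcal{Q}'$ simulate each other and that the added objects are compositions of learnable ones. Your explicit Heisenberg-picture reconstruction $\tilde{M}_{\xi b} = \bigl(\tilde{\mathcal{E}}_{y_1}^\dagger\circ\ldots\circ\tilde{\mathcal{E}}_{y_\ell}^\dagger\bigr)(\tilde{M}_{zb})$ and the constant-depth error-propagation remark are just a slightly more detailed version of the argument the paper leaves implicit.
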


Similar to the three propositions stated above, we can consider new actions that are convex combinations of existing actions. Adding the new actions does not affect the learnability.

\begin{proposition}[Learnability after adding mixtures of states] \label{prop:mix-learnability-state}
Given sets $\mathcal{X}, \mathcal{Y}, \mathcal{Z}, \mathcal{B}$ and a model class $\mathcal{Q} = \{\mathcal{W}\}$ over $\mathcal{X}, \mathcal{Y}, \mathcal{Z}, \mathcal{B}$.
Consider a set $\Xi$, a constant $L \geq 1$, and a function $f$ that takes in an element $\xi$ in $\Xi$ and outputs $((p_1, x_1) \ldots, (p_\ell, x_\ell))$ where $\ell \leq L, x_1, \ldots, x_\ell \in \mathcal{X}$ and $(p_1, \ldots, p_\ell)$ is a probability distribution.
The model class $\mathcal{Q}'$ is over the spaces $\mathcal{X}' = \mathcal{X} \cup \Xi, \mathcal{Y}, \mathcal{Z}, \mathcal{B}$, and contains the world model
\begin{equation}
\left(\{\rho_x\}_{x \in \mathcal{X}} \cup \left\{\rho_{\xi} = \sum_{\ell = 1}^L p_\ell \rho_{x_\ell} \right\}_{\substack{\xi \in \Xi, \\ f(\xi) = ((p_1, x_1) \ldots, (p_\ell, x_\ell))}}, \{\mathcal{E}_y\}_{y \in \mathcal{Y}}, \{\mathcal{M}_z\}_{z \in \mathcal{Z}} \right),
\end{equation}
for each world model $\mathcal{W} = \left(\{\rho_x\}_{x \in \mathcal{X}}, \{\mathcal{E}_y\}_{y \in \mathcal{Y}}, \{\mathcal{M}_z\}_{z \in \mathcal{Z}} \right)$ in the original model class $\mathcal{Q}$.
We have $\mathcal{Q}$ is learnable if and only if $\mathcal{Q}'$ is learnable.
\end{proposition}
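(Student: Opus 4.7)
The plan is to mirror the proof of Proposition~\ref{prop:comp-learnability-state}, observing that the only new feature here is that the added actions are convex mixtures rather than deterministic compositions. This means the simulation of a mixture by the original model class must be randomized, but this is unproblematic because quantum measurement probabilities are linear in the state.

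For the direction \emph{$\mathcal{Q}$ learnable $\Rightarrow \mathcal{Q}'$ learnable}, I would first run the given learning algorithm for $\mathcal{Q}$ to obtain, for any $\epsilon, \delta > 0$, estimates $\tilde{\rho}_x, \tilde{\mathcal{E}}_y, \tilde{M}_{zb}$ of every action in $\mathcal{X}, \mathcal{Y}, \mathcal{Z}$ that are accurate to error $\epsilon$ (up to a common global unitary or anti-unitary $U$) with probability at least $1 - \delta / L$. For each new mixture action $\xi \in \Xi$ with $f(\xi) = ((p_1, x_1), \ldots, (p_\ell, x_\ell))$, I would then output $\tilde{\rho}_\xi := \sum_{k=1}^{\ell} p_k \tilde{\rho}_{x_k}$. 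By a union bound over the at most $L$ terms and the triangle inequality together with linearity, $\norm{\rho_\xi - U \tilde{\rho}_\xi U^{-1}}_1 \leq \sum_{k=1}^{\ell} p_k \norm{\rho_{x_k} - U \tilde{\rho}_{x_k} U^{-1}}_1 \leq \epsilon$ with probability at least $1 - \delta$, so every added action is also learned to arbitrary accuracy.

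For the direction \emph{$\mathcal{Q}'$ learnable $\Rightarrow \mathcal{Q}$ learnable}, I would show that any experiment available in $\mathcal{Q}'$ can be simulated by a randomized experiment in $\mathcal{Q}$ yielding the same outcome distribution. Concretely, whenever the $\mathcal{Q}'$-algorithm begins an experiment with a mixture action $\xi$ whose $f(\xi) = ((p_1, x_1), \ldots, (p_\ell, x_\ell))$, the $\mathcal{Q}$-agent first samples $k \in \{1, \ldots, \ell\}$ according to $(p_1, \ldots, p_\ell)$ and then runs the remainder of the experiment with initial action $x_k \in \mathcal{X}$. Because
\begin{equation}
\Tr\!\bigl(M_{zb}\,(\mathcal{E}_{y_L} \circ \cdots \circ \mathcal{E}_{y_1})(\rho_\xi)\bigr) = \sum_{k=1}^{\ell} p_k \Tr\!\bigl(M_{zb}\,(\mathcal{E}_{y_L} \circ \cdots \circ \mathcal{E}_{y_1})(\rho_{x_k})\bigr)
\end{equation}
by linearity of the trace and of CPTP maps, the induced marginal distribution over outcomes coincides with that of the $\mathcal{Q}'$-experiment. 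Hence the $\mathcal{Q}$-agent can faithfully emulate the $\mathcal{Q}'$-learning algorithm and thereby output the required estimates for all actions in $\mathcal{X}, \mathcal{Y}, \mathcal{Z}$.

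There is no serious obstacle: the argument is essentially a linearity-of-quantum-probabilities observation combined with the template already established in Propositions~\ref{prop:comp-learnability-state}--\ref{prop:comp-learnability-povm}. The only point that deserves explicit mention is the randomized simulation step in the reverse direction, where one must verify that preparing $\rho_{x_k}$ with probability $p_k$ is operationally indistinguishable from preparing the mixture $\sum_k p_k \rho_{x_k}$ for the purposes of any downstream experiment—this is immediate from the cited linearity identity.
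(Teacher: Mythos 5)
Your proposal is correct and follows essentially the same route as the paper: simulate each mixture action by sampling $x_k$ according to $(p_1,\ldots,p_\ell)$ (using linearity of outcome probabilities) for the reverse direction, and output $\sum_k p_k \tilde{\rho}_{x_k}$ from the learned constituents for the forward direction, exactly as in the paper's reduction modeled on Proposition~\ref{prop:comp-learnability-state}. The only difference is that you spell out the union bound and triangle-inequality bookkeeping that the paper leaves implicit.
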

\begin{proof}
Similar to the proof of Proposition~\ref{prop:comp-learnability-state}, every experiment in $\mathcal{Q}$ can be simulated by experiments in $\mathcal{Q}'$. And every experiment in $\mathcal{Q}'$ can be simulated by experiments in $\mathcal{Q}$.
The first statement is trivial as $\mathcal{Q}'$ contains all actions in $\mathcal{Q}$.
The second statement is true because we can simulate any experiment that begins with the action $\xi \in \Xi$ by randomly sampling $x_i$ from $(x_1, \ldots, x_\ell)$ according to the probability distribution $(p_1, \ldots, p_\ell)$ and running the experiment using $x_i \in \mathcal{X}$.
Using essential the same proof as for Proposition~\ref{prop:comp-learnability-state}, we can show that $\mathcal{Q}'$ is learnable implies $\mathcal{Q}$ is learnable and $\mathcal{Q}$ is learnable implies $\mathcal{Q}'$ is learnable.
\end{proof}

Using essentially the same proof as for Proposition~\ref{prop:mix-learnability-state}, we can also obtain the equivalence of learnability after adding mixtures of CPTP maps or POVMs.

% \section{Examples of model classes and relations to existing works}

% In particular, focus on quantum state tomography and quantum process tomography.

% \newpage
\section{Gate-dependent Pauli noise is unlearnable with Clifford+T gates}
\label{sec:cliffordT}

\subsection{Statement and unlearnability of a target model class}

In quantum state/process tomography, it is well known that Clifford circuits are informationally complete.
For example, we can learn any quantum state with Clifford circuits and computational basis measurement. We can also learn any quantum process with Clifford circuits, all zero state preparation, and computational basis measurements.
In these works, it is often assumed that the Clifford circuits, the state preparations, and the measurements are perfect.
The situation changes dramatically when these physical operations are not perfect.

In this section, we show that when there is gate-dependent Pauli noise, Clifford circuits are fundamentally uncapable of learning the noisy processes.
Even more interestingly, adding the non-Clifford $T$ gate is still insufficient, even if the all zero state preparation and computational basis measurements are perfect.

\begin{theorem}[Restatement of Theorem~\ref{prop:Clif+T}; Gate-dependent Pauli noise is unlearnable with Clifford+T gates] \label{prop:Clif+T-detailed}
Given $\tfrac{1}{2} > \epsilon > 0$. Consider a qubit system. Suppose we can prepare the zero state $\ket{0}$ perfectly and any state $\rho$ with an unknown error $\leq \epsilon$, measure in the computational basis perfectly, and apply Clifford gates and $T$ gate, where each gate is followed by an unknown gate-dependent Pauli noise channel that is $\epsilon$-close to the identity channel.
It is impossible for any algorithm to learn the gate-dependent Pauli noise channels to arbitrarily small error.
\end{theorem}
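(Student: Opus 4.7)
The plan is to exhibit two candidate world models $\cW_A, \cW_B$ that agree on every action except the Hadamard: $\cN_H^A$ has a small bit-flip ($X$) bias and $\cN_H^B$ has the same-magnitude phase-flip ($Z$) bias (their Pauli error probabilities are related by the swap $p_X^H \leftrightarrow p_Z^H$); the Pauli noise on $S$ and $T$ is identical in both worlds and close to the identity. I would (i) prove $\cW_A \not\equiv \cW_B$ via Theorem~\ref{prop:equiv} --- any candidate global $U$ that fixes $\ketbra{0}{0}$ and the computational basis POVM must be diagonal in that basis (up to complex conjugation), and such $U$ manifestly cannot convert the Pauli channel $\cN_H^A$ into $\cN_H^B$ while also fixing the common $S,T$ channels --- and then (ii) show that in the restricted setting where the only state-preparation action is the perfect $\ket{0}$, the two worlds are weakly indistinguishable, so unlearnability in this restricted setting follows from Proposition~\ref{prop:indist->unlearn}.

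Step (ii) is the heart of the argument. For any experiment $(\ket{0^n}, G_1, \ldots, G_L, b)$ the outcome probability is $\langle 0^n|\,\cE_{G_1}^\dagger \cdots \cE_{G_L}^\dagger(\ketbra{b}{b})\,|0^n\rangle$, which I would expand in the Pauli basis. The final observable $\ketbra{b}{b}$ is a signed sum of Paulis in $\{I,Z\}^{\otimes n}$, and only the $\{I,Z\}^{\otimes n}$ sector has nonzero overlap with $\ketbra{0^n}{0^n}$ at the other end. Each noisy Clifford acts on Paulis as a signed permutation times a diagonal decay factor $\lambda^G_P$ (the Pauli eigenvalue of its noise channel), while each noisy $T$ splits the single-qubit Paulis $X$ and $Y$ into superpositions in the $XY$-plane, leaving $Z$ invariant. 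Thus the outcome probability decomposes into a sum over branching Pauli trajectories that enter and exit the $\{I,Z\}$ sector on each qubit. The combinatorial claim is that on every surviving trajectory the number of Hadamards executing $Z \to X$ on a given qubit equals the number executing $X \to Z$ on that qubit (otherwise the trajectory would leave the $\{I,Z\}$ sector there), so the $H$-noise contributes a product of factors $\lambda^H_Z \lambda^H_X$ together with factors $-\lambda^H_Y$ from any Hadamards passing through $Y$. The swap $p_X^H \leftrightarrow p_Z^H$ exchanges $\lambda^H_X \leftrightarrow \lambda^H_Z$ and leaves $\lambda^H_Y$ fixed, so every trajectory weight is invariant; all other gates' contributions agree by construction, so the outcome distribution coincides.

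For step (iii), extending from perfect $\ket{0}$-only preparation to noisy preparations of arbitrary states, I would invoke the composed-state rule of Proposition~\ref{prop:comp-learnability-state}. For each action $\xi$ that purportedly prepares a pure state $\ket{\psi}$, define $\rho^A_\xi$ (resp.\ $\rho^B_\xi$) to be the density matrix actually produced by preparing $\ket{0}$ and running a fixed Clifford+$T$ synthesis of $\ket{\psi}$ using the noisy gates of $\cW_A$ (resp.\ $\cW_B$). These states are within $O(\epsilon)$ of $\ketbra{\psi}{\psi}$ when the per-gate noise has magnitude $\epsilon$, so the extended models lie in the model class of the theorem; and because the added prep-actions are just compositions of existing ones, Proposition~\ref{prop:comp-learnability-state} says the extended and restricted model classes have the same (un)learnability.

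The main obstacle will be the clean execution of the trajectory bookkeeping in step (ii). The nontrivial ingredients are (a) handling the $T$-induced branching into $Y$ components so that the per-branch invariance statement is transparent, (b) verifying that non-Hadamard Cliffords such as $S$, which rearrange Pauli support within $\{X,Y\}$ but never connect $Z$ with $X$, cannot break the per-qubit pairing of $Z \to X$ and $X \to Z$ Hadamards, and (c) carefully tracking signs from $HYH = -Y$, $S^\dagger X S = -Y$, and the $T$-expansion, to ensure the invariance holds with the correct overall sign on each branch. The ``graphical method'' alluded to in the main text is essentially a time-reversed Pauli-string calculus that makes this accounting transparent.
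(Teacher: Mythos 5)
Your core Heisenberg-picture pairing argument (trajectories leave and re-enter the $\{I,Z\}$ sector only through Hadamards, so the $H$-noise enters only through $\lambda^H_X\lambda^H_Z$ and $\lambda^H_Y$, both invariant under $p^H_X\leftrightarrow p^H_Z$) is exactly the paper's mechanism, and your steps (i) and (iii) mirror the paper's non-equivalence check and composed-state extension. But there is a genuine gap in step (ii): the theorem's action set contains \emph{every} Clifford gate as a primitive action with its own noise channel, and your premise in (b) --- that non-Hadamard Cliffords ``rearrange Pauli support within $\{X,Y\}$ but never connect $Z$ with $X$'' --- is false for the full single-qubit Clifford group. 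For example $G=SH$ satisfies $G X G^\dagger = Z$. With your construction (noise identical on all non-$H$ gates, $H$-noise related by the swap), the experiment ``prepare $\ket{0}$, apply noisy $H$, apply the primitive noisy $G=SH$, measure $Z$'' gives outcome probability $\tfrac{1}{2}(1+\lambda^H_X\,\mu)$ for some common factor $\mu$ from $G$'s (shared) noise: the trajectory exits the $Z$ axis through $H$ (picking up $\lambda^H_X$) but returns through $G$, so $\lambda^H_X$ appears \emph{unpaired} with $\lambda^H_Z$. Since the swap changes $\lambda^H_X$, the two world models you wrote down are weakly distinguishable, and Proposition~\ref{prop:indist->unlearn} cannot be invoked for them.

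The fix is the route the paper takes: prove weak indistinguishability first for the restricted gate set $\{H,S,T\}$ (where $H$ really is the only gate connecting the $Z$ axis to the $XY$-plane, so your pairing argument is valid), and then enlarge the action set to all Clifford gates by \emph{defining} each added Clifford action as a composition of the noisy $H$ and $S$ actions, invoking Proposition~\ref{prop:comp-learnability-map}. In the resulting pair of world models the $H$-noise difference propagates consistently into every composed Clifford's Pauli channel (they are no longer identical across the two worlds on non-$H$ gates), which is what restores indistinguishability; one must also track that the composed noise remains a Pauli channel within $\epsilon$ of the identity, which forces starting from a smaller per-gate error $\epsilon^* = \Theta(\epsilon/(L^s L^*))$ as in the paper. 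Finally, in your step (iii) you should also cover mixed-state preparations (the theorem allows any state $\rho$), which requires adding convex mixtures of the composed pure-state preparations via Proposition~\ref{prop:mix-learnability-state}, not only Proposition~\ref{prop:comp-learnability-state}.
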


To prove Theorem~\ref{prop:Clif+T-detailed}, we begin by stating the conditions in Theorem~\ref{prop:Clif+T-detailed} as a model class.
Consider the the action spaces $\mathcal{X} = \{x_{\ketbra{0}{0}}\} \cup \{x_{\sigma}\}_{\sigma: \mathrm{state}}, \mathcal{Y} = \{y_U\}_{U \in \mathcal{C} \cup \{T\} }, \mathcal{Z} = \{ 0 \}$, where $\sigma$ is a quantum state, $\mathcal{C}$ is the Clifford group, and $T$ is the $T$ gate. And consider the outcome space $\mathcal{B} = \{0, 1\}$.
Given $\epsilon > 0$, we define the model class $\mathcal{Q}^{\epsilon} = \{\mathcal{W}\}$ to be the set of world models $\mathcal{W} = \left(\{\rho_x\}_{x \in \mathcal{X}}, \{\mathcal{E}_y\}_{y \in \mathcal{Y}}, \{\mathcal{M}_z\}_{z \in \mathcal{Z}} \right)$ that satisfies the following conditions
\begin{align}
    \norm{\rho_{x_\sigma} - \sigma}_1 &\leq \epsilon, &\,\, \forall \sigma: \mathrm{state},\\
    \rho_{x_{\ketbra{0}{0}}} &= \ketbra{0}{0}, &\\
    \mathcal{E}_{y_U}(\rho) &= \mathcal{P}_{U} (U \rho U^\dagger), \, \norm{\mathcal{P}_{U} - \mathcal{I}}_\diamond \leq \epsilon, &\,\, \forall U \in \mathcal{C} \cup \{T\},\\
    \mathcal{M}_0 &= \{\ketbra{0}{0}, \ketbra{1}{1}\}. &
\end{align}
where $\mathcal{P}_U$ is a Pauli channel, i.e.,
\begin{equation}
\mathcal{P}_U(\rho) = p^U_I \rho + p^U_X X \rho X + p^U_Y Y \rho Y + p^U_Z Z \rho Z
\end{equation}
for some probability distribution $\left( p^U_I, p^U_X, p^U_Y, p^U_Z \right)$, or equivalently (in the Heisenberg picture)
\begin{equation}
\mathcal{P}_U(I) = I, \mathcal{P}_U(X) = \lambda^U_X X, \mathcal{P}_U(Y) = \lambda^U_Y Y, \mathcal{P}_U(Z) = \lambda^U_Z Z
\end{equation}
for some real value $\lambda^U_X, \lambda^U_Y, \lambda^U_Z$, $\mathcal{I}$ is the identity channel. $\norm{\cdot}_\diamond$ denotes the diamond norm.

\subsection{Unlearnability of $\mathcal{Q}^{\epsilon}$}
\label{sec:unlearnability-Qeps}

We provide more analysis of  $\mathcal{Q}^{\epsilon}$ in this subsection. First, we show that $\mathcal{Q}^{\epsilon}$ is not redundant: no distinct world models in $\mathcal{Q}^{\epsilon}$ are equivalent and hence describe the same physical reality.
From Theorem~\ref{prop:equiv}, two equivalent world models are related by a global unitary or anti-unitary transformation~$V$.
If two distinct world models in $\mathcal{Q}^{\epsilon}$ are equivalent, then there exists a global unitary or anti-unitary transformation $V \neq \euler^{\rmi \phi} I$, for any $\phi \in \mathbb{R}$, and Pauli channels $\mathcal{P}, \mathcal{P}', \mathcal{Q}, \mathcal{Q}'$ that are at most $\epsilon$-far from identity channel, such that
\begin{align}
    \ketbra{0}{0} &= V \ketbra{0}{0} V^{-1}, \label{eq:v00vinv-1}\\
    \ketbra{1}{1} &= V \ketbra{1}{1} V^{-1},\\
    \mathcal{P}(H \ketbra{0}{0} H) &= V \mathcal{P}'(H V^{-1} \ketbra{0}{0} V H) V^{-1},\\
    \mathcal{Q}((SH) \ketbra{0}{0} (SH)^{\dagger}) &= V \mathcal{Q}'((SH) V^{-1} \ketbra{0}{0} V (SH)^{\dagger}) V^{-1}, \label{eq:v00vinv-2}
\end{align}
where $H$ is the Hadamard gate and $S$ is the phase gate.
The first two equalities come from the condition for $\cM_0$ (using the condition on $\rho_{x_{\ketbra{0}{0}}}$ would only yield the first equality).
The latter two equalities come from the conditions for $\cE_{y_{U}}$.

We now review some basic facts about Pauli channels.
Recall that for two Pauli channels $\mathcal{P}, \mathcal{P}'$, $\norm{\mathcal{P} - \mathcal{P}'}_\diamond = |p_I - p_I'| + |p_X - p_X'| + |p_Y - p_Y'| + |p_Z - p_Z'|$ and
$\lambda_X = 1 - 2 p_Y - 2 p_Z, \lambda_Y = 1 - 2p_X - 2p_Z, \lambda_Z = 1 - 2p_X - 2p_Y$.
Therefore, the assumption $\norm{\mathcal{P} - \mathcal{I}}_\diamond < \epsilon \leq 1/2$ implies
\begin{equation} \label{eq:good-lambda}
 \lambda_X, \lambda_Y, \lambda_Z \geq 1/2.
\end{equation}
The above inequality is useful for bounding the eigenvalues $\lambda_X, \lambda_Y, \lambda_Z$ away from zero.
Using the basic properties of Pauli channels, we can rewrite Eq.~\eqref{eq:v00vinv-1} to Eq.~\eqref{eq:v00vinv-2} as follows.
\begin{align}
    \ketbra{0}{0} &= V \ketbra{0}{0} V^{-1}, \label{eq:v00vinv-1p}\\
    \ketbra{1}{1} &= V \ketbra{1}{1} V^{-1},\\
    (I + \lambda^\mathcal{P}_X X) &= V (I + \lambda^{\mathcal{P}'}_X X) V^{-1},\\
    (I + \lambda^\mathcal{Q}_Y Y) &= V (I + \lambda^{\mathcal{Q}'}_Y Y) V^{-1}, \label{eq:v00vinv-2p}
\end{align}
where we utilize the fact that $\ketbra{0}{0} = (I+Z) / 2$.
Hence, we have
\begin{equation} \label{eq:Pauli-transform-equiv}
Z = V Z V^{-1}, \lambda X = \lambda' V X V^{-1}, \Lambda Y = \Lambda' V Y V^{-1},
\end{equation}
where $\lambda, \lambda', \Lambda, \Lambda' \geq 1/2$ from Eq.~\eqref{eq:good-lambda}.
Because a unitary can be expressed as $V = \euler^{\rmi \phi} \exp(\rmi (aX + bY + cZ))$ and an anti-unitary can be expressed as $V = \euler^{\rmi \phi} \exp(\rmi (aX + bY + cZ)) K$ for $a, b, c, \phi \in \mathbb{R}$ and $K$ the complex conjugation, Eq.~\eqref{eq:Pauli-transform-equiv} implies that $V = \euler^{\rmi \phi} I$, which is a contradiction.
Therefore, no distinct world models in $\mathcal{Q}^{\epsilon}$ are equivalent.

The fact that $\mathcal{Q}^{\epsilon}$ is not redundant is useful for the following logical reasoning.
Suppose no algorithm can learn the gate-dependent Pauli noise channels to arbitrarily small error.
It means that there exists two distinct world models (with different gate-dependent Pauli noise) in $\mathcal{Q}$ such that all experiments have the same output distribution (i.e., the two world models are weakly indistinguishable).
From the non-redundancy of $\mathcal{Q}^{\epsilon}$, these two distinct world models are not equivalent to one another.
Hence from Proposition~\ref{prop:indist->unlearn}, the model class $\mathcal{Q}^{\epsilon}$ is unlearnable.

On the other hand, suppose there is an algorithm that can learn the gate-dependent Pauli noise channels to arbitrarily small error.
We can build on the Pauli channel learning algorithm to learn $\rho_{x_\sigma}$ for any state $\sigma$ using the following equation
\begin{align}
    \rho_{x_\sigma} &= \frac{I}{2} + \Tr(X \rho_{x_\sigma} ) \frac{X}{2} + \Tr(Y \rho_{x_\sigma}) \frac{Y}{2} + \Tr(Z \rho_{x_\sigma}) \frac{Z}{2}\\
    &= \frac{I}{2} + \frac{\Tr(Z \mathcal{E}_{y_H}(\rho_{x_\sigma}) )}{\lambda_Z^H} \frac{X}{2} - \frac{\Tr(Z \mathcal{E}_{y_H}(\mathcal{E}_{y_S}(\rho_{x_\sigma})) )}{\lambda_Z^H \lambda_X^S} \frac{Y}{2} + \Tr(Z \rho_{x_\sigma}) \frac{Z}{2},
\end{align}
and $\Tr(Z \rho) = \bra{0} \rho \ket{0} - \bra{1} \rho \ket{1}$.
Because every action in $\mathcal{Q}^{\epsilon}$ can be learned to arbitrarily high accuracy, the model class $\mathcal{Q}^{\epsilon}$ is learnable.
Together, Theorem~\ref{prop:Clif+T-detailed} is equivalent to stating that $\mathcal{Q}^{\epsilon}$ is unlearnable. In the following, we will prove that $\mathcal{Q}^{\epsilon}$ is unlearnable.

\subsection{Unlearnability of a simpler model class}

We combine the basic results proven in Appendix~\ref{sec:basiclearna} to establish the unlearnability of $\mathcal{Q}^{\epsilon}$
The model class $\mathcal{Q}^{\epsilon}$ is quite complicated.
So we will begin by proving that a simpler model class $\tilde{\mathcal{Q}}^{\epsilon}$ is unlearnable. We will then use a set of tools developed in Appendix~\ref{sec:basiclearna} to show the unlearnability of $\mathcal{Q}^\epsilon$ from the unlearnability of $\tilde{\mathcal{Q}}^{\epsilon}$.
The simpler model class $\tilde{\mathcal{Q}}^{\epsilon}$ is over a simpler action space $\tilde{\mathcal{X}} = \{0\}, \tilde{\mathcal{Y}} = \{h, s, t\}, \tilde{\mathcal{Z}} = \{0\}$ and the same outcome space $\mathcal{B} = \{0, 1\}$.
Here, the three actions $h, s, t$ represent Hadamard gate, Phase gate, and $T$ gate.
Each world model $\mathcal{W} = \left(\{\rho_x\}_{x \in \tilde{\mathcal{X}}}, \{\mathcal{E}_y\}_{y \in \tilde{\mathcal{Y}}}, \{\mathcal{M}_z\}_{z \in \mathcal{Z}} \right)$ in $\tilde{\mathcal{Q}}^{\epsilon}$
is fully specified by the following conditions,
\begin{align}
    \rho_{0} = \ketbra{0}{0}, &\\
    \mathcal{E}_{h}(\rho) = \mathcal{P}_{h} (H \rho H^\dagger), \, \norm{\mathcal{P}_{h} - \mathcal{I}}_\diamond \leq \epsilon, & \label{eq:ehrho}\\
    \mathcal{E}_{s}(\rho) = \mathcal{P}_{s} (S \rho S^\dagger), \, \norm{\mathcal{P}_{s} - \mathcal{I}}_\diamond \leq \epsilon, & \label{eq:esrho}\\
	\mathcal{E}_{t}(\rho) = \mathcal{P}_{t} (T \rho T^\dagger), \, \norm{\mathcal{P}_{t} - \mathcal{I}}_\diamond \leq \epsilon, &\\
    \mathcal{M}_0 = \{\ketbra{0}{0}, \ketbra{1}{1}\}, &
\end{align}
where $\mathcal{P}_h, \mathcal{P}_s, \mathcal{P}_t$ are Pauli channels.
The simpler model class $\tilde{\mathcal{Q}}^{\epsilon}$ considers perfect zero state preparation, perfect computational basis measurement, and noisy Hadamard, phase, and $T$ gates that are subject to Pauli noise channels.
And every world model in the simpler model class $\tilde{\mathcal{Q}}^{\epsilon}$ is fully determines by the three Pauli noise channels $\mathcal{P}_h, \mathcal{P}_s, \mathcal{P}_t$.
In the following lemma, we state the unlearnability of the simpler model class $\tilde{\mathcal{Q}}^{\epsilon}$.

\begin{lemma}[Gate-dependent Pauli noise is unlearnable with Hadamard+S+T gates and $\ket{0}$] \label{lem:simplemodelclass-unlearnable}
Given $\epsilon > 0$. Consider a qubit system. Suppose we can prepare a perfect zero state $\ket{0}$, measure in the computational basis perfectly, and apply Hadamard gate, phase gate, and $T$ gate, where each gate is followed by an unknown gate-dependent Pauli noise channel that is $\epsilon$-close to the identity channel.
It is impossible for any algorithm to learn the gate-dependent Pauli noise channels to arbitrarily small error. Equivalently, $\tilde{\mathcal{Q}}^{\epsilon}$ is unlearnable.
\end{lemma}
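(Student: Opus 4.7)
My plan is to exhibit two world models $\cW_A, \cW_B \in \tilde{\cQ}^{\epsilon}$ that are weakly indistinguishable (Definition~\ref{def:weakly-indist}) but not equivalent, and then invoke Proposition~\ref{prop:indist->unlearn}. Both worlds share the same perfect $\rho_0 = \ketbra{0}{0}$, the same perfect computational-basis POVM $\cM_0$, and the same fixed Pauli channels $\cP_s,\cP_t$ chosen close enough to $\cI$ that all their eigenvalues are $\ge 1/2$; the two worlds will differ only in the Hadamard Pauli channel $\cP_h$. I pick $\cP_h^A$ with eigenvalues $(\lambda_X^h, \lambda_Y^h, \lambda_Z^h)$ close to $1$, satisfying $\lambda_X^h \neq \lambda_Z^h$, and let $\cP_h^B$ be the Pauli channel obtained by swapping $\lambda_X^h \leftrightarrow \lambda_Z^h$, so both worlds lie in $\tilde{\cQ}^{\epsilon}$.

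To prove weak indistinguishability I will use a Heisenberg-picture graphical argument. Because $\rho_0 = (I+Z)/2$ and the measurement records only $\Tr(Z\rho_{\mathrm{final}})$, the probability of the zero outcome for any experiment $(y_1,\ldots,y_L)$ is $\tfrac{1}{2}(1+a_Z)$, where $a_Z$ is the $Z$-component of $\cE_{y_1}^{*}\circ\cdots\circ\cE_{y_L}^{*}(Z)$. Each dual channel $\cE_g^{*}(A) = G^{\dagger}\cP_g^{*}(A)G$ preserves the traceless subspace, so I can represent it by a $3\times 3$ real matrix $M_g$ in the Pauli basis $\{X,Y,Z\}$; direct computation gives
\begin{equation}
M_h = \begin{pmatrix} 0 & 0 & \lambda_Z^h \\ 0 & -\lambda_Y^h & 0 \\ \lambda_X^h & 0 & 0 \end{pmatrix},\qquad
M_s = \begin{pmatrix} 0 & \lambda_Y^s & 0 \\ -\lambda_X^s & 0 & 0 \\ 0 & 0 & \lambda_Z^s \end{pmatrix},\qquad
M_t = \begin{pmatrix} \lambda_X^t/\sqrt{2} & \lambda_Y^t/\sqrt{2} & 0 \\ -\lambda_X^t/\sqrt{2} & \lambda_Y^t/\sqrt{2} & 0 \\ 0 & 0 & \lambda_Z^t \end{pmatrix}.
\end{equation}
The key structural fact is that $M_s$ and $M_t$ are block diagonal, with the $\{X,Y\}$-block decoupled from the $\{Z\}$-block, and $H$ is the only gate whose matrix couples the two blocks, via $(M_h)_{1,3} = \lambda_Z^h$ and $(M_h)_{3,1} = \lambda_X^h$.

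Expanding $a_Z = (M_{y_1}\cdots M_{y_L})_{3,3}$ as a sum over index paths $3 \to k_1 \to \cdots \to k_{L-1} \to 3$, every contributing path starts and ends at index $3$. Since only Hadamards can leave or enter the $\{Z\}$-block, every path contains exactly as many $3\to 1$ transitions (each weighted by $\lambda_Z^h$) as $1\to 3$ transitions (each weighted by $\lambda_X^h$). The remaining Hadamard contributions are $(-\lambda_Y^h)^m$ from $2\to 2$ self-loops, and all other factors depend only on $\lambda^s,\lambda^t$. Hence every path's weight has the form $(\lambda_X^h\lambda_Z^h)^{k}\,(-\lambda_Y^h)^{m}\cdot C$, manifestly symmetric under $\lambda_X^h \leftrightarrow \lambda_Z^h$. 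Summing over paths shows that $\cW_A$ and $\cW_B$ produce identical outcome distributions for every experiment, so they are weakly indistinguishable.

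For non-equivalence I will adapt the argument of Appendix~\ref{sec:unlearnability-Qeps}. Any unitary or antiunitary $V$ relating $\cW_A$ to $\cW_B$ must preserve $\ketbra{0}{0}$ and $\ketbra{1}{1}$, so $V$ is of the form $e^{\rmi\alpha Z}$ or $e^{\rmi\alpha Z}K$ up to a global phase; the requirement that $V$ also conjugate $\cE_s$ to itself and $\cE_h^A$ to $\cE_h^B$ (both of the form ``ideal gate followed by a Pauli channel with eigenvalues $\ge 1/2$'') forces $V\propto I$ via the same Pauli-algebra reduction as for $\cQ^{\epsilon}$, but then $\cP_h^A = \cP_h^B$, contradicting $\lambda_X^h \neq \lambda_Z^h$. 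Proposition~\ref{prop:indist->unlearn} then shows $\tilde{\cQ}^{\epsilon}$ is unlearnable. The main technical point I anticipate---and the heart of the graphical method alluded to in the proof sketch of Theorem~\ref{prop:Clif+T}---is verifying the block-diagonal decoupling of $M_s$ and $M_t$, since the whole path-counting argument rests on the fact that $S$ and $T$ leave the $Z$-axis invariant in the Heisenberg picture.
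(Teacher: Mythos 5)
Your proposal is correct and takes essentially the same route as the paper: the same reduction via Proposition~\ref{prop:indist->unlearn}, the same pair of world models differing only by swapping $\lambda^h_X \leftrightarrow \lambda^h_Z$ (the paper's $X$-error-vs-$Z$-error pair is the special case), and the same key observation—phrased by you as a transfer-matrix path sum rather than the paper's ``particles on a graph''—that $S$ and $T$ leave the $Z$ axis invariant, so every outcome probability depends on the Hadamard noise only through $\lambda^h_X\lambda^h_Z$ and $\lambda^h_Y$. The non-equivalence step likewise mirrors the paper's Wigner-rigidity argument from Appendix~\ref{sec:unlearnability-Qeps}.
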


The proof of Lemma~\ref{lem:simplemodelclass-unlearnable} is based on Proposition~\ref{prop:indist->unlearn}.
The proposition states that when two world models in a model class are weakly indistinguishable and are not equivalent, then the model class is unlearnable.
We will identify two such world models in $\tilde{\mathcal{Q}}^{\epsilon}$, then apply Proposition~\ref{prop:indist->unlearn} to conclude the proof.
On a high level, the existence of two weakly indistinguishable world models arises from the fact that the geometric structure for the action of Hadamard gate, phase gate, and $T$ gate are well-aligned with Pauli noise channels, causing some noise to be indistinguishable from another.

%\textcolor{red}{[It seems to me that Lemma \ref{lem:simplemodelclass-unlearnable} is a more impressive result than Theorem \ref{prop:Clif+T-detailed}. Intuitively, it seems that having a perfect preparation and measurement should make learning easier, and that relaxing to a noisy preparation and measurement should make learning harder. Yet here Lemma \ref{lem:simplemodelclass-unlearnable} is derived in passing to reach Theorem \ref{prop:Clif+T-detailed}. Am I missing something?] Robert: Lemma \ref{lem:simplemodelclass-unlearnable} is stronger. So we can use the unlearnability of $\tilde{\mathcal{Q}}^{\epsilon}$ to deduce the unlearnability of ${\mathcal{Q}}^{\epsilon}$. }
% \textcolor{red}{[My question is about the presentation. Should it be structured as Theorem + Corollary rather than Lemma + Proposition, with what is now Lemma 2 as the featured result?] Robert: I think we could add in the proposition with the ability to prepare 0 state perfectly. Then the proposition will be strictyly stronger.}

\subsection{Actions of Clifford+T on Pauli operators}

We begin by illustrating the geometric structure in any quantum experiments one could perform.
Every experiment under the given action spaces $\tilde{\mathcal{X}}, \tilde{\mathcal{Y}}, \tilde{\mathcal{Z}}$ is $x = 0, y_1, \ldots, y_L \in \{h, s, t\}, z = 0$.
Now, we consider the actions of $\mathcal{E}_h$, $\mathcal{E}_s$, and $\mathcal{E}_t$ on Pauli operators,
\begin{align}
\mathcal{E}_h(X) &= \lambda^h_Z Z, & \mathcal{E}_s(X) &= \lambda^s_Y Y, & \mathcal{E}_t(X) &= \frac{1}{\sqrt{2}} \lambda^t_X X + \frac{1}{\sqrt{2}} \lambda^t_Y Y,  \\
\mathcal{E}_h(Y) &= -\lambda^h_Y Y,& \mathcal{E}_s(Y) &= -\lambda^s_X X, & \mathcal{E}_t(Y) &= -\frac{1}{\sqrt{2}} \lambda^t_X X + \frac{1}{\sqrt{2}} \lambda^t_Y Y,\\
\mathcal{E}_h(Z) &= \lambda^h_X X,& \mathcal{E}_s(Z) &= \lambda^s_Z Z, & \mathcal{E}_t(Z) &= \lambda^t_Z Z,
\end{align}
where $\lambda^h_X, \lambda^h_Y, \lambda^h_Z$ are the Pauli eigenvalues that define the Pauli noise channel for the Hadamard gate, $\lambda^s_X, \lambda^s_Y, \lambda^s_Z$ defines the Pauli noise channel for the phase gate, and $\lambda^t_X, \lambda^t_Y, \lambda^t_Z$ defines the Pauli noise channel for the $T$ gate.
Every world model in $\mathcal{Q}^\epsilon$ is specified by the nine real values $\lambda^h_X, \lambda^h_Y, \lambda^h_Z, \lambda^s_X, \lambda^s_Y, \lambda^s_Z, \lambda^t_X, \lambda^t_Y, \lambda^t_Z$.

For an experiment specified by $x = 0, y_1, \ldots, y_L \in \{h, t\}, z = 0$, the probability that the experimental outcome is~$0$ can be written as
\begin{equation}
\frac{1}{2} + \frac{1}{2} \Tr\left( Z (\mathcal{E}_{y_L} \circ \ldots \circ \mathcal{E}_{y_1})(Z) \right).
\end{equation}
This follows from the identities $\ketbra{0}{0} = (I+Z)/2, \ketbra{1}{1} = (I-Z)/2$.
The probability that we obtain~$1$ as the experimental outcome is equal to one minus the probability for obtaining $0$,
\begin{equation}
\frac{1}{2} - \frac{1}{2} \Tr\left( Z (\mathcal{E}_{y_L} \circ \ldots \circ \mathcal{E}_{y_1})(Z) \right).
\end{equation}
The actions of $\mathcal{E}_h$, $\mathcal{E}_s$, and $\mathcal{E}_t$ on Pauli operators are now useful for understanding the term $\Tr\left( Z (\mathcal{E}_{y_L} \circ \ldots \circ \mathcal{E}_{y_1})(Z) \right)$.

\subsubsection{Experiments as multiple particles traversing a graph}

We represent the action of the three unitaries $H, S, T$ on the three Pauli operators $X, Y, Z$ as a small graph with three nodes corresponding to $X, Y, Z$.
We can consider each experiment as particles traversing the graph.
At the start of the experiment, a single particle resides on the node $Z$ with an initial value of $1$.
Applying $\mathcal{E}_h$ corresponds to moving the particle from $Z \rightarrow X, Y \rightarrow Y, X \rightarrow Z$, and the value of the particle will be multiplied by $\lambda_X^h, -\lambda_Y^h, \lambda_Z^h$ accordingly.
Applying $\mathcal{E}_s$ corresponds to moving the particle from $Y \rightarrow X, X \rightarrow Y, Z \rightarrow Z$, and the value of the particle will be multiplied by $-\lambda_X^s, \lambda_Y^s, \lambda_Z^s$ accordingly.
Applying $\mathcal{E}_t$ corresponds to a more complicated action.
If the particle resides on $X$, then the particle will
split into two particles: one on $X$ and one on $Y$.
The value of the duplicated particle on $X$ and $Y$ will be equal to the value of the original particle multiplied by $\frac{1}{\sqrt{2}} \lambda^t_X$ and $\frac{1}{\sqrt{2}} \lambda^t_Y$ accordingly.
Similarly, if the particle resides on $Y$, the particle will
split into two particles on $X$ and $Y$.
The value of the duplicated particle on $X$ and $Y$ will be equal to the value of the original particle multiplied by $-\frac{1}{\sqrt{2}} \lambda^t_X$ and $\frac{1}{\sqrt{2}} \lambda^t_Y$ accordingly.
If the particle resides on $Z$, then the particle will stay at $Z$, and the value of the particle will be multiplied by $\lambda_Z^t$.
After many applications of the CPTP maps $\mathcal{E}_{y}$, there will be many particles moving on the three-node graph. The number of particles is exponential in the number of $T$ gates applied.
At the end of the experiment, we sum up the values of particles residing at $Z$ to obtain $\Tr\left( Z (\mathcal{E}_{y_L} \circ \ldots \circ \mathcal{E}_{y_1})(Z) \right)$.

% \textcolor{red}{[Won't the argument work the same way if we replace $T$ by any rotation about the $Z$-axis, i.e. any matrix diagonal in the computational basis? Then when particles split we get values $ \lambda^t_X\cos\theta$ and $ \lambda^t_Y\sin\theta$ or $- \lambda^t_X\sin\theta$ and $\lambda^t_Y\cos\theta $. Shouldn't we point this out?]}

\subsubsection{Polynomial forms and unlearnability}

By induction, we can show that the value of each particle residing at $X, Y, Z$ can be written as
\begin{align}
    \lambda^h_X (\lambda^h_Z \lambda^h_X)^{k_X} &\cdot f_X(\lambda^h_Y, \lambda^s_X, \lambda^s_Y, \lambda^s_Z, \lambda^t_X, \lambda^t_Y, \lambda^t_Z), \\
    \lambda^h_X (\lambda^h_Z \lambda^h_X)^{k_Y} &\cdot f_Y(\lambda^h_Y, \lambda^s_X, \lambda^s_Y, \lambda^s_Z, \lambda^t_X, \lambda^t_Y, \lambda^t_Z), \\
    (\lambda^h_Z \lambda^h_X)^{k_Z} &\cdot f_Z(\lambda^h_Y, \lambda^s_X, \lambda^s_Y, \lambda^s_Z, \lambda^t_X, \lambda^t_Y, \lambda^t_Z),
\end{align}
accordingly, where $k_X, k_Y, k_Z$ are non-negative integers, $f_X, f_Y, f_Z$ are monomials.
We perform induction according to the dynamics of the particles from the start of the experiment to finish.
In the base case, there are only one particle residing at $Z$ with a value of $1$, hence the claimed statement holds.
For each induction step, the traversal/duplication rules guarantee that the above form of values is preserved.
At the end of the experiment, when we sum up the values of particles residing at $Z$, we have the following form for the $Z$ expectation value,
\begin{align}
    \Tr\left( Z (\mathcal{E}_{y_L} \circ \ldots \circ \mathcal{E}_{y_1})(Z) \right) &= \sum_{k} (\lambda^h_Z \lambda^h_X)^{k} f_k(\lambda^h_Y, \lambda^s_X, \lambda^s_Y, \lambda^s_Z, \lambda^t_X, \lambda^t_Y, \lambda^t_Z)\\
    &= p(\lambda^h_Z \lambda^h_X, \lambda^h_Y, \lambda^s_X, \lambda^s_Y, \lambda^s_Z, \lambda^t_X, \lambda^t_Y, \lambda^t_Z),
\end{align}
where $k$ sums over non-negative integers, $f_k$ is a monomial, $p$ is a polynomial.
Together, we can see that the probability for every experimental outcome is a polynomial function in $\lambda^h_Z \lambda^h_X, \lambda^h_Y, \lambda^s_X, \lambda^s_Y, \lambda^s_Z, \lambda^t_X, \lambda^t_Y, \lambda^t_Z$.
Hence, when two world models have the same $\lambda^h_Z \lambda^h_X$ and the other $\lambda$'s are also equal, the two world models are weakly indistinguishable.

The key point is that a particle departs from $Z$ only by moving to $X$ through the action of the Hadamard gate, which multiplies its value by $\lambda^h_X$, and a particle arrives at $Z$ only by moving from $X$ through the action of a Hadamard gate, which multiplies its value by $\lambda^h_Z$. Therefore the value acquired by any particle that is initialized at $Z$ and returns to $Z$ just before the measurement must have the same number of factors of  $\lambda^h_Z$ as factors of $\lambda^h_X$. This same argument still applies if we replace the $T$ gate by any gate that is diagonal in the $Z$ basis, or it we add to the world model additional gates that are diagonal in the $Z$ basis.

From this observation it follows that there exist two distinct world models in the models class $\tilde{\mathcal{Q}}^{\epsilon}$ which are weakly indistinguishable.
Recall some basic properties of Pauli channels \cite{sacchi2005optimal}:
\begin{align}
    \mathcal{P}(\rho) &= p_I \rho + p_X X \rho X + p_Y Y \rho Y + p_Z Z \rho Z, \\
    &= \frac{1}{2}\left(\Tr(\rho) I + \lambda_X \Tr( X\rho) X + \lambda_Y \Tr( Y \rho) Y  + \lambda_Z \Tr( Z \rho) Z\right), \\
    \norm{\mathcal{P} - \mathcal{P}'}_\diamond &= |p_I - p_I'| + |p_X - p_X'| + |p_Y - p_Y'| + |p_Z - p_Z'|,
\end{align}
where $(p_I, p_X, p_Y, p_Z)$ is a probability distribution, $\lambda_X = 1 - 2 p_Y - 2 p_Z, \lambda_Y = 1 - 2p_X - 2p_Z, \lambda_Z = 1 - 2p_X - 2p_Y$, and $\mathcal{P}, \mathcal{P}'$ are two Pauli channels.
Each world model in $\tilde{\mathcal{Q}}^{\epsilon}$ is specified by $\lambda^h_X, \lambda^h_Y, \lambda^h_Z, \lambda^s_X, \lambda^s_Y, \lambda^s_Z, \lambda^t_X, \lambda^t_Y, \lambda^t_Z$, or equivalently $p^h_X, p^h_Y, p^h_Z, p^s_X, p^s_Y, p^s_Z, p^t_X, p^t_Y, p^t_Z$ as $p_I = 1 - p_X - p_Y - p_Z$.
We consider two world models $\mathcal{W}^A, \mathcal{W}^B$ in $\tilde{\mathcal{Q}}^{\epsilon}$ to be defined by
\begin{align}
    p^{h, A}_X &= 0, \,\, p^{h, A}_Z = \epsilon, \,\, p^{h, A}_Y = p^{s, A}_X = p^{s, A}_Y = p^{s, A}_Z = p^{t, A}_X = p^{t, A}_Y = p^{t, A}_Z = 0, \\
    p^{h, B}_X &= \epsilon, \,\, p^{h, B}_Z = 0, \,\, p^{h, B}_Y = p^{s, B}_X = p^{s, B}_Y = p^{s, B}_Z = p^{t, B}_X = p^{t, B}_Y = p^{t, B}_Z = 0.
\end{align}
The two world models correspond to having $Z$-error or having $X$-error after Hadamard gate. Using the basic properties, it is not hard to check that both $\mathcal{W}^A, \mathcal{W}^B$ belong to $\tilde{\mathcal{Q}}^{\epsilon}$.
Furthermore, $\lambda^{h, A}_X \lambda^{h, A}_Z = \lambda^{h, B}_X \lambda^{h, B}_Z,$ and all other $\lambda$'s are equal.
Because the two world models are distinct, using a similar proof as in Appendix~\ref{sec:unlearnability-Qeps}, we can show that $\mathcal{W}^A \not\equiv \mathcal{W}^B$, i.e., no unitary or anti-unitary transformation exists that relates $\mathcal{W}^A$ and $\mathcal{W}^B$.
Hence, $\mathcal{W}^A, \mathcal{W}^B$ are two world models that are weakly indistinguishable but are not equivalent.
Using Proposition~\ref{prop:indist->unlearn}, we conclude that $\tilde{\mathcal{Q}}^{\epsilon}$ is unlearnable.

\subsection{Unlearnability of simple model class implies unlearnability of target model class}

We are now ready to prove Theorem~\ref{prop:Clif+T-detailed}, i.e., the model class $\mathcal{Q}^\epsilon$ is unlearnable.
The basic structure of the proof is the following. We first utilize the equivalence of learnability after adding some actions, stated in Proposition~\ref{prop:comp-learnability-state},~\ref{prop:comp-learnability-map},~and~\ref{prop:mix-learnability-state}, to show that a model class $\mathcal{R}^\epsilon$, which is a subset of $\mathcal{Q}^\epsilon$, is unlearnable.
Then, we can use the monotonicity of unlearnability, stated in Proposition~\ref{prop:mono-learnability}, to show that $\mathcal{Q}^\epsilon$ is unlearnable because $\mathcal{R}^\epsilon \subseteq \mathcal{Q}^\epsilon$ is unlearnable.

\subsubsection{Composing all Clifford gates}

We begin with the equivalence of learnability after adding composite CPTP maps as stated in Proposition~\ref{prop:comp-learnability-map}.
We will compose the two actions $h, s \in \tilde{\mathcal{Y}} = \{h, s, t\}$, which corresponds to Hadamard and phase gate subject to gate-dependent Pauli noise; see $\mathcal{E}_h$ and $\mathcal{E}_s$ in Equation~\eqref{eq:ehrho}~and~\eqref{eq:esrho}.
Because Hadamard and phase gates form a universal gate set for the Clifford group, we can construct any Clifford unitary $C$ from a composition of Hadamard gate $H$ and phase gate $S$. Let $L^*$ be the required sequence length to generate every element in the Clifford unitary group.
For any Clifford unitary $C$, we know that $f_C(P) = C P C^\dagger$ is equal to a Pauli operator (up to a phase of $\pm 1$) for any Pauli operator $P \in \{X, Y, Z\}$. Furthermore, up to the phase, $f_C$ is a permutation function over $\{X, Y, Z\}$.
Hence, for any Clifford unitary $C$, Pauli channel $\mathcal{P}$, and quantum state $\rho$,
\begin{equation} \label{eq:commutation-pauli-clifford}
    C \mathcal{P}(\rho) C^\dagger = \mathcal{Q}(C \rho C^\dagger),
\end{equation}
where $\mathcal{Q}$ is a different Pauli channel satisfying $\norm{\mathcal{Q} - \mathcal{I}}_\diamond = \norm{\mathcal{P} - \mathcal{I}}_\diamond$.

Now, we compose new CPTP maps based on the model class $\tilde{\mathcal{Q}}^{\epsilon^*}$, where $\epsilon^*$ will be chosen to be small enough later. The consideration of $\epsilon^*$ is needed to ensure that all physical operations have a small enough error.
Consider a world model $\mathcal{W} = \left(\{\rho_x\}_{x \in \tilde{\mathcal{X}}}, \{\mathcal{E}_y\}_{y \in \tilde{\mathcal{Y}}}, \{\mathcal{M}_z\}_{z \in \mathcal{Z}} \right)$ in $\tilde{\mathcal{Q}}^{\epsilon^*}$.
For every Clifford unitary $C$, there exists $y_1, \ldots, y_L \in \{h, s\}$ with $L \leq L^*$, such that the composition of the Hadamard and phase gate generates the Clifford unitary $C$.
From the commutation relation between Pauli channel and Clifford unitary in Equation~\eqref{eq:commutation-pauli-clifford} we see that
\begin{equation}
    (\mathcal{E}_{y_L} \circ \ldots \circ \mathcal{E}_{y_1})(\rho) = (\mathcal{Q}_L \circ \ldots \circ \mathcal{Q}_1)(C\rho C^\dagger),
\end{equation}
where $\mathcal{Q}_1, \ldots, \mathcal{Q}_L$ are Pauli channels, and $\norm{\mathcal{Q}_\ell - \mathcal{I}} \leq \epsilon^*, \forall \ell \in \{1, \ldots, L\}$.
It is not hard to check that the composition of Pauli channel is still a Pauli channel, hence $(\mathcal{Q}_L \circ \ldots \circ \mathcal{Q}_1)$ is a Pauli channel.
Using a telescoping sum and the triangle inequality, we have
\begin{align}
    \norm{\mathcal{Q}_L \circ \ldots \circ \mathcal{Q}_1 - \mathcal{I}}_\diamond &\leq \norm{(\mathcal{Q}_L - \mathcal{I})\circ \mathcal{Q}_{L-1} \circ \ldots \circ \mathcal{Q}_1}_\diamond + \norm{(\mathcal{Q}_{L-1} \circ \ldots \circ \mathcal{Q}_1) - \mathcal{I}}_\diamond\\
    &\leq \norm{\mathcal{Q}_L - \mathcal{I}}_\diamond + \norm{(\mathcal{Q}_{L-1} \circ \ldots \circ \mathcal{Q}_1) - \mathcal{I}}_\diamond\\
    &\leq \epsilon^* + \norm{(\mathcal{Q}_{L-1} \circ \ldots \circ \mathcal{Q}_1) - \mathcal{I}}_\diamond\\
    &\leq 2 \epsilon^* + \norm{(\mathcal{Q}_{L-2} \circ \ldots \circ \mathcal{Q}_1) - \mathcal{I}}_\diamond \leq \ldots \leq L \epsilon^* \leq L^* \epsilon^*.
\end{align}

We now add in a new action for each element in the Clifford group, excluding the Hadamard and phase gate $H, S$.
After adding these new actions, our action space for CPTP maps has expanded from $\tilde{\mathcal{Y}} = \{h, s, t\}$ to $\mathcal{Y} = \mathcal{C} \cup \{T\}$, a union of the Clifford group and the $T$ gate $\{T\}$.
We call the new model class with these additional CPTP maps $\tilde{\mathcal{R}}^{\epsilon^*}$.
From Proposition~\ref{prop:comp-learnability-map}, we know that adding these new actions do not affect the learnability.
Hence $\tilde{\mathcal{R}}^{\epsilon^*}$ is unlearnable.
To recap, $\tilde{\mathcal{R}}^{\epsilon^*}$ is a model class over $\tilde{\mathcal{X}} = \{0\}, \mathcal{Y} = \{y_U\}_{U \in \mathcal{C} \cup \{T \}}, \mathcal{Z} = \{0\}, \mathcal{B} = \{0, 1\}$, where $\mathcal{C}$ is the Clifford group. Every world model $\mathcal{W} = \left(\{\rho_x\}_{x \in \tilde{\mathcal{X}}}, \{\mathcal{E}_y\}_{y \in \mathcal{Y}}, \{\mathcal{M}_z\}_{z \in \mathcal{Z}} \right)$ in $\tilde{\mathcal{R}}^{\epsilon^*}$ satisfies the following condition,
\begin{align}
    \rho_{0} &= \ketbra{0}{0}, \\
    \mathcal{E}_{y_U}(\rho) &= \mathcal{P}_{U} (U \rho U^\dagger), \, \norm{\mathcal{P}_{U} - \mathcal{I}}_\diamond \leq L^* \epsilon^*, &\,\, \forall U \in \mathcal{C} \cup \{T\},\\
    \mathcal{M}_0 &= \{\ketbra{0}{0}, \ketbra{1}{1}\},
\end{align}
where $\mathcal{P}_U, \forall U \in \mathcal{C} \cup \{T\}$ are Pauli channels.
However, $\tilde{\mathcal{R}}^{\epsilon^*}$ does not contain all the world models that satisfy the above conditions.

%\begin{remark}
%The same argument works if we add in any unitary rotation about the $Z$-axis, i.e., any matrix diagonal in the computational basis. Under the $Z$-axis rotation, each particle on the $X$ and $Y$ node split into two particles on $X$ and $Y$, and particles on the $Z$ node stay at the $Z$ node. The polynomial forms would still be degenerate when we add an arbitrary number of unitary rotation about the $Z$-axis. Therefore the gate-dependent Pauli noise is still unlearnable with Hadamard gate, phase gate, and an arbitrary number of unitary rotation about the $Z$-axis.
%\end{remark}

\subsubsection{Composing all quantum states}

The next step in the proof is to use the equivalence of learnability after adding composite initial states and adding mixtures of initial states as in  Proposition~\ref{prop:comp-learnability-state}~and~\ref{prop:mix-learnability-state}.
We compose the zero initial state $\ketbra{0}{0}$ with the Clifford+T gates to generate all pure states up to some small errors.
Then, we can add mixtures of the pure states to generate all quantum states up to some small errors.
We consider $L^{s}$ to be the minimum integer such that for all pure states $\ket{\psi}$, there exists a sequence $U_1, \ldots, U_\ell$ consisting of Clifford gates and $T$ gates with $\ell \leq L^s$ and
\begin{equation} \label{eq:approxSKthm}
    \norm{\ketbra{\psi}{\psi} - (U_\ell \ldots U_1) \ketbra{0}{0} (U_1 \ldots U_\ell)^\dagger}_1 \leq \epsilon / 2.
\end{equation}
From the Solovay-Kitaev theorem \cite{dawson2005solovay} and follow-up works \cite{harrow2002efficient, selinger2012efficient}, $L^{s} = \mathcal{O}(\log(1 / \epsilon))$.
From Proposition~\ref{prop:comp-learnability-state}, the model class with the additional states is constructed by specifying the set of additional actions $\Xi$, a constant $L$, and a function $f$ with $f(\xi) = (x, y_1, \ldots, y_\ell)$.
Here, we consider $\Xi = \{ \ket{\psi} \}_{\ket{\psi}: \mathrm{pure\,state}}$ to be the space of all pure states, $L = L^s$, and we define $f$ to map a pure state $\ket{\psi}$ to $(0, y_{U_1}, \ldots, y_{U_\ell})$, where $U_1, \ldots, U_\ell$ are the unitaries for approximating the pure state $\ket{\psi}$ according to Eq.~\eqref{eq:approxSKthm}.
For every world model $\mathcal{W} = \left(\{\rho_x\}_{x \in \tilde{\mathcal{X}}}, \{\mathcal{E}_y\}_{y \in \mathcal{Y}}, \{\mathcal{M}_z\}_{z \in \mathcal{Z}} \right)$ in $\tilde{\mathcal{R}}^{\epsilon^*}$,  we have
\begin{align}
    &\norm{\ketbra{\psi}{\psi} - (\mathcal{E}_{y_{U_\ell}} \circ \ldots \circ \mathcal{E}_{y_{U_1}}) (\ketbra{0}{0})}_1
    \leq \norm{\ketbra{\psi}{\psi} - (U_\ell \ldots U_1) \ketbra{0}{0} (U_1 \ldots U_\ell)^\dagger}_1\\
    &+ \norm{(\mathcal{E}_{y_{U_\ell}} \circ \ldots \circ \mathcal{E}_{y_{U_1}}) (\ketbra{0}{0}) - (U_\ell \ldots U_1) \ketbra{0}{0} (U_1 \ldots U_\ell)^\dagger}_1\\
    &\leq \epsilon/2 + \ell L^* \epsilon^* \leq \epsilon/2 + L^s L^* \epsilon^*, \label{eq:approxboundstate}
\end{align}
where the second-to-last inequality uses Eq.~\eqref{eq:approxSKthm}, a telescoping sum, and the triangle inequality, and the last inequality uses $\ell \leq L^s$.
We have now added actions associated to generating arbitrary pure states.
Proposition~\ref{prop:comp-learnability-state} shows that adding these pure states will maintain the unlearnability.
Now, for all mixed states, we only need to make use of the fact that any single-qubit mixed state $\rho$ can be written as $\rho = p\ketbra{\psi}{\psi} + (1 - p) \ketbra{\phi}{\phi}$ for some $0 < p < 1$ and two orthogonal pure states $\ket{\psi}, \ket{\phi}$.
Suppose $y_{U_1}, \ldots, y_{U_\ell}$ specifies the unitaries to approximate $\ket{\psi}$ and $y_{U_1}', \ldots, y_{U_{\ell'}}'$ specifies the unitaries to approximate $\ket{\phi}$. Then by triangle inequality and Eq.~\eqref{eq:approxboundstate},
\begin{equation}
    \norm{\rho - \left[ p (\mathcal{E}_{y_{U_\ell}} \circ \ldots \circ \mathcal{E}_{y_{U_1}}) (\ketbra{0}{0}) + (1 - p) (\mathcal{E}_{y_{U_{\ell'}}'} \circ \ldots \circ \mathcal{E}_{y_{U_1}'}) (\ketbra{0}{0}) \right] }_1 \leq \epsilon/2 + L^s L^* \epsilon^*.
\end{equation}
Using Proposition~\ref{prop:mix-learnability-state}, we can add all the mixed states without altering the unlearnability.
We have now create a model class $\mathcal{R}^{\epsilon^*}$ over $\mathcal{X} = \{x_\sigma\}_{\sigma: \mathrm{state}}, \mathcal{Y} = \{y_U\}_{U \in \mathcal{C} \cup \{T \}}, \mathcal{Z} = \{0\}, \mathcal{B} = \{0, 1\}$, where $\mathcal{C}$ is the Clifford group. Every world model $\mathcal{W} = \left(\{\rho_x\}_{x \in \mathcal{X}}, \{\mathcal{E}_y\}_{y \in \mathcal{Y}}, \{\mathcal{M}_z\}_{z \in \mathcal{Z}} \right)$ in $\mathcal{R}^{\epsilon^*}$ satisfies
\begin{align}
    \norm{\rho_{x_\sigma} - \sigma}_1 &\leq \epsilon/2 + L^s L^* \epsilon^*, &\,\, \forall \sigma: \mathrm{state},\\
    \rho_{x_{\ketbra{0}{0}}} &= \ketbra{0}{0}, &\\
    \mathcal{E}_{y_U}(\rho) &= \mathcal{P}_{U} (U \rho U^\dagger), \, \norm{\mathcal{P}_{U} - \mathcal{I}}_\diamond \leq L^* \epsilon^*, &\,\, \forall U \in \mathcal{C} \cup \{T\},\\
    \mathcal{M}_0 &= \{\ketbra{0}{0}, \ketbra{1}{1}\}. &
\end{align}
Furthermore, from the fact that $\tilde{\mathcal{R}}^{\epsilon^*}$ is unlearnable, and Proposition~\ref{prop:comp-learnability-state}~and~\ref{prop:mix-learnability-state}, we have $\mathcal{R}^{\epsilon^*}$ is unlearnable.
Now, if we choose $\epsilon^* = \epsilon / (2 L^s L^*)$, then every world model in $\mathcal{R}^{\epsilon^*}$ is in $\mathcal{Q}^{\epsilon}$.
Hence, using monotonicity of learnability in Proposition~\ref{prop:mono-learnability}, we have $\mathcal{Q}^{\epsilon}$ is unlearnable. This concludes the proof of Theorem~\ref{prop:Clif+T-detailed} by recalling the equivalence of Theorem~\ref{prop:Clif+T-detailed} and the unlearnability of $\mathcal{Q}^{\epsilon}$.

\begin{remark}
By tracing through the proof, one can see that the unavoidable noise floor for learning the gate-dependent Pauli noise channel is $\epsilon^*$, which is of order $\epsilon / \log(1 / \epsilon)$.
\end{remark}

% \newpage
\section{Noise and unlearnability}
\label{sec:noisystate}

We now give two examples of unlearnable model classes.
Consider $d$-dimensional world models.
We focus on the action spaces $\mathcal{X} = \{\sigma\}_{\sigma:\mathrm{state}}, \mathcal{Y} = \{U\}_{U \in \mathrm{SU}(d)}, \mathcal{Z} = \{0\}$.
Consider $\epsilon > 0$ and the model class $\mathcal{S}^{\epsilon}$ over $\mathcal{X}, \mathcal{Y}, \mathcal{Z}$ that consists of two world models $\mathcal{W}^A = \left(\{\rho^A_x\}_{x \in \mathcal{X}}, \{\mathcal{E}^A_y\}_{y \in \mathcal{Y}}, \{\mathcal{M}^A_z\}_{z \in \mathcal{Z}} \right)$, where
\begin{align}
    \rho^A_\sigma &= (1 - \epsilon) \sigma + \epsilon \frac{I}{d}, &\forall \sigma: \mathrm{state},\\
    \mathcal{E}^A_U(\rho) &= U \rho U^\dagger, &\forall U \in \mathrm{SU}(d),\\
    \mathcal{M}^A_0 &= \left\{ \ketbra{b}{b} \right\}_{b=1, \ldots, d},&
\end{align}
and $\mathcal{W}^B = \left(\{\rho^B_x\}_{x \in \mathcal{X}}, \{\mathcal{E}^B_y\}_{y \in \mathcal{Y}}, \{\mathcal{M}^B_z\}_{z \in \mathcal{Z}} \right)$, where
\begin{align}
    \rho^B_\sigma &= \sigma, &\forall \sigma: \mathrm{state},\\
    \mathcal{E}^B_U(\rho) &= U \rho U^\dagger, &\forall U \in \mathrm{SU}(d),\\
    \mathcal{M}^B_0 &= \left\{ (1-\epsilon)\ketbra{b}{b} + \epsilon \frac{I}{d} \right\}_{b=1, \ldots, d},&
\end{align}
Verbally, the model class $\mathcal{S}^{\epsilon}$ contains world models where we have perfect unitaries, but the initial state or the computational basis measurement is subject to a depolarization noise of strength~$\epsilon$.

We also consider another model class $\mathcal{S}^{\Omega}$ that encompasses world models where  states, unitaries, and the computational basis measurements are all noisy.
Formally, $\mathcal{S}^{\Omega}$ is over the same set of action spaces $\mathcal{X} = \{\sigma\}_{\sigma:\mathrm{state}}, \mathcal{Y} = \{U\}_{U \in \mathrm{SU}(d)}, \mathcal{Z} = \{0\}$.
And $\mathcal{S}^{\Omega}$ contains all world models $\mathcal{W} = \left(\{\rho_x\}_{x \in \mathcal{X}}, \{\mathcal{E}_y\}_{y \in \mathcal{Y}}, \{\mathcal{M}_z\}_{z \in \mathcal{Z}} \right)$, where $\rho_x$ is a quantum state, $\mathcal{E}_y$ is a CPTP map, and $\mathcal{M}_z$ is a POVM.
It is not hard to see that $\mathcal{S}^{\epsilon} \subset \mathcal{S}^{\Omega}$. Hence, the two world models $\mathcal{W}^A, \mathcal{W}^B$ are also in $\mathcal{S}^{\Omega}$.

We will now %begin to
provide a formal proof showing that $\mathcal{S}^{\epsilon}$ is unlearnable.
For any experiments $E = (\sigma, U_1, \ldots, U_\ell, 0)$, where $\sigma \in \mathcal{X}, U_1, \ldots, U_\ell \in \mathcal{Y}, 0 \in \mathcal{Z}$, we have
\begin{align}
    \Tr\left( M^A_{0 b} (\mathcal{E}^A_{U_\ell} \circ \ldots \circ \mathcal{E}^A_{U_1})(\rho^A_\sigma) \right) &= (1-\epsilon) \bra{b} U_\ell \ldots U_1 \sigma U_1^\dagger \ldots U_\ell^\dagger \ket{b} + \frac{\epsilon}{d}, \,\, \forall b \in \{1, \ldots, d\},\\
    \Tr\left( M^B_{0 b} (\mathcal{E}^B_{U_\ell} \circ \ldots \circ \mathcal{E}^B_{U_1})(\rho^B_\sigma) \right) &= (1-\epsilon) \bra{b} U_\ell \ldots U_1 \sigma U_1^\dagger \ldots U_\ell^\dagger \ket{b} + \frac{\epsilon}{d}, \,\, \forall b \in \{1, \ldots, d\}.
\end{align}
Hence the two world models are weakly indistinguishable.
It is also easy to check that the two world models are not equivalent to one another, hence they describe different physical realities.
By Theorem~\ref{prop:equiv}, if the two world models are equivalent, then there exists a unitary or anti-unitary transformation $U$ such that $\rho^B_\sigma = U \rho^A_\sigma U^{-1}$ for any quantum state $\sigma$.
We can use this relation to deduce that $\Tr((\rho^B_\sigma)^2) = \Tr((\rho^A_\sigma)^2)$, i.e.,  the purity of the two states $\rho^A_\sigma, \rho^B_\sigma$ must be equal, for all state $\sigma$.
However, for $\epsilon > 0$ and pure state $\sigma = \ketbra{\psi}{\psi}$, the purity of $\rho^A_\sigma$ is less than one, but the purity of $\rho^B_\sigma$ is one.
Hence, the two world models $\mathcal{W}^A, \mathcal{W}^B$ are not equivalent.

Because $\mathcal{S}^{\epsilon}$ contains two weakly indistinguishable world models that are not equivalent to one another, $\mathcal{S}^{\epsilon}$ is unlearnable according to Proposition~\ref{prop:indist->unlearn}.
Then using Proposition~\ref{prop:mono-learnability}, the monotonicity of unlearnability, $\mathcal{S}^{\Omega}$ is unlearnable because $\mathcal{S}^{\epsilon} \subset \mathcal{S}^{\Omega}$.

\section{Learning under gate-independent noise on Clifford gates}
\label{sec:gate-indep-Clif}

We give the detailed proof for Theorem~\ref{thm:learn-gate-indp} in this appendix.

\subsection{Review on unitary design for Clifford gates}

We recall the following well-known facts which relate Clifford gates to unitary designs.
These two properties will be used through to design our learning algorithm and to prove its correctness.
Furthermore, we will be using a substantial amount of tensor network manipulation, where only the final simplified results are shown.
Readers unfamiliar with tensor network manipulations could refer to reviews in \cite{bridgeman2017hand, chen2021exponential}.

\begin{lemma}[Unitary design for Clifford gates \cite{dankert2009tdesign, webb2015clifford, zhu2016clifford}]
Consider $n > 0$ to be the number qubits. Let $\mathcal{C}$ be the set of all Clifford gates over $n$ qubits and let $d = 2^n$. We have
\begin{align}
    \frac{1}{|\mathcal{C}|} \sum_{C \in \mathcal{C}} C A C^\dagger &= \frac{I}{d} \Tr(A),& (\mbox{1-design})\\
    \frac{1}{|\mathcal{C}|} \sum_{C \in \mathcal{C}} C^{\otimes 2} B (C^\dagger)^{\otimes 2} &= \frac{1}{d^2-1}\left(  I \Tr(B) + S \Tr(S B) - \frac{1}{d} S \Tr(B)  - \frac{1}{d} I \Tr(S B) \right), & (\mbox{2-design})
\end{align}
where $A$ is an $2^n \times 2^n$ complex matrix, $B$ is a complex tensor living in the tensor product space of two $2^n \times 2^n$ complex matrices, $S$ is the swap operator over the two tensor product components.
\end{lemma}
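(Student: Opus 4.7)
The plan is to prove both identities using the standard Schur-type averaging argument, where the key input is the commutant structure of the $k$-fold diagonal action of the Clifford group. For the 1-design identity ($k=1$), the commutant is spanned by the identity; for the 2-design identity ($k=2$), the commutant is two-dimensional, spanned by $I^{\otimes 2}$ and the swap $S$. This commutant fact is precisely the $2$-design property of the Clifford group established in the cited references \cite{dankert2009tdesign, webb2015clifford, zhu2016clifford}, so I would invoke it and derive the stated identities from it.

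First I would dispatch the $1$-design identity. The cleanest self-contained route uses the fact that the Clifford group contains the Pauli group $\mathcal{P}_n$. For any nonidentity Pauli operator $Q$, exactly half of the $4^n$ Paulis commute with $Q$ and half anticommute, so $\frac{1}{4^n}\sum_P P Q P^\dagger = 0$, while for $Q = I$ the same average gives $I$. Expanding $A$ in the Pauli basis $A = \sum_Q \frac{\Tr(Q^\dagger A)}{d} Q$ then yields $\frac{1}{4^n}\sum_P P A P^\dagger = \frac{\Tr(A)}{d} I$. Since this already annihilates everything orthogonal to $I$, averaging further over all Cliffords (which contain the Paulis as a subgroup and whose twirl is idempotent) leaves the right-hand side unchanged. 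Alternatively, one can observe that the Clifford twirl of $A$ commutes with every $C$ by invariance, so by Schur's lemma applied to the irreducible $\mathrm{SU}(d)$-subrepresentation on traceless matrices, it must be proportional to $I$, and the constant is fixed by taking the trace.

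For the 2-design identity, let $T(B) = \frac{1}{|\mathcal{C}|}\sum_{C \in \mathcal{C}} C^{\otimes 2} B (C^\dagger)^{\otimes 2}$. By construction $T(B)$ commutes with $C'^{\otimes 2}$ for every $C' \in \mathcal{C}$, so $T(B)$ lies in the commutant of the diagonal Clifford action on $(\mathbb{C}^d)^{\otimes 2}$. Invoking the commutant characterization cited above, this commutant is $\mathrm{span}\{I, S\}$, so $T(B) = \alpha(B)\, I + \beta(B)\, S$ for some scalars depending linearly on $B$. To solve for $\alpha, \beta$, I take traces against $I$ and $S$: since conjugation preserves $\Tr$ and $S$ commutes with $C^{\otimes 2}$ for every unitary $C$, we get $\Tr(T(B)) = \Tr(B)$ and $\Tr(S\,T(B)) = \Tr(S B)$, giving the $2\times 2$ system
\begin{align}
\alpha\, d^2 + \beta\, d &= \Tr(B), \\
\alpha\, d + \beta\, d^2 &= \Tr(S B).
\end{align}
Solving yields $\alpha = (\Tr(B) - \Tr(SB)/d)/(d^2-1)$ and $\beta = (\Tr(SB) - \Tr(B)/d)/(d^2-1)$, which rearranges exactly into the stated formula.

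The main obstacle is not the averaging argument itself but the commutant characterization at $k=2$: showing that no operator beyond the linear span of $I$ and $S$ commutes with every $C^{\otimes 2}$ for $C$ Clifford. This is the nontrivial content of the Clifford $2$-design property and requires either a direct enumeration using the symplectic structure of $\mathcal{C}$ on the Pauli group (Zhu \cite{zhu2016clifford}) or the group-theoretic analysis of Webb \cite{webb2015clifford}. Since this lemma is stated with citations, I would simply invoke the commutant result from these references rather than reprove it, and present the derivation above as the short deduction of the explicit twirling formulas from that input.
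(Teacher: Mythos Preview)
Your proposal is correct and follows the standard Schur--Weyl style argument. Note, however, that the paper does not actually prove this lemma: it is stated as a well-known fact with citations and used as a black box throughout, so there is no paper proof to compare against. Your choice to invoke the cited references for the commutant characterization and then derive the explicit twirling formulas by solving the $2\times 2$ trace system is exactly the canonical derivation one would expect, and it goes beyond what the paper itself presents.
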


\subsection{Learning noisy zero state, Clifford gate noise, and noisy basis measurement}

Two sets of randomized experiments are conducted to learn about the noisy initial state $\rho_0 \approx \ketbra{0^n}{0^n}$, the Clifford gate noise $\cN \approx \cI$, and the noisy computational basis measurement $\cM_0 = \{M_b\}_{b \in \{0, 1\}^n}$ with $M_b \approx \ketbra{b}{b}$.
The first set of $N_A$ experiments prepares $\rho_0$, evolves by $\cE_C$ for a random Clifford $C$, and measures $\cM_0$.
The second set of $N_B$ experiments prepares $\rho_0$, evolves by $\cE_{C_1}$ for a random Clifford $C_1$, evolves by $\cE_{C_2}$ for a second random Clifford $C_2$, and measures $\cM_0$.
We denote the two sets of experimental outcomes as
\begin{align}
    \left( C^{(A, i)}, b^{(A, i)} \in \{0, 1\}^n \right), &\quad \forall i = 1, \ldots, N_A,\\
    \left( C^{(B, i)}_1, C^{(B, i)}_2, b^{(B, i)} \in \{0, 1\}^n \right), &\quad \forall i = 1, \ldots, N_B.
\end{align}
We will also define the POVM $\cM' = \{ M'_b = \cN^\dagger(M_b) \}_{b \in \{0, 1\}^n}$ which is equivalent to applying the Clifford gate noise $\cN$ followed by performing
the noisy computational basis measurement $\cM_0$.

In the following, we will denote $d = 2^n$, $f = \bra{0^n}\rho_0 \ket{0^n}$, $g = \frac{1}{d}\sum_{b \in \{0, 1\}^n} \bra{b}M'_b\ket{b}$, $\mathbb{I}[A]$ to be the indicator function, i.e., $\mathbb{I}[A] = 1$ if $A$ is true and $\mathbb{I}[A] = 0$ if $A$ is false.
Because $f$ is unlearnable, we can set $f$ to whatever value we want.
One practical choice is to set $f = 1$ since $\rho_0 \approx \ketbra{0^n}{0^n}$.
Under the assumption that $\rho_0 \approx \ketbra{0^n}{0^n}$ and $M'_b = \cN^{\dagger}(M_b) \approx M_b \approx \ketbra{b}{b}$, we can consider
\begin{equation} \label{eq:boundednoise-eq}
 f = \bra{0^n}\rho_0 \ket{0^n} \in [0.9, 1], \quad    \bra{b}M'_b\ket{b} \in [0.9, 1.1], \quad g = \frac{1}{d}\sum_{b \in \{0, 1\}^n} \bra{b}M'_b\ket{b} \in [0.9, 1.1].
\end{equation}
This follows from the fact that the noise is bounded.

\subsubsection{Learning noisy basis measurement conflated with Clifford gate noise}
\label{sec:leanringnoisyPOVM}

We now construct various estimators that characterize the POVM $\cM' = \{M'_b\}_{b \in \{0, 1\}^n}$. We start with the simplest estimator,
\begin{equation}
\widehat{\Tr(M'_b)} = \frac{1}{N_A} \sum_{i=1}^{N_A} d \, \bI\left[ b^{(A, i)} = b\right].
\end{equation}
Using the unitary $1$-design property of Clifford gates, we have
\begin{equation}
    \E \left[ \widehat{\Tr(M'_b)} \right] = \frac{1}{|\mathcal{C}|} \sum_{C \in \mathcal{C}} d \Tr\left( M'_b C\rho_0 C^\dagger \right) = \Tr(M'_b).
\end{equation}
Hence, we can estimate $\Tr(M'_b)$ to arbitrarily small error with large enough $N_A$.
The next estimator is slightly more complicated and uses $\widehat{\Tr(M'_b)}$,
\begin{equation} \label{eq:Mprimeb}
\widehat{M'_b} = \frac{d^2 - 1}{f - \frac{1}{d}} \left[ \frac{1}{N_A} \sum_{i=1}^{N_A} \bI\left[ b^{(A, i)} = b\right] C^{(A, i)} \ketbra{0^n}{0^n} (C^{(A, i)})^\dagger \right] - \frac{1 - \frac{f}{d}}{{f - \frac{1}{d}}} \widehat{\Tr(M'_b)} I.
\end{equation}
Using the unitary $2$-design property of Clifford gates, we have
\begin{equation} \label{eq:expMprimeb}
    \E \left[ \widehat{M'_b} \right] = \frac{1}{|\mathcal{C}|} \sum_{C \in \mathcal{C}} \frac{d^2 - 1}{f - \frac{1}{d}} \Tr(M'_b C \rho_0 C^\dagger) C \ketbra{0^n}{0^n} C^\dagger - \frac{1 - \frac{f}{d}}{{f - \frac{1}{d}}} \Tr(M'_b) I = M'_b.
\end{equation}
Hence, we can estimate the POVM element $M'_b$ to arbitrarily small error with large enough $N_A$.

We will also utilize the following estimator to estimate $g = \frac{1}{d} \sum_{b \in \{0, 1\}^n} \bra{b}M'_b\ket{b}$.
\begin{equation} \label{eq:hatg-est}
    \widehat{g} = \frac{1}{d} \sum_{b \in \{0, 1\}^n} \bra{b} \widehat{M'_b} \ket{b} = \frac{d^2 - 1}{d f - 1} \left[ \frac{1}{N_A} \sum_{i=1}^{N_A} \bra{b^{(A, i)}} C^{(A, i)} \ketbra{0^n}{0^n} (C^{(A, i)})^\dagger \ket{b^{(A, i)}} \right] - \frac{d - f}{{df - 1}}.
\end{equation}
The second equality follows from $\sum_{b \in \{0, 1\}^n} \widehat{\Tr(M'_b)} = d$.
From the equality in Eq.~\eqref{eq:expMprimeb} and linearity of expectation, we have
\begin{equation}
    \E \left[ \widehat{g} \right] = \frac{1}{d} \sum_{b \in \{0, 1\}^n} \bra{b}M'_b\ket{b} = g.
\end{equation}
Hence, we can estimate the scalar value $g$ to arbitrarily small error with large enough $N_A$.
% Under the assumption that $\cN \approx \cI$ and $M_b \approx \ketbra{b}{b}$, we can assume that $\bra{b}M'_b\ket{b} \in [0.5, 1.5]$, hence we have $g \in [0.5, 1.5]$.
% The purpose of estimating $g$ will become clear when we learn $\rho_0$.

\subsubsection{Learning noisy zero state}
\label{sec:learn-zerostate}

If $g$ is perfectly known, then the estimator for $\rho_0$ is given as follows.
\begin{equation} \label{eq:eststate1}
\widehat{\rho_0}^{(g*)} = \frac{d^2 - 1}{d g - 1} \left[\frac{1}{N_A} \sum_{i=1}^{N_A} (C^{(A, i)})^\dagger \ketbra{b^{(A, i)}}{b^{(A, i)}} C^{(A, i)}\right] - \frac{d - \frac{g}{d}}{d g - 1} I.
\end{equation}
Notice that in the noiseless setting ($g = 1$), the above estimator is exactly equal to the classical shadow representation based on randomized Clifford measurements \cite{huang2020predicting, huang2022learning}.
Using the unitary $2$-design property of Clifford gates, we have
\begin{equation}
    \E\left[ \widehat{\rho_0}^{(g*)} \right] = \frac{1}{|\mathcal{C}|} \sum_{C \in \mathcal{C}} \frac{d^2 - 1}{d g - 1} \Tr(M'_b C \rho_0 C^\dagger) C^\dagger \ketbra{b}{b} C - \frac{d - \frac{g}{d}}{d g - 1} I = \rho_0. \label{eq:rho0}
\end{equation}
Hence, we can estimate $\rho_0$ to arbitrarily small error with large enough $N_A$ when $g$ is perfectly known.
However, since $g$ is estimated using $\hat{g}$, we will use the following estimator instead.
\begin{equation} \label{eq:eststate2}
\widehat{\rho_0} = \frac{d^2 - 1}{d \hat{g} - 1} \left[\frac{1}{N_A} \sum_{i=1}^{N_A} (C^{(A, i)})^\dagger \ketbra{b^{(A, i)}}{b^{(A, i)}} C^{(A, i)}\right] - \frac{d - \frac{\hat{g}}{d}}{d \hat{g} - 1} I.
\end{equation}
We use that fact that, with large enough $N_A$, $\hat{g}$ can be made arbitrarily close to $g$.
Because $g \in [0.9, 1.1]$ from Eq.~\eqref{eq:boundednoise-eq}, with large enough $N_A$, $\frac{d^2 - 1}{d \hat{g} - 1}$ and $\frac{d- \frac{\hat{g}}{d}}{d \hat{g} - 1}$ can be made arbitrarily close to $\frac{d^2 - 1}{d g - 1}$ and $\frac{d - \frac{g}{d}}{d g - 1}$, respectively.
Hence, with Eq.~\eqref{eq:rho0}, the estimator $\widehat{\rho_0}$ can be made arbitrarily close to $\rho_0$ with large enough $N_A$.

\subsubsection{Learning Clifford gate noise}
\label{sec:learning-Clif-gate}

So far, we have only used the first set of experiments.
We are now ready to learn the Clifford gate noise $\cN$ using the second set of experiments.
We will use the Choi matrix representation of a quantum channel.
Recall that the Choi matrix of a channel $\cN$ is given by
\begin{equation}
    \Phi_\cN \equiv (\cN \otimes \cI) ( \ketbra{\omega}{\omega} ),
\end{equation}
where $\ket{\omega} = \frac{1}{\sqrt{d}} \sum_{b \in \{0, 1\}^n} \ket{b} \otimes \ket{b}$ is the maximally entangled state over $2n$ qubits, and $\cI$ is the identity channel on $n$ qubits.
We first estimate the state $\cN(I/d)$.
The basic idea is to intentionally neglect $C_1^{(B, i)}$ in the data.
Then $\rho_0$ evolved under an unknown random Clifford gate $C_1$ followed by the gate noise $\cN$ is equal to the state $\cN(I/d)$ from the unitary $1$-design property of random Clifford gate.
Then we can essentially use the same estimator as Eq.~\eqref{eq:eststate1} to learn the state $\cN(I/d)$.
Assuming $g$ is known, then we can use the estimator,
\begin{equation}
    \widehat{\cN(I/d)}^{(g*)} = \frac{d^2 - 1}{d g - 1} \left[\frac{1}{N_B} \sum_{i=1}^{N_B} (C_2^{(B, i)})^\dagger \ketbra{b^{(B, i)}}{b^{(B, i)}} C_2^{(B, i)}\right] - \frac{d - \frac{g}{d}}{d g - 1} I.
\end{equation}
From unitary $1$-design and $2$-design property, we have
\begin{align}
    \E\left[ \widehat{\cN(I/d)}^{(g*)} \right] &= \frac{1}{|\mathcal{C}|^2} \sum_{C_1 \in \mathcal{C}} \sum_{C_2 \in \mathcal{C}} \frac{d^2 - 1}{d g - 1} \Tr(M'_b C_2 \cN( C_1 \rho_0 C_1^\dagger) C_2^\dagger) C_2^\dagger \ketbra{b}{b} C_2 - \frac{d - \frac{g}{d}}{d g - 1} I \\
    &= \cN(I/d). \label{eq:cNI/d}
\end{align}
Similar to learning $\rho_0$, we only have an estimate for $g$ given by $\hat{g}$.
Hence, the estimator we will use is
\begin{equation}
    \widehat{\cN(I/d)} = \frac{d^2 - 1}{d \hat{g} - 1} \left[\frac{1}{N_B} \sum_{i=1}^{N_B} (C_2^{(B, i)})^\dagger \ketbra{b^{(B, i)}}{b^{(B, i)}} C_2^{(B, i)}\right] - \frac{d - \frac{\hat{g}}{d}}{d \hat{g} - 1} I.
\end{equation}
With large enough $N_A$, $\hat{g}$ can be made arbitrarily close to $g$.
So, from Eq~\eqref{eq:cNI/d}, the estimator $\widehat{\cN(I/d)}$ can be made arbitrarily close to $\cN(I/d)$ with large enough $N_A$ and $N_B$.

We now present the estimator for the Choi matrix of $\cN$.
We will begin by assuming that $g$ is perfectly known, then we will approximate $g$ by $\widehat{g}$.
The estimator $\widehat{\Phi_\cN}^{(g*)}$ is defined as
\begin{align}
    \widehat{\Phi_\cN}^{(g*)} &= \frac{(d^2-1)^2}{(df - 1)(dg - 1)} \left[\frac{1}{N_B} \sum_{i=1}^{N_B} \left((C_2^{(B, i)})^\dagger \ketbra{b^{(B, i)}}{b^{(B, i)}} C_2^{(B, i)} \right) \otimes \left( C_1^{(B, i)} \ketbra{0^n}{0^n} (C_1^{(B, i)})^\dagger\right)^T\right]\\
    &- \frac{(d - \frac{1}{d})(d - \hat{g})}{(df - 1)(d g - 1)} \,\, \left[I \otimes I\right] - \frac{1 - \frac{f}{d}}{f - \frac{1}{d}} \,\, \left[\widehat{\cN(I/d)}^{(g*)} \otimes I \right].
\end{align}
Using unitary $2$-design property of Clifford gates and Eq.~\eqref{eq:cNI/d}, we have
\begin{align}
    \E\left[ \widehat{\Phi_\cN}^{(g*)} \right] &= \frac{1}{|\mathcal{C}|^2} \sum_{C_1 \in \mathcal{C}} \sum_{C_2 \in \mathcal{C}} \frac{(d^2-1)^2}{(df - 1)(g - 1)} \Tr(M'_b C_2 \cN( C_1 \rho_0 C_1^\dagger) C_2^\dagger)\\
    &\quad\quad\quad\quad\quad\quad\quad\quad\quad\quad \times \left(C_2^\dagger \ketbra{b}{b} C_2 \right) \otimes \left(C_1 \ketbra{0^n}{0^n} C_1^\dagger\right)^T \\
    &- \frac{(d - \frac{1}{d})(1 - \frac{g}{d^2})}{(df - 1)(g - 1)} I \otimes I - \frac{1 - \frac{f}{d}}{f - \frac{1}{d}} \cN(I/d) \otimes I \\
    &= \Phi_\cN. \label{eq:phicn}
\end{align}
Similar to before, we only have an estimate for $g$ given by $\hat{g}$, so we will instead use the following estimator, i.e., replacing all $g$ by $\hat{g}$ and $\widehat{\cN(I/d)}^{(g*)}$ by $\widehat{\cN(I/d)}$.
\begin{align}
    \widehat{\Phi_\cN} &= \frac{(d^2-1)^2}{(df - 1)(d \hat{g} - 1)} \left[\frac{1}{N_B} \sum_{i=1}^{N_B} \left((C_2^{(B, i)})^\dagger \ketbra{b^{(B, i)}}{b^{(B, i)}} C_2^{(B, i)} \right) \otimes \left( C_1^{(B, i)} \ketbra{0^n}{0^n} (C_1^{(B, i)})^\dagger\right)^T\right]\\
    &- \frac{(d - \frac{1}{d})(d - \hat{g})}{(df - 1)(d \hat{g} - 1)} \,\, \left[I \otimes I\right] - \frac{1 - \frac{f}{d}}{f - \frac{1}{d}} \,\, \left[\widehat{\cN(I/d)} \otimes I \right]. \label{eq:PhiN-est}
\end{align}
With large enough $N_A, N_B$, $\hat{g}$ and $\widehat{\cN(I/d)}$ can be made arbitrarily close to $g$ and $\cN(I/d)$.
So, from Eq~\eqref{eq:phicn}, the estimator $\widehat{\Phi_\cN}$ can be made arbitrarily close to $\Phi_\cN$ with large enough $N_A$ and $N_B$.
With the estimator for Choi matrix $\widehat{\Phi_\cN}$, we can obtain the estimator $\widehat{\cN}$ for the original CPTP map $\cN$ by applying the linear invertible transformation between Choi matrix and CPTP map.

\subsubsection{Learning noisy basis measurement}
\label{sec:leanringnoisyPOVM-2}

After learning $\Phi_\cN$ through the estimator $\widehat{\Phi_\cN}$, we can obtain the noisy computational basis measurement $\cM_0 = \{M_b\}_{b \in \{0, 1\}^n}$ by considering the following estimator.
\begin{equation}
    \widehat{M_b} = \left(\widehat{\cN}^\dagger \right)^{-1} \widehat{M'_b}, \quad \forall b \in \{0, 1\}^n.
\end{equation}
Note that $\widehat{M'_b}$ can be made arbitrarily close to $M'_b = \cN^\dagger(M_b)$ and $\widehat{\cN}$ can be made arbitrarily close to $\cN$.
Because $\cN$ is assumed to be close to the identity, the difference between $\left(\widehat{\cN}^\dagger \right)^{-1}$ and $\left(\cN^\dagger\right)^{-1}$ can be made arbitrarily small by increasing $N_A, N_B$.
Hence, $\widehat{M_b}$ can be arbitrarily close to $M_b$ with large enough $N_A, N_B$.

\subsection{Learning any state, CPTP map, and POVM}

After learning $\rho_0, \cN, \cM_0$, we present the learning of any physical operation in the system.
The procedures are very similar to the previous subsection.

\subsubsection{Learning POVM}

Given any POVM $\cM_z = \{ M_{zb} \}_{b \in \cB}$, where $\cB$ is a set denoting all the possible outcomes.
We can learn $M_{zb}$ by conducting the following randomized experiments for $N_z$ times: prepare $\rho_0$, evolve under $\cE_C$ for a random Clifford gate $C$, measure $\cM_z$.
We denote the sets of experimental outcomes as
\begin{align}
    \left( C^{(z, i)}, b^{(z, i)} \in \cB \right), &\quad \forall i = 1, \ldots, N_z.
\end{align}
The procedure is very similar to that for learning $\cM_0$.
First, we consider the POVM $\cM'_z = \{ M'_{zb} = \cN^{\dagger}(M_zb) \}_{b \in \cB}$, which conflates $\cM_z$ with the Clifford gate noise $\cN$.
We can learn $\cM'_z$ using the following estimators based on the data we obtained.
\begin{align}
\widehat{\Tr(M'_{zb})} &= \frac{1}{N_z} \sum_{i=1}^{N_z} d \, \bI\left[ b^{(z, i)} = b\right],\\
\widehat{M'_{zb}} &= \frac{d^2 - 1}{f - \frac{1}{d}} \left[ \frac{1}{N_z} \sum_{i=1}^{N_z} \bI\left[ b^{(z, i)} = b\right] C^{(z, i)} \ketbra{0^n}{0^n} (C^{(z, i)})^\dagger \right] - \frac{1 - \frac{f}{d}}{{f - \frac{1}{d}}} \widehat{\Tr(M'_{zb})} I.
\end{align}
Using the same analysis as Appendix~\ref{sec:leanringnoisyPOVM}, we can show that $\widehat{M'_{zb}}$ can be made arbitrarily close to $M'_{zb}$ with large enough $N_z$.
Then similar to Appendix~\ref{sec:leanringnoisyPOVM-2}, we can learn $M_{zb}$ using the following estimator after obtaining $\hat{\cN}$ from the algorithms presented in Appendix~\ref{sec:learning-Clif-gate}.
\begin{equation}
    \widehat{M_{zb}} = \left(\widehat{\cN}^\dagger \right)^{-1} \widehat{M'_{zb}}, \quad \forall b \in \cB.
\end{equation}
With large enough $N_A, N_B, N_z$, we can make $\widehat{M_{zb}}$ arbitrarily close to $M_{zb}$.

\subsubsection{Learning states}

Given any state $\rho_x$. We can learn $\rho_x$ by conducting the following randomized experiments $N_x$ times: prepare $\rho_x$, evolve under $\cE_C$ for a random Clifford gate $C$, measure $\cM_0$.
We denote the sets of experimental outcomes as
\begin{align}
    \left( C^{(x, i)}, b^{(x, i)} \in \{0, 1\}^n \right), &\quad \forall i = 1, \ldots, N_x.
\end{align}
We can obtain an estimate for $\rho_x$ using the following formula.
\begin{equation}
\widehat{\rho_x} = \frac{d^2 - 1}{d \hat{g} - 1} \left[\frac{1}{N_x} \sum_{i=1}^{N_x} (C^{(x, i)})^\dagger \ketbra{b^{(x, i)}}{b^{(x, i)}} C^{(x, i)}\right] - \frac{d - \frac{\hat{g}}{d}}{d \hat{g} - 1} I.
\end{equation}
With large enough $N_A, N_B, N_x$, we can make $\widehat{\rho_x}$ arbitrarily close to $\rho_x$.
The analysis is the same as that in Appendix~\ref{sec:learn-zerostate}.

\subsubsection{Learning CPTP maps}

Given any CPTP map $\cE_y$. We can learn $\cE_y$ by conducting the following randomized experiments for $N_y$ times: prepare $\rho_0$, evolve under $\cE_{C_1}$ for a random Clifford gate $C_1$, evolve under $\cE_y$, evolve under $\cE_{C_2}$ for a different random Clifford gate $C_2$, measure $\cM_0$.
We denote the sets of $N_y$ experimental outcomes as follows.
\begin{align}
    \left( C_1^{(y, i)}, C_2^{(y, i)}, b^{(y, i)} \in \{0, 1\}^n \right), &\quad \forall i = 1, \ldots, N_y.
\end{align}
We first learn the quantum process $\cE'_y = \cE_y \circ \cN$, which conflates $\cE_y$ with the Clifford gate noise $\cN$. Following the same analysis as Appendix~\ref{sec:learning-Clif-gate} but replacing $\cN$ by $\cE'_y$, we define the following estimators.
\begin{align}
    \widehat{\cE'_y(I/d)} &= \frac{d^2 - 1}{d \hat{g} - 1} \left[\frac{1}{N_y} \sum_{i=1}^{N_y} (C_2^{(u, i)})^\dagger \ketbra{b^{(y, i)}}{b^{(y, i)}} C_2^{(y, i)}\right] - \frac{d - \frac{\hat{g}}{d}}{d \hat{g} - 1} I. \\
    \widehat{\Phi_{\cE'_y}} &= \frac{(d^2-1)^2}{(df - 1)(d \hat{g} - 1)} \left[\frac{1}{N_y} \sum_{i=1}^{N_y} \left((C_2^{(y, i)})^\dagger \ketbra{b^{(y, i)}}{b^{(y, i)}} C_2^{(y, i)} \right) \otimes \left( C_1^{(y, i)} \ketbra{0^n}{0^n} (C_1^{(y, i)})^\dagger\right)^T\right]\\
    &- \frac{(d - \frac{1}{d})(d - \hat{g})}{(df - 1)(d \hat{g} - 1)} \,\, \left[I \otimes I\right] - \frac{1 - \frac{f}{d}}{f - \frac{1}{d}} \,\, \left[\widehat{\cN(I/d)} \otimes I \right].
\end{align}
With large enough $N_A, N_B, N_y$, we can make $\widehat{\Phi_{\cE'_y}}$ arbitrarily close to $\Phi_{\cE'_y}$, the Choi matrix for the CPTP map $\cE'_y$.
Hence, we can obtain the estimator $\widehat{\cE'_y}$ for $\cE'_y$ by transforming the Choi matrix back to the CPTP map.
Now we simply need to invert the conflation with $\cN$ by considering
\begin{equation}
    \widehat{ \cE_y } = \widehat{\cE'_y} \circ (\widehat{\cN})^{-1}.
\end{equation}
Because the Clifford gate noise $\cN$ is assumed to be close to the identity, the difference between $\left(\widehat{\cN}^\dagger \right)^{-1}$ and $\left(\cN^\dagger\right)^{-1}$ can be made arbitrarily small by increasing $N_A, N_B$.
Therefore, we can make the difference between $\widehat{\cE_y}$ and $\cE_y$ arbitrarily small with large enough $N_A, N_B, N_y$.

\subsection{Sample complexity for learning Clifford gate noise}

All the previous analyses could be equipped with rigorous convergence guarantee using concentration inequalities similar to quantum state/process tomography based on randomized measurements \cite{guta2020fast, surawy2021projected}.
As an example, we present the sample complexity for learning Clifford gate noise $\cN$ with $\cN(I) = I$.
The reconstruction of $\cN$ under the condition $\cN(I) = I$ was previously studied in \cite{kimmel2014robust, roth2018recovering, helsen2021estimating}, where learning algorithms based on interleaved randomized benchmarking \cite{magesan2012efficient} have been devised.
We will show that the proposed algorithm is much more efficient than the best known sample complexity of $\mathcal{O}(d^8)$ in \cite{helsen2021estimating} for learning the Choi matrix $\Phi_{\cN}$ up to a constant error in Hilbert-Schmidt norm.

Under the assumption that $\cN(I) = I$, we do not need to estimate $\cN(I/d)$ because $\cN(I/d) = I/d$.
Hence, the estimator in Eq.~\eqref{eq:PhiN-est} simplifies to
\begin{align}
    \widehat{\Phi_\cN} &= \frac{(d^2-1)^2}{(df - 1)(d \hat{g} - 1)} \left[\frac{1}{N_B} \sum_{i=1}^{N_B} \left((C_2^{(B, i)})^\dagger \ketbra{b^{(B, i)}}{b^{(B, i)}} C_2^{(B, i)} \right) \otimes \left( C_1^{(B, i)} \ketbra{0^n}{0^n} (C_1^{(B, i)})^\dagger\right)^T\right]\\
    &- \frac{(d - \frac{1}{d})(d - \hat{g})}{(df - 1)(d \hat{g} - 1)} \,\, \left[I \otimes I\right] - \frac{1 - \frac{f}{d}}{df - 1} \,\, \left[I \otimes I \right].
\end{align}
Furthermore, using the definition of $\hat{g}$ in Eq.~\eqref{eq:hatg-est}, we see that
\begin{equation}
    (df - 1)(d \hat{g} - 1) = (d^2-1) \left( d \left[ \frac{1}{N_A} \sum_{i=1}^{N_A} \bra{b^{(A, i)}} C^{(A, i)} \ketbra{0^n}{0^n} (C^{(A, i)})^\dagger \ket{b^{(A, i)}} \right] - 1 \right)
\end{equation}
is independent of $f$. Furthermore, we can simplify the coefficients for $I\otimes I$ using
\begin{equation}
    - \frac{(d - \frac{1}{d})(d - \hat{g})}{(df - 1)(d \hat{g} - 1)} - \frac{1 - \frac{f}{d}}{df - 1} = - \frac{(d - \frac{1}{d})^2}{(df - 1)(d \hat{g} - 1)} + \frac{1}{d^2},
\end{equation}
which is also independent of $f$ because $(df - 1)(d \hat{g} - 1)$ is independent of $f$.
We can now rewrite the estimator $\widehat{\Phi_\cN}$ as the following $f$-independent expression,
\begin{align}
    \widehat{\Phi_\cN} &= (d^2-1) \frac{\frac{1}{N_B} \sum_{i=1}^{N_B} \left((C_2^{(B, i)})^\dagger \ketbra{b^{(B, i)}}{b^{(B, i)}} C_2^{(B, i)} \right) \otimes \left( C_1^{(B, i)} \ketbra{0^n}{0^n} (C_1^{(B, i)})^\dagger\right)^T}{d \left[ \frac{1}{N_A} \sum_{i=1}^{N_A} \bra{b^{(A, i)}} C^{(A, i)} \ketbra{0^n}{0^n} (C^{(A, i)})^\dagger \ket{b^{(A, i)}} \right] - 1}\\
    &- (d^2-1) \frac{ (I / d) \otimes (I / d)}{d \left[ \frac{1}{N_A} \sum_{i=1}^{N_A} \bra{b^{(A, i)}} C^{(A, i)} \ketbra{0^n}{0^n} (C^{(A, i)})^\dagger \ket{b^{(A, i)}} \right] - 1} + (I / d) \otimes (I / d).
\end{align}
We will now present the algorithm for estimating $\Phi_\cN$ under the Pauli basis.
Consider two $n$-qubit Pauli operators $P, Q \in \{I, X, Y, Z\}^{\otimes n}$.
We would estimate $\Tr((P \otimes Q) \Phi_\cN)$ based on the above expression for $\widehat{\Phi_\cN}$.
The estimation procedure is separated into three cases.

\subsubsection{$P=I$ and $Q=I$}

This is the simplest case.
Because $\Phi_\cN$ is a quantum state, we have $\Tr(\Phi_\cN) = 1$.
Hence, we can obtain a perfect estimate for $\Tr((P \otimes Q) \Phi_\cN)$ as it is always one.

\subsubsection{Exactly one of $P$ and $Q$ is equal to $I$}

This is also a simple case. Recall that a non-identity Pauli operator has trace equal to zero. Then, using the definition of Choi matrix $\Phi_\cN$ and $\cN(I) = I$, we have
\begin{equation}
    \Tr((P \otimes Q) \Phi_\cN) = 0.
\end{equation}
Hence, we can obtain a perfect estimate for $\Tr((P \otimes Q) \Phi_\cN)$ as it is always zero.

\subsubsection{$P \neq I$ and $Q \neq I$}

Using the fact that a non-identity Pauli operator has trace equal to zero, we have
\begin{equation}
    \Tr((P \otimes Q) \widehat{\Phi_\cN}) = (d^2 - 1)\frac{\frac{1}{N_B} \sum_{i=1}^{N_B} \bra{b^{(B, i)}} C_2^{(B, i)} P (C_2^{(B, i)})^\dagger \ket{b^{(B, i)}} \bra{0^n} (C_1^{(B, i)})^\dagger Q^T C_1^{(B, i)} \ket{0^n}}{d \left[ \frac{1}{N_A} \sum_{i=1}^{N_A} \bra{b^{(A, i)}} C^{(A, i)} \ketbra{0^n}{0^n} (C^{(A, i)})^\dagger \ket{b^{(A, i)}} \right] - 1}.
\end{equation}
We will directly use the above formula to estimate $\Tr((P \otimes Q) \Phi_\cN)$.

To analyze the error in the above estimator, we separately consider the convergence of the numerator and the denominator.
We begin with the denominator,
\begin{align}
    \cY &= d \left[ \frac{1}{N_A} \sum_{i=1}^{N_A} \bra{b^{(A, i)}} C^{(A, i)} \ketbra{0^n}{0^n} (C^{(A, i)})^\dagger \ket{b^{(A, i)}} \right] - 1\\
    &= \frac{1}{N_A} \sum_{i=1}^{N_A} \left[ d \bra{b^{(A, i)}} C^{(A, i)} \ketbra{0^n}{0^n} (C^{(A, i)})^\dagger \ket{b^{(A, i)}} - 1 \right].
\end{align}

\begin{lemma}[Concentration for the denominator] \label{lem:con-den}
Fix $\epsilon > 0$.
Given $N_A = \mathcal{O}(1 / \epsilon^2)$. With probability at least $0.99$, we have
\begin{equation} \label{eq:cYconc}
     \left| \cY - \frac{(df - 1)(d g - 1)}{(d-1)(d+1)} \right| < \epsilon,
\end{equation}
where $\frac{(df - 1)(d g - 1)}{(d-1)(d+1)} \in [0.21, 1.21]$ from Eq.~\eqref{eq:boundednoise-eq}.
\end{lemma}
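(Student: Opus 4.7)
My plan is to treat $\cY = \tfrac{1}{N_A}\sum_{i=1}^{N_A} Z_i$ as the empirical mean of i.i.d.\ random variables $Z_i := d\bra{b^{(A,i)}} C^{(A,i)} \ketbra{0^n}{0^n} (C^{(A,i)})^\dagger \ket{b^{(A,i)}} - 1 \in [-1, d-1]$, and control the deviation by Chebyshev's inequality. The two ingredients needed are the centering value $\E[\cY]$ and a variance bound $\mathrm{Var}(Z_i) = \mathcal{O}(1)$ that is uniform in $d$.

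The mean is already essentially pinned down by the unbiasedness of $\hat g$ established earlier in this appendix. Rearranging Eq.~\eqref{eq:hatg-est} under $\E[\hat g] = g$ gives $\E[\bra{b^{(A,i)}} C^{(A,i)} \ketbra{0^n}{0^n} (C^{(A,i)})^\dagger \ket{b^{(A,i)}}] = (g(df-1) + d - f)/(d^2-1)$, and multiplying by $d$ and subtracting $1$ yields $\E[\cY] = (df-1)(dg-1)/(d^2-1)$, which matches the target. Equivalently one can derive this directly from the Clifford 2-design identity applied to $\rho_0 \otimes \ketbra{0^n}{0^n}$ together with $\sum_b \Tr(M'_b) = d$ and $\sum_b \bra{b} M'_b \ket{b} = dg$. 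The claimed numerical range $[0.21, 1.21]$ is then verified by substituting the extremes $f \in [0.9, 1]$, $g \in [0.9, 1.1]$ and all integers $d \geq 2$ into this closed form and checking monotonicity.

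The crux is the variance bound. Although $Z_i$ has range of order $d$, its second moment is $\mathcal{O}(1)$ because, under a Clifford-random $C$, the overlap $d\bra{b} C \ketbra{0^n}{0^n} C^\dagger \ket{b}$ is concentrated near an $\mathcal{O}(1)$ value. I would make this precise by expressing
\begin{equation}
\E[Z_i^2] = d^2 \sum_b \Tr\bigl[(M'_b \otimes \ketbra{b}{b}^{\otimes 2})\, \E_C\bigl[ C^{\otimes 3} (\rho_0 \otimes \ketbra{0^n}{0^n}^{\otimes 2}) (C^\dagger)^{\otimes 3}\bigr]\bigr] - 2d\,\E[Z_i + 1] + 1,
\end{equation}
and invoking the fact that the multi-qubit Clifford group is an exact unitary $3$-design, so the inner expectation equals the corresponding Haar integral. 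Expanding via the Weingarten formula (equivalently, in the basis of the six permutation operators on three copies) yields a closed form in traces such as $\Tr(\rho_0^k)$ and $\sum_b \bra{b}M'_b\ket{b}^{k_1}\Tr(M'_b)^{k_2}$ for small $k,k_1,k_2$; by the bounded-noise conditions in Eq.~\eqref{eq:boundednoise-eq} each such sum is $\Theta(d)$ or $\Theta(1)$, and these balance against Weingarten coefficients of order $d^{-3}$ to give $\E[Z_i^2] = \mathcal{O}(1)$ uniformly in $d$. Hence $\mathrm{Var}(Z_i) = \mathcal{O}(1)$, and Chebyshev's inequality gives $\Pr(|\cY - \E[\cY]| > \epsilon) \leq \mathrm{Var}(Z_i)/(N_A \epsilon^2) < 0.01$ once $N_A \geq 100\,\mathrm{Var}(Z_i)/\epsilon^2 = \mathcal{O}(1/\epsilon^2)$. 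The main obstacle will be the bookkeeping in this 3-design variance computation: tracking all six permutations and verifying that the bounded-noise assumption is exactly what is needed to keep the leading $\mathcal{O}(d)$ pieces of the traces balanced against the Weingarten prefactors, so that the final variance is $d$-independent rather than growing with the system size.
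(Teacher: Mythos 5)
Your proposal takes essentially the same route as the paper's proof: the centering value follows from the Clifford 2-design identity (the paper computes $\E[\cY_i]$ directly, which is equivalent to your rearrangement of the unbiasedness of $\hat g$), the variance is bounded by $\mathcal{O}(1)$ via the exact unitary 3-design property of the Clifford group together with the bounded-noise conditions, and Chebyshev's inequality then gives the claim with $N_A = \mathcal{O}(1/\epsilon^2)$. One minor slip that does not affect the argument: in your expansion of $\E[Z_i^2]$ the cross term should be $-2\,\E[Z_i+1]$ rather than $-2d\,\E[Z_i+1]$, since $Z_i^2 = (Z_i+1)^2 - 2(Z_i+1) + 1$.
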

\begin{proof}
Let $\cY_i = d \bra{b^{(A, i)}} C^{(A, i)} \ketbra{0^n}{0^n} (C^{(A, i)})^\dagger \ket{b^{(A, i)}} - 1$.
From unitary $2$-design of random Clifford gate, we have the following identity,
\begin{equation}
    \E[\cY_i] = \frac{(df - 1)(d g - 1)}{(d-1)(d+1)}.
\end{equation}
Then, using the unitary $3$-design property of random Clifford gate \cite{webb2015clifford, zhu2016clifford} and the conditions in Eq.~\eqref{eq:boundednoise-eq}, we have
\begin{equation}
    \Var[\cY_i] \leq \frac{1}{|\cC|}\sum_{C \in \cC} \sum_{b \in \{0, 1\}^n} \Tr(M'_b C \rho_0 C^\dagger) d^2 (\bra{b} C \ketbra{0^n}{0^n} C^\dagger \ket{b} )^2 = \mathcal{O}(1).
\end{equation}
The claim then follows from $\cY = \frac{1}{N_A} \sum_{i=1}^{N_A} \cY_i$ and Chebyshev's inequality.
\end{proof}

Next, we can analyze the numerator,
\begin{align}
    \cX &= \frac{1}{N_B} \sum_{i=1}^{N_B} (d^2 - 1) \bra{b^{(B, i)}} C_2^{(B, i)} P (C_2^{(B, i)})^\dagger \ket{b^{(B, i)}} \bra{0^n} (C_1^{(B, i)})^\dagger Q^T C_1^{(B, i)} \ket{0^n}.
\end{align}

\begin{lemma}[Concentration for the numerator] \label{lem:con-num}
Fix $0.5 > \epsilon, \delta > 0$.
Given $N_B = \mathcal{O}(d^2 \log(1 / \delta) / \epsilon^2)$. With probability at least $1 - \delta$,
\begin{equation} \label{eq:cXconc}
     \left| \cX - \frac{(df - 1)(d g - 1)}{(d-1)(d+1)} \Tr((P \otimes Q) \Phi_\cN) \right| < \epsilon,
\end{equation}
where $\frac{(df - 1)(d g - 1)}{(d-1)(d+1)} \Tr((P \otimes Q) \Phi_\cN) \in [0.21, 1.21]$ from Eq.~\eqref{eq:boundednoise-eq}.
\end{lemma}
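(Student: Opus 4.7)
The plan is to mirror Lemma~\ref{lem:con-den} by computing the expectation of a single summand $\cX_i$ exactly via the Clifford $2$-design formula, then controlling its second moment via the Clifford $3$-design formula, and finally invoking a Bernstein-type concentration inequality (or Chebyshev plus median-of-means) to obtain the stated $N_B = \mathcal{O}(d^{2}\log(1/\delta)/\epsilon^{2})$ sample complexity.

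For the expectation, I would condition on $(C_1,C_2)$, so that each summand factorises as $\sum_b \Tr(M'_b C_2 \cN(C_1 \rho_0 C_1^{\dagger}) C_2^{\dagger})\,\bra{b}C_2 P C_2^{\dagger}\ket{b}\,\bra{0^n}C_1^{\dagger} Q^T C_1\ket{0^n}$, times $(d^{2}-1)$. The $C_1$-average is $\E_{C_1}\!\bigl[(C_1\rho_0 C_1^{\dagger})\cdot\Tr((C_1\ketbra{0^n}{0^n}C_1^{\dagger})\,Q^T)\bigr]$, which by the $2$-design formula applied to $\Tr_2\!\bigl[(I\otimes Q^T)\,\mathbb{E}_{C_1}[(C_1\!\otimes\!C_1)(\rho_0\otimes\ketbra{0^n}{0^n})(C_1^{\dagger}\!\otimes\!C_1^{\dagger})]\bigr]$ and the fact that $\Tr(Q^T)=0$ for $Q\neq I$, collapses to $\frac{df-1}{d(d^2-1)}\,Q^T$; applying $\cN$ preserves linearity and yields $\frac{df-1}{d(d^2-1)}\,\cN(Q^T)$. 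The $C_2$-average is structurally identical, with $(\rho_0,\ketbra{0^n}{0^n})$ replaced by $(\cN(Q^T),\ketbra{b}{b})$ weighted by $M'_b$; summing over $b$ uses $\sum_b M'_b = I$ to produce $\sum_b[\bra{b}M'_b\ket{b}-\Tr(M'_b)/d] = dg-1$. Combining the two averages and the Choi identity $\Tr(P\cN(Q^T))=d\,\Tr((P\otimes Q)\Phi_{\cN})$ gives exactly $\E[\cX_i] = \frac{(df-1)(dg-1)}{(d-1)(d+1)}\,\Tr((P\otimes Q)\Phi_{\cN})$, as claimed.

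For the concentration, a naive pointwise bound $|\cX_i|\leq d^{2}-1$ together with Hoeffding would only give $N_B = \mathcal{O}(d^{4}\log(1/\delta)/\epsilon^{2})$, so we must control the variance. I would bound $\E[\cX_i^{2}]$ by invoking the Clifford $3$-design property independently for $C_1$ and $C_2$. Each third-moment integral expands into a sum over permutations on three tensor copies, but the tracelessness of $P$ and $Q$ annihilates every term that does not pair the two factors of $P$ (respectively $Q$) on the same register, leaving only $\mathcal{O}(d^{2})$ contributions. Combined with $\|P\|_{\infty}=\|Q\|_{\infty}=1$, $\Tr(\rho_0)=1$, and $\Tr(M'_b)\leq (1+o(1))$, this shows $\Var[\cX_i] = \mathcal{O}(d^{2})$.

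Bernstein's inequality with pointwise bound $M = \mathcal{O}(d^{2})$ and variance $\sigma^{2} = \mathcal{O}(d^{2})$ then gives $\Pr[|\cX - \E[\cX]|>\epsilon] \leq 2\exp\!\bigl(-\tfrac{N_B\epsilon^{2}/2}{\sigma^{2}+M\epsilon/3}\bigr)$, which for $\epsilon < 0.5$ yields the advertised $N_B = \mathcal{O}(d^{2}\log(1/\delta)/\epsilon^{2})$; alternatively, Chebyshev plus median-of-means over $\mathcal{O}(\log(1/\delta))$ batches of size $\mathcal{O}(d^{2}/\epsilon^{2})$ yields the same bound. I expect the main obstacle to be the $3$-design variance computation: one must carefully enumerate the permutation-indexed terms arising from two independent Weingarten-type integrations and verify that the traceless structure of $P$ and $Q$ eliminates exactly those terms that would inflate the scaling from $d^{2}$ to $d^{4}$.
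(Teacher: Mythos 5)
Your proposal is correct, and its skeleton (exact expectation via the Clifford $2$-design identity, a variance bound of order $d^2$, then Bernstein) matches the paper; your expectation computation, including the use of $\Tr(Q^T)=0$, $\sum_b M'_b = \cN^\dagger(I)=I$, and the Choi identity $\Tr\bigl(P\,\cN(Q^T)\bigr) = d\,\Tr\bigl((P\otimes Q)\Phi_\cN\bigr)$, is exactly what the paper's terse ``from the unitary $2$-design property'' is hiding. Where you genuinely diverge is the variance: you propose a full third-moment (Clifford $3$-design / Weingarten) enumeration for each of $C_1$ and $C_2$, whereas the paper avoids moment formulas entirely with a support argument: $C_2 P C_2^\dagger$ is (up to sign) a uniformly random non-identity Pauli, and its diagonal matrix elements $\bra{b}\cdot\ket{b}$ vanish unless it lies among the $d-1$ non-identity $Z$-type Paulis, so each of the two factors in $\cX_i$ is nonzero with probability $1/(d+1)$; by independence of $C_1,C_2$, $\Pr[\cX_i\neq 0]\le 1/(d+1)^2$, and with $|\cX_i|\le d^2-1$ this gives $\E[\cX_i^2]\le (d-1)^2\le d^2$ in three lines, with no need to track how the outcome $b^{(B,i)}$ couples to the Cliffords (since $\sum_b \Pr[b\,|\,C_1,C_2]=1$). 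Your route does work — the surviving permutation terms (those pairing $P$ with $P$ and $Q$ with $Q$, e.g.\ $\Tr(P^2)=d$ against Weingarten weights of order $d^{-3}$ and measurement-side traces of order $d$) contribute $\mathcal{O}(d^{-1})$ per Clifford average, so $(d^2-1)^2\cdot\mathcal{O}(d^{-2})=\mathcal{O}(d^2)$ — but you have left precisely this enumeration as the acknowledged obstacle, and it is the laborious part: you must push the third-moment integral through the probability weight $\Tr\bigl(M'_b C_2\cN(C_1\rho_0C_1^\dagger)C_2^\dagger\bigr)$ in both $C_1$ and $C_2$. The paper's probabilistic argument buys simplicity and immediate rigor; your Weingarten route buys an exact evaluation of $\E[\cX_i^2]$ if you ever need constants, but for this lemma it is unnecessary machinery.
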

\begin{proof}
Let $\cX_i = (d^2 - 1) \bra{b^{(B, i)}} C_2^{(B, i)} P (C_2^{(B, i)})^\dagger \ket{b^{(B, i)}} \bra{0^n} (C_1^{(B, i)})^\dagger Q^T C_1^{(B, i)} \ket{0^n}$.
From unitary $2$-design of random Clifford gate and the fact that non-identity Pauli has trace zero, we have,
\begin{equation}
    \E[\cX_i] = \frac{(df - 1)(d g - 1)}{(d-1)(d+1)} \Tr((P \otimes Q) \Phi_\cN).
\end{equation}
Because $C_2^{(B, i)}$ is a random Clifford gate, $C_2^{(B, i)} P (C_2^{(B, i)})^\dagger$ is proportional to a random non-identity Pauli $\{I, X, Y, Z\}^{\otimes n} \setminus \{I^{\otimes n}\}$.
If $C_2^{(B, i)} P (C_2^{(B, i)})^\dagger$ is not proportional to a Pauli-Z operator $\{I, Z\}^{\otimes n}$, then $\cX_i = 0$.
Similarly $(C_1^{(B, i)})^\dagger Q^T C_1^{(B, i)}$ is proportional to a random non-identity Pauli.
And, $C_1^{(B, i)} Q^T (C_1^{(B, i)})^\dagger$ is not proportional to a Pauli-Z operator $\{I, Z\}^{\otimes n}$, then $\cX_i = 0$.
Because $C_1, C_2$ are independent random gates, we have $\cX_i \neq 0$ with probability at most
\begin{equation}
    \frac{(d - 1)}{(d^2 - 1)} \times \frac{(d - 1)}{(d^2 - 1)} = \frac{1}{(d+1)^2}.
\end{equation}
Furthermore, we have $|\cX_i| \leq (d^2 - 1)$ with probability one.
Therefore, we have
\begin{equation}
    \Var[\cX_i] \leq \E[\cX_i^2] \leq \frac{1}{(d+1)^2} \times (d^2 - 1)^2 = (d-1)^2 \leq d^2.
\end{equation}
From Bernstein's inequality and the definition that $\cX = \frac{1}{N_B} \sum_{i=1}^{N_B} \cX_i$, we conclude the proof.
\end{proof}

We can now establish the following statement.

\begin{lemma}[Combine numerator and denominator] \label{lem:con-den-num}
Given $0 < \epsilon < 0.1$.
Assume Eq.~\eqref{eq:cYconc} and \eqref{eq:cXconc} both hold. Then
\begin{equation}
    \left| \frac{\cX}{\cY} - \Tr((P \otimes Q) \Phi_\cN) \right| < 155 \epsilon.
\end{equation}
\end{lemma}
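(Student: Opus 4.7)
The plan is to prove Lemma~\ref{lem:con-den-num} by a standard ratio-of-estimators argument, exploiting the fact that the denominator $\cY$ concentrates around a strictly positive constant $a := \frac{(df-1)(dg-1)}{(d-1)(d+1)}$ and the numerator $\cX$ concentrates around $a T$, where $T := \Tr((P\otimes Q)\Phi_\cN)$.

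First I would record two elementary a priori bounds. On the target side, since $P\otimes Q$ has operator norm $1$ and $\Phi_\cN$ is a density matrix, H\"older's inequality gives $|T|\le \norm{P\otimes Q}_\infty \norm{\Phi_\cN}_1 = 1$. On the denominator side, the hypothesis \eqref{eq:cYconc} together with $a \ge 0.21$ and $\epsilon < 0.1$ yields
\begin{equation}
    \cY \;\ge\; a - \epsilon \;\ge\; 0.21 - 0.1 \;=\; 0.11,
\end{equation}
so in particular $\cY > 0$ and one may divide by it.

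Next I would carry out the algebraic manipulation
\begin{equation}
    \frac{\cX}{\cY} - T \;=\; \frac{\cX - T\cY}{\cY} \;=\; \frac{(\cX - aT) - T(\cY - a)}{\cY},
\end{equation}
so that by the triangle inequality combined with \eqref{eq:cXconc}, \eqref{eq:cYconc}, and the bound $|T|\le 1$,
\begin{equation}
    \left|\cX - T\cY\right| \;\le\; |\cX - aT| + |T|\,|\cY - a| \;<\; \epsilon + \epsilon \;=\; 2\epsilon.
\end{equation}
Dividing by the lower bound on $\cY$ gives $|\cX/\cY - T| < 2\epsilon/0.11 < 19\epsilon$, which is strictly tighter than the stated $155\epsilon$ and hence finishes the proof.

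There is no real obstacle here: the argument is just the usual ``perturbation of a ratio'' estimate, and the numerical constant $155$ in the statement is a generous slack, which leaves plenty of room for the crude bounds $|T|\le 1$ and $a \ge 0.21$. The only thing to be careful about is ensuring the a priori bound on $|T|$ is used correctly and that the positivity of $\cY$ is guaranteed by the $\epsilon < 0.1$ restriction; both are automatic given the bounds stated in Lemmas~\ref{lem:con-den} and~\ref{lem:con-num}.
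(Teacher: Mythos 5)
Your proof is correct, and it takes a slightly different route from the paper's. The paper telescopes through the intermediate quantity $\cX \big/ \tfrac{(df-1)(dg-1)}{(d-1)(d+1)}$: it first swaps the random denominator $\cY$ for its mean using $\bigl|\tfrac{1}{\cY} - \tfrac{1}{a}\bigr| \le |\cY - a|/\min(\cY,a)^2$ together with $|\cX| \le 1.21 + \epsilon$, then swaps the numerator, which is what produces the loose constant $1.5\times 100\epsilon + 5\epsilon = 155\epsilon$. You instead put everything over the common denominator, write $\cX - T\cY = (\cX - aT) - T(\cY - a)$, and control the two pieces directly by Eq.~\eqref{eq:cXconc}, Eq.~\eqref{eq:cYconc}, the a priori bound $|T| = |\Tr((P\otimes Q)\Phi_\cN)| \le \norm{P\otimes Q}_\infty \norm{\Phi_\cN}_1 = 1$ (which the paper never invokes explicitly, relying instead on the stated range $[0.21,1.21]$), and the deterministic lower bound $\cY \ge 0.21 - \epsilon \ge 0.11$. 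This avoids the quadratic loss from the $1/\min^2$ inequality and yields $|\cX/\cY - T| < 2\epsilon/0.11 < 19\epsilon$, strictly stronger than the claimed $155\epsilon$, so the lemma follows a fortiori. All inputs you use ($a \in [0.21,1.21]$, the two concentration bounds, positivity of $\cY$ under $\epsilon < 0.1$) are exactly the hypotheses available at this point in the paper, so there is no gap.
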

\begin{proof}
The proof follows from the following analysis.
\begin{align}
    \left| \frac{\cX}{\cY} - \Tr((P \otimes Q) \Phi_\cN) \right| &\leq \left| \frac{\cX}{\cY} - \frac{\cX}{\frac{(df - 1)(d g - 1)}{(d-1)(d+1)}} \right| + \left| \frac{\cX}{\frac{(df - 1)(d g - 1)}{(d-1)(d+1)}} - \Tr((P \otimes Q) \Phi_\cN) \right| \\
    &\leq \left(1.21 + \epsilon\right)\left|\frac{1}{\cY} - \frac{1}{\frac{(df - 1)(d g - 1)}{(d-1)(d+1)}}\right|\\
    &+ 5 \left| \cX - \frac{(df - 1)(d g - 1)}{(d-1)(d+1)} \Tr((P \otimes Q) \Phi_\cN) \right| \\
    &< 1.5 \times 100 \epsilon + 5 \epsilon = 155 \epsilon.
\end{align}
The first line uses triangle inequality.
The second inequality uses the fact that
\begin{equation} \label{eq:range-exp}
    \frac{(df - 1)(d g - 1)}{(d-1)(d+1)}, \frac{(df - 1)(d g - 1)}{(d-1)(d+1)} \Tr((P \otimes Q) \Phi_\cN) \in [0.21, 1.21].
\end{equation}
The third inequality uses Eq.~\eqref{eq:range-exp}, the condition that $ 0 < \epsilon < 0.1$, and
\begin{equation}
    \left| \frac{1}{a} - \frac{1}{b} \right| < \frac{1}{\min(a, b)^2} \times \left| a - b\right|, \forall a, b > 0.
\end{equation}
This concludes the proof.
\end{proof}

\subsubsection{Final result}

Recall the following Pauli basis representation of a $2n$-qubit quantum state,
\begin{equation}
    \Phi_\cN = \sum_{P, Q \in \{I, X, Y, Z\}^{\otimes n}} \Tr((P \otimes Q) \Phi_\cN) \frac{(P \otimes Q)}{d^2}.
\end{equation}
We can learn the coefficients $\Tr((P \otimes Q) \Phi_\cN)$ using the above strategy.
We denote the estimated coefficients as $\hat{o}(\Phi_\cN, P, Q) \in \mathbb{R}$.
We can then obtain a reconstruction for $\Phi_\cN$ as
\begin{equation}
    \tilde{\Phi}_\cN(N_A, N_B) = \sum_{P, Q \in \{I, X, Y, Z\}^{\otimes n}} \hat{o}(\Phi_\cN, P, Q) \frac{(P \otimes Q)}{d^2}.
\end{equation}
We can now combine the previous results to show the sample complexity to learn the Choi matrix of the Clifford gate noise $\Phi_\cN$ up to $\epsilon$ error in Hilbert Schmidt norm.

\begin{theorem}[Sample complexity for learning Clifford gate noise]
Given $0 < \epsilon < 0.1 d$. Assume the noise is bounded as stated in Eq.~\eqref{eq:boundednoise-eq}.
With $N_A = \mathcal{O}(d^2 / \epsilon^2)$ and $N_B = \mathcal{O}(d^4 \log(d) / \epsilon^2)$,
\begin{equation}
    \norm{\tilde{\Phi}_\cN(N_A, N_B) - \Phi_\cN}_{\mathrm{HS}} < \epsilon,
\end{equation}
with probability at least $0.99$, where $\norm{X}_{\mathrm{HS}} = \Tr(X^2)$ is the Hilbert-Schmidt norm (Frobenius norm).
\end{theorem}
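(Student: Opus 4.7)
The plan is to reduce the Hilbert--Schmidt error to a sum of squared coefficient errors in the Pauli basis and then invoke the per-coefficient concentration results (Lemma~\ref{lem:con-den}, Lemma~\ref{lem:con-num}, and Lemma~\ref{lem:con-den-num}) together with a union bound.

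First, I would expand both $\Phi_\cN$ and $\tilde{\Phi}_\cN(N_A, N_B)$ in the $2n$-qubit Pauli basis. Using the orthogonality relation $\Tr((P\otimes Q)(P'\otimes Q')) = d^2 \,\delta_{P,P'}\delta_{Q,Q'}$ for $P,Q,P',Q' \in \{I,X,Y,Z\}^{\otimes n}$, the squared Hilbert--Schmidt error decomposes as
\begin{equation}
    \norm{\tilde{\Phi}_\cN(N_A,N_B) - \Phi_\cN}_{\mathrm{HS}}^2
    \;=\; \frac{1}{d^2}\sum_{P,Q}\Bigl| \hat{o}(\Phi_\cN, P, Q) - \Tr\bigl((P\otimes Q)\Phi_\cN\bigr)\Bigr|^2 .
\end{equation}
The cases $P=Q=I$ and ``exactly one of $P,Q$ equals $I$'' contribute zero because under the assumption $\cN(I) = I$ the algorithm outputs these coefficients exactly ($1$ and $0$, respectively). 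Hence only the $(d^2-1)^2 \le d^4$ terms with $P\neq I$ and $Q\neq I$ contribute.

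Next, for each such $(P,Q)$, the estimate $\hat{o}(\Phi_\cN,P,Q)$ equals the ratio $\cX/\cY$ studied in Lemmas~\ref{lem:con-den}--\ref{lem:con-den-num}. I would apply Lemma~\ref{lem:con-den-num} with precision parameter $\tilde{\epsilon} = \epsilon/(c\,d)$ for a suitable absolute constant $c$ (chosen so that $155\tilde{\epsilon}\cdot d \le \epsilon/2$, say). By Lemma~\ref{lem:con-den} with this $\tilde{\epsilon}$, choosing $N_A = \mathcal{O}(d^2/\epsilon^2)$ ensures that the denominator concentration event \eqref{eq:cYconc} holds with probability at least $0.995$; this event is a single global event that need not be union-bounded over $(P,Q)$. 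By Lemma~\ref{lem:con-num} with $\tilde{\epsilon} = \epsilon/(cd)$ and failure probability $\delta = \Theta(1/d^4)$, choosing $N_B = \mathcal{O}\bigl(d^2\cdot \log(1/\delta)/\tilde{\epsilon}^2\bigr) = \mathcal{O}(d^4 \log d/\epsilon^2)$ ensures that the numerator concentration event \eqref{eq:cXconc} holds for a \emph{single} $(P,Q)$ with probability at least $1-\delta$.

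Finally, I would union bound the numerator event over all $(d^2-1)^2 \le d^4$ non-trivial Pauli pairs, which costs a factor of $d^4 \cdot \delta = O(1)$ and hence can be absorbed to give overall success probability at least $0.99$. On the joint event, Lemma~\ref{lem:con-den-num} gives $|\hat{o}(\Phi_\cN,P,Q) - \Tr((P\otimes Q)\Phi_\cN)| \le 155\tilde{\epsilon} \le \epsilon/(2d)$ for every non-trivial $(P,Q)$, so
\begin{equation}
    \norm{\tilde{\Phi}_\cN(N_A,N_B) - \Phi_\cN}_{\mathrm{HS}}^2
    \;\le\; \frac{1}{d^2}\cdot d^4 \cdot \Bigl(\frac{\epsilon}{2d}\Bigr)^2 \;=\; \frac{\epsilon^2}{4}\;<\;\epsilon^2 ,
\end{equation}
which yields the claim. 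The main obstacle is really just the bookkeeping: making sure that the $1/d$ loss from the Pauli-basis union bound (both in precision per coefficient and in counting $d^4$ pairs) is correctly absorbed into $N_A$ and $N_B$, and that the denominator concentration event is handled only once rather than being union-bounded with the numerator events; no new probabilistic ingredient beyond the three lemmas is required.
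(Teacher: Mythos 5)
Your proposal is correct and follows essentially the same route as the paper's proof: set the per-coefficient precision to $\Theta(\epsilon/d)$, invoke Lemmas~\ref{lem:con-den}, \ref{lem:con-num}, and \ref{lem:con-den-num} with a union bound over the $\le d^4$ non-trivial Pauli pairs (which is where the $\log d$ in $N_B$ comes from), and conclude via the Pauli-basis decomposition of the Hilbert--Schmidt norm. Your bookkeeping is in fact slightly more explicit than the paper's --- e.g., noting that the denominator event $\eqref{eq:cYconc}$ is a single global event shared by all $(P,Q)$ and tracking the constant $155$ from Lemma~\ref{lem:con-den-num} --- but these refinements are absorbed into the big-$\mathcal{O}$ constants and do not change the argument.
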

\begin{proof}
Let $\epsilon' = \epsilon / d$.
We can employ union bound and Lemma~\ref{lem:con-den},~\ref{lem:con-num},~\ref{lem:con-den-num} to show that given $N_A = \mathcal{O}(1 / (\epsilon')^2) = \mathcal{O}(d^2 / \epsilon^2)$ and $N_B = \mathcal{O}(d^2 \log(d) / (\epsilon')^2) = \mathcal{O}(d^4 \log(d) / \epsilon^2)$ and $0 < \epsilon' < 0.1$, we have
\begin{equation}
    \left| \hat{o}(\Phi_\cN, P, Q) - \Tr((P \otimes Q) \Phi_\cN) \right| < \epsilon' = \epsilon / d,
\end{equation}
with probability at least $0.99$. Therefore, we have
\begin{equation}
    \norm{\tilde{\Phi}_\cN(N_A, N_B) - \Phi_\cN}_{\mathrm{HS}} = \sqrt{\frac{1}{d^2} \sum_{P, Q \in \{I, X, Y, Z\}^{\otimes n}} \left| \hat{o}(\Phi_\cN, P, Q) - \Tr((P \otimes Q) \Phi_\cN) \right|^2} < \epsilon
\end{equation}
with probability at least $0.99$.
This concludes the proof.
\end{proof}

\newpage
\section{Foundations for predicting extrinsic behavior}
\label{sec:behav-predict}

Recall that Definition~\ref{def:predictability} considers the predictability of properties in the world model.
Here, we refer to a property as a function that maps a world model to a real value.
Given action spaces $\mathcal{X}, \mathcal{Y}, \mathcal{Z}$ and an outcome space $\mathcal{B}$.
There is a class of properties that are guaranteed to be predictable: the probability of an outcome $b \in \mathcal{B}$ for an experiment specified by the sequence of actions $x \in \mathcal{X}, y_1, \ldots, y_\ell \in \cY, z \in \cZ$.
We refer to such properties as the extrinsic behavior of the world model.
Formally, this class of properties is given in the following definition.

\begin{definition}[Extrinsic behavior of world model]
Given action spaces $\mathcal{X}, \mathcal{Y}, \mathcal{Z}$ and an outcome space $\mathcal{B}$.
We consider extrinsic behavior to be a set of properties $\cF = \{f: \mathcal{W} \rightarrow \mathbb{R}\}$. For any $f \in \cF$, there is an experiment $E = (x \in \mathcal{X}, y_1 \in \mathcal{Y}, \ldots, y_\ell \in \mathcal{Y}, z \in \mathcal{Z})$, and an outcome $b \in \cB$, such that
\begin{equation}
    f(\mathcal{W}) = \Tr(M_{zb} (\mathcal{E}_{y_{\ell}} \circ \ldots \circ \mathcal{E}_{y_{1}})(\rho_x) ),
\end{equation}
for any world model $\cW =  \left(\{\rho_x\}_{x \in \mathcal{X}}, \{\mathcal{E}_y\}_{y \in \mathcal{Y}}, \{\mathcal{M}_z\}_{z \in \mathcal{Z}} \right)$.
We refer to $f$ in the set as $f^{\mathrm{(behav)}}_{x, y_1, \ldots, y_\ell, z, b}$.
\end{definition}

In the following, we give several fundamental results regarding the task of predicting extrinsic behavior. In particular, we will give a rigorous performance guarantee for existing gate-set tomography protocols formulated in terms of learning the extrinsic behavior of a world model.

\subsection{Hardness in predicting extrinsic behaviors}
\label{sec:hardness-behav}

The goal of this appendix is to prove the following theorem.

\begin{theorem}[Worst case complexity for predicting extrinsic behaviors; Restatement of Theorem~\ref{thm:worst-case-behav}]
Consider finite sets $\mathcal{X}, \mathcal{Y}, \mathcal{Z}, \mathcal{B}$ with $|\mathcal{B}| \geq 2$ and $|\mathcal{B}| = \mathcal{O}(\mathrm{poly}(|\mathcal{X}|, |\mathcal{Y}|, |\mathcal{Z}|))$.
Given $\epsilon, L > 0$.
For any model class $\mathcal{Q}$ over $\mathcal{X}, \mathcal{Y}, \mathcal{Z}, \mathcal{B}$, there is an algorithm that uses
\begin{equation}
    \widetilde{\mathcal{O}}\left(\frac{|\mathcal{X}| |\mathcal{Y}|^L |\mathcal{Z}|}{\epsilon^2} \right)
\end{equation}
experiments to accurately predict $f^{\mathrm{(behav)}}_{x, y_1, \ldots, y_L, z, b}(\mathcal{W})$ up to $\epsilon$-error for all $x \in \cX, y_1, \ldots, y_L \in \cY, z \in \cZ, b \in \cB$ with high probability.
Furthermore, there exists a model class $\mathcal{Q}$ over $\mathcal{X}, \mathcal{Y}, \mathcal{Z}, \mathcal{B}$ such that
\begin{equation}
    \Omega\left( \frac{|\mathcal{X}| |\mathcal{Y}|^L |\mathcal{Z}|}{\epsilon^2} \right)
\end{equation}
experiments are required to accurately predict $f^{\mathrm{(behav)}}_{x, y_1, \ldots, y_L, z, b}(\mathcal{W})$ up to $\epsilon$-error for all $x \in \cX, y_1, \ldots, y_L \in \cY, z \in \cZ, b \in \cB$ with high probability.
\end{theorem}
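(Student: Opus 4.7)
The upper bound is routine: for each of the $|\cX||\cY|^L|\cZ|$ sequences $(x, y_1, \ldots, y_L, z)$ I would run the experiment $K = \Theta(\log(|\cX||\cY|^L|\cZ||\cB|/\delta)/\epsilon^2)$ times and take the empirical frequency of each outcome $b \in \cB$ as the estimate. Hoeffding's inequality and a union bound over all (experiment, outcome) pairs force every estimate to lie within $\epsilon$ of the true probability with probability at least $1-\delta$. Using the assumption $|\cB| = \mathcal{O}(\mathrm{poly}(|\cX|, |\cY|, |\cZ|))$, the total experiment count is $\widetilde{\mathcal{O}}(|\cX||\cY|^L|\cZ|/\epsilon^2)$.

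For the lower bound I would construct a ``maze'' model class $\cQ$ indexed by hidden tuples $s^{\star} = (x^{\star}, y^{\star}_1, \ldots, y^{\star}_L, z^{\star})$. Let the Hilbert space be spanned by $\ket{0}, \ket{1}, \ldots, \ket{L}, \ket{\bot}$ (with any extra padding to reach a chosen dimension $d$). In the world model $\cW_{s^{\star}}$, set $\rho_{x^{\star}} = \ketbra{0}{0}$ and $\rho_x = \ketbra{\bot}{\bot}$ for $x \neq x^{\star}$. Each map $\cE_y$ acts as a classical stochastic transition in this basis: on $\ket{i}$ for $i < L$ it sends $\ket{i} \mapsto \ket{i+1}$ if $y = y^{\star}_{i+1}$ and $\ket{i} \mapsto \ket{\bot}$ otherwise; on $\ket{L}$ or $\ket{\bot}$ it sends to $\ket{\bot}$. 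These are manifestly CPTP. Finally, $\cM_{z^{\star}}$ has a distinguished element $M_{z^{\star} b_0} = \bigl(\tfrac{1}{|\cB|}+\epsilon\bigr)\ketbra{L}{L} + \tfrac{1}{|\cB|}(I-\ketbra{L}{L})$, with the other elements balanced on $\ket{L}$ so that $\sum_b M_{z^{\star} b} = I$ and each is positive semidefinite for small $\epsilon$; for $z \neq z^{\star}$ every element of $\cM_z$ is simply $I/|\cB|$. By design, any experiment under $\cW_{s^{\star}}$ yields the uniform distribution over $\cB$ unless the entire sequence of actions equals $s^{\star}$, in which case outcome $b_0$ occurs with probability $\tfrac{1}{|\cB|}+\epsilon$.

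Predicting every extrinsic behavior of $\cW_{s^{\star}}$ to accuracy $\epsilon/3$ therefore forces the algorithm to identify $s^{\star}$ among $N := |\cX||\cY|^L|\cZ|$ alternatives. Any (even adaptive) strategy induces, via the tree-based representation of \cite{huang2021information, chen2021exponential}, a coupling of its execution trace under $\cW_{s^{\star}}$ with that under a null world model in which every experiment gives a uniform outcome on $\cB$. A chain-rule calculation for KL-divergence shows that the divergence between these two traces is at most $T_{s^{\star}} \cdot \mathcal{O}(\epsilon^2)$, where $T_{s^{\star}}$ is the (random) number of experiments whose full action sequence equals $s^{\star}$, since all other experiments are distributionally identical under the two models. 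Le Cam's two-point inequality then implies $\E_{\mathrm{null}}[T_{s^{\star}}] = \Omega(1/\epsilon^2)$ for every $s^{\star}$; summing over the $N$ possible hidden tuples and using $\sum_{s^{\star}} \E_{\mathrm{null}}[T_{s^{\star}}] \leq T$ (since each experiment matches at most one $s^{\star}$) yields $T = \Omega(N/\epsilon^2)$.

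The main obstacle is a bookkeeping one: verifying that the maze construction yields a bona fide non-redundant quantum model class in the sense of Theorem~\ref{prop:equiv}, so that the lower bound is not an artifact of declaring equivalent world models distinct. CPTP and POVM validity are direct because the maps are classical stochastic transitions in a fixed basis. Non-redundancy follows because any two distinct hidden tuples $s^{\star} \neq s'$ produce different outcome distributions along the experiment $s^{\star}$ itself, while global unitary/antiunitary transformations preserve all outcome distributions. The conceptually harder step---lifting the classical needle-in-a-haystack lower bound to the adaptive quantum-experiment setting---is handled by the tree-based divergence argument of \cite{huang2021information, chen2021exponential}, which accommodates both adaptive choice of next action and arbitrary classical post-processing of measurement outcomes.
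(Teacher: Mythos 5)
Your upper bound is the paper's argument verbatim (exhaustive repetition of every length-$L$ experiment plus Hoeffding and a union bound), and your lower-bound construction is the same ``maze'' the paper builds: a hidden tuple $(x^\star,y^\star_1,\ldots,y^\star_L,z^\star)$ such that every experiment yields the uniform outcome distribution except the single experiment matching the hidden sequence, which is biased. Where you genuinely diverge is the information-theoretic finish. The paper places an explicit null model $\cW_0$ inside $\cQ$ together with $\pm$-signed alternatives and runs one two-point test between $\cW_0$ and the uniform mixture over all alternatives, bounding the leafwise likelihood ratio with Jensen's inequality; the random sign $s=\pm1$ is what cancels the first-order $\epsilon$ terms so that each query to the special experiment contributes only $\mathcal{O}(\epsilon^2)$. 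You instead compare each $\cW_{s^\star}$ separately to a reference null (which need not lie in $\cQ$) via the chain rule for KL divergence, obtaining $\mathcal{O}(\epsilon^2)\,\E_{\mathrm{null}}[T_{s^\star}]$, and then count queries: since an $\epsilon/3$-accurate prediction vector can be consistent with at most one hidden tuple, the success events are pairwise disjoint and $\sum_{s^\star}\E_{\mathrm{null}}[T_{s^\star}]\le T$ forces $T=\Omega(|\mathcal{X}||\mathcal{Y}|^L|\mathcal{Z}|/\epsilon^2)$. This route dispenses with the sign randomization and with having $\cW_0\in\cQ$, at the cost of per-alternative bookkeeping; one small correction is that ``$\E_{\mathrm{null}}[T_{s^\star}]=\Omega(1/\epsilon^2)$ for every $s^\star$'' is too strong (under the null the algorithm may default to one particular answer), but disjointness gives it for all but $\mathcal{O}(1)$ of the tuples, which is all your summation needs.

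One parameter does need adjusting: you set the bias at the special experiment to exactly $\epsilon$ while the theorem demands only $\epsilon$-accurate prediction, and for that class the constant prediction $1/|\cB|$ is already $\epsilon$-accurate with zero experiments, so the class as written does not witness the stated lower bound. Your reduction in fact works at accuracy $\epsilon/3$, so take the bias to be $3\epsilon$ (any constant multiple exceeding $2\epsilon$ suffices); nothing else in the argument changes, and the conclusion $\Omega(|\mathcal{X}||\mathcal{Y}|^L|\mathcal{Z}|/\epsilon^2)$ is unaffected.
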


The first part of the theorem is straightforward.
We consider the algorithm that runs through all possible experiments with $L$ CPTP maps. For each experiment specified by $x \in \cX, y_1, \ldots, y_L \in \cY, z \in \cZ$, the algorithm performs the experiment a number of
\begin{equation}
K = \mathcal{O}\left(\frac{\log\left(|\mathcal{X}| |\mathcal{Y}|^L |\mathcal{Z}| |\cB| \right)}{\epsilon^2} \right)
\end{equation}
times.
For any world model $\mathcal{W} \in \mathcal{Q}$, using Hoeffding's inequality, the $K$ experimental outcomes can be used to estimate the probability of an outcome $b \in \mathcal{B}$ for the experiment $E = (x, y_1, \ldots, y_L, z)$, i.e., $f^{\mathrm{(behav)}}_{x, y_1, \ldots, y_L, z, b}(\mathcal{W})$, up to an error $\epsilon$ with probability $\geq 1 - 1 / (10 |\mathcal{X}| |\mathcal{Y}|^L |\mathcal{Z}| |\cB|)$.
Using the union bound, with probability $\geq 0.9$, we can estimate $f^{\mathrm{(behav)}}_{x, y_1, \ldots, y_L, z, b}(\mathcal{W})$ up to an error $\epsilon$ for all  $x \in \cX, y_1, \ldots, y_L \in \cY, z \in \cZ, b \in \cB$.
The total number of experiments used by the algorithm is
\begin{equation}
    |\mathcal{X}| |\mathcal{Y}|^L |\mathcal{Z}| K = \mathcal{O}\left( \frac{ |\mathcal{X}| |\mathcal{Y}|^L |\mathcal{Z}| \log\left(|\mathcal{X}| |\mathcal{Y}|^L |\mathcal{Z}| |\cB| \right)}{\epsilon^2} \right) =     \widetilde{\mathcal{O}}\left(\frac{|\mathcal{X}| |\mathcal{Y}|^L |\mathcal{Z}|}{\epsilon^2} \right),
\end{equation}
because of the assumption that $|\mathcal{B}| = \mathcal{O}(\mathrm{poly}(|\mathcal{X}|, |\mathcal{Y}|, |\mathcal{Z}|))$.

The second part of the theorem is more nontrivial.
For this part of the proof, we will only utilize two elements in $\cB$, which we denote as $b_0, b_1$.
We begin by constructing a model class $\mathcal{Q}$. The model class $\cQ$ contains a null world model $\mathcal{W}_0$ given by
\begin{align}
    \rho_{x} &= \ketbra{0}{0}, &\forall x \in \mathcal{X}, \\
    \cE_{y}(\rho) &= \ketbra{0}{0}, &\forall y \in \mathcal{Y}, \\
    M_{z b_0} &= \frac{1}{2} I, &\forall z \in \mathcal{Z}, \\
    M_{z b_1} &= \frac{1}{2} I, &\forall z \in \mathcal{Z}, \\
    M_{z b} &= 0, &\forall z \in \mathcal{Z}, \forall b \in \cB \setminus \{b_0, b_1\},
\end{align}
where $I$ is the identity operator.
The model class $\mathcal{Q}$ also contains another $2 |\mathcal{X}| |\mathcal{Y}|^L |\mathcal{Z}|$ world models.
Each of the world models is given by $\mathcal{W}_{x, y_1, \ldots, y_L, z, s}$ where $x \in \cX, y_1, \ldots, y_L \in \cY, z \in \cZ, s = \pm 1$.
Because $x$ could be one of $|\cX|$ elements, $y_\ell$ could be one of $|\cY|$ elements for any $\ell = 1, \ldots, L$, $z$ could be one of $|\cZ|$ elements, and $s$ could take one of the two values $\pm 1$, we have
\begin{equation}
    \left|\{\mathcal{W}_{x, y_1, \ldots, y_L, z, s}\}_{x \in \cX, y_1, \ldots, y_L \in \cY, z \in \cZ, s \in \{\pm 1\}} \right| = 2 |\mathcal{X}| |\mathcal{Y}|^L |\mathcal{Z}|
\end{equation}
The world model $\mathcal{W}_{x, y_1, \ldots, y_L, z, s}$ is defined by the following equalities
\begin{align}
    \rho_{x'} &= \ketbra{0}{0}, &\forall x' \in \mathcal{X} \setminus \{x\}, \label{eq:Wxyz-hardalt1} \\
    \rho_{x} &= \ketbra{1}{1}, & \\
    \cE_{y_\ell}(\ketbra{\ell}{\ell}) &= \ketbra{\ell+1}{\ell+1}, & \forall 1 \leq \ell \leq L,\\
    \cE_{y}(\ketbra{\ell}{\ell}) &= \ketbra{0}{0}, & \forall 0 \leq \ell \leq L+1, y \in \cY, \, \mbox{s.t.} \, \forall 1 \leq \ell' \leq L, y \neq y_{\ell'},\\
    M_{z b_0} &= \frac{1}{2} (I + 3s \epsilon \ketbra{L+1}{L+1}), &\forall z \in \mathcal{Z}, \\
    M_{z b_1} &= \frac{1}{2} (I - 3s \epsilon \ketbra{L+1}{L+1}), &\forall z \in \mathcal{Z}, \\
    M_{z b} &= 0, &\forall z \in \mathcal{Z}, \forall b \in \cB \setminus \{b_0, b_1\}, \label{eq:Wxyz-hardalt2}
\end{align}
For the null world model $\cW_0$, the outcome distribution is always a uniform distribution over $b_0, b_1$ for any experiments.
For world model $\mathcal{W}_{x, y_1, \ldots, y_L, z, s}$, only for one experiment, i.e., when we consider $E = (x, y_1, \ldots, y_L, z)$, the outcome distribution is a biased distribution over $b_0, b_1$, in particular, we see $b_0$ with probability $\frac{1}{2} + \frac{3}{2} s \epsilon$ and $b_1$ with probability $\frac{1}{2} - \frac{3}{2} s \epsilon$. For all other experiments, the outcome distribution is again a uniform distribution over $b_0, b_1$.
We will also denote $\mathcal{W}_{x, y_1, \ldots, y_L, z, s}$ as $\mathcal{W}_{E, s}$, where $E = (x, y_1, \ldots, y_L, z)$ is an experiment.

We consider the true world model to be $\mathcal{W}_0$ with probability $1/2$ and to be  $\mathcal{W}_{x, y_1, \ldots, y_L, z, s}$ for a particular choice of $x \in \cX, y_1, \ldots, y_L \in \cY, z \in \cZ, s = \pm 1$ with probability $1/ (4 |\mathcal{X}| |\mathcal{Y}|^L |\mathcal{Z}|)$.
If there is a learning algorithm that could accurately predict $f^{\mathrm{(behav)}}_{x, y_1, \ldots, y_L, z, b}(\mathcal{W})$ up to $\epsilon$-error for all $x \in \cX, y_1, \ldots, y_L \in \cY, z \in \cZ, b \in \cB$, then the algorithm can be used to check if the true world model is equal to $\mathcal{W}_0$.
Recall that for $\mathcal{W}_0$, all experiments yield an outcome probability distribution that is uniform over $\{b_0, b_1\}$.
For any world model in $\mathcal{Q} \setminus \{\mathcal{W}_0\}$, the probability for only one of the possible experiments specified by $x \in \cX, y_1, \ldots, y_L \in \cY, z \in \cZ$ will be a biased distribution.
Hence, we can determine whether the true world model is equal to $\mathcal{W}_0$ by checking if for all $x \in \cX, y_1, \ldots, y_L \in \cY, z \in \cZ$, the probability to obtain both $b_0$ and $b_1$ are close to $1/2$.
This enables us to map the learning problem to a two-hypothesis testing problem.
And the question becomes:
\begin{center}
How many experiments are required to test if the true world model is equal to $\mathcal{W}_0$?
\end{center}
We will utilize the proof techniques given in \cite{huang2021information, chen2021exponential} to answer the above question.

A learning algorithm is represented by a tree where each node represents the collection of data received from all prior experiments.
Each edge is labeled by an experiment $E$ and an outcome $b \in \{b_0, b_1\}$ from that experiment.
We only consider outcomes $\{b_0, b_1\}$ because the other outcomes happen with zero probability by construction.
The probability for traversing that edge is the product of the probability that the experiment $E$ is performed at this stage of the learning algorithm and the probability for seeing the outcome $b$ for the experiment $E$.
The depth $T$ of the tree is the total number of experiments conducted by the learning algorithm.
After conducting $T$ experiments, we will arrive at a leaf node of the tree.
The probability to arrive at a leaf node depends on the true world model.

We consider two events: when the true world model is $\mathcal{W}_0$, and when the true world model is not equal to $\mathcal{W}_0$.
We distinguish the two events based on the data we collected from the $T$ experiments, i,e., which leaf node we arrived at.
In order to successfully distinguish between the two events with a constant probability, we need the total variation distance of the probability distribution over the leaf nodes to be of $\Omega(1)$. Formally, this is known as LeCam's two point method.

Each leaf node $\ell$ is specified by the path from root to the leaf node, which is a sequence of $T$ experiments and their corresponding outcomes.
Hence, we write each leaf node as $\ell_{(E_1, \beta_1) \ldots, (E_T, \beta_T)}$, where $\beta_t \in \{b_0, b_1\} \subseteq \mathcal{B}$.
For the null world model $\mathcal{W}_0$, the probability of a leaf node is given by
\begin{equation}
    p_{\mathcal{W}_0}(\ell_{(E_1, \beta_1) \ldots, (E_T, \beta_T)}) = \prod_{t=1}^T \left( \frac{1}{2} p_{(E_1, \beta_1) \ldots, (E_{t-1}, \beta_{t-1})}(E_t) \right),
\end{equation}
where $p_{(E_1, \beta_1) \ldots, (E_{t-1}, \beta_{t-1})}(E_t)$ is the probability that the algorithm would perform the experiment $E_t$ when the algorithm have run $t-1$ experiments $E_1, \ldots, E_{t-1}$ and the corresponding outcomes are $\beta_1, \ldots, \beta_{t-1} \in \{b_0, b_1\}$.
Recall that we denote the world model $\mathcal{W}_{x, y_1, \ldots, y_L, z, s}$ defined in Eq.~\eqref{eq:Wxyz-hardalt1} to Eq.~\eqref{eq:Wxyz-hardalt2} as $\mathcal{W}_{E, s}$, where $E$ is a tuple $(x, y_1, \ldots, y_L, z)$.
Note that an experiment $E_t$ as defined in Def.~\ref{def:experiment} is also a tuple.
We will define the delta function $\delta_{E, E'}$ to be one if the two tuples $E, E'$ are exactly the same, and zero otherwise.
From Eq.~\eqref{eq:Wxyz-hardalt1} to Eq.~\eqref{eq:Wxyz-hardalt2}, the probability of a leaf node for the world model $\mathcal{W}_{E, s}$ is given by
\begin{equation}
    p_{\mathcal{W}_{E, s}}(\ell_{(E_1, \beta_1) \ldots, (E_T, \beta_T)}) = \prod_{t=1}^T \left( \frac{1}{2} \left( 1 + 3s \epsilon \delta_{E_t, E} \sign(\beta_t) \right) p_{(E_1, \beta_1) \ldots, (E_{t-1}, \beta_{t-1})}(E_t) \right),
\end{equation}
where $\sign(\beta_t) = 1$ if $\beta_t = b_0$ and $\sign(\beta_t) = -1$ if $\beta_t = b_1$.

We can now write down the total variation distance between the probability distribution over the leaf nodes under the two events (true world model $= \mathcal{W}_0$ or true world model $\neq \cW_0$).
In the event that the true world model $\neq \cW_0$, we have the true world model is equal to $\mathcal{W}_{E, s}$ for a particular choice of $E = (x, y_1, \ldots, y_L, z), x \in \cX, y_1, \ldots, y_L \in \cY, z \in \cZ, s = \pm 1$ with probability $1/ (2 |\mathcal{X}| |\mathcal{Y}|^L |\mathcal{Z}|)$.
The total variation distance is given by
\begin{align}
    \mathrm{TV} &= \frac{1}{2} \sum_{\ell} \left| p_{\mathcal{W}_0}(\ell) - \E_{E, s} p_{\mathcal{W}_{E, s}}(\ell) \right| = \sum_{\ell: \,\, p_{\mathcal{W}_0}(\ell) \geq \E_{E, s} p_{\mathcal{W}_{E, s}}(\ell)} \left( p_{\mathcal{W}_0}(\ell) - \E_{E, s} p_{\mathcal{W}_{E, s}}(\ell) \right)\\
    &\leq \sum_{\ell: \,\, p_{\mathcal{W}_0}(\ell) \geq \E_{E, s} p_{\mathcal{W}_{E, s}}(\ell)} p_{\mathcal{W}_0}(\ell) \eta \leq \eta,
\end{align}
where the parameter $\eta$ satisfies
\begin{equation}
    \frac{\E_{E, s} p_{\mathcal{W}_{E, s}}(\ell)}{p_{\mathcal{W}_0}(\ell)} \geq 1 - \eta, \forall \, \mbox{leaf node } \ell.
\end{equation}
The expectation $\E_{E, s}$ with $E = (x, y_1, \ldots, y_L, z)$ is a uniform distribution over $x \in \cX, y_1, \ldots, y_L \in \cY, z \in \cZ, s = \pm 1$.
We will now find a parameter $\eta$ that satisfies the above condition.
Given $\ell = \ell_{(E_1, \beta_1) \ldots, (E_T, \beta_T)}$, we have
\begin{align}
    \frac{\E_{E, s} p_{\mathcal{W}_{E, s}}(\ell)}{p_{\mathcal{W}_0}(\ell)} &= \E_{E, s} \prod_{t=1}^T \left( 1 + 3s \epsilon \delta_{E_t, E} \sign(\beta_t) \right) \\
    &= \E_{E, s} \exp\left[\sum_{t=1}^T \log\left( 1 + 3s \epsilon \delta_{E_t, E} \sign(\beta_t) \right)\right] \\
    &\geq \exp\left[\E_{E, s} \sum_{t=1}^T \log\left( 1 + 3s \epsilon \delta_{E_t, E} \sign(\beta_t) \right)\right] \\
    &= \exp\left[\frac{1}{2} \sum_{t=1}^T \E_{E} \log\left( 1 - 9 \epsilon^2 \delta_{E_t, E} \right)\right] \\
    &\geq \exp\left[\frac{1}{2} \sum_{t=1}^T \frac{1}{|\mathcal{X}| |\mathcal{Y}|^L |\mathcal{Z}|} \log\left( 1 - 9 \epsilon^2 \right)\right] \\
    &\geq \exp\left[- \sum_{t=1}^T \frac{1}{|\mathcal{X}| |\mathcal{Y}|^L |\mathcal{Z}|} 9 \epsilon^2 \right] \\
    &\geq 1 - \frac{9T \epsilon^2}{|\mathcal{X}| |\mathcal{Y}|^L |\mathcal{Z}|}.
\end{align}
The third line is Jensen's inequality.
The fourth line uses the fact that $s = \pm 1$ uniformly.
The fifth line uses the fact that $E = (x, y_1, \ldots, y_L, z)$ is distributed uniformly over $x \in \cX, y_1, \ldots, y_L \in \cY, z \in \cZ$.
The second-to-last line uses $\log(1-x) \geq -2x, \forall x \in [0, 0.79]$ which is satisfied given $\epsilon < 0.29$.
The last line uses $\exp(x) \geq 1 + x, \forall x \in \mathbb{R}$.
Together, we can choose $\eta = \frac{9T \epsilon^2}{|\mathcal{X}| |\mathcal{Y}|^L |\mathcal{Z}|}$ and we have established
\begin{equation}
    \Omega(1) \leq \mathrm{TV} \leq \frac{9T \epsilon^2}{|\mathcal{X}| |\mathcal{Y}|^L |\mathcal{Z}|}.
\end{equation}
We have thus proved that the number of experiments must be at least $T = \Omega(|\mathcal{X}| |\mathcal{Y}|^L |\mathcal{Z}| / \epsilon^2)$.

% \newpage
\section{A general theorem for predicting extrinsic behavior}
\label{sec:thm-extrinsic-behav}

In order to avoid the worst-case complexity proved in the previous subsection, we need to make a stronger assumption about the true world model.
We present one such assumption that is closely related to the assumption used in existing gate set tomography protocols \cite{greenbaum2015introduction, blume2017demonstration, nielsen2020gate, brieger2021compressive}.
Intuitively, the assumption is that we can efficiently find a complete set of states that span the set of states we can generate using the world model, and a complete set of POVM elements whose span includes the complete set of states.
In the worst case, such as in the world models we constructed in the previous subsection, we cannot find a complete set of states and POVM elements efficiently.
As a result, we saw that the optimal complexity scales very badly (exponentially in $L$).

Before presenting the assumption, we will recall some basic concepts.
Given a world model $\mathcal{W} = \left(\{\rho_x\}_{x \in \mathcal{X}}, \{\mathcal{E}_y\}_{y \in \mathcal{Y}}, \{\mathcal{M}_z\}_{z \in \mathcal{Z}} \right)$.
We can compose a state $\rho$ and a CPTP map $\mathcal{E}$ to prepare a new state $\mathcal{E}(\rho)$.
Similarly, we can compose a POVM $\cM$ and a CPTP map $\cE$ to compose a new POVM $\cM \circ \cE$ that is equivalent to first applying $\cE$ then measure using $\cM$.
We refer to the states that could be prepared by composing a finite sequence of CPTP maps $\mathcal{E}_y$ acting on some initial state $\rho_x$ as states that could be generated in $\cW$.

\subsection{Assumptions}

We assume that we have found a linearly independent set of composed states $\{ \rho_1, \ldots, \rho_{K_1} \}$ such that $\mathrm{span}(\rho_1, \ldots, \rho_{K_1})$ contains all states that can be generated in $\cW$, where $\mathrm{span}(\ldots)$ means all linear combinations with real coefficients.
This is equivalent to stating that for any state $\rho$ that can be generated in $\cW$, there exists a unique set of coefficients $\alpha_{1}, \ldots, \alpha_{K_1} \in \mathbb{R}$ satisfying
\begin{equation} \label{eq:rho-K1-span}
    \rho = \sum_{k_1 = 1}^{K_1} \alpha_{k_1} \rho_{k_1}.
\end{equation}
Because $\rho$ and $\rho_{k_1}$ for all $k_1 \in \{1, \ldots, K_1\}$ are quantum states, we have $\sum_{k_1} \alpha_{k_1} = 1$.
We consider $R_1 > 0$ to be a constant such that
\begin{equation} \label{eq:rho-K1-span-bound}
\sum_{k_1=1}^{K_1} |\alpha_{k_1}| \leq R_1
\end{equation}
for any quantum state $\rho$ in $\mathrm{span}(\rho_1, \ldots, \rho_{K_1})$.
\begin{remark}
$R_1 < \infty$ because a finite-dimensional quantum state space is compact.
\end{remark}

We also assume that we have found a set of composed POVM elements $\{ M_1, \ldots, M_{K_2} \}$ such that $\mathrm{span}(\rho_1, \ldots, \rho_{K_1}) \subseteq \mathrm{span}(M_1, \ldots, M_{K_2})$.
This assumption implies that for any $\alpha, \alpha' \in \mathbb{R}^{K_1}$, if
\begin{equation}
    \forall k_2 \in \{1, \ldots, K_2\}, \Tr\left(M_{k_2} \sum_{k_1=1}^{K_1} \alpha_{k_1} \rho_{k_1}\right) = \Tr\left(M_{k_2} \sum_{k_1=1}^{K_1} \alpha_{k_1}' \rho_{k_1}\right),
\end{equation}
then $\alpha = \alpha'$.
In particular, we consider $R_2 > 0$ to be a constant such that
\begin{equation} \label{eq:R2bound-MK}
    \norm{\alpha - \alpha'}_1 \leq R_2 \sum_{k_2=1}^{K_2} \left| \Tr\left(M_{k_2} \sum_{k_1=1}^{K_1} \alpha_{k_1} \rho_{k_1}\right) - \Tr\left(M_{k_2} \sum_{k_1=1}^{K_1} \alpha_{k_1}' \rho_{k_1}\right)\right|
\end{equation}
for any $\alpha, \alpha' \in \mathbb{R}^{K_1}$.
\begin{remark}
$R_2 < \infty$ because the homogeneity of Eq.~\eqref{eq:R2bound-MK} enables maximization to find $R_2$ over the compact space $\norm{\alpha - \alpha'}_1 = 1$.
\end{remark}

\subsection{Learning a frame}

We are now ready to present an efficient learning algorithm that can be used to predict the extrinsic behavior of the world model $\mathcal{W}$.
For each $k_2 = 1, \ldots, K_2$, we consider the POVM element $M_{k_2}$ to be a unit vector in a $K_2$-dimensional vector space,
\begin{equation}
    M_{k_2} \rightarrow \hat{e}_{k_2}.
\end{equation}
For each $k_1 = 1, \ldots, K_1$, we consider the state $\rho_{k_1}$ to be a vector $w_{k_1} \in \mathbb{R}^{K_2}$,
\begin{equation}
    \rho_{k_1} \rightarrow w_{k_1} \equiv \sum_{k_2 = 1}^{K_2} w_{k_1, k_2} \hat{e}_{k_2},
\end{equation}
where $w_{k_1, k_2}$ is an estimate for $\Tr(M_{k_2} \rho_{k_1})$, such that
\begin{equation} \label{eq:probw_k1k2}
    \mathrm{Pr}\left[\,\, |w_{k_1, k_2} - \Tr(M_{k_2} \rho_{k_1})| > \tilde{\epsilon}_w \,\, \right] \leq \delta.
\end{equation}
We can obtain an estimate for $w_{k_1, k_2}$ with the above guarantee using $\mathcal{O}(\log(1/\delta) / \tilde{\epsilon}_w^2)$ experiments. We simply prepare $\rho_{k_1}$ and measure the POVM associated to $M_{k_2}$, then compute the average of the indicator function on whether the POVM element $M_{k_2}$ is the outcome.
Because the indicator function of an event is a bounded random variable, Hoeffding's inequality gives us the above rigorous guarantee.

\subsection{Learning states}

For the initial states $\rho_x$ with $x \in \mathcal{X}$, we represent $\rho_x$ as a $K_1$-dimensional real vector
\begin{equation} \label{eq:learned-rhox}
    \rho_x \rightarrow v_x \equiv \sum_{k_1 = 1}^{K_1} v_{x, k_1} \hat{e}_{k_1},
\end{equation}
where $v_x$ is an optimum of the following optimization problem
\begin{equation} \label{eq:optv-x}
    \mathrm{OPT}(v_x) = \min_{\substack{\alpha \in \mathbb{R}^{K_1}, \norm{\alpha}_1 \leq R_1,\\ \sum_{k_1=1}^{K_1} \alpha_{k_1} = 1}} \norm{\sum_{k_1 = 1}^{K_1} \alpha_{k_1} w_{k_1} - \sum_{k_2=1}^{K_2} v'_{x, k_2} \hat{e}_{k_2}}_1,
\end{equation}
and $v'_{x, k_2}$ is an estimate for $\Tr(M_{k_2} \rho_x)$ by doing additional experiments, such that
\begin{equation} \label{eq:probv_xk2}
    \mathrm{Pr}\left[\,\, |v'_{x, k_2} - \Tr(M_{k_2} \rho_{x})| > \tilde{\epsilon}_v \,\, \right] \leq \delta.
\end{equation}
Because the above optimization is a convex optimization, one could solve for $\alpha_x$ efficiently.
The purpose of the optimization problem is to project the vector $\sum_{k_2=1}^{K_2} v'_{x, k_2} \hat{e}_{k_2}$ onto the space spanned by $v_1, \ldots, v_{K_1}$.
For a fixed $\tilde{\epsilon}_v, \delta > 0$, the total number of experiments for learning representations of the states is $\mathcal{O}(K_2 |\mathcal{X}| \log(1 / \delta) / \tilde{\epsilon}^2_v)$.

\subsection{Learning CPTP maps}

For the CPTP maps $\cE_y$ with $y \in \mathcal{Y}$, we represent $\cE_y$ as a $K_1 \times K_1$ real matrix of the form
\begin{equation} \label{eq:learned-Ey}
    \cE_y \rightarrow A_y \equiv \sum_{k_1 = 1}^{K_1} \sum_{k_1' = 1}^{K_1} A_{y, k_1, k_1'} \hat{e}_{k_1} \hat{e}_{k_1'}^T,
\end{equation}
where for each fixed value of $k_1'$, the $K_1$-dimensional vector $A_{y, (\cdot), k_1'}$ is an optimum of the following convex optimization problem
\begin{equation} \label{eq:optA-y}
    \mathrm{OPT}(A_{y, (\cdot), k_1'}) = \min_{\substack{\alpha \in \mathbb{R}^{K_1}, \norm{\alpha}_1 \leq R_1,\\ \sum_{k_1=1}^{K_1} \alpha_{k_1} = 1}} \norm{\sum_{k_1 = 1}^{K_1} \alpha_{k_1} w_{k_1} - \sum_{k_2=1}^{K_2} A'_{y, k_2, k_1'} \hat{e}_{k_2}}_1,
\end{equation}
and $A'_{y, k_2, k_1'}$ is an estimate for $\Tr(M_{k_2} \cE_{y}(\rho_{k_1'}))$ by doing additional experiments, such that
\begin{equation} \label{eq:probA_yk2}
    \mathrm{Pr}\left[\,\, |A'_{y, k_2, k_1'} - \Tr(M_{k_2} \cE_{y}(\rho_{k_1'}))| > \tilde{\epsilon}_A \,\, \right] \leq \delta.
\end{equation}
Similar to before, the purpose of the optimization is to project the vector $\sum_{k_2=1}^{K_2} A'_{y, k_2, k_1'} \hat{e}_{k_2}$ to the space formed by $v_1, \ldots, v_{K_1}$.
For a fixed $\tilde{\epsilon}_A, \delta > 0$, the total number of experiments for learning representations of the CPTP maps is $\mathcal{O}(K_1 K_2 |\mathcal{Y}| \log(1 / \delta) / \tilde{\epsilon}^2_A)$.

\subsection{Learning POVMs}

For the POVM elements $M_{zb}$ with $z \in \mathcal{Z}, b \in \mathcal{B}$, we represent $M_{zb}$ as a $K_1$-dimensional real vector
\begin{equation}
    M_{zb} \rightarrow u_{zb} \equiv \sum_{k_1=1}^{K_1} u_{z, b, k_1} \hat{e}_{k_1},
\end{equation}
where $u_{z, b, k_1}$ is an estimate for $\Tr(M_{zb} \rho_{k_1})$ by doing additional experiments, such that
\begin{equation} \label{eq:probu_zbk1}
    \mathrm{Pr}\left[\,\, |u_{z, b, k_1} - \Tr(M_{z b} \rho_{k_1})| > \tilde{\epsilon}_u \,\, \right] \leq \delta.
\end{equation}
For a fixed $\tilde{\epsilon}_u, \delta > 0$, the total number of experiments for learning representations of the POVMs is $\mathcal{O}(K_1 |\mathcal{Z}| \log(1 / \delta) / \tilde{\epsilon}_u^2)$.
The reason that we don't need an extra factor of $|\cB|$ is because when we measure the POVM $\mathcal{M}_z$, we can simultaneously estimate $c_{z, b, k_1}$ for all $b \in \mathcal{B}$.

\subsection{Prediction procedure and rigorous guarantee}

During the prediction phase, we predict the probability for obtaining an outcome $b \in \mathcal{B}$ after running the experiment $E = (x, y_1, \ldots, y_L, z)$ to be
\begin{equation}
    u_{z b}^T P A_{y_L} P \ldots P A_{y_1} v_x,
\end{equation}
where $P$ is the projection to the convex set $\{ \alpha \in \mathbb{R}^{K_1} | \sum_{k_1} \alpha_{k_1} = 1, \norm{\alpha}_1 \leq R_1 \}$.
We now give a rigorous performance guarantee for this algorithm.

\begin{theorem}[Predicting extrinsic behaviors; Restatement of Theorem~\ref{thm:ext-behavior}] \label{thm:ext-behavor-ditto}
Assume that we have found a complete set of linearly independent states and POVMs.
Using the proposed algorithm, we can predict $\Tr(M_{zb} (\cE_{y_L} \circ \ldots \circ \cE_{y_1})(\rho_x) )$ to $\epsilon$ error for all $x \in \cX, y_1, \ldots, y_L \in \cY, z \in \cZ, b \in \cB$ using a total of
\begin{equation}
\widetilde{\mathcal{O}}\left( \frac{ |\cX| + L^2 |\cY| + |\cZ| }{\epsilon^2} \right)
\end{equation}
experiments, where $\widetilde{\mathcal{O}}(\cdot)$ neglects logarithmic factors and considers $K_1, K_2, R_1, R_2$ to be constant.
\end{theorem}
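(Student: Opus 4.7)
The plan is to analyze the algorithm in three stages. First, every state generatable in $\mathcal{W}$ has, by the spanning assumption, a unique coefficient vector $\alpha^{*}\in\mathbb{R}^{K_1}$ with $\sum_{k_1}\alpha^{*}_{k_1}=1$ and $\norm{\alpha^{*}}_1\le R_1$ (cf.\ Eqs.~\eqref{eq:rho-K1-span}--\eqref{eq:rho-K1-span-bound}). Denote by $\alpha^{*}_x$ the vector for $\rho_x$, by $A^{*}_y\in\mathbb{R}^{K_1\times K_1}$ the matrix whose $k_1'$-th column encodes $\cE_y(\rho_{k_1'})$, and by $u^{*}_{z,b,k_1}=\Tr(M_{zb}\rho_{k_1})$ the ideal POVM covector; then by linearity
\[
\Tr\!\left(M_{zb}\,(\cE_{y_L}\!\circ\cdots\circ\cE_{y_1})(\rho_x)\right) \;=\; (u^{*}_{zb})^{T} A^{*}_{y_L}\cdots A^{*}_{y_1}\,\alpha^{*}_x,
\]
which is precisely the quantity the learned surrogate $u_{zb}^{T}PA_{y_L}P\cdots PA_{y_1}v_x$ is approximating. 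The whole proof reduces to comparing these two numbers.

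A Hoeffding union bound over the $\mathcal{O}(K_1K_2+K_2|\cX|+K_1K_2|\cY|+K_1|\cZ||\cB|)$ target entries gives, with high probability, simultaneous entrywise errors $\tilde\epsilon_w,\tilde\epsilon_v,\tilde\epsilon_A,\tilde\epsilon_u$ as in Eqs.~\eqref{eq:probw_k1k2}--\eqref{eq:probu_zbk1}. Plugging the feasible point $\alpha^{*}_x$ into~\eqref{eq:optv-x} upper-bounds its optimum by $K_2(R_1\tilde\epsilon_w+\tilde\epsilon_v)$ via the triangle inequality; by optimality the learned $v_x$ matches this bound, so $\norm{\sum_{k_1}(v_{x,k_1}-\alpha^{*}_{x,k_1})\,w_{k_1}}_1\le 2K_2(R_1\tilde\epsilon_w+\tilde\epsilon_v)$. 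Swapping $w_{k_1,k_2}$ for $\Tr(M_{k_2}\rho_{k_1})$ costs another $2R_1K_2\tilde\epsilon_w$, whereupon the $R_2$ hypothesis~\eqref{eq:R2bound-MK} converts this into the coefficient bound $\norm{v_x-\alpha^{*}_x}_1=\mathcal{O}(\tilde\epsilon_w+\tilde\epsilon_v)$ with constant depending only on $K_2,R_1,R_2$. The identical argument applied column-by-column to~\eqref{eq:optA-y} yields $\max_{k_1'}\norm{A_{y,\cdot,k_1'}-A^{*}_{y,\cdot,k_1'}}_1=\mathcal{O}(\tilde\epsilon_w+\tilde\epsilon_A)$, and $\norm{u_{zb}-u^{*}_{zb}}_\infty\le\tilde\epsilon_u$ is direct.

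The heart of the proof is the $L$-step error propagation. Set $q_\ell=PA_{y_\ell}q_{\ell-1}$, $q_0=Pv_x$, $q^{*}_\ell=A^{*}_{y_\ell}q^{*}_{\ell-1}$, $q^{*}_0=\alpha^{*}_x$. Two properties of the projection $P$ onto $S=\{\alpha:\sum\alpha_{k_1}=1,\norm{\alpha}_1\le R_1\}$ are essential: each $q^{*}_\ell\in S$ because $(\cE_{y_\ell}\!\circ\cdots\circ\cE_{y_1})(\rho_x)$ is a generatable state, so $Pq^{*}_\ell=q^{*}_\ell$ and the truth is invariant under $P$; and $\norm{q_\ell}_1\le R_1$ is enforced, preventing blow-up. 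The cleanest way to propagate is to pass into the physical trace-norm picture: writing $\sigma^{\mathrm{learn}}_\ell=\sum_{k_1}q_{\ell,k_1}\rho_{k_1}$ and $\sigma^{*}_\ell=(\cE_{y_\ell}\!\circ\cdots)(\rho_x)$, CPTP-contractivity of $\cE_{y_\ell}$ gives a one-step non-expansive bound on $\norm{\sigma^{\mathrm{learn}}_\ell-\sigma^{*}_\ell}_1$ plus an additive defect controlled by $\norm{(A_{y_\ell}-A^{*}_{y_\ell})q_{\ell-1}}_1\le R_1\cdot\max_{k_1'}\norm{A_{y_\ell,\cdot,k_1'}-A^{*}_{y_\ell,\cdot,k_1'}}_1=\mathcal{O}(\tilde\epsilon_w+\tilde\epsilon_A)$; translating back through the $R_2$ inequality absorbs all constants. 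Unfolding the recursion yields $\norm{\sigma^{\mathrm{learn}}_L-\sigma^{*}_L}_1=\mathcal{O}\bigl(\norm{v_x-\alpha^{*}_x}_1+L(\tilde\epsilon_w+\tilde\epsilon_A)\bigr)$, and contracting with $u_{zb}$ and adding its estimation error gives total prediction error $\mathcal{O}(\tilde\epsilon_u+\tilde\epsilon_v+L\tilde\epsilon_A+L\tilde\epsilon_w)$.

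Setting $\tilde\epsilon_v,\tilde\epsilon_u=\Theta(\epsilon)$ and $\tilde\epsilon_A,\tilde\epsilon_w=\Theta(\epsilon/L)$, the four sample totals recorded in Appendix~\ref{sec:thm-extrinsic-behav} combine to $\widetilde{\mathcal{O}}\!\left((|\cX|+L^2|\cY|+|\cZ|)/\epsilon^2\right)$, the logarithmic factors absorbing the union bound over all actions and outcomes. The main obstacle I anticipate is precisely the $L$-step propagation: the operator norm $\norm{A^{*}_y}_{1\to 1}$ can exceed one because frame-coefficient vectors may be signed, so a naive coefficient-space telescope produces a spurious $R_1^{L}$ blow-up. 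The resolution is to push the telescope into the trace-norm picture, where CPTP contractivity supplies a genuine one-step non-expansiveness, and only then to translate back to $\ell_1$ coefficient errors via the $R_2$ hypothesis; this is what converts the prefactor from $R_1^{L}$ to the linear $L$ that ultimately yields the $L^{2}$ in the sample complexity.
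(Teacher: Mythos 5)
Your proposal is correct and follows essentially the same route as the paper: the same exact representation $(u^*_{zb})^T A^*_{y_L}\cdots A^*_{y_1}v^*_x$, the same feasible-point-plus-$R_2$ argument to bound the learned coefficient errors, the same passage to the physical trace-norm picture with CPTP contractivity to make the error accumulate linearly in $L$ (avoiding the $R_1^L$ blow-up, which is exactly the paper's Step C induction), and the same parameter choices $\tilde\epsilon_v,\tilde\epsilon_u=\Theta(\epsilon)$, $\tilde\epsilon_A,\tilde\epsilon_w=\Theta(\epsilon/L)$ yielding $\widetilde{\mathcal{O}}((|\cX|+L^2|\cY|+|\cZ|)/\epsilon^2)$. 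The only cosmetic difference is that you invoke the $R_2$ inequality again when returning from trace norm to coefficients, whereas the paper simply pairs the trace-norm difference with $M_{zb}$ using $\norm{M_{zb}}_\infty\le 1$; both handle the projection $P$ at the same (informal) level of care as the paper.
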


\subsection{Proof of Theorem~\ref{thm:ext-behavor-ditto} --- Step A. Representations of target outputs}

From Eq.~\eqref{eq:rho-K1-span}~and~\eqref{eq:rho-K1-span-bound}, $\forall x \in \mathcal{X}$, we can write $\rho_x$ as
\begin{equation} \label{eq:rhoxv*}
    \rho_x = \sum_{k_1=1}^{K_1} v^*_{x, k_1} \rho_{k_1},
\end{equation}
for some $K_1$-dimensional vector $v^*_x$ with $\norm{v^*_x}_1 \leq R_1$.
Similarly, Eq.~\eqref{eq:rho-K1-span}~and~\eqref{eq:rho-K1-span-bound} shows that for any $k_1' \in \{1, \ldots, K_1\}$, we can write $\cE_y(\rho_{k_1'})$ as
\begin{equation} \label{eq:EyA*}
    \cE_y(\rho_{k_1'}) = \sum_{k_1=1}^{K_1} A^*_{y, k_1, k_1'} \rho_{k_1},
\end{equation}
for some $K_1$-dimensional vector $A^*_{y, (\cdot), k_1'}$ with $\norm{A^*_{y, (\cdot), k_1'}}_1 \leq R_1$.
Using this representation, for an experiment specified by $x \in \cX, y_1, \ldots, y_L \in \cY, z \in \cZ$, the probability to obtain the measurement outcome $b \in \cB$ can be written as
\begin{equation}
    \Tr(M_{zb} (\cE_{y_L} \circ \ldots \circ \cE_{y_1})(\rho_x) ) = \sum_{k_1 = 1}^{K_1} (A^*_{y_L}\ldots A^*_{y_1} v^*_x)_{k_1} \Tr( M_{zb} \rho_{k_1} ) = (u^*_{zb})^T A^*_{y_L}\ldots A^*_{y_1} v^*_x,
\end{equation}
where we let $u^*_{z, b, k_1} \equiv \Tr( M_{zb} \rho_{k_1} )$ and $u^*_{zb} = \sum_{k_1} u^*_{z, b, k_1} \hat{e}_{k_1}$.

\subsection{Proof of Theorem~\ref{thm:ext-behavor-ditto} --- Step B. Error analysis for the learned representations}

We begin by comparing the two vectors $v_{x}^*$ and $v_{x}$, where $v_x$ is the representation learned from experiments; see Eq.~\eqref{eq:learned-rhox}. We can bound the difference as follows,
\begin{align}
    \norm{v_{x}^* - v_x}_1 &\leq R_2 \sum_{k_2=1}^{K_2} \left|\Tr(M_{k_2} \rho_x) - \sum_{k_1=1}^{K_1} v_{x, k_1} \Tr(M_{k_2} \rho_{k_1})\right| \\
    &\leq R_2 \sum_{k_2=1}^{K_2} \left(\left|v'_{x, k_2} - \sum_{k_1=1}^{K_1} v_{x, k_1} w_{k1, k2}\right| + \tilde{\epsilon}_v + \norm{v_{x}}_1 \tilde{\epsilon}_w\right) \label{eq:R2errorbound} \\
    &= R_2 \mathrm{OPT}(v_x) + R_2 K_2 \left( \tilde{\epsilon}_v + \norm{v_{x}}_1 \tilde{\epsilon}_w \right),\\
    &\leq R_2 K_2 \left( \tilde{\epsilon}_v + \norm{v^*_{x}}_1 \tilde{\epsilon}_w \right) + R_2 K_2 \left( \tilde{\epsilon}_v + \norm{v_{x}}_1 \tilde{\epsilon}_w \right),\\
    &\leq 2 R_2 K_2 \left( \tilde{\epsilon}_v + R_2 \tilde{\epsilon}_w \right),
\end{align}
with probability at least $1 - (K_2 + 1)\delta$.
The first line follows from Eq.~\eqref{eq:R2bound-MK}~and~\eqref{eq:rhoxv*}.
The second line follows from Eq.~\eqref{eq:probw_k1k2},~\eqref{eq:probv_xk2}, and the union bound.
The third line follows from Eq.~\eqref{eq:optv-x}.
The fourth line follows from considering $\alpha = v^*_x$ in the optimization problem given at Eq.~\eqref{eq:optv-x}, and utilizing the following bounds,
\begin{align}
    &\norm{\sum_{k_1 = 1}^{K_1} v^*_{x, k_1} w_{k_1} - \sum_{k_2=1}^{K_2} v'_{x, k_2} \hat{e}_{k_2}}_1\\
    &\leq \sum_{k_2=1}^{K_2} \left|\sum_{k_1 = 1}^{K_1} v^*_{x, k_1} \Tr(M_{k_2} \rho_{k_1}) - \Tr(M_{k_2} \rho_x) \right| + K_2 \left( \tilde{\epsilon}_v + \norm{v^*_{x}}_1 \tilde{\epsilon}_w \right)\\
    &= \sum_{k_2=1}^{K_2} \left|\Tr(M_{k_2} \rho_x) - \Tr(M_{k_2} \rho_x) \right| + K_2 \left( \tilde{\epsilon}_v + \norm{v^*_{x}}_1 \tilde{\epsilon}_w \right) = K_2 \left( \tilde{\epsilon}_v + \norm{v^*_{x}}_1 \tilde{\epsilon}_w \right).
\end{align}
The last line follows from $\norm{v_x^*}_1, \norm{v_x}_1 \leq R_1$.

We can compare $A^*_{y, k_1, k_1'}$ and $A_{y, k_1, k_1'}$ by employing the same analysis and replacing Eq.~\eqref{eq:rhoxv*} with \eqref{eq:EyA*}, Eq.~\eqref{eq:learned-rhox} with \eqref{eq:learned-Ey}, Eq.~\eqref{eq:probv_xk2} with \eqref{eq:probA_yk2}, and Eq.~\eqref{eq:optv-x} with \eqref{eq:optA-y}.
For all $y \in \mathcal{Y}, k_1' \in \{1, \ldots, K_1\}$, the analysis shows that the following,
\begin{equation}
    \norm{A^*_{y, (\cdot), k_1'} - A_{y, (\cdot), k_1'}}_1 \leq 2 R_2 K_2 \left( \tilde{\epsilon}_A + R_2 \tilde{\epsilon}_w \right),
\end{equation}
happens with probability at least $1 - (K_2 + 1)\delta$.
And recall from Eq.~\eqref{eq:probu_zbk1}, we have $|u_{z, b, k_1} - u^*_{z, b, k_1}| \leq \tilde{\epsilon}_u$ with probability at least $1 - \delta$.
Together, with probability at least $1 - |\cX| (K_2+1) \delta - |\cY| K_1 (K_2 + 1) \delta - |\cZ| |\cB| \delta = 1 - ((|\cX| + K_1 |\cY|) (K_2+1) + |\cZ| |\cB|) \delta$, we have
\begin{align}
    \sum_{k_1 = 1}^{K_1} \left| v^*_{x, k_1} - v_{x, k_1} \right| &\leq 2 R_2 K_2 \left( \tilde{\epsilon}_v + R_2 \tilde{\epsilon}_w \right), & \forall x \in \cX, \label{eq:rhoX-behav}\\
    \sum_{k_1 = 1}^{K_1} \left| A^*_{y, k_1, k_1'} - A_{y, k_1, k_1'}\right| &\leq 2 R_2 K_2 \left( \tilde{\epsilon}_A + R_2 \tilde{\epsilon}_w \right), & \forall y \in \cY, \, \forall k_1' \in \{1, \ldots, K_1\}, \label{eq:Ey-behav}\\
    \left|  u^*_{z, b, k_1} - u_{z, b, k_1} \right| &\leq \tilde{\epsilon}_u, & \forall z \in \cZ, b \in \cB, \, \forall k_1 \in \{1, \ldots, K_1\}.\label{eq:Mz-behav}
\end{align}
This provides a set of error bounds for the learned representations.

\subsection{Proof of Theorem~\ref{thm:ext-behavor-ditto} --- Step C. Error analysis for the prediction}

We will now analyze the difference between the prediction and the true answer.
We first define a linear function mapping a $K_1$-dimensional vector $\alpha$ to a matrix
\begin{equation}
    \rho(\alpha) = \sum_{k_1 = 1}^{K_1} \alpha_{k_1} \rho_{k_1}.
\end{equation}
Because $\rho_{k_1}$ is a quantum state, we have
\begin{equation} \label{eq:norm-rho-to-no-rho}
    \norm{\rho(\alpha)}_1 \leq \sum_{k_1 = 1}^{K_1} \left|\alpha_{k_1}\right| \norm{\rho_{k_1}}_1 = \norm{\alpha}_1.
\end{equation}
Note that $\norm{\rho(\alpha)}_1$ is the trace norm for a matrix, while $\norm{\alpha}_1$ is a vector one-norm.
For an experiment specified by $x \in \cX, y_1, \ldots, y_L \in \cY, z \in \cZ$, the difference between the probability for obtaining $b \in \cB$ and the predicted probability is
\begin{align}
    &\left|u_{z b}^T P A_{y_L} P \ldots P A_{y_1} v_x - (u^*_{zb})^T A^*_{y_L}\ldots A^*_{y_1} v^*_x \right|\\
    &\leq \max_{k_1} \left|u_{z, b, k_1} - u^*_{z, b, k_1}\right| \norm{ P A_{y_L} P \ldots P A_{y_1} v_x }_1 \\
    &+ | \Tr\left(M_{z b} \rho(P A_{y_L} P \ldots P A_{y_1} v_x) \right) - \Tr\left( M_{z b} \rho(A^*_{y_L}\ldots A^*_{y_1} v^*_x) \right) |\\
    &\leq \tilde{\epsilon}_u R_1 + \norm{\rho(P A_{y_L} P \ldots P A_{y_1} v_x) - \rho(A^*_{y_L}\ldots A^*_{y_1} v^*_x) }_1.
\end{align}
The first inequality is a telescoping sum with a triangle inequality.
The second inequality follows from Eq.~\eqref{eq:Mz-behav} and the fact that $\norm{M_{zb}}_\infty \leq 1$.

We will now analyze the second term in the above equation. We will prove that
\begin{align}
    &\norm{\rho(P A_{y_L} P \ldots P A_{y_1} v_x) - \rho(A^*_{y_L}\ldots A^*_{y_1} v^*_x) }_1\\ &\leq 2 R_2 K_2 \left( \tilde{\epsilon}_v + R_2 \tilde{\epsilon}_w \right) + 2 L R_1 R_2 K_2 \left( \tilde{\epsilon}_A + R_2 \tilde{\epsilon}_w \right)
\end{align}
by induction on $L$.
For the base case $L = 0,$ from Eq.~\eqref{eq:rhoX-behav} and Eq.~\eqref{eq:norm-rho-to-no-rho}, we see that
\begin{equation}
    \norm{\rho(v_x) - \rho(v^*_x) }_1 \leq \norm{v_x - v^*_x}_1 \leq 2 R_2 K_2 \left( \tilde{\epsilon}_v + R_2 \tilde{\epsilon}_w \right).
\end{equation}
Suppose that the claim holds for $L - 1$. Then
\begin{align}
    &\norm{\rho(P A_{y_L} P \ldots P A_{y_1} v_x) - \rho(A^*_{y_L}\ldots A^*_{y_1} v^*_x) }_1 \\
    &\leq \norm{\rho(P A_{y_L} (P A_{y_{L-1}} \ldots P A_{y_1} v_x)) - \rho(A^*_{y_L} (P A_{y_{L-1}} \ldots P A_{y_1} v_x)) }_1\\
    &+ \norm{\rho(A^*_{y_L} (P A_{y_{L-1}} \ldots P A_{y_1} v_x)) - \rho(A^*_{y_L} (A^*_{y_{L-1}} \ldots A^*_{y_1} v^*_x)) }_1\\
    &\leq \norm{P A_{y_L} (P A_{y_{L-1}} \ldots P A_{y_1} v_x) - A^*_{y_L} (P A_{y_{L-1}} \ldots P A_{y_1} v_x) }_1\\
    &+ \norm{\mathcal{E}_{y_L}\big(\rho(P A_{y_{L-1}} \ldots P A_{y_1} v_x - A^*_{y_{L-1}} \ldots A^*_{y_1} v^*_x ) \big) }_1\\
    &\leq \sum_{k_1 = 1}^{K_1} \left( \max_{k_1'} \left|P A_{y_L, k_1, k_1'} - A^*_{y_L, k_1, k_1'}\right| \sum_{k_1'=1}^{K_1} \left|(P A_{y_{L-1}} \ldots P A_{y_1} v_x)_{k_1'}\right| \right)\\
    &+ \norm{\rho(P A_{y_{L-1}} \ldots P A_{y_1} v_x - A^*_{y_{L-1}} \ldots A^*_{y_1} v^*_x ) }_1\\
    &\leq 2 R_1 R_2 K_2 \left( \tilde{\epsilon}_A + R_2 \tilde{\epsilon}_w \right)\\
    & + 2 R_2 K_2 \left( \tilde{\epsilon}_v+ R_2 \tilde{\epsilon}_w \right) + 2 (L-1) R_1 R_2 K_2 \left( \tilde{\epsilon}_A + R_2 \tilde{\epsilon}_w \right)\\
    &= 2 R_2 K_2 \left( \tilde{\epsilon}_v + R_2 \tilde{\epsilon}_w \right) + 2 L R_1 R_2 K_2 \left( \tilde{\epsilon}_A + R_2 \tilde{\epsilon}_w \right).
\end{align}
The first inequality follows from the triangle inequality and the linearity of $\rho(\alpha)$.
The second inequality follows from the action of $\mathcal{E}_y$ given in Eq.~\eqref{eq:EyA*}.
The third inequality uses two basic inequalities: $\norm{(A - A')x} \leq \sum_{i} (\max_{j} |(A - A')_ij| \sum_j x_j), \forall A, A' \in \mathbb{R}^{k \times k}, x \in \mathbb{R}^k$, and $\norm{\mathcal{E}(X)}_1 \leq \norm{X}_1$ for CPTP map $\mathcal{E}$ and Hermitian matrix $X$.
The fourth inequality uses the induction hypothesis, Eq.~\eqref{eq:Ey-behav}, and the fact that $\sum_{k} | (P x)_k | \leq R_1$ for any $x \in \mathbb{R}^{K_1}$.

\subsection{Proof of Theorem~\ref{thm:ext-behavor-ditto} --- Step D. Putting everything together}

Together, we consider the following parameter choices,
\begin{align}
    \frac{1}{\delta} &= 100 ((|\cX| + K_1 |\cY|) (K_2+1) + |\cZ| |\cB|), \\
    \frac{1}{\tilde{\epsilon}_w} &= \frac{16 L R_1 R_2^2 K_2}{\epsilon}, \\
    \frac{1}{\tilde{\epsilon}_v} &= \frac{8 R_2 K_2}{\epsilon}, \\
    \frac{1}{\tilde{\epsilon}_A} &= \frac{8 L R_1 R_2 K_2}{\epsilon}, \\
    \frac{1}{\tilde{\epsilon}_u} &= \frac{4 R_1}{\epsilon},
\end{align}
to ensure that with probability at least $0.99$, for any experiment specified by $x \in \cX, y_1, \ldots, y_L \in \cY, z \in \cZ$, the difference between the actual probability for obtaining the measurement outcome $b \in \cB$ and the predicted probability is  bounded above by $\epsilon$,
\begin{equation}
\left|u_{z b}^T P A_{y_L} P \ldots P A_{y_1} v_x -  \Tr(M_{zb} (\cE_{y_L} \circ \ldots \circ \cE_{y_1})(\rho_x) ) \right| \leq \epsilon.
\end{equation}
By aggregating the number of experiments for learning a frame, states, maps, and POVMs,
the total number of experiments is of order
\begin{align}
   & \frac{ L^2 R^2_1 R_2^4 K^2_2 + |\cX| R_2^2 K_2^3 + L^2 |\cY| K_1 K^3_2 R^2_1 R^2_2 + |\cZ| R_1^2 K_1 }{\epsilon^2} \log\left( K_2 |\cX| + K_1 K_2 |\cY| + |\cZ| |\cB| \right)\\
   & = \widetilde{\mathcal{O}}\left( \frac{ |\cX| + L^2 |\cY| + |\cZ| }{\epsilon^2} \right),
\end{align}
where $\widetilde{\mathcal{O}}(\cdot)$ neglects logarithmic contributions and considers $K_1, K_2, R_1, R_2$ to be constant.

\newpage
\section{Related work}
\label{sec:related-work}

In this appendix, we present the connection of the theory developed in this work and existing works. We will refer to basic concepts developed in some of the previous appendices. In particular, Appendix~\ref{sec:qworld} on the definition of world models, Appendix~\ref{sec:learning_theory_foundations} on the definition of model classes and learning intrinsic descriptions, and Appendix~\ref{sec:behav-predict} on the definition of predicting extrinsic behaviors.

\subsection{Gate set tomography}

The most relevant literature to the theory of world models developed in this work is gate set tomography.
Here we provide a brief review on gate set tomography.
We refer the readers to two comprehensive reviews on gate set tomography \cite{greenbaum2015introduction, nielsen2020gate}. Ref.~\cite{greenbaum2015introduction} gives the basic concepts and \cite{nielsen2020gate} provides both the fundamental ideas and recent progress on gate set tomography. An experimental demonstration of gate set tomography is given in \cite{blume2017demonstration}. An open-sourced software for gate set tomography has been developed in \cite{erik_2021_5546759}.

The main goal of gate set tomography (GST) is to characterize how quantum processes and logical gates affect the qubits in the device.
This is closely related to the task of quantum process tomography (QPT).
However, in quantum process tomography, one assumes perfect state preparations and perfect POVM measurements.
The key differences between GST and QPT are: (1) the lack of assumption on perfect states and measurements; (2) the need to benchmark multiple quantum processes (gates) at once.
Because GST does not assume the ability to prepare perfect states and measurements, existing protocols learn the relation between different gates instead of the intrinsic physical description of each gate. This problem is referred to as gauge freedom in GST.

When we vectorize state and POVM element as $\mathrm{vec}(\rho_x)$ and $\mathrm{vec}(M_{zb})$ and write the CPTP maps $\cE_y$ as a matrix $A_y$, then we have
\begin{equation}
    \Tr(M_{zb} \cE_y(\rho_x)) = \mathrm{vec}(M_{zb})^T A_y \mathrm{vec}(\rho_x) = (M^{-1} \mathrm{vec}(M_{zb}))^T (M A_y M^{-1}) (M \mathrm{vec}(\rho_x)).
\end{equation}
The following transformation is known as a gauge transformation
\begin{align}
    \mathrm{vec}(\rho_x) &\rightarrow M \mathrm{vec}(\rho_x),\\
    A_y &\rightarrow M A_y M^{-1}, \\
    \mathrm{vec}(M_{zb}) &\rightarrow M^{-1} \mathrm{vec}(M_{zb}),
\end{align}
which is parameterized by an invertible matrix $M$.
Most existing GST protocols are designed to learn the vectorization and matricization up the the gauge freedom, specified by the matrix $M$.
Due to the gauge $M$, it has been difficult to provide a fully rigorous statistical analysis of GST. The key problem is that the gauge freedom makes it challenging to define errors in GST.
And we are not aware of a rigorous proof showing the required number of experiments to perform GST up to certain error.

We begin with a summary of the difference between the theory on learning world models developed in this work and the existing literature on gate set tomography \cite{nielsen2020gate}.
According to the review \cite{nielsen2020gate}, GST is tomography of a novel entity, which is not the individual description of each gate, but some form of relations between the gates.
Based on our theory, we can formalize this novel entity as a non-physical model capable of predicting the extrinsic behaviors of the quantum device under the following assumption: one can (efficiently) find a complete basis of states and POVMs by composing existing states, gates, and measurements.
Without this assumption, Theorem \ref{thm:worst-case-behav} shows that learning the extrinsic behavior can be extremely inefficient. And under this assumption, Theorem \ref{thm:ext-behavior} provides a rigorous algorithm, that shares many common aspects with GST, for predicting extrinsic behaviors that improves significantly upon the worst case complexity given in Theorem \ref{thm:worst-case-behav}.

In contrast, learning intrinsic physical descriptions of the operations in the device is significantly more challenging than performing GST. In certain scenarios, it is impossible to learn the intrinsic descriptions to arbitrary accuracy as shown in Theorem \ref{prop:Clif+T}.
However, we show that it is possible in many natural scenarios.
We can learn the intrinsic description up to any accuracy when the world model has an unknown pure state and a universal set of unknown gates (Theorem \ref{thm:int-description}).
Hence, being ``informationally complete'' for GST is easier than being ``informationally complete'' for learning intrinsic descriptions.

We now present several questions that illustrate the difference between the theory on learning world models developed in this work and gate set tomography.

\paragraph{Is equivalence up to a gauge transformation in GST the same as weak indistinguishability?}
No. When two world models are related by the gauge transformation defined in GST, it is true that the two world models are weakly indistinguishable, i.e., no experiments can distinguish the two world models.
However, there are pairs of weakly indistinguishable world models that are not related by a gauge transformation.
One such example has been presented in the main text. Consider the following two distinct physical realities in a single-qubit system,
\begin{align}
    \mathcal{W}^A_{\mathrm{HT}}: \quad \rho^A_0 &= I/2, & \mathcal{E}^A_H(\rho) &= H\rho H^\dagger, & \mathcal{E}^A_T(\rho) &= T\rho T^\dagger, & \mathcal{M}^A_0 &= \{\ketbra{0}{0}, \ketbra{1}{1}\}, \\
    \mathcal{W}^B_{\mathrm{HT}}: \quad \rho^B_0 &= I/2, & \mathcal{E}^B_H(\rho) &= I/2, & \mathcal{E}^B_T(\rho) &= I/2, & \mathcal{M}^B_0 &= \{\ketbra{0}{0}, \ketbra{1}{1}\},
\end{align}
with actions $\mathcal{X} = \{0\}, \mathcal{Y} = \{H, T\}, \mathcal{Z} = \{0\}$.
Because $\mathcal{E}^A_H \neq \mathcal{E}^A_T$ but $\mathcal{E}^B_H = \mathcal{E}^B_T$, it is impossible that the two world models are related by a gauge transformation.
However, as we have discussed in the main text, the two world models cannot be distinguished by any experiment, hence they are weakly indistinguishable.
Together, we see that gauge equivalence as defined in GST implies weak indistinguishability, but two devices that are weakly indistinguishable need not be gauge equivalent in GST.
This means the set of relations defined by gauge equivalence in GST is a subset of the relations defined by weakly indistinguishable world models.

\paragraph{Can we always learn a world model up to gauge transformation?}
No. While GST removes the assumption on perfect state preparations and measurements, GST still require assumptions to work. It is not true that we can always learn a world model up to a gauge transformation using existing GST protocols. This is already evident in the above example. The two world models $\mathcal{W}^A_{\mathrm{HT}}, \mathcal{W}^B_{\mathrm{HT}}$ are not related by a gauge transformation, but they are weakly indistinguishable.
Hence by Proposition~\ref{prop:indist->unlearn}, we cannot learn to distinguish the two world models from experiments.

\paragraph{Is equivalence up to a gauge transformation in GST the same as strong indistinguishability (i.e. equivalence in our framework)?}
No. As Theorem~\ref{prop:equiv} has shown, equivalent world models are related by unitary or anti-unitary transformation. But two world models related by the gauge transformation defined in GST may not be related by unitary or anti-unitary transformation.
For example, consider two $d$-dimensional world models over the action spaces $\mathcal{X} = \{\sigma\}_{\sigma:\mathrm{state}}, \mathcal{Y} = \{U\}_{U \in \mathrm{SU}(d)}, \mathcal{Z} = \{0\}$ and outcome space $\cB = \{1, \ldots, d\}$:
$\mathcal{W}^A = \left(\{\rho^A_x\}_{x \in \mathcal{X}}, \{\mathcal{E}^A_y\}_{y \in \mathcal{Y}}, \{\mathcal{M}^A_z\}_{z \in \mathcal{Z}} \right)$, where
\begin{align}
    \rho^A_\sigma &= (1 - \epsilon) \sigma + \epsilon \frac{I}{d}, &\forall \sigma: \mathrm{state},\\
    \mathcal{E}^A_U(\rho) &= U \rho U^\dagger, &\forall U \in \mathrm{SU}(d),\\
    \mathcal{M}^A_0 &= \left\{ \ketbra{b}{b} \right\}_{b=1, \ldots, d},&
\end{align}
and $\mathcal{W}^B = \left(\{\rho^B_x\}_{x \in \mathcal{X}}, \{\mathcal{E}^B_y\}_{y \in \mathcal{Y}}, \{\mathcal{M}^B_z\}_{z \in \mathcal{Z}} \right)$, where
\begin{align}
    \rho^B_\sigma &= \sigma, &\forall \sigma: \mathrm{state},\\
    \mathcal{E}^B_U(\rho) &= U \rho U^\dagger, &\forall U \in \mathrm{SU}(d),\\
    \mathcal{M}^B_0 &= \left\{ (1-\epsilon)\ketbra{b}{b} + \epsilon \frac{I}{d} \right\}_{b=1, \ldots, d},&
\end{align}
We can see that $\mathcal{W}^A$ has depolarized initial states, while $\mathcal{W}^B$ has depolarized measurements.
The two world models are not related by a unitary or anti-unitary transformation, hence they describe distinct physical realities.
However, it is not hard to show that the two world models are related by a gauge transformation $M$.

\paragraph{Are two gauge-equivalent devices always physically the same?}
No. We can have two noisy devices with depolarization noise happening in the states or in the measurements; see the above example of $\mathcal{W}^A$ and $\mathcal{W}^B$. The two devices are gauge equivalent. But these two noise processes are not physically the same.
And hence the two devices are not physically the same.
The inability to distinguish the two world models $\mathcal{W}^A$ and $\mathcal{W}^B$ is due to the lack of sufficiently informative actions.
This is similar to the case when one performs quantum state tomography with computational basis measurements.
A coherent $\ketbra{+}{+}$ state is indistinguishable from a maximally mixed state $I/2$ under computational basis measurements. However, the two states are not physically the same.
The indistinguishability arises not from being physically the same, but from the lack of sufficiently useful actions.

\paragraph{How does gauge equivalence defined in GST relate to our theory of world models?}
Suppose that we have a $d$-dimensional world model $\mathcal{W}$, and assume we can identify a set of (composed) states and a set of (composed) POVM elements, such that both sets span the entire $d$-dimensional quantum state space.
Then every world model $\mathcal{W}'$ that is weakly indistinguishable from $\mathcal{W}$ is related by a gauge transformation.
Hence, under this assumption, we can equate weak indistinguishability in our framework and gauge equivalence in GST.
The assumption is critical and has been explicitly or implicitly assumed in existing GST protocols \cite{nielsen2020gate}.
A similar assumption is made in Theorem~\ref{thm:ext-behavior} (restated in Theorem~\ref{thm:ext-behavor-ditto}) to establish the existence of an efficient algorithm for learning extrinsic behaviors.

\subsection{Quantum state/process/measurement tomography}

The settings studied in quantum state, process, and measurement tomography are special cases of our theory.
In quantum state tomography, we are learning a world model such that the CPTP maps and POVM measurements are perfect.
Quantum state tomography focuses on learning an unknown quantum state $\rho$ to high accuracy, usually in trace norm or fidelity --- see e.g.~\cite{hradil1997quantum,gross2010quantum,blume2010optimal,banaszek2013focus} and references therein.
Recently, there have also been works investigating how learning theory can be used to provide new insights into quantum state tomography \cite{aaronson2018online, quek2021private, buadescu2021improved, huang2020predicting}.
In quantum measurement tomography \cite{d2003quantum}, we are learning a world model such that the state preparations and the the CPTP maps are perfect.
The goal of quantum measurement tomography is to learn the descriptions of the POVMs.
Finally, in quantum process tomography \cite{mohseni2008quantum}, we are learning a world model such that the state preparations and the POVM measurements are perfect.
The purpose of quantum process tomography is to learn the full description of some unknown quantum processes.
A subset of quantum processes that have been actively studied recently are the Pauli channels \cite{harper2020efficient, flammia2020efficient, flammia2021pauli}, which are often considered to be a good description of the noise in quantum computers.

\newpage
\section{Quantum advantage with noisy devices that cannot be fully learned}
\label{sec:qadv-learn-incomplete}

\subsection{Setting}
\label{sec:setting-super-noisy-device}

We are given an $(2n)$-qubit device with unknown operations.
% The setting we consider here is a generalization of the one considered in Appendix~\ref{sec:setting-NN}.

\subsubsection{State preparation and measurements}

We can prepare an unknown product state given by
\begin{equation}
    \rho_1 \otimes \ldots \otimes \rho_{2n}.
\end{equation}
We can perform an unknown product measurement given by a POVM with $2^{2n}$ outcomes,
\begin{equation}
    \left\{ \bigotimes_{i=1}^{2n} M^{(i)}_{B_i} \right\}_{B \in \{0, 1\}^{2n}},
\end{equation}
where $M^{(i)}_{B_i}, \forall i = 1, \ldots, 2n$ are $2 \times 2$ positive-semidefinite matrices.

\subsubsection{Operations}

In the device, we can apply layers of non-overlapping single- and two-qubit gates.
We consider the $2n$ qubits to be represented by an $n \times 2$ grid.
The grid coordinate of the $i$-th qubit ($i$ from $1$ to $n$) is $(\lceil i/2 \rceil, i \,\, \mathrm{mod} \,\, 2)$.
Single-qubit gates can be applied at every grid point.
But two-qubit gates can only be applied at edges on the $n \times 2$ grid.
Each single-qubit gate is an unknown CPTP map on the corresponding single qubit.
And each two-qubit gate is an unknown CPTP map on the corresponding two qubits.
The unknown CPTP map implemented by each gate may depend on the presence or absence of all the other gates applied at the same layer.

Finally, we assume that we can load some unknown $n$-qubit quantum state $\rho$ on the $n$ qubits in the left hand side of the $n \times 2$ grid.
Suppose $\sigma$ is the quantum state on the $2n$ qubits.
After loading $\rho$, the state on the $2n$ qubits becomes $\rho \otimes \Tr_{\mathrm{left\,\, side}}(\sigma)$.

\subsection{Algorithm for partially learning the device}
\label{sec:algorithm-partial-learn}

In this section, we will partially learn some of the operations in the unknown quantum device. We will obtain some descriptions that are useful for using the quantum device to achieve advantage in performing entangled data analysis.
The descriptions do not fully characterize the device.
But the lack of full characterization enables us to work under a more general setting.

\subsubsection{Experiments and loss functions: Single-qubit}

We perform an optimization to find nine single-qubit gates for every qubit, denoted as $g_{i, k}$ for all $i = 1,\ldots, 2n, k = 0, \ldots, 8$, by minimizing a loss function.
In particular, after choosing $g_{i, k}$, we conduct the following experiments.

We consider the same single-qubit gates on the left hand side and the right hand side of the $n\times 2$ qubit grid.
For each of $k_1, k_1' \in \{0, \ldots, 5\}, k_2, k_2' \in \{6, 7, 8\}$, we prepare the unknown product state $\rho_1 \otimes \ldots \otimes \rho_{2n}$, apply a layer of single-qubit gates, followed by another layer of single-qubit gates, then measure using the unknown product measurement.
In the first layer of the single-qubit gates, we apply $g_{i, k_1}$ for qubit $i$ on the left and apply $g_{i, k_1'}$ for qubit $i$ on the right.
In the second layer of the single-qubit gates, we apply $g_{i, k_2}$ for qubit $i$ on the left and apply $g_{i, k_2'}$ for qubit $i$ on the right.
For each $k_1, k_1', k_2, k_2'$ and each qubit $i$, we estimate the probability for obtaining the outcome $b \in \{0, 1\}$, denoted as $\hat{p}_{k_1, k_1', k_2, k_2', i, b}$.
From a total of $\mathcal{O}(\log(n / \delta) / \eta_0^2)$ such experiments, the estimate $\hat{p}_{k_1, k_1', k_2, k_2', i, b}$ is equal to the true probability up to $\eta_0$ error for all $k_1, k_1', k_2, k_2', i, b$ with a probability at least $1 - \delta$.
We now define a loss function based on the estimated values,
\begin{equation}
    \eta_1 = \max_{k_1, k_1', k_2, k_2', i, b} \left| \hat{p}_{k_1, k_1', k_2, k_2', i, b} - f(k_1, k_1', k_2, k_2', i, b) \right|.
\end{equation}
The function $f(k_1, k_1', k_2, k_2', i, b)$ is defined as follows.
For qubit $i$ on the left, we have
\begin{equation}
    f(k_1, k_1', k_2, k_2', i, b) = \begin{cases}
    1/2 &\lfloor k_1 / 2\rfloor \neq (k_2 - 6)\\
    1 & \lfloor k_1 / 2\rfloor = (k_2 - 6), \,\, k_1 \equiv b \, (\mathrm{mod} 2)\\
    0 & \lfloor k_1 / 2\rfloor = (k_2 - 6), \,\, k_1 \not\equiv b \, (\mathrm{mod} 2)\\
    \end{cases}
\end{equation}
For qubit $i$ on the right, we have
\begin{equation}
    f(k_1, k_1', k_2, k_2', i, b) = \begin{cases}
    1/2 &\lfloor k_1' / 2\rfloor \neq (k_2' - 6)\\
    1 & \lfloor k_1' / 2\rfloor = (k_2' - 6), \,\, k_1' \equiv b \, (\mathrm{mod} 2)\\
    0 & \lfloor k_1' / 2\rfloor = (k_2' - 6), \,\, k_1' \not\equiv b \, (\mathrm{mod} 2)\\
    \end{cases}
\end{equation}
We optimize over the selection of gates such that the loss function $\eta_1$ is as small as possible.

\subsubsection{Rigorous guarantee: Single-qubit}

Given an estimation error $\eta_0$ and the loss function $\eta_1$, we can approximately learn the following states and measurements.
We combine the first layer of single-qubit gates determined by $k_1, k_1'$ and the unknown product state to create a new set of product states, denoted as
\begin{equation}
    \rho^{(k_1, k_1')}_1 \otimes \ldots \otimes \rho^{(k_1, k_1')}_{2n},
\end{equation}
for all $k_1, k_1' \in \{0, 1, 2, 3, 4, 5\}$.
We also combine the last layer of single-qubit gates determined by $k_2, k_2'$ and the unknown product measurement to create a new set of product measurements,
\begin{equation}
    \left\{ \bigotimes_{i=1}^{2n} M^{(i, k_2, k_2')}_{B_i} \right\}_{B \in \{0, 1\}^{2n}},
\end{equation}
for all $k_2, k_2' \in \{6, 7, 8\}$.
We have the following characterization.

\begin{lemma}[Single-qubit stabilizer states] \label{lem:learn-stab}
For any $i = 1, \ldots, 2n$, there exists a unitary or anti-unitary transformation $U_i$ on qubit $i$, such that the following holds for any $k_1, k_1', k_2, k_2', b$.
For qubit $i$ on the left, we have
\begin{align}
    \norm{\rho_i^{(k_1, k_1')} - U_{i} \sigma_{k_1} U_{i}^{-1}}_1 \leq \mathcal{O}(\eta_0 + \eta_1),\\
    \norm{M^{(i, k_2, k_2')}_{b} - U_{i} \sigma_{2 (k_2-6) + b} U_{i}^{-1}}_1 \leq \mathcal{O}(\eta_0 + \eta_1).
\end{align}
For qubit $i$ on the right, we have
\begin{align}
    \norm{\rho_i^{(k_1, k_1')} - U_{i} \sigma_{k_1'} U_{i}^{-1}}_1 \leq \mathcal{O}(\eta_0 + \eta_1),\\
    \norm{M^{(i, k_2, k_2')}_{b} - U_{i} \sigma_{2 (k_2' - 6) + b} U_{i}^{-1}}_1 \leq \mathcal{O}(\eta_0 + \eta_1),
\end{align}
where the pure states $\sigma_x$ for $x = 0, \ldots, 5$ are given by
\begin{equation}
\ketbra{0}{0}, \,\, \ketbra{1}{1}, \,\,\ketbra{+}{+}, \,\,\ketbra{-}{-}, \,\,\ketbra{y+}{y+}, \,\,\ketbra{y-}{y-}.
\end{equation}
\end{lemma}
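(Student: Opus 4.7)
The plan is to exploit the product structure of all operations and the rigidity of the single-qubit stabilizer geometry in order to extract the local unitary or anti-unitary $U_i$ from the constraints imposed by the loss function. First I would observe that because the prepared state is a product state, both layers consist of single-qubit gates, and the measurement is a product POVM, the joint outcome distribution factorizes across qubits. Hence the empirical marginal $\hat{p}_{k_1, k_1', k_2, k_2', i, b}$ equals the true single-qubit probability $\Tr\bigl(M^{(i, k_2, k_2')}_{b} \rho_i^{(k_1, k_1')}\bigr)$ up to the statistical error $\eta_0$, and combining with the definition of $\eta_1$ yields, for each qubit $i$ (say on the left),
\begin{equation}
 \bigl|\Tr\bigl(M^{(i, k_2, k_2')}_{b} \rho_i^{(k_1, k_1')}\bigr) - f(k_1, k_1', k_2, k_2', i, b)\bigr| \leq \eta_0 + \eta_1.
\end{equation}
Since the product POVM factors must individually sum to $I$, we also have $M^{(i,k_2,k_2')}_0 + M^{(i,k_2,k_2')}_1 = I_i$, which pins down the second outcome from the first.

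Second, I would translate these constraints into Bloch-sphere geometry. For each fixed pair $(k_1', k_2')$, the six effective states $\{\rho_i^{(k,k_1')}\}_{k=0}^{5}$ and three effective dichotomic measurements $\{M^{(i,k_2,k_2')}_b\}$ satisfy: (i) within each matched basis pair ($\lfloor k_1/2\rfloor = k_2-6$), preparation and measurement agree with probability $1-\mathcal{O}(\eta_0+\eta_1)$, forcing both Bloch vectors to have near-unit norm and be (anti-)aligned, so each $\rho_i^{(k,k_1')}$ is approximately pure and the associated POVM element is approximately projective onto the same axis; and (ii) across distinct bases, probabilities equal $\tfrac{1}{2}\pm\mathcal{O}(\eta_0+\eta_1)$, forcing the three emerging axes in $\mathbb{R}^3$ to be pairwise orthogonal (approximate mutual unbiasedness). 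The net picture is therefore three approximately orthonormal Bloch directions, each carrying two antipodal approximately-pure states, exactly the stabilizer pattern.

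Third, I would invoke a quantitative rigidity argument for this geometry: any approximately orthonormal triple in $\mathbb{R}^3$ can be mapped onto the standard axes $\{\hat{x},\hat{y},\hat{z}\}$ by an element of $O(3)$ whose deviation from some exact orthogonal transformation is $\mathcal{O}(\eta_0+\eta_1)$. A proper rotation lifts to a unitary on the qubit; an improper rotation (reflecting the fact that the chirality of the $\hat{y}$ axis relative to $\{\hat{x},\hat{z}\}$ is not detected by any two-outcome overlap probability) lifts to an anti-unitary. Lifting this orthogonal map gives the desired $U_i$, and the claimed trace-norm bounds on $\rho_i^{(k_1,k_1')}$ and $M^{(i,k_2,k_2')}_b$ follow from the linearity of the Bloch-vector-to-density-matrix map together with the already established near-unit magnitudes.

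Finally, I would argue that a single $U_i$ suffices for every $(k_1', k_2')$. Rigidity produces candidate $U_i^{(k_1',k_2')}$ for each choice, but since the three Bloch directions attached to qubit $i$ are determined (to tolerance $\mathcal{O}(\eta_0+\eta_1)$) by the same set of probability constraints as the auxiliary indices are varied, these candidates must all agree up to the stated error; fixing $U_i := U_i^{(0,6)}$ then verifies all inequalities globally. The main obstacle will be the quantitative rigidity step: converting approximate mutual unbiasedness plus approximate purity into a genuine $O(3)$ alignment with linear (rather than square-root) dependence on $\eta_0+\eta_1$. The naive fidelity-to-trace-distance conversion loses a square root, so the argument must exploit all the matched-basis constraints simultaneously — using the three independent pairs $(\rho_i^{(2k)},M^{(i,6+k)})$ with $k=0,1,2$ to pin down the three coordinates of each Bloch vector directly, rather than inferring them from a single overlap — to recover the linear bound claimed in the lemma.
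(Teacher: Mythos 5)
Your proposal follows essentially the same route as the paper's proof: reduce to the per-qubit overlap constraints $|\Tr(M^{(i,k_2,k_2')}_b\,\rho_i^{(k_1,k_1')}) - f| \le \eta_0+\eta_1$, use the matched-basis constraints to force approximate purity/projectivity, use the cross-basis $1/2$ constraints to force approximate mutual unbiasedness of the three axes, lift the resulting frame alignment to a unitary or anti-unitary $U_{i,k_1',k_2'}$ (Wigner-type rigidity), and finally fix a single gauge by setting $U_i = U_{i,\,k_1'=0,\,k_2'=6}$ and chaining triangle inequalities through the shared states and effects --- the paper makes exactly this choice.

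The caveat is the obstacle you flag at the end, and you are right to flag it: the linear dependence on $\eta_0+\eta_1$ is not established by your sketch, but it is not established by the paper's proof either, which simply asserts the existence of constants $C, C'$ at the step where $M_b$ and $\rho_b$ are matched to a common pure state. Moreover, your proposed fix (pinning each Bloch coordinate of a state via the cross-basis constraints) cannot restore linearity, because the other bases' measurement axes are themselves only pinned quadratically by their own matched tests. Concretely: keep the $Z$ measurement and the $Y$ basis exact, tilt both $Z$-basis prepared states by Bloch angle $2\sqrt{\eta}$ toward $+x$, and tilt the $X$-measurement axis by the same angle in the $xz$-plane so that it stays orthogonal to the tilted states, keeping the $X$-basis states exact. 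Every matched constraint is then violated by at most $O(\eta)$ (the slack is quadratic in the tilt) and every cross constraint by at most $O(\eta)$, yet no single unitary or anti-unitary frame can be within $o(\sqrt{\eta})$ in trace norm of both the exact effect $\ketbra{0}{0}$ and the tilted $\rho_i^{(0,k_1')}$, since these are $\Theta(\sqrt{\eta})$ apart. So the honest output of this line of argument is a bound $O(\sqrt{\eta_0+\eta_1})$ rather than $O(\eta_0+\eta_1)$; fortunately the weaker bound is all that is needed downstream, since the quantum-advantage argument only requires that the effective gate error $\epsilon$ tend to zero as $\eta_0+\eta_1+\eta_2$ does.
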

\begin{proof}
Without loss of generality, we consider qubit $i$ on the left hand side of the $n \times 2$ grid. The proof is the same for qubits on the right side except that we need to swap $k_1 \leftrightarrow k_1', k_2 \leftrightarrow k_2'$.
From the estimate $\hat{p}_{k_1, k_1', k_2, k_2', i, b}$ and the definition of the loss function $\eta_1$,
\begin{equation} \label{eq:error-geometry}
    \left|\Tr\left(M^{(i, k_2, k_2')}_{b} \rho_i^{(k_1, k_1')} \right) - f(k_1, k_1', k_2, k_2', i, b) \right| \leq \eta_0 + \eta_1,
\end{equation}
for all $k_1, k_1', k_2, k_2', b$. The above defines an approximate geometry that can be used to infer the underlying operations. This technique is used in Appendix~\ref{sec:learning-special-set} to learn general $d$-dimensional systems.

For a fixed choice of $k_1', k_2'$, we have six matrices for $\rho_i^{(k_1, k_1')}$ and six matrices for $M^{(i, k_2, k_2')}_{b}$ corresponding to different $k_1, k_2$ and $b$.
Now, given $q = 0, 1, 2$.
We denote the two matrices $\rho_i^{(k_1, k_1')}$ associated to $k_1 = 2q + 0, 2q + 1$ as $\rho_0, \rho_1$, denote the two matrices $M^{(i, k_2, k_2')}_{b}$ associated to $k_2 = q$ and $b = 0, 1$ as $M_0, M_1$, and define $\eta = \eta_0 + \eta_1$.
We have the following inequalities from Eq.~\eqref{eq:error-geometry},
\begin{equation}
    \Tr(M_0 \rho_0) \geq 1 - \eta, \,\,\, \Tr(M_0 \rho_1) \leq \eta, \,\,\, \Tr(M_1 \rho_0) \leq \eta, \,\,\, \Tr(M_1 \rho_1) \geq 1- \eta.
\end{equation}
Using the fact that $M_0, M_1, \rho_0, \rho_1$ are positive-semidefinite, $M_0 + M_1 = I,$ and  $\Tr(\rho_0) = \Tr(\rho_1) = 1$, there exists two orthogonal pure states $\ketbra{\psi_0}{\psi_0}, \ketbra{\psi_1}{\psi_1}$ and a constant $C > 0$, such that
\begin{align}
    \norm{M_0 - \ketbra{\psi_0}{\psi_0}}_1 &\leq C \eta, & \,\, \norm{M_1 - \ketbra{\psi_1}{\psi_1}}_1 &\leq C \eta, \\
    \norm{\rho_0 - \ketbra{\psi_0}{\psi_0}}_1 &\leq C \eta, & \,\, \norm{\rho_1 - \ketbra{\psi_1}{\psi_1}}_1 &\leq C \eta.
\end{align}
Hence, we know that the six matrices for $\rho_i^{(k_1, k_1')}$ are approximately pure states.
Then we can use the approximate geometry given in Eq.~\eqref{eq:error-geometry} over pairs of distinct $q$ to show that there exists a constant $C' > 0$ and a unitary or anti-unitary transformation $U_{i, k_1', k_2'}$ such that
\begin{align}
    \norm{\rho_i^{(k_1, k_1')} - U_{i, k_1', k_2'} \sigma_{k_1} U_{i, k_1', k_2'}^{-1}}_1 \leq C' \eta, \label{eq:rho-same-gauge}\\
    \norm{M^{(i, k_2, k_2')}_{b} - U_{i, k_1', k_2'} \sigma_{2 (k_2-6) + b} U_{i, k_1', k_2'}^{-1}}_1 \leq C' \eta, \label{eq:M-same-gauge}
\end{align}
where the pure states $\sigma_{x}$ for $x = 0, \ldots, 5$ are given by
\begin{equation}
    \sigma_{x} = \ketbra{0}{0}, \ketbra{1}{1}, \ketbra{+}{+}, \ketbra{-}{-}, \ketbra{y+}{y+}, \ketbra{y-}{y-},
\end{equation}
respectively.

We see that $U_{i, k_1', k_2'}$ depends on $k_1', k_2'$. The last step is to show that we can actually choose a single unitary or anti-unitary transformation $U_{i}$.
Consider $U_i = U_{i, k_1' = 0, k_2' = 6}$. For all $x = 0, \ldots, 5$, we have
\begin{align}
    &\norm{U_{i, k_1', k_2'} \sigma_{x} U_{i, k_1', k_2'}^{-1} - U_i \sigma_x U_i^{-1}}_1\\
    &\leq \norm{U_{i, k_1', k_2'} \sigma_{x} U_{i, k_1', k_2'}^{-1} - U_{i,0, k_2'} \sigma_x U_{i, 0, k_2'}^{-1}}_1 + \norm{U_{i, 0, k_2'} \sigma_{x} U_{i, 0, k_2'}^{-1} - U_{i, 0, 6} \sigma_x U_{i, 0, 6}^{-1}}_1\\
    &\leq \norm{U_{i, k_1', k_2'} \sigma_{x} U_{i, k_1', k_2'}^{-1} - \rho_i^{(x, k_1')}}_1 + \norm{\rho_i^{(x, k_1')} - U_{i,0, k_2'} \sigma_x U_{i, 0, k_2'}^{-1}}_1\\
    &+ \norm{U_{i, 0, k_2'} \sigma_{x} U_{i, 0, k_2'}^{-1} - M^{(i, k_2, \lfloor x/2\rfloor + 6)}_{x \, \mathrm{mod}\, 2}}_1 + \norm{M^{(i, k_2, \lfloor x/2\rfloor + 6)}_{x \, \mathrm{mod}\, 2} - U_{i, 0, 6} \sigma_x U_{i, 0, 6}^{-1}}_1\\
    &\leq 2 C' \eta + 2 C' \eta = 4 C'\eta.
\end{align}
The second-to-last inequality follows from Eq.~\eqref{eq:rho-same-gauge}~and~\eqref{eq:M-same-gauge}.
Together, we have
\begin{align}
    \norm{\rho_i^{(k_1, k_1')} - U_{i} \sigma_{k_1} U_{i}^{-1}}_1 \leq \mathcal{O}(\eta_0 + \eta_1),\\
    \norm{M^{(i, k_2, k_2')}_{b} - U_{i} \sigma_{2 (k_2-6) + b} U_{i}^{-1}}_1 \leq \mathcal{O}(\eta_0 + \eta_1),
\end{align}
This concludes the proof.
\end{proof}

\subsubsection{Experiments and loss functions: Two-qubit}

We also perform another layer of optimization to find two two-qubit gates for every pair of corresponding qubits on the left and right, i.e., qubit $i = 2 \ell - 1, 2\ell$, where $\ell = 1, \ldots, n$.
We denote the two two-qubit gates as $g^{(2)}_{\ell, s}$ and $g^{(2)}_{\ell, e}$ for each $\ell = 0, \ldots, n$.
Similar to before, after choosing $g^{(2)}_{\ell, s}$ and $g^{(2)}_{\ell, e}$, we conduct the following set of experiments.

For each of $k_1, k_1' \in \{0, \ldots, 5\}, k_2, k_2' \in \{6, 7, 8\}, x \in \{s, e\}$, we prepare the unknown product state $\rho_1 \otimes \ldots \otimes \rho_{2n}$, apply a layer of single-qubit gates, a layer of two-qubit gates, followed by another layer of single-qubit gates, then measure using the unknown product measurement.
In the first layer of the single-qubit gates, we apply $g_{i, k_1}$ for qubit $i$ on the left and apply $g_{i, k_1'}$ for qubit $i$ on the right.
In the middle layer of two-qubit gates, we apply $g^{(2)}_{\ell, x}$ on qubit $2\ell-1$ on the left and qubit $2\ell$ on the right.
In the other layer of the single-qubit gates, we apply $g_{i, k_2}$ for qubit $i$ on the left and apply $g_{i, k_2'}$ for qubit $i$ on the right.
For each $k_1, k_1', k_2, k_2'$ and each pair of qubits $2\ell - 1, 2\ell$, we estimate the probability for obtaining the two bits $b, b' \in \{0, 1\}$ as the outcome, denoted as $\hat{p}^{(2)}_{k_1, k_1', k_2, k_2', x, \ell, b, b'}$.
From a total of $\mathcal{O}(\log(n / \delta) / \eta_0^2)$, the estimate $\hat{p}^{(2)}_{k_1, k_1', k_2, k_2', x, \ell, b, b'}$ is equal to the true probability up to $\eta_0$ error for all $k_1, k_1', k_2, k_2', x, \ell, b, b'$ with a probability at least $1 - \delta$.
We now define another loss function based on the estimated values,
\begin{equation} \label{eq:eta2loss}
    \eta_2 = \max_{k_1, k_1', k_2, k_2', x, \ell, b, b'} \left| \hat{p}^{(2)}_{k_1, k_1', k_2, k_2', x, \ell, b, b'} - h(k_1, k_1', k_2, k_2', x, b, b') \right|.
\end{equation}
The function $h(k_1, k_1', k_2, k_2', x, b, b')$ is defined as follows.
\begin{align}
    h(k_1, k_1', k_2, k_2', x, b, b') &= \Tr\left( (\rho^{\mathrm{(out)}}_{k_2, b} \otimes \rho^{\mathrm{(out)}}_{k_2', b'}) U_x (\rho^{\mathrm{(in)}}_{k_1} \otimes \rho^{\mathrm{(in)}}_{k_1'}) U_x^\dagger \right),\\
    \rho^{\mathrm{(in)}}_{k_1} &= \sigma_{k_1},\\
    \rho^{\mathrm{(out)}}_{k_2, b} &= \sigma_{2 (k_2-6) + b},\\
    \sigma_{0}, \ldots, \sigma_5 &= \ketbra{0}{0}, \ketbra{1}{1}, \ketbra{+}{+}, \ketbra{-}{-}, \ketbra{y+}{y+}, \ketbra{y-}{y-},\\
    U_s, U_e &= \mathrm{SWAP}, \mathrm{BELL},
\end{align}
where $\mathrm{SWAP}$ and $\mathrm{BELL}$ are two-qubit unitaries, $\mathrm{SWAP}$ swaps the left and right qubits, $\mathrm{BELL} = (H \otimes I)\,\, \mathrm{CNOT}$, $\mathrm{CNOT}$ is controlled on the same qubit that the Hadamard $H$ acts on.
The entangling operation $\mathrm{BELL}$ followed by computational basis measurement is one way to perform Bell measurement.

\subsubsection{Rigorous guarantee: Two-qubit}

Next, using the bound on the estimation error $\eta_0$ and the loss function $\eta_2$, we can approximately learn the entangling operations.
We can write the middle layer of two-qubit gates as
\begin{equation}
    \cE_{\ell = 1, x} \otimes \ldots \otimes \cE_{\ell = n, x},
\end{equation}
for $x \in \{s, e\}$, where $\cE_{\ell, x}$ acts on two qubits, $2\ell-1$ and $2\ell$.

\begin{lemma}[SWAP and Bell measurement]
For each qubit $i = 1, \ldots, 2n$, consider the unitary or anti-unitary transformation $U_i$ on qubit $i$ given in Lemma~\ref{lem:learn-stab}.
\begin{align}
&\norm{\cE_{\ell, s}(\cdot) - (U_{2\ell-1} \otimes U_{2\ell}) \mathrm{SWAP} (U_{2\ell-1} \otimes U_{2\ell})^{-1} (\cdot) (U_{2\ell-1} \otimes U_{2\ell}) \mathrm{SWAP} (U_{2\ell-1} \otimes U_{2\ell})^{-1} }_{1\rightarrow 1}\\
&\leq \mathcal{O}(\eta_0 + \eta_1 + \eta_2),\\
&\norm{\cE_{\ell, e}(\cdot) - (U_{2\ell-1} \otimes U_{2\ell}) \mathrm{BELL} (U_{2\ell-1} \otimes U_{2\ell})^{-1} (\cdot) (U_{2\ell-1} \otimes U_{2\ell}) \mathrm{BELL}^\dagger (U_{2\ell-1} \otimes U_{2\ell})^{-1} }_{1\rightarrow 1}\\
&\leq \mathcal{O}(\eta_0 + \eta_1 + \eta_2),
\end{align}
where $\norm{\cE}_{1\rightarrow 1} = \max_{\norm{X}_1 = 1} \norm{\cE(X)}_1$.
\end{lemma}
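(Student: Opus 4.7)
The strategy is to invoke Lemma~\ref{lem:learn-stab} to translate the $\eta_2$ loss-function bound into a tomographic estimate on a gauge-transformed effective two-qubit channel, and then to exploit the informational completeness of the single-qubit stabilizer states and stabilizer POVM elements to promote that pointwise estimate to the claimed $1\to 1$ norm bound.

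First, I would fix a pair $(2\ell-1,2\ell)$, set $V_\ell=U_{2\ell-1}\otimes U_{2\ell}$, and define the gauge-transformed effective channel $\widetilde{\cE}_{\ell,x}(\cdot)=V_\ell^{-1}\cE_{\ell,x}(V_\ell(\cdot)V_\ell^{-1})V_\ell$ together with the ideal targets $\mathcal{G}_s(\cdot)=\mathrm{SWAP}(\cdot)\mathrm{SWAP}$ and $\mathcal{G}_e(\cdot)=\mathrm{BELL}(\cdot)\mathrm{BELL}^\dagger$. The observed two-qubit probability $\hat p^{(2)}_{k_1,k_1',k_2,k_2',x,\ell,b,b'}$ equals $\Tr[(M^{(2\ell-1,k_2,k_2')}_b\otimes M^{(2\ell,k_2,k_2')}_{b'})\,\cE_{\ell,x}(\rho^{(k_1,k_1')}_{2\ell-1}\otimes\rho^{(k_1,k_1')}_{2\ell})]$. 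Using Lemma~\ref{lem:learn-stab} I would replace each of the four single-qubit factors by its stabilizer surrogate (at a cost of $\mathcal{O}(\eta_0+\eta_1)$ per replacement, via a telescoping triangle inequality), and then combine this with the definition of $\eta_2$ in Eq.~\eqref{eq:eta2loss} to obtain, for every admissible choice of indices,
\[
\bigl|\Tr[(\sigma_{2(k_2-6)+b}\otimes\sigma_{2(k_2'-6)+b'})\,(\widetilde{\cE}_{\ell,x}-\mathcal{G}_x)(\sigma_{k_1}\otimes\sigma_{k_1'})]\bigr|\leq \mathcal{O}(\eta_0+\eta_1+\eta_2).
\]

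Next I would lift these $36\times 36$ scalar bounds to an operator-norm bound on the channel difference. The six single-qubit stabilizer states contain $I=\sigma_0+\sigma_1$, $Z=\sigma_0-\sigma_1$, $X=\sigma_2-\sigma_3$, and $Y=\sigma_4-\sigma_5$, so they form an overcomplete basis of the real four-dimensional space of single-qubit Hermitian operators with dimension-independent condition number; the same is true of the six stabilizer POVM elements. Their tensor products therefore span the sixteen-dimensional space of two-qubit Hermitian operators on both the input and the output side with a well-conditioned inverse. Inverting the resulting bounded linear system, the pointwise estimate upgrades to $\|(\widetilde{\cE}_{\ell,x}-\mathcal{G}_x)(X)\|_1\leq \mathcal{O}(\eta_0+\eta_1+\eta_2)\|X\|_1$ for every Hermitian $X$, i.e.\ $\|\widetilde{\cE}_{\ell,x}-\mathcal{G}_x\|_{1\to 1}\leq \mathcal{O}(\eta_0+\eta_1+\eta_2)$; undoing the gauge transformation then yields the lemma's stated bounds on $\cE_{\ell,s}$ and $\cE_{\ell,e}$.

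The main obstacle will be the bookkeeping around the unitary/anti-unitary dichotomy in Lemma~\ref{lem:learn-stab}: either $U_{2\ell-1}$ or $U_{2\ell}$ may contain complex conjugation, so $V_\ell$ need not be unitary on $\mathbb{C}^4$. I would handle this by observing that both $\mathrm{SWAP}$ and $\mathrm{BELL}$ have real matrix entries in the computational basis, so complex conjugation commutes with each ideal target action and any anti-unitary factors in $V_\ell$ pass through the trace identities above without changing the order of the error. Apart from that case analysis, the proof is a direct triangle-inequality assembly of the two ingredients above, and the only non-trivial analytic input is the dimension-independent frame-conditioning step.
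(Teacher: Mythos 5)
Your proposal is correct and follows essentially the same route as the paper's proof: combine the definition of $\eta_2$ (plus the $\eta_0$ estimation error) with Lemma~\ref{lem:learn-stab} via telescoping triangle inequalities to get pointwise bounds of the channel difference on all tensor products of gauge-rotated stabilizer states and POVM elements, and then use the fact that the six single-qubit stabilizer states span the Hermitian operators to upgrade these $36$ scalar estimates to the stated $1\rightarrow 1$ norm bound. Your explicit treatment of the unitary/anti-unitary dichotomy (using that $\mathrm{SWAP}$ and $\mathrm{BELL}$ are real) is a fine extra detail that the paper only addresses in a remark.
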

\begin{proof}
We prove this lemma for $\mathrm{SWAP}$. The proof for $\mathrm{BELL}$ is basically the same.
We begin by defining the following notations.
\begin{align}
    \rho^{(in)} &=  (\sigma_{k_1} \otimes \sigma_{k_1')}), \\
    \rho^{(out)} &= (\sigma_{2(k_2 - 6) + b} \otimes \sigma_{2(k_2' - 6) + b'}), \\
    \rho^{(in, U)} &= (U_{2\ell-1} \sigma_{k_1} U_{2\ell-1}^{-1}) \otimes (U_{2\ell} \sigma_{k_1'} U_{2\ell}^{-1}),\\
    \rho^{(out, U)} &= (U_{2\ell-1} \sigma_{2(k_2 - 6) + b} U_{2\ell-1}^{-1}) \otimes (U_{2\ell} \sigma_{2(k_2' - 6) + b'} U_{2\ell}^{-1}),\\
    \sigma_{0}, \ldots, \sigma_5 &= \ketbra{0}{0}, \ketbra{1}{1}, \ketbra{+}{+}, \ketbra{-}{-}, \ketbra{y+}{y+}, \ketbra{y-}{y-}.
\end{align}
We can use triangle inequality in Eq.~\eqref{eq:eta2loss} to show that
\begin{align}
    &\left|\Tr\left( \left( M^{2\ell-1, k_2, k_2'}_{b} \otimes M^{2\ell, k_2, k_2'}_{b'} \right) \cE_{\ell, s}\left( \rho^{k_1, k_1'}_{2\ell-1} \otimes \rho^{k_1, k_1'}_{2\ell} \right) \right) - \Tr\left( \rho^{(out)} \mathrm{SWAP}\left( \rho^{(in)} \right) \mathrm{SWAP} \right) \right|\\
    &\leq \left|\Tr\left( \left( M^{2\ell-1, k_2, k_2'}_{b} \otimes M^{2\ell, k_2, k_2'}_{b'} \right) \cE_{\ell, s}\left( \rho^{k_1, k_1'}_{2\ell-1} \otimes \rho^{k_1, k_1'}_{2\ell} \right) \right) - \hat{p}^{(2)}_{k_1, k_1', k_2, k_2', x, \ell, b, b'} \right|\\
    &+ \left| \hat{p}^{(2)}_{k_1, k_1', k_2, k_2', x, \ell, b, b'} - \Tr\left( \rho^{(out)} \mathrm{SWAP}\left( \rho^{(in)} \right) \mathrm{SWAP} \right) \right|\\
    &\leq \eta_0 + \eta_2.
\end{align}
Then from triangle inequality and Lemma~\ref{lem:learn-stab}, we have
\begin{align}
    &\left|\Tr\left( \left( M^{2\ell-1, k_2, k_2'}_{b} \otimes M^{2\ell, k_2, k_2'}_{b'} \right) \cE_{\ell, s}\left( \rho^{k_1, k_1'}_{2\ell-1} \otimes \rho^{k_1, k_1'}_{2\ell} \right) \right) - \Tr\left( \rho^{(out, U)} \cE_{\ell, s}\left( \rho^{(in, U)} \right) \right) \right|\\
    &\leq \left|\Tr\left( \left( M^{2\ell-1, k_2, k_2'}_{b} \otimes M^{2\ell, k_2, k_2'}_{b'} \right) \cE_{\ell, s}\left( \rho^{k_1, k_1'}_{2\ell-1} \otimes \rho^{k_1, k_1'}_{2\ell} \right) \right) - \Tr\left( \rho^{(out, U)}  \cE_{\ell, s}\left( \rho^{k_1, k_1'}_{2\ell-1} \otimes \rho^{k_1, k_1'}_{2\ell} \right) \right) \right| \\
    &+ \left|\Tr\left( \rho^{(out, U)} \cE_{\ell, s}\left( \rho^{k_1, k_1'}_{2\ell-1} \otimes \rho^{k_1, k_1'}_{2\ell} \right) \right) - \Tr\left( \rho^{(out, U)} \cE_{\ell, s}\left( \rho^{(in, U)} \right) \right) \right| \\
    &\leq \norm{\left( M^{2\ell-1, k_2, k_2'}_{b} \otimes M^{2\ell, k_2, k_2'}_{b'} \right) - \rho^{(out, U)}}_1 + \norm{\rho^{k_1, k_1'}_{2\ell-1} \otimes \rho^{k_1, k_1'}_{2\ell} - \rho^{(in, U)}}_1\\
    &\leq \mathcal{O}(\eta_0 + \eta_1).
\end{align}
Hence, by another triangle inequality, we obtain
\begin{align}
   &\left|\Tr\left( \rho^{(out, U)} \cE_{\ell, s}\left( \rho^{(in, U)} \right) \right) - \Tr\left( \rho^{(out)} \mathrm{SWAP}\left( \rho^{(in)} \right) \mathrm{SWAP} \right) \right|  \leq \mathcal{O}(\eta_0 + \eta_1 + \eta_2). \label{eq:rhoinout-diff}
\end{align}
Because every $(2\times 2) \times (2\times 2)$ Hermitian matrix can be written as a linear combination of $\sigma_{i} \otimes \sigma_j, \forall i, j \in \{0, 1, \ldots, 5\}$, we have the following bound by triangle inequality,
\begin{align}
&\norm{\cE_{\ell, s}(\cdot) - (U_{2\ell-1} \otimes U_{2\ell}) \mathrm{SWAP} (U_{2\ell-1} \otimes U_{2\ell})^{-1} (\cdot) (U_{2\ell-1} \otimes U_{2\ell}) \mathrm{SWAP} (U_{2\ell-1} \otimes U_{2\ell})^{-1} }_{1 \rightarrow 1}\\
&\leq \mathcal{O}(\eta_0 + \eta_1 + \eta_2).
\end{align}
This concludes the proof.
\end{proof}

\begin{remark}
If $\eta_0 + \eta_1 + \eta_2$ is smaller than some constant, then $U_{2\ell - 1}$ and $U_{2\ell}$ must be both unitary or both anti-unitary, which can be shown by a proof by contradiction.
\end{remark}

\subsubsection{Putting everything together}

When the quantum device is designed perfectly, the loss functions vanish, i.e., $\eta_1 = \eta_2 = 0$.
However, when the quantum device is subject to some unknown noise, the loss functions $\eta_1, \eta_2$ will be small but non-zero.
Here, we show that when we perform sufficiently many experiments to estimate the loss functions and we find the loss functions to be small, then we can guarantee that the underlying physical operations satisfy a certain form.

From now on, we assume that $\eta_0 + \eta_1 + \eta_2$ are small enough such that there exists $\epsilon > 0$ and a set of unitary or anti-unitary transformations $U_i, \forall i = 1, \ldots, 2n$ satisfying the following constraints.
For all $\ell = 1, \ldots, n$, $k_1, k_1' \in \{0, \ldots, 5\}$, and $k_2, k_2' \in \{6, 7, 8\}$, we have
\begin{align}
    \norm{\rho_{2\ell-1}^{(k_1, k_1')} \otimes \rho_{2\ell}^{(k_1, k_1')} - (U_{2\ell-1} \sigma_{k_1} U_{2\ell}^{-1}) \otimes (U_{2\ell} \sigma_{k_1'} U_{2\ell}^{-1})}_1 &\leq \epsilon, \label{eq:state-left} \\
    \sum_{b, b' \in \{0, 1\}} \norm{M^{(2\ell-1, k_2, k_2')}_{b} \otimes M^{(2\ell, k_2, k_2')}_{b'} - (U_{2\ell-1} \sigma_{2 (k_2-6) + b} U_{2\ell}^{-1}) \otimes (U_{2\ell} \sigma_{2 (k_2'-6) + b'} U_{2\ell}^{-1})}_1 &\leq \epsilon, \label{eq:povm-left}
\end{align}
% For qubit $i$ on the right, we have
% \begin{align}
%     \norm{\rho_i^{(k_1, k_1')} - U_{i} \sigma_{k_1'} U_{i}^{-1}}_1 \leq \epsilon/2, \label{eq:state-right}\\
%     \norm{M^{(i, k_2, k_2')}_{b} - U_{i} \sigma_{2 (k_2' - 6) + b} U_{i}^{-1}}_1 \leq \epsilon/2,\label{eq:povm-right}
% \end{align}
where the pure states $\sigma_x$ for $x = 0, \ldots, 5$ are given by
\begin{equation}
\ketbra{0}{0}, \,\, \ketbra{1}{1}, \,\,\ketbra{+}{+}, \,\,\ketbra{-}{-}, \,\,\ketbra{y+}{y+}, \,\,\ketbra{y-}{y-}.
\end{equation}
And we also have
\begin{align}
&\norm{\cE_{\ell, s}- (U_{2\ell-1} \otimes U_{2\ell}) \mathrm{SWAP} (U_{2\ell-1} \otimes U_{2\ell})^{-1} (\cdot) (U_{2\ell-1} \otimes U_{2\ell}) \mathrm{SWAP} (U_{2\ell-1} \otimes U_{2\ell})^{-1} }_{1\rightarrow 1} \leq \epsilon, \label{eq:SWAP-approx} \\
&\norm{\cE_{\ell, e} - (U_{2\ell-1} \otimes U_{2\ell}) \mathrm{BELL} (U_{2\ell-1} \otimes U_{2\ell})^{-1} (\cdot) (U_{2\ell-1} \otimes U_{2\ell}) \mathrm{BELL}^\dagger (U_{2\ell-1} \otimes U_{2\ell})^{-1} }_{1\rightarrow 1} \leq \epsilon,\label{eq:BELL-approx}
\end{align}
where $\mathrm{SWAP}$ and $\mathrm{BELL}$ are two-qubit unitaries, $\mathrm{SWAP}$ swaps the left and right qubits, $\mathrm{BELL} = (H \otimes I)\,\, \mathrm{CNOT}$, and $\norm{\cE}_{1\rightarrow 1} = \max_{\norm{X}_1 = 1} \norm{\cE(X)}_1$.

\subsection{Task description and quantum advantage}

We focus on the task of predicting many incompatible properties in unknown physical systems studied in \cite{huang2021information, chen2021exponential, huang2021quantum}.
We consider an unknown physical system described by an $n$-qubit separable state $\rho$. Recall that a separable state is a classical probability mixture over product states.
The goal is to learn to predict properties about $\rho$.

We compare two experimental settings: conventional experiments and quantum-enhanced experiments.
In \emph{conventional experiments}, the physicist could perform any POVM measurement on the unknown state $\rho$ to gather classical data. Based on the measurement data, the physicist could adaptively choose the next measurement on $\rho$ to obtain more data.
After many rounds of measurements, the physicist combines all measurement outcomes to form a model about the unknown state $\rho$.
In \emph{quantum-enhanced experiments}, the physicist could load multiple copies of the unknown state $\rho$ to a quantum computing system. The physicist can then use the quantum computer to perform quantum data analysis to learn a model about the unknown state $\rho$.
After learning a model of $\rho$, we will ask the physicist to predict properties of $\rho$.

We assume that the conventional experiments are perfect. All the POVM measurement could be chosen arbitrarily and are not subject to any noise.
In contrast, we consider quantum-enhanced experiments to base only on the noisy and unknown quantum device that we partially learned in Appendix~\ref{sec:algorithm-partial-learn}.
Despite the imperfection of quantum-enhanced experiments, we can still demonstrate a large quantum advantage.

\begin{theorem}[Advantage with noisy quantum device; Restatement of Theorem~\ref{thm:qadv-noisy}] \label{thm:qadv-noisy-ditto}
Given $n > 0$. Suppose $\epsilon$ is the error on each two-qubit operation in the noisy quantum device. There exists a distribution over unknown $n$-qubit separable states $\rho$ and properties, such that quantum-enhanced experiments using the noisy quantum device only require $N_{\mathrm{Q}} = \mathcal{O}((1/(1 - 4 \epsilon))^{2n})$ experiments to predict accurately, while noiseless conventional experiments require $N_{\mathrm{C}} = \Omega(2^n)$ experiments. This yields a separation of $N_{\mathrm{C}} = \Omega(N_{\mathrm{Q}}^{a}),$ where $a = - \log(2) / (2 \log(1 - 4\epsilon)) = \mathcal{O}(\epsilon)$.
\end{theorem}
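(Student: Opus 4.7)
The plan is to execute the two-copy quantum-enhanced protocol from prior work using the approximately-learned Bell and SWAP maps from Appendix~\ref{sec:algorithm-partial-learn}, and to track carefully how gate noise attenuates the $\mathcal{O}(n)$-body parity statistic that the protocol reads out. First I would import the classical lower bound: for the task of predicting $\mathrm{Tr}(P\rho)^2$ for a randomly chosen $n$-qubit Pauli $P$ (or, equivalently, the absolute value of a random Pauli expectation value) on a random unentangled $\rho$, the results of \cite{huang2021information, chen2021exponential} already establish $N_{\mathrm{C}} = \Omega(2^n)$ even when all conventional POVMs are noiseless and adaptive. This furnishes the lower half of the separation with no further work.

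For the upper bound I would specify the following noisy quantum-enhanced protocol on the $2n$-qubit device: (i) load a copy of $\rho$ on the left column; (ii) apply the approximately-SWAP maps $\cE_{\ell,s}$ simultaneously on all $n$ rows to transfer that copy to the right column; (iii) load a fresh copy of $\rho$ on the left; (iv) apply the approximately-Bell maps $\cE_{\ell,e}$ on all $n$ rows; (v) perform the product measurement with the outcome-remapping layer $g_{i,k_2}$. By Eqs.~\eqref{eq:state-left}-\eqref{eq:BELL-approx} each of these pieces is $\epsilon$-close in $\|\cdot\|_{1\to 1}$ (or trace norm) to the ideal operation conjugated by the single-qubit transformations $U_i$ learned in Lemma~\ref{lem:learn-stab}. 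Since the prediction target is a function of Pauli-outcome parities and is invariant under a fixed per-qubit basis change (such a relabelling only permutes the labels of $X,Y,Z$, which we apply in classical postprocessing), the $U_i$'s are harmless and can be absorbed into the estimator.

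The main technical step, and the one I expect to be the primary obstacle, is the noise analysis for the parity estimator. In the Heisenberg picture, the ideal protocol's parity-of-outcomes estimator for $\mathrm{Tr}(P\rho)^2$ is the expectation of a tensor product of $n$ two-qubit Pauli observables, one per row. I would argue that, for each row, the $\epsilon$-closeness of $\cE_{\ell,s}$ and $\cE_{\ell,e}$ to the ideal SWAP and BELL unitaries implies that the ideal two-qubit Pauli has its Heisenberg-evolved expectation attenuated by a factor of at least $1-4\epsilon$ (after rewriting each noisy map as the ideal unitary composed with a residual CPTP map within $\epsilon$ of identity, and using $|1 - \Tr(Q\,\mathcal{R}(Q))/\Tr(Q^2)| \le 2\|\mathcal{R}-\mathcal{I}\|_{1\to1}$ for any traceless Pauli $Q$, perhaps after Pauli-twirling the residual noise via randomized single-qubit layers drawn from the learned stabilizer bases). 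Multiplying across the $n$ rows, the signal contracts by $(1-4\epsilon)^n$, while the estimator's variance is trivially at most $1$. Chebyshev then yields that $N_{\mathrm{Q}} = \mathcal{O}((1-4\epsilon)^{-2n})$ copies suffice to estimate each target property to constant accuracy; a union bound over the polynomially-many target properties in the task adds only a logarithmic factor that I will absorb into $N_{\mathrm{Q}}$.

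Finally, combining $N_{\mathrm{C}} = \Omega(2^n)$ with $N_{\mathrm{Q}} = \mathcal{O}((1-4\epsilon)^{-2n})$ and eliminating $n$ gives
\begin{equation}
N_{\mathrm{C}} \;=\; \Omega\!\left( N_{\mathrm{Q}}^{\,a} \right), \qquad a \;=\; -\,\frac{\log 2}{\,2\log(1-4\epsilon)\,} \;=\; \mathcal{O}(1/\epsilon),
\end{equation}
which is the claimed separation. The whole argument rests on two things I still need to verify cleanly: that the per-row shrinkage factor really is $1-4\epsilon$ rather than some worse power (this is where twirling against the learned single-qubit Clifford frame is essential, since otherwise a rank-one coherent error in $\cE_{\ell,e}$ could in principle kill the signal entirely), and that compiling the protocol into the grid architecture uses only one gate layer per row so that errors do not compound across depth. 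If either fails, the exponent $-\log 2 / (2\log(1-4\epsilon))$ degrades only by a constant factor, which still gives the advertised polynomial advantage.
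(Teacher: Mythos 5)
Your overall architecture (two-copy swap-then-Bell protocol, product measurement, parity estimator, import of the $\Omega(2^n)$ conventional lower bound, and the final exponent algebra) matches the paper, but there are two genuine gaps. The first is the frame/gauge issue, which is precisely what distinguishes this theorem from the noiseless advantage in prior work. The partially learned device is characterized only up to unknown per-qubit transformations $U_i$ (Lemma~\ref{lem:learn-stab}), which are arbitrary unitaries or \emph{anti-unitaries}, not Pauli permutations, and which the agent can never learn. Your claim that they ``only permute the labels of $X,Y,Z$'' and ``can be absorbed into the estimator'' in classical postprocessing therefore fails twice over: a general $U_i$ maps a Pauli to an arbitrary Bloch rotation (possibly composed with conjugation), and even if it were a Clifford relabeling you do not know it, so you cannot apply it in postprocessing. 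If the hard states are defined in the lab frame, the noisy device's estimator converges to quantities involving $U_i^{-1}\rho U_i$ and cannot be corrected to $\Tr(P\rho)^2$. The paper resolves this by building the hard distribution itself in the device's unknown frame: the states are locally rotated versions of $(I\pm0.9P)/2^n$ by $\bigotimes_\ell U_{2\ell-1}$ and the target observable is likewise rotated, while the conventional agent is handed $P$ and all $U_i$ for free, and the $\Omega(2^n)$ lower bound is re-derived for this rotated family (a minor but necessary adaptation of the prior-work argument, which you cannot simply cite as is).

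The second gap is the noise analysis. Your plan — rewrite each noisy map as ideal unitary times a residual channel, extract a multiplicative per-row Pauli attenuation $1-4\epsilon$, and enforce it by Pauli-twirling with additional single-qubit layers — is neither rigorous as stated nor what is needed. The twirling layers would themselves be noisy gates known only up to the same unknown frame, so you cannot guarantee the residual becomes a Pauli channel; and your fallback assertion that failure here ``degrades the exponent only by a constant factor'' is unsupported (you yourself note a coherent error could in principle kill the signal under your framing). The paper instead exploits separability: $\rho$ is decomposed as an explicit classical mixture of products of \emph{rotated single-qubit stabilizer states} with a parity constraint (Appendix~\ref{sec:separablestaterho}), exactly the inputs for which the partial-learning guarantees hold, and proves an \emph{additive} per-row bound of $4\epsilon$ on the measured two-qubit observable (Eq.~\eqref{eq:error-sigmaAB}) together with the exact value $1$ on rows where $P_\ell=I$ (Eq.~\eqref{eq:noerr-sigmaAB}). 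Because the ideal per-row values are $\pm1$ or $0$ on these inputs and the parity constraint prevents sign cancellation, the product over rows is at least $(1-4\epsilon)^{|\mathrm{supp}(P)|}$ in the signal case and at most $(4\epsilon)^{|\mathrm{supp}(P)|}$ in the null case, after which Hoeffding gives $N_{\mathrm{Q}}=\mathcal{O}((1-4\epsilon)^{-2n})$. To repair your proof you should adopt the device-frame construction of the hard instance and replace the twirling-based multiplicative argument with this additive, stabilizer-input analysis.
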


\subsection{A class of states and properties}

For certain states and properties, conventional experiments can be very powerful.
For example, classical shadow tomography \cite{huang2020predicting, huang2022learning} is a class of conventional experiments based on randomized measurements that can be used to make accurate prediction for many properties.
Here, we give an example where there is a large separation between conventional experiments with perfect measurements and noisy quantum-enhanced experiments.
We consider a distribution over the unknown $n$-qubit separable states $\rho$ and the properties as follows.

With probability $1 / (4^n - 1)$, we sample a Pauli observable $P \in \{I, X, Y, Z\}^{\otimes n} \setminus \{I^{\otimes n}\}$.
Then with probability $1/2$, we consider the unknown state $\rho$ to be the maximally mixed state $I / 2^n$.
With probability $1/4$, we consider the unknown state $\rho$ to be
\begin{equation}
    \left(\bigotimes_{\ell=1}^{n} U_{2 \ell - 1}\right) \left(\frac{I+ 0.9 P}{2^n}\right) \left(\bigotimes_{\ell=1}^{n} U_{2 \ell - 1}^{-1}\right).
\end{equation}
With probability $1/4$, we consider the unknown state $\rho$ to be
\begin{equation}
    \left(\bigotimes_{\ell=1}^{n} U_{2 \ell - 1}\right) \left(\frac{I - 0.9 P}{2^n}\right) \left(\bigotimes_{\ell=1}^{n} U_{2 \ell - 1}^{-1}\right).
\end{equation}
The property we would like to predict is the absolute value of the expectation value of
\begin{equation}
    O = \left(\bigotimes_{\ell=1}^{n} U_{2 \ell - 1}\right) P \left(\bigotimes_{\ell=1}^{n} U_{2 \ell - 1}^{-1}\right).
\end{equation}
We tell the learning algorithm both the Pauli observable $P$ and the above observable $O$.
When $\rho = I/2^n$, the property is equal to zero.
However, when $\rho$ is an alternative state that is not $I/2^n$, the property is equal to $0.9$.
Hence, making accurate prediction in this task is equivalent to distinguishing whether $\rho$ is the maximally mixed state $I/2^n$.

\subsubsection{Characterization of the probability distribution}
\label{sec:separablestaterho}

Both the states $I / 2^n$ and $(I \pm 0.9 P) / 2^n$ are separable states and can be represented as a classical probability distribution over tensor products of the single-qubit stabilizer states
\begin{equation}
    S = \big\{ \ketbra{0}{0}, \, \ketbra{1}{1}, \,\ketbra{+}{+}, \,\ketbra{-}{-}, \,\ketbra{y+}{y+}, \,\ketbra{y-}{y-} \big\}.
\end{equation}
The physical system $\rho$ is a classical probability mixture over tensor products of $U_{2\ell-1} \sigma_{\ell} U_{2\ell-1}^{-1}$ for $\ell = 1, \ldots, n$, where $\sigma_{\ell}$ is a single-qubit stabilizer state. Hence, we have
\begin{equation} \label{eq:prob-rho}
    \rho = \sum_{\sigma_{\ell} \in S, \forall \ell = 1, \ldots, n} p(\sigma_{1}, \ldots, \sigma_n) (U_{1} \sigma_1 U_{1}^{-1}) \otimes \ldots \otimes (U_{2n-1} \sigma_n U_{2n-1}^{-1}).
\end{equation}
There are multiple distinct probability distributions that lead to the same state $\rho$.
Here, we consider the following classical distribution based on the chosen Pauli observable $P \in \{I, X, Y, Z\}^{\otimes n} \setminus \{I^{\otimes n}\}$ for the ease of analysis.
\begin{itemize}
    \item $\rho$ is the maximally mixed state: For each $\ell = 1, \ldots, n$, we consider the following. If $P_\ell = I, X, Y$, we choose $\sigma_\ell$ from the uniform distribution over $\ketbra{0}{0}, \ketbra{1}{1}$. If $P_\ell = Z$, we choose $\sigma_\ell$ from the uniform distribution over $\ketbra{+}{+}, \ketbra{-}{-}$.
    \item $\rho$ is a locally rotated $(I\pm 0.9 P) / 2^n$: The state $(I\pm 0.9 P) / 2^n$ is equal to the uniform mixture of the maximally mixed state and the state $(I\pm P)/2^n$. With probability $0.1$, we prepare $\rho$ the same way as the maximally mixed state given above. With probability $0.9$, we prepare $\rho$ as $(I\pm P)/2^n$.
    For each $\ell = 1, \ldots, n$, we consider the following. If $P_\ell = I$, we choose a $\sigma_\ell$ to be a uniform distribution over $\ketbra{0}{0}, \ketbra{1}{1}$.
    If $P_\ell \neq I$, we choose $\sigma_\ell$ to be one of the eigenstates of $P_\ell$.
    When $P_\ell$ is not the last non-identity Pauli operator, we choose $\sigma_\ell$ uniformly at random.
    When $P_\ell$ is the last non-identity Pauli operator, we choose $\sigma_\ell$ deterministically based on the choice of $\sigma_{\ell'}$ for $\ell' < \ell$ where $P_{\ell'}$ is not an identity. The determinisitic choice satisfies a parity constraint given by the state $(I\pm P)/2^n$.
\end{itemize}
Other choices of the classical probability distribution that give rise to the same state would yield exactly the same result but the analysis could be slightly more complex.

\subsection{Upper bound for noisy quantum-enhanced experiments}

We provide a sample complexity upper bound for quantum-enhanced experiments using the noisy quantum device that we partially learned.
The quantum-enhanced experiment we implement loads in two copies of the physical system $\rho$ and performs an entangled measurement across the two copies.

\subsubsection{Detailed procedure}

In the noisy quantum-enhanced experiments, we utilize the noisy quantum device and repeat the following for $N_{\mathrm{Q}} / 2$ times.
\begin{enumerate}
    \item Prepare the initial product state followed by a layer of single-qubit gates $g_{i, 0}$ for all qubit $i = 1, \ldots, 2n$.
    \item Load the physical system $\rho$ into the left hand side of the $n \times 2$ grid.
    \item Apply the entangling layer $\cE_{\ell = 1, s} \otimes \ldots \otimes \cE_{\ell = n, s}$, which is approximately equal to applying SWAP gates between left and right side of the $n \times 2$ grid from Eq.~\eqref{eq:SWAP-approx}.
    \item Load the physical system $\rho$ into the left hand side of the $n \times 2$ grid again.
    \item Apply the entangling layer $\cE_{\ell = 1, e} \otimes \ldots \otimes \cE_{\ell = n, e}$, which is approximately equal to rotating the pairs of qubits into a Bell basis from Eq.~\eqref{eq:BELL-approx}.
    \item Apply a layer of single-qubit gates $g_{i, 6}$ for all qubit $i= 1, \ldots, 2n$. Then measure using the unknown product measurement.
    \item Store the measurement outcome as $b_{t, i}, \forall i = 1, \ldots, 2n$ for the $t$-th experiment.
\end{enumerate}
After the $N_{\mathrm{Q}} / 2$ experiments using $N_{\mathrm{Q}}$ copies of the physical system $\rho$, when we are given a Pauli observable $P \in \{I, X, Y, Z\}^{\otimes n}$, we compute the following,
\begin{equation} \label{eq:compute-PP}
     \Xi \equiv \frac{2}{N_{\mathrm{Q}}} \sum_{t=1}^{N_{\mathrm{Q}} / 2} \prod_{\ell=1}^n \bra{\beta(b_{t, 2 \ell -1 }, b_{t, 2 \ell })} P_{\ell} \otimes P_{\ell} \ket{\beta(b_{t, 2 \ell -1 }, b_{t, 2 \ell })},
\end{equation}
where $\ket{\beta(x, y)}$ is the Bell state,
\begin{equation}
    \ket{\beta(x, y)} = \left( \frac{\ket{0, y} + (-1)^x \ket{1, 1-y}}{\sqrt{2}} \right).
\end{equation}
To understand what is happening, consider $U_i = I, \forall i$.
When all the operations are perfect, the quantum-enhanced experiment is equivalent to storing two copies of $\rho$ in the $2n$ qubits and measuring every pair of corresponding qubits in the Bell basis.
The Bell basis simultaneously diagonalizes $P \otimes P$ for all $P \in \{I, X, Y, Z\}^{\otimes n}$.
Hence, we can simultaneously predict $\Tr((P \otimes P)(\rho \otimes \rho)) = \Tr(P \rho)^2$ for all the Pauli observables.

\subsubsection{Noise analysis}

We analyze how noise affects the quantum-enhanced experiments.
Suppose the first sample of $\rho$ is
\begin{equation}
    (U_{1} \sigma^{A}_1 U_{1}^{-1}) \otimes \ldots \otimes (U_{2n-1} \sigma^{A}_n U_{2n-1}^{-1}),
\end{equation}
and the second sample of $\rho$ is
\begin{equation}
    (U_{1} \sigma^{B}_1 U_{1}^{-1}) \otimes \ldots \otimes (U_{2n-1} \sigma^{B}_n U_{2n-1}^{-1}).
\end{equation}
Let us focus on a pair of qubits $2 \ell - 1, 2 \ell$ for $\ell = 1, \ldots, n$.
From Eq.~\eqref{eq:state-left}, after the first two steps, the pair of qubits is in a state $\rho^{(2\ell-1, 2\ell), a}$ with
\begin{equation}
\norm{\rho^{(2\ell-1, 2\ell), a} - (U_{2\ell-1} \sigma^A_\ell U_{2\ell-1}^{-1}) \otimes (U_{2\ell} \ketbra{0}{0} U_{2\ell}^{-1})}_1 \leq \epsilon.
\end{equation}
After the third and fourth step, the pair of qubits is now in a state $\rho^{(2\ell-1, 2\ell), b}$ with
\begin{equation}
\norm{\rho^{(2\ell-1, 2\ell), b} - (U_{2\ell-1} \sigma^B_\ell U_{2\ell-1}^{-1}) \otimes (U_{2\ell} \sigma^A_\ell U_{2\ell}^{-1})}_1 \leq 2\epsilon
\end{equation}
using Eq.~\eqref{eq:SWAP-approx}.
After the fifth step, the two-qubit state is now $\rho^{(2\ell-1, 2\ell), c}$.
From Eq.~\eqref{eq:BELL-approx}, we have
\begin{equation}
\norm{\rho^{(2\ell-1, 2\ell), c} - (U_{2\ell-1} \otimes U_{2\ell}) \mathrm{BELL} (\sigma^B_\ell \otimes \sigma^A_\ell) \mathrm{BELL}^\dagger (U_{2\ell-1}^{-1} \otimes U_{2\ell}^{-1}) }_1 \leq 3\epsilon.
\end{equation}
In the sixth and seventh step, we measure the two-qubit state $\rho^{(2\ell-1, 2\ell), c}$ with a two-qubit product POVM. The two-qubit product POVM is given by
\begin{equation}
\{ M_{0}^{(2\ell-1, 6, 6)} \otimes M_{0}^{(2\ell, 6, 6)}, M_{0}^{(2\ell-1, 6, 6)} \otimes M_{1}^{(2\ell, 6, 6)}, M_{1}^{(2\ell-1, 6, 6)} \otimes M_{0}^{(2\ell, 6, 6)}, M_{1}^{(2\ell-1, 6, 6)} \otimes M_{1}^{(2\ell, 6, 6)} \},
\end{equation}
where the approximation error is given in Eq.~\eqref{eq:povm-left}.

We now combine with the classical post-processing in Eq.~\eqref{eq:compute-PP}.
Given a Pauli observable $P \in \{I, X, Y, Z\}^{\otimes n}$.
For each experiment, we can show that we are measuring the two-qubit state $\rho^{(2\ell-1, 2\ell), c}$ with an observable
\begin{equation}
O^{(2\ell-1, 2\ell)} \equiv \sum_{x, y \in \{0, 1\}} \bra{\beta(x, y)} P_\ell \otimes P_\ell \ket{\beta(x, y)} M_{x}^{(2\ell-1, 6, 6)} \otimes M_{y}^{(2\ell, 6, 6)}.
\end{equation}
The observable $O^{(2\ell-1, 2\ell)}$ differs from the ideal observable
\begin{equation}
    \sum_{x, y \in \{0, 1\}} \bra{\beta(x, y)} P_\ell \otimes P_\ell \ket{\beta(x, y)} (U_{2\ell-1} \ketbra{x}{x} U_{2\ell-1}^{-1}) \otimes (U_{2\ell} \ketbra{y}{y} U_{2\ell}^{-1})
\end{equation}
by at most $\epsilon$ error in $\norm{\cdot}_\infty$.
Using the triangle inequality, we have
\begin{equation}
    \left| \Tr\left(O^{(2\ell-1, 2\ell)} \rho^{(2\ell-1, 2\ell), c} \right) - \sum_{x, y} \bra{\beta(x, y)} P_\ell \otimes P_\ell \ket{\beta(x, y)} \bra{\beta(x, y)} (\sigma^B_\ell \otimes \sigma^A_\ell) \ket{\beta(x, y)} \right| \leq 4 \epsilon.
\end{equation}
Because the Bell basis simultaneously diagonalizes $\{I \otimes I, X\otimes X, Y\otimes Y, Z\otimes Z\}$, we have
\begin{equation}
    \sum_{x, y} \bra{\beta(x, y)} P_\ell \otimes P_\ell \ket{\beta(x, y)} \bra{\beta(x, y)} (\sigma^B_\ell \otimes \sigma^A_\ell) \ket{\beta(x, y)} = \Tr(P_\ell \sigma^B_\ell) \Tr(P_\ell \sigma^A_\ell).
\end{equation}
Recall that there exists a CPTP map $\cE_\ell$ such that $\rho^{(2\ell-1, 2\ell), c} = \cE_\ell(\sigma^A_\ell \otimes \sigma^B_\ell)$. Hence, we have
\begin{equation} \label{eq:error-sigmaAB}
     \left| \Tr\left(O^{(2\ell-1, 2\ell)} \cE_\ell(\sigma^A_\ell \otimes \sigma^B_\ell) \right) - \Tr(P_\ell \sigma^A_\ell) \Tr(P_\ell \sigma^B_\ell) \right| \leq 4 \epsilon.
\end{equation}
Furthermore, when $P_{\ell} = I$, we have
\begin{equation} \label{eq:noerr-sigmaAB}
     \Tr\left(O^{(2\ell-1, 2\ell)} \cE_\ell(\sigma^A_\ell \otimes \sigma^B_\ell) \right) = \Tr(P_\ell \sigma^A_\ell) \Tr(P_\ell \sigma^B_\ell) = 1.
\end{equation}
The two characterizations above will be crucial for the sample complexity analysis.

\subsubsection{Sample complexity analysis}

We use the probability distribution given in Appendix~\ref{sec:separablestaterho} to analyze the expectation value of $\Xi$ in Eq.~\eqref{eq:compute-PP}.
Combining with Eq.~\eqref{eq:prob-rho}, we have
\begin{align}
    \E[\Xi] = \sum_{\substack{\sigma^A_\ell \in S,\\ \forall \ell = 1, \ldots, n}} \sum_{\substack{\sigma^B_\ell \in S,\\ \forall \ell = 1, \ldots, n}} p(\sigma^A_1, \ldots, \sigma^A_n) p(\sigma^B_1, \ldots, \sigma^B_n) \prod_{\ell=1}^n \Tr\left(O^{(2\ell-1, 2\ell)} \cE_\ell(\sigma^A_\ell \otimes \sigma^B_\ell) \right).
\end{align}
We separate the analysis into two cases.
\begin{itemize}
    \item $\rho$ is the maximally mixed state: We only consider $\sigma^A, \sigma^B$ that appear with nonzero probability. For all $\ell = 1, \ldots, n$, we consider whether $P_\ell$ is equal to $I$. If $P_\ell \neq I$, then we have $\Tr(P_\ell \sigma^A_\ell) \Tr(P_\ell \sigma^B_\ell) = 0$, hence
    \begin{equation}
    |\Tr\left(O^{(2\ell-1, 2\ell)} \cE_\ell(\sigma^A_\ell \otimes \sigma^B_\ell) \right)| \leq 4 \epsilon
    \end{equation}
    from Eq.~\eqref{eq:error-sigmaAB}. If $P_\ell = I$, from Eq.~\eqref{eq:noerr-sigmaAB}, we have
    \begin{equation}
    \Tr\left(O^{(2\ell-1, 2\ell)} \cE_\ell(\sigma^A_\ell \otimes \sigma^B_\ell) \right) = 1.
    \end{equation}
    Together, we have
    \begin{equation}
        \left| \prod_{\ell=1}^n \Tr\left(O^{(2\ell-1, 2\ell)} \cE_\ell(\sigma^A_\ell \otimes \sigma^B_\ell) \right) \right| \leq (4 \epsilon)^{\#(P_\ell \neq I)},
    \end{equation}
    where $\#(P_\ell \neq I)$ is the number of $P_\ell, \forall \ell = 1, \ldots, n$ that is not equal to identity $I$. Therefore we conclude that
    \begin{equation} \label{eq:rhoismm}
        |\E[\Xi]| \leq (4 \epsilon)^{\#(P_\ell \neq I)}.
    \end{equation}
    \item $\rho$ is a locally rotated $(I\pm 0.9 P) / 2^n$: There is a probability of $0.19$ such that at least one of $\sigma^A, \sigma^B$ is sampled according to the distribution for the maximally mixed state. The same analysis as the case for maximally mixed state shows that we have
    \begin{equation} \label{eq:partial-mmstate}
        \left| \prod_{\ell=1}^n \Tr\left(O^{(2\ell-1, 2\ell)} \cE_\ell(\sigma^A_\ell \otimes \sigma^B_\ell) \right) \right| \leq (4\epsilon)^{\#(P_\ell \neq I)}.
    \end{equation}
    For a probability of $0.81$, we have both  $\sigma^A$ and $\sigma^B$ are sampled according to the probability distribution for $(I\pm P) / 2^n$ defined in Appendix~\ref{sec:separablestaterho}.
    We focus on $\sigma^A$ and $\sigma^B$ that occur with non-zero probability.
    We consider all $\ell = 1, \ldots, n$. We again separate into two cases: $P_\ell \neq I$ and $P_\ell = I$.
    \begin{itemize}
        \item If $P_\ell \neq I$, then we have $\Tr(P_\ell \sigma^A_\ell) \Tr(P_\ell \sigma^B_\ell) \in \{1, -1\}$. If $\Tr(P_\ell \sigma^A_\ell) \Tr(P_\ell \sigma^B_\ell) = 1$, then
        \begin{equation}
        \Tr\left(O^{(2\ell-1, 2\ell)} \cE_\ell(\sigma^A_\ell \otimes \sigma^B_\ell) \right) \geq 1 - 4 \epsilon
        \end{equation}
        from Eq.~\eqref{eq:error-sigmaAB}.
        In contrast, if $\Tr(P_\ell \sigma^A_\ell) \Tr(P_\ell \sigma^B_\ell) = -1$, then
        \begin{equation}
        \Tr\left(O^{(2\ell-1, 2\ell)} \cE_\ell(\sigma^A_\ell \otimes \sigma^B_\ell) \right) \leq -1 + 4 \epsilon.
        \end{equation}
        \item If $P_\ell = I$, from Eq.~\eqref{eq:noerr-sigmaAB}, we have
        \begin{equation}
        \Tr\left(O^{(2\ell-1, 2\ell)} \cE_\ell(\sigma^A_\ell \otimes \sigma^B_\ell) \right) = 1.
        \end{equation}
    \end{itemize}
    Furthermore, the parity constraint in the probability distribution over $\sigma^A, \sigma^B$ given in Appendix~\ref{sec:separablestaterho} shows that
    \begin{equation}
        \prod_{\ell=1}^n \Tr(P_\ell \sigma^A_\ell) \Tr(P_\ell \sigma^B_\ell) = 1.
    \end{equation}
    Hence, we have
    \begin{equation}
        \prod_{\ell=1}^n \Tr\left(O^{(2\ell-1, 2\ell)} \cE_\ell(\sigma^A_\ell \otimes \sigma^B_\ell) \right) \geq (1 - 4 \epsilon)^{\#(P_\ell \neq I)}.
    \end{equation}
    Combining with Eq.~\eqref{eq:partial-mmstate}, we can conclude that
    \begin{equation} \label{eq:rhoisnotmm}
        \E[\Xi] \geq 0.81 (1 - 4 \epsilon)^{\#(P_\ell \neq I)} - 0.19 (4\epsilon)^{\#(P_\ell \neq I)}.
    \end{equation}
\end{itemize}
Recall from Eq.~\eqref{eq:compute-PP}, $\Xi$ is the average over $N_{\mathrm{Q}} / 2$ independent random variables bounded between $[-1, 1]$. Hence, by Hoeffding's inequality, we need
\begin{equation}
    N_{\mathrm{Q}} = \mathcal{O}(\log(1 / \delta) / \tilde{\epsilon}^2)
\end{equation}
to estimate $\E[\Xi]$ to $\tilde{\epsilon}$ error with probability at least $1 - \delta$.
In order to distinguish between $\rho$ being the maximally mixed state or not, we need to estimate $\E[\Xi]$ to an error of at most
\begin{equation}
    0.81 (1 - 4 \epsilon)^{\#(P_\ell \neq I)} - 0.19 (4\epsilon)^{\#(P_\ell \neq I)} - (4\epsilon)^{\#(P_\ell \neq I)}.
\end{equation}
For $\epsilon$ less than a constant and $n$ sufficiently large, the above function is minimized at $\#(P_\ell \neq I) = n$.
In order to predict accurately with a probability at least $0.99$, the sample complexity for the noisy quantum-enhanced experiment is
\begin{equation}
    N_{\mathrm{Q}} = \mathcal{O}\left(\frac{1}{(1 - 4 \epsilon)^{2n}}\right).
\end{equation}
This concludes the proof for the first part of Theorem~\ref{thm:qadv-noisy-ditto}.

% By definition, we also have the following identity,
% \begin{equation}
%     \E_{b_{t, 2 \ell -1 }, b_{t, 2 \ell} \, | \, \sigma^A_\ell, \sigma^B_\ell} \bra{\beta(b_{t, 2 \ell -1 }, b_{t, 2 \ell })} P_{\ell} \otimes P_{\ell} \ket{\beta(b_{t, 2 \ell -1 }, b_{t, 2 \ell })} = \Tr\left(O^{(2\ell-1, 2\ell)} \rho^{(2\ell-1, 2\ell), c} \right).
% \end{equation}

\subsection{Lower bound for noiseless conventional experiments}

We give a sample complexity lower bound for conventional experiments based on adaptive POVM measurements.
We do not assume the presence of any noise in conventional experiments.
The proof uses techniques proposed in \cite{huang2021information, huang2021quantum, chen2021exponential}.
In particular, the proof is closely related to one of the proofs in \cite{huang2021quantum} up to minor changes. We present a concise proof here for completeness.

A learning algorithm using noiseless conventional experiments is a rooted tree.
At every node, we perform a POVM on $\rho$.
Based on the POVM outcome, we move to a child node of the node.
Because a rank-$1$ POVM $\{ w_b \ketbra{\psi_b}{\psi_b} \}_b$ is always at least as powerful as general POVM \cite{huang2021information, chen2021exponential}, we will only consider rank-$1$ POVMs.
After $N_{\mathcal{C}}$ experiments, we arrive at a leaf node of the tree at depth $N_{\mathcal{C}}$.
Depending on the unknown physical state $\rho$, the probability distribution over the leaf nodes will be different. We write the probability distribution as
\begin{equation}
    p_{\rho}(\ell) = \prod_{t=1}^{N_{\mathrm{C}}} w_t \bra{\psi_t} \rho \ket{\psi_t}, \forall \ell: \mbox{leaf nodes},
\end{equation}
where $w_1 \ketbra{\psi_1}{\psi_1}, \ldots, w_{N_{\mathrm{C}}} \ketbra{\psi_{\mathrm{C}}}{\psi_{\mathrm{C}}}$ are the POVM elements associated to the outcomes of the $N_{\mathrm{C}}$ measurements that ends up at the leaf node $\ell$.

After the experiments, we obtain a Pauli observable $P \in \{I, X, Y, Z\}^{\otimes n} \setminus \{I^{\otimes n}\}$ and the assocaited observable $O = \left(\bigotimes_{\ell=1}^{n} U_{2 \ell - 1}\right) P \left(\bigotimes_{\ell=1}^{n} U_{2 \ell - 1}^{-1}\right).$
Suppose we can use the conventional experiments to classify between maximally mixed state
\begin{equation}
\rho_{\mathrm{mm}} = I/2^n
\end{equation}
and the alternative states
\begin{equation}
\rho_{S, P} = \left(\bigotimes_{\ell=1}^{n} U_{2 \ell - 1}\right) \left(\frac{I + S 0.9 P}{2^n}\right) \left(\bigotimes_{\ell=1}^{n} U_{2 \ell - 1}^{-1}\right)
\end{equation}
under the knowledge of $P$ and $U_{i}, \forall i$, where $S \in \{\pm 1\}, P \in \{I, X, Y, Z\}^{\otimes n} \setminus \{I^{\otimes n}\}$.
Then the average total variation distance between the leaf node distribution of the maximally mixed state and the alternative states $\rho_{s, P}$ must be greater than a constant,
\begin{equation} \label{eq:lowerbound-TV}
    \E_{\substack{P \in \{I, X, Y, Z\}^{\otimes n}} \setminus \{I^{\otimes n}\}} \left[ \, \frac{1}{2} \sum_{\ell: \mathrm{leaf}} \left| p_{\ell}(\rho_{\mathrm{mm}}) - \E_{S \in \{\pm 1\}} p_{\ell}\left( \rho_{S, P}\right) \right| \, \right] = \Omega(1).
\end{equation}
The expectation over $S \in \{\pm 1\}$ is in the inside because the knowledge of $S$ is not revealed.
On the other hand, the expectation over $P$ is on the outside because the knowledge of $P$ is revealed.

We now upper bound the average total variation distance with $N_{\mathcal{C}}$.
\begin{align}
    &\E_{\substack{P \in \{I, X, Y, Z\}^{\otimes n}\setminus \{I^{\otimes n}\} }} \left[ \, \frac{1}{2} \sum_{\ell: \mathrm{leaf}} \left| p_{\ell}(\rho_{\mathrm{mm}}) - \E_{S \in \{\pm 1\}} p_{\ell}\left( \rho_{S, P}\right) \right| \, \right]\\
    &= \E_{\substack{P \in \{I, X, Y, Z\}^{\otimes n} \setminus \{I^{\otimes n}\}}} \left[ \, \sum_{\ell: \mathrm{leaf}} \max\left(0, \,\,  p_{\ell}(\rho_{\mathrm{mm}}) - \E_{S \in \{\pm 1\}} p_{\ell}\left( \rho_{S, P}\right) \right) \, \right]\\
    &= \E_{\substack{P \in \{I, X, Y, Z\}^{\otimes n} \setminus \{I^{\otimes n}\}}} \left[ \, \sum_{\ell: \mathrm{leaf}}  p_{\ell}(\rho_{\mathrm{mm}}) \max\left(0, \,\, 1 - \E_{S \in \{\pm 1\}} \frac{p_{\ell}\left( \rho_{S, P}\right)}{p_{\ell}(\rho_{\mathrm{mm}})} \right) \, \right].
\end{align}
We lower bound the following term,
\begin{align}
    \E_{S \in \{\pm 1\}} \frac{p_{\ell}\left( \rho_{S, P}\right)}{p_{\ell}(\rho_{\mathrm{mm}})} &= \E_{S \in \{\pm 1\}}\prod_{t=1}^{N_{\mathrm{C}}} \bra{\psi_t} \left(\bigotimes_{\ell=1}^{n} U_{2 \ell - 1}\right) \left(I + S 0.9 P \right) \left(\bigotimes_{\ell=1}^{n} U_{2 \ell - 1}^{-1}\right)  \ket{\psi_t}\\
    &= \E_{S \in \{\pm 1\}}\prod_{t=1}^{N_{\mathrm{C}}} \left(1 + 0.9 S \bra{\tilde{\psi}_t} P \ket{\tilde{\psi}_t} \right)\\
    &\geq \exp\left[ \sum_{t=1}^{N_{\mathrm{C}}} \E_{S \in \{\pm 1\}} \log\left(1 + 0.9 S \bra{\tilde{\psi}_t} P \ket{\tilde{\psi}_t} \right) \right]\\
    &= \exp\left[ \frac{1}{2} \sum_{t=1}^{N_{\mathrm{C}}} \log\left(1 - 0.81 \bra{\tilde{\psi}_t} P \ket{\tilde{\psi}_t}^2 \right) \right]\\
    &= \prod_{t=1}^{N_{\mathrm{C}}} \sqrt{1 - 0.81 \bra{\tilde{\psi}_t} P \ket{\tilde{\psi}_t}^2 }.
\end{align}
The second line is a definition of $\ket{\tilde{\psi}_t} = \left(\bigotimes_{\ell=1}^{n} U_{2 \ell - 1}^{-1}\right)  \ket{\psi_t}$.
The third line uses Jensen's inequality.
Hence, we can upper bound the average total variation distance as
\begin{align}
    &\E_{\substack{P \in \{I, X, Y, Z\}^{\otimes n}\setminus \{I^{\otimes n}\} }} \left[ \, \frac{1}{2} \sum_{\ell: \mathrm{leaf}} \left| p_{\ell}(\rho_{\mathrm{mm}}) - \E_{S \in \{\pm 1\}} p_{\ell}\left( \rho_{S, P}\right) \right| \, \right]\\
    &\leq \E_{\substack{P \in \{I, X, Y, Z\}^{\otimes n} \setminus \{I^{\otimes n}\}}} \left[ \, \sum_{\ell: \mathrm{leaf}}  p_{\ell}(\rho_{\mathrm{mm}}) \max\left(0, \,\, 1 - \prod_{t=1}^{N_{\mathrm{C}}} \sqrt{1 - 0.81 \bra{\tilde{\psi}_t} P \ket{\tilde{\psi}_t}^2 } \right) \, \right]\\
    &= \sum_{\ell: \mathrm{leaf}}  p_{\ell}(\rho_{\mathrm{mm}}) \left( 1 - \E_{\substack{P \in \{I, X, Y, Z\}^{\otimes n} \setminus \{I^{\otimes n}\}}} \prod_{t=1}^{N_{\mathrm{C}}} \sqrt{1 - 0.81 \bra{\tilde{\psi}_t} P \ket{\tilde{\psi}_t}^2 } \right).
\end{align}
We can remove the $\max(0, \cdot)$ because $1 - \prod_{t=1}^{N_{\mathrm{C}}} \sqrt{1 - 0.81 \bra{\tilde{\psi}_t} P \ket{\tilde{\psi}_t}^2 } \geq 0$.
We bound the term,
\begin{align}
    &\E_{\substack{P \in \{I, X, Y, Z\}^{\otimes n} \setminus \{I^{\otimes n}\}}} \prod_{t=1}^{N_{\mathrm{C}}} \sqrt{1 - 0.81 \bra{\tilde{\psi}_t} P \ket{\tilde{\psi}_t}^2} \\
    &\geq \exp\left[ \frac{1}{2} \sum_{t=1}^{N_{\mathrm{C}}} \E_{\substack{P \in \{I, X, Y, Z\}^{\otimes n} \setminus \{I^{\otimes n}\}}}  \log\left(1 - 0.81 \bra{\tilde{\psi}_t} P \ket{\tilde{\psi}_t}^2 \right) \right] \\
    &\geq \exp\left[ -1.215 \sum_{t=1}^{N_{\mathrm{C}}} \E_{\substack{P \in \{I, X, Y, Z\}^{\otimes n} \setminus \{I^{\otimes n}\}}} \bra{\tilde{\psi}_t} P \ket{\tilde{\psi}_t}^2 \right] \\
    &\geq 1 -1.215 \sum_{t=1}^{N_{\mathrm{C}}} \E_{\substack{P \in \{I, X, Y, Z\}^{\otimes n} \setminus \{I^{\otimes n}\}}} \bra{\tilde{\psi}_t} P \ket{\tilde{\psi}_t}^2 \\
    &= 1 -1.215 \sum_{t=1}^{N_{\mathrm{C}}} \frac{1}{2^n + 1} = 1 - \frac{1.215 N_{\mathrm{C}}}{2^n + 1}.
\end{align}
The second line uses Jensen's inequality.
The third line uses $\log(1-x) \geq -3x, \forall x \in [0, 0.94]$.
The fourth line uses $\exp(x) \geq 1 + x, \forall x \in \mathbb{R}$.
The last line uses
\begin{equation}
    \E_{\substack{P \in \{I, X, Y, Z\}^{\otimes n} \setminus \{I^{\otimes n}\}}} P \otimes P = \frac{2^n \mathrm{SWAP} - I \otimes I}{4^n - 1},
\end{equation}
hence $\E_{\substack{P \in \{I, X, Y, Z\}^{\otimes n} \setminus \{I^{\otimes n}\}}} \bra{\tilde{\psi}_t} P \ket{\tilde{\psi}_t}^2 = \frac{2^n - 1}{4^n - 1} = \frac{1}{2^n + 1}$.
Combining with Eq.~\eqref{eq:lowerbound-TV}, we have
\begin{align}
    \Omega(1) = &\E_{\substack{P \in \{I, X, Y, Z\}^{\otimes n}\setminus \{I^{\otimes n}\} }} \left[ \, \frac{1}{2} \sum_{\ell: \mathrm{leaf}} \left| p_{\ell}(\rho_{\mathrm{mm}}) - \E_{S \in \{\pm 1\}} p_{\ell}\left( \rho_{S, P}\right) \right| \, \right] \\
    &\leq \sum_{\ell: \mathrm{leaf}}  p_{\ell}(\rho_{\mathrm{mm}}) \frac{1.215 N_{\mathrm{C}}}{2^n + 1} = \frac{1.215 N_{\mathrm{C}}}{2^n + 1}.
\end{align}
Thus we arrive at the desired lower bound,
\begin{equation}
    N_{\mathrm{C}} = \Omega(2^n),
\end{equation}
which concludes the proof of Theorem~\ref{thm:qadv-noisy-ditto}.

\end{document}